\documentclass{LMCS}

\def\doi{8 (2:10) 2012}
\lmcsheading%
{\doi}
{1--51}
{}
{}
{Nov.~11, 2011}
{Jun.~19, 2012}
{}

\pdfoutput=1
\usepackage{enumerate}
\usepackage{hyperref}
\usepackage{microtype}
\usepackage{wrapfig}
\usepackage{cancel}

\usepackage{tikz}
\usetikzlibrary{arrows,automata,backgrounds,calc,chains,fadings,folding,decorations.fractals,decorations.pathreplacing,fit,patterns,positioning,mindmap,shadows,shapes.geometric,shapes.symbols,through,trees,plotmarks,external}
\tikzexternalize
\usepackage{amssymb}
\usepackage{verbatim}

\usepackage[UKenglish]{babel}
\usepackage{stmaryrd}
\usepackage{latexsym}
\usepackage{amssymb}
\usepackage{amsmath}
\usepackage{amsthm}
\usepackage{array}
\usepackage{url}

\theoremstyle{plain}
  \newtheorem{theorem}{Theorem}[section]
  \newtheorem{lemma}[theorem]{Lemma}
  \newtheorem{claim}[theorem]{Claim}

\theoremstyle{definition}
  \newtheorem{definition}[theorem]{Definition}
  \newtheorem{example}[theorem]{Example}


\newcommand{\nat}{\mathsf{nat}}
\newcommand{\rijtje}[1]{[#1]}

\newcommand{\setsorts}{\mathcal{B}}
\newcommand{\settypes}{\mathbb{T}}

\newcommand{\setfun}{\mathcal{F}}
\newcommand{\setvar}{\mathcal{V}}

\newcommand{\F}{\setfun}
\newcommand{\Fex}{\Sigma^{\mathit{ex}}}
\newcommand{\Rules}{\mathcal{R}}

\newcommand{\FV}{\mathit{FV}}
\newcommand{\Constructors}{\mathcal{C}}
\newcommand{\Defineds}{\mathcal{D}}
\newcommand{\Constants}{\mathbb{C}}

\newcommand{\atype}{\sigma}
\newcommand{\btype}{\tau}
\newcommand{\ctype}{\rho}
\newcommand{\dtype}{\alpha}
\newcommand{\abasetype}{\iota}
\newcommand{\bbasetype}{\kappa}
\newcommand{\aterm}{s}
\newcommand{\bterm}{t}
\newcommand{\cterm}{u}
\newcommand{\dterm}{v}
\newcommand{\eterm}{w}
\newcommand{\fterm}{q}
\newcommand{\avar}{x}
\newcommand{\bvar}{y}
\newcommand{\cvar}{z}
\newcommand{\dvar}{u}
\newcommand{\afun}{\ensuremath{{f}}}  
\newcommand{\bfun}{\ensuremath{{g}}}
\newcommand{\cfun}{\ensuremath{{h}}}
\newcommand{\asub}{\gamma}
\newcommand{\bsub}{\delta}
\newcommand{\csub}{\chi}

\newcommand{\clausevar}{\texttt{(var)}}
\newcommand{\clauseapp}{\texttt{(app)}}
\newcommand{\clauseabs}{\texttt{(abs)}}
\newcommand{\clausefun}{\texttt{(fun)}}
\newcommand{\clauserule}{\texttt{(rule)}}

\newcommand{\clausebeta}{\texttt{(beta)}}

\newcommand{\cbeta}{\mathtt{beta}}

\newcommand{\abs}[2]{\lambda #1 . \, #2}
\newcommand{\app}[2]{#1 \cdot #2}
\newcommand{\subst}[2]{#1 #2}

\newcommand{\typepijl}{\!\Rightarrow\!}
\newcommand{\decpijl}{\typepijl}
\newcommand{\ftypepijl}{\Rightarrow}
\newcommand{\arrz}{\ensuremath{\rightarrow}}
\newcommand{\arr}[1]{\ensuremath{\rightarrow_{#1}}}
\newcommand{\arrr}[1]{\arr{#1}^*}
\newcommand{\arrp}[1]{\arr{#1}^+}
\newcommand{\supterm}{\rhd}
\newcommand{\suptermeq}{\unrhd}

\newcommand{\csuptermeq}{\,\overline{\suptermeq}\,}
\newcommand{\dppijl}{\leadsto}
\renewcommand{\c}{\mathsf{c}}

\newcommand{\domain}{\mathsf{dom}}
\newcommand{\head}{\mathit{head}}

\newcommand{\ttag}{\mathsf{tag}}

\newcommand{\DP}{\mathsf{DP}}
\newcommand{\acycle}{\mathcal{C}}
\renewcommand{\P}{\mathcal{P}}
\newcommand{\up}[1]{#1^\sharp}
\newcommand{\gsymbuse}{\sqsupseteq_{\mathit{us}}}
\newcommand{\gsymbform}{\sqsubseteq_{\mathit{fo}}}
\newcommand{\basealgebra}{\mathcal{A}}
\newcommand{\basealgebraset}{A}

\newcommand{\gwm}{\sqsupset_{wm}}
\newcommand{\geqwm}{\sqsupseteq_{wm}}
\newcommand{\WM}{\mathsf{WM}}
\newcommand{\lamalgint}[1]{[#1]}
\newcommand{\lamalgintc}[1]{[#1]_{\constvaluation,\varvaluation}}
\newcommand{\algint}[1]{\llbracket #1 \rrbracket}
\newcommand{\algintc}[1]{\llbracket #1 \rrbracket_{\constvaluation,\varvaluation}}
\newcommand{\candidatesof}[1]{{Cand}(#1)}
\newcommand{\constvaluation}{\mathcal{J}}
\newcommand{\varvaluation}{\alpha}
\newcommand{\fatlambda}{\lambda\!\!\!\lambda}
\newcommand{\argfil}{\pi}
\newcommand{\filter}{\overline{\pi}}
\newcommand{\userules}{\mathit{U\!R}}
\newcommand{\Symbols}{\mathit{Symb}}
\newcommand{\formsymb}{\mathit{F\!S}}
\newcommand{\formrules}{\mathit{F\!R}}

\newcommand{\wanda}{\texttt{WANDA}}
\newcommand{\llfe}{local}
\newcommand{\geqterm}{\succeq}
\newcommand{\gterm}{\succ}
\newcommand{\geqorgterm}{\succ^?}
\newcommand{\gtermcpo}{\succ_{\mathsf{CPO}}}
\newcommand{\geqtermcpo}{\succeq_{\mathsf{CPO}}}
\newcommand{\Ce}{\mathcal{C}_\epsilon}
\newcommand{\p}{\symb{p}}

\newcommand{\setop}{\{}
\newcommand{\setcl}{\}}

\newcommand{\eval}{\mathsf{eval}}
\newcommand{\fun}{\mathsf{fun}}
\newcommand{\dom}{\mathsf{dom}}
\newcommand{\suc}{\mathsf{s}}
\newcommand{\nul}{\mathsf{o}}
\newcommand{\I}{\mathsf{I}}
\newcommand{\twice}{\mathsf{twice}}
\newcommand{\N}{\mathbb{N}}
\newcommand{\M}{\mathsf{M}}

\newcommand{\map}{\mathsf{map}}
\newcommand{\nil}{\mathsf{nil}}
\newcommand{\cons}{\mathsf{cons}}

\renewcommand{\o}{\mathsf{o}}
\newcommand{\scctwice}{\acycle_\twice}
\newcommand{\append}{\mathtt{append}}
\newcommand{\mins}{\mathtt{minus}}
\newcommand{\quot}{\mathtt{quot}}
\newcommand{\ifex}{\mathtt{if}}
\newcommand{\trueex}{\mathtt{true}}
\newcommand{\falseex}{\mathtt{false}}
\newcommand{\headex}{\mathtt{head}}
\newcommand{\tailex}{\mathtt{tail}}
\newcommand{\stringex}{\mathtt{string}}
\newcommand{\booleanex}{\mathtt{bool}}
\newcommand{\listex}{\mathtt{funlist}}
\newcommand{\lijst}{\mathtt{list}}
\newcommand{\symb}[1]{\mathsf{#1}}

\newcommand{\emptyline}{\medskip\noindent}
\newcommand{\paragraaf}[1]{\bigskip\noindent\textit{\textbf{#1}}}
\newcommand{\sparagraaf}[1]{\noindent \textit{\textbf{#1}}}
\newcommand{\summary}[1]{\emph{#1}}

\begin{document}

\title[Dynamic Higher-Order Dependency Pairs]{Dynamic Dependency Pairs \\ for Algebraic Functional Systems\rsuper*}

\author[C. Kop]{Cynthia Kop}
\address{Faculty of Sciences, 
         VU University,
         De Boelelaan 1081a, 
         1081 HV Amsterdam, 
         The Netherlands}
\email{kop@few.vu.nl, femke@few.vu.nl}
\thanks{This research is supported by the Netherlands Organisation
  for Scientific Research (NWO-EW) under grant 612.000.629
  (Higher-Order Termination).}

\author[F.~van Raamsdonk]{Femke van Raamsdonk}
\keywords{higher-order rewriting, termination, dynamic dependency pairs}
\subjclass{F4.1,F4.2}
\titlecomment{{\lsuper*}Extended version of \cite{kop:raa:11:1}.}

\begin{abstract}
We extend the higher-order termination method of dynamic dependency
pairs to Algebraic Functional Systems (AFSs).  In this
setting, simply typed lambda-terms with algebraic reduction and
separate $\beta$-steps are considered.
For left-linear AFSs, the method is shown to be complete.
For so-called \llfe\ AFSs
we define a variation of usable rules and an extension of
argument filterings.
All these techniques have been implemented in the higher-order
termination tool WANDA.
\end{abstract}

\maketitle

\section{Introduction} \label{sec:introduction}

An important method to (automatically) prove termination of
first-order term rewrite systems is the dependency pair approach by
Arts and Giesl~\cite{art:gie:00:1}.
This approach transforms a term rewrite system into groups of
ordering constraints, such that rewriting is terminating if and only
if the groups of constraints are (separately) solvable.
These constraints can be simplified using for instance
argument filterings and usable
rules~\cite{art:gie:00:1,gie:thi:sch:fal:06,hir:mid:07:1}.
Various optimisations of the method have been studied, 
see for example~\cite{hir:mid:05:1,gie:thi:sch:05:2}.

This paper contributes to the study of dependency pairs for
higher-order rewriting.
It is not easy to adapt the approach to a higher-order setting,
primarily due to the presence of $\beta$-reduction.  A first, very
natural extension to Nipkow's HRSs (higher-order rewrite systems)
is given
in~\cite{sak:wat:sak:01}, but it relies on the \emph{subterm
property}. 
Due to this property it is impossible to define optimisations like 
argument filterings.
Moreover, unlike the first-order case, the method is not complete:
a terminating system may well have an infinite dependency chain.

Since then, the focus of higher-order dependency pairs has been on
the so-called \emph{static} style. 
This style imposes limitations on the rewrite rules which allow the
subterm property to be dropped.  For static dependency pairs, too,
there are no completeness results available.

Here we return to the original, \emph{dynamic} style of
dependency pairs, and show how the subterm property can be weakened.
We introduce variations of usable rules and argument filterings, and
use weakly monotonic algebras and recursive path orderings to orient
the resulting constraints.
While special interest is reserved for
the class of so-called \emph{\llfe} systems, the core technique is
defined without restrictions, and is complete for left-linear
systems.

Unlike previous approaches, we do not consider rewriting modulo
$\beta/\eta$ (Nipkow's HRSs \cite{nip:91:1}), but
with $\beta$-reduction as a separate step (Jouannaud and
Okada's AFSs~\cite{jou:oka:91:1}).  Although higher-order
path orderings are commonly studied in the setting of
AFSs~\cite{jou:rub:99:1,bla:jou:rub:08:1}, and it is the only
style of higher-order rewriting which currently appears in the
annual termination competition~\cite{termcomp}, there is so far
little work on dependency pairs for this formalism.

This paper is an extended version of~\cite{kop:raa:11:1}, with
complete proofs and some new features.

\paragraaf{Paper Setup}
We briefly discuss the ideas from studies of dependency pairs
for HRSs and also for applicative systems in
Section~\ref{sec:relatedwork}.
In Section~\ref{sec:preliminaries} we recapitulate the AFS formalism,
and in Section~\ref{sec:firstorder} we give a brief overview of the
dependency pair framework for first-order term rewriting.
Basic (unrestricted) definitions of higher-order dependency pairs,
dependency chains, the dependency graph 
and reduction orderings are discussed in Section~\ref{sec:basic}.

In order to obtain stronger results, we then introduce a restriction
``\llfe'', which many common AFSs satisfy, and define \emph{formative
rules} (a variation of usable rules) for \llfe\ AFSs.  We show how
the results from Section~\ref{sec:basic} can be strenghtened for
\llfe\ systems.

To find reduction orderings for dependency pair constraints,
Section~\ref{sec:redpair} discusses two approaches:
first, we will see how dependency pairs interact with weakly
monotonic algebras, and then we define
\emph{argument functions}, a generalisation of argument filterings.

In Section~\ref{sec:noncollapse} we discuss some improvements when
dealing with non-collapsing dependency pairs; for example, in this
setting we can use the subterm criterion and usable rules.
Section~\ref{sec:algorithm} summarises all results, both for \llfe\ 
and non-\llfe\ systems, in a ready-to-use algorithm.

In Section~\ref{sec:improvements} we discuss how the theory in this
paper can be used for polymorphic and otherwise infinite systems, and
how the static and dynamic dependency pair approaches can be combined.
Experimental results with our tool \wanda\ are presented
in Section~\ref{sec:experiments}.

\section{Background and Related Work}\label{sec:relatedwork}

\summary{In this section, we discuss the existing work on
higher-order dependency pairs.}

The existing work on higher-order dependency pairs
can roughly be split along two axes.  On the one axis,
the higher-order formalism; we distinguish between applicative
rewriting, rewriting modulo $\beta$ (HRSs), and with $\beta$ as a
separate step (AFSs). On the other the style of dependency pairs,
with common styles being \emph{dynamic} and \emph{static}.
Figure \ref{fig:overview} gives an overview.

\begin{figure}[h]
\vspace{-3pt}
\begin{tabular}{r|lll}
 & \multicolumn{1}{c}{\textbf{Applicative}} & \textbf{HRS} & \textbf{AFS} \\
\hline
\textbf{Dynamic}
  & \cite{kus:01:1} \cite{aot:yam:05:1}
  & \cite{sak:wat:sak:01} \cite{kop:raa:10:1}
  & \cite{kop:raa:11:1} \emph{this paper} \\
\textbf{Static}
  & \cite{kus:sak:07:1} \cite{kus:sak:09:1}
  & \cite{bla:06:1} \cite{sak:kus:05:1} \cite{kus:iso:sak:bla:09:1}
    \cite{suz:kus:bla:11}
  & \cite{bla:06:1} \\
\textbf{Other}
  & \cite{hir:mid:zan:08}
  & --
  & -- \\
\end{tabular}
\caption{\emph{Papers on Higher Order Dependency Pairs}}
\label{fig:overview}
\vspace{-5pt}
\end{figure}

\noindent
The dynamic and static approach differ in the treatment of leading
variables in the right-hand sides of rules (subterms
$\app{\avar}{\aterm_1} \cdots \aterm_n$ with $n > 0$ and $\avar$ a
free variable).  In the dynamic approach, such subterms lead to a
dependency pair; in the static approach they do not.
First-order techniques like argument filterings, the
subterm criterion and usable rules are easier to extend to a static
approach, while equivalence results tend to be limited to the dynamic
style.  Static dependency pairs rely on certain restrictions on the
rules.

\sparagraaf{Dependency pairs for applicative term rewriting}
In applicative systems, terms are built from variables, constants and
a binary application operator.  Functional variables may be present,
as in $x \cdot a$, but there is no abstraction, as in $\abs{x}{x}$.
There are various styles of applicative rewriting, both untyped,
simply typed, and with alternative forms of typing.

A dynamic approach was defined both for untyped and simply-typed
applicative systems in~\cite{kus:01:1}, along with a definition of
argument filterings.
A first static approach appears in~\cite{kus:sak:07:1} and is
improved in~\cite{kus:sak:09:1}; the method is restricted to
`plain function passing' systems where, intuitively, leading
variables are harmless.
Due to the lack of binders, it is also possible to eliminate leading
variables by instantiating them, as is done for simply-typed systems
in~\cite{aot:yam:05:1}; in~\cite{hir:mid:zan:08}, an
uncurrying transformation from untyped applicative systems to normal
first-order systems is used.
These techniques have no parallel in rewriting with binders.

Unfortunately, strong though the results for applicative systems may
be, they are not directly useful in the setting of AFSs, since
termination may be lost by adding $\lambda$-abstraction and
$\beta$-reduction.
For example, the simply-typed applicative system
${\sf{app}} \cdot ({\sf{abs}} \cdot F) \cdot x \rightarrow F \cdot x$,
with $F : \abasetype \typepijl \abasetype$ a functional variable,
$x: \abasetype$ a variable, and ${\sf{app}} : \abasetype \typepijl
\abasetype \typepijl \abasetype$,\ ${\sf{abs}} : (\abasetype \typepijl
\abasetype) \typepijl \abasetype$ constants,
is terminating because in every step the size of a term decreases.
However, adding $\lambda$-abstraction and $\beta$-reduction 
destroys this property:
with $\omega = {\sf{abs}} \cdot ( \abs{x}{{\sf{app}} \cdot x \cdot x})$
we have
${\sf{app}} \cdot \omega \cdot \omega = 
{\sf{app}} \cdot ( {\sf{abs}} \cdot ( \abs{x}{{\sf{app}} \cdot x \cdot x})) \cdot \omega
\rightarrow
( \abs{x}{{\sf{app}} \cdot x \cdot x} ) \cdot \omega \rightarrow 
{\sf{app}} \cdot \omega \cdot \omega$.

Let us move on to rewriting with binders; most results here are on
Nipkow's HRSs.

\paragraaf{Dynamic Dependency Pairs for HRSs}
A first
definition of dependency pairs for HRSs 
is given in~\cite{sak:wat:sak:01}.
Here termination is not equivalent to the absence of infinite 
dependency chains, and a term is required to be greater than its
subterms (the \emph{subterm property}), which makes many
optimisations impossible.
In~\cite{kop:raa:10:1} (extended abstract) we have discussed
how the subterm property may be weakened by posing restrictions on
the rules, and in~\cite{kop:raa:11:1}, the short version of this
paper, we have explored an extension of the dynamic approach to AFSs.

\paragraaf{Static Dependency Pairs for HRSs}
The static approach in~\cite{kus:sak:07:1} is moved to the setting of
HRSs in~\cite{kus:iso:sak:bla:09:1}, and extended with argument
filterings and usable rules in~\cite{suz:kus:bla:11}.
The static approach omits dependency pairs $\up{\afun}(\vec{l})
\dppijl \app{\avar}{\vec{r}}$ with $\avar$ a variable, which avoids
the need for a subterm property, but it allows bound variables to
become free in the right-hand side of a dependency pair.
The technique is restricted to \emph{plain function passing} HRSs.
A system with for instance the (terminating) rule
$\mathsf{h}(\mathsf{g}(\abs{\avar}{F(\avar)})) \arrz
F(\mathsf{a})$ cannot be handled.  Moreover, the approach is not
complete: a terminating AFS may have a static dependency
chain.

\bigskip \noindent
The definitions for HRSs~\cite{sak:wat:sak:01,kus:iso:sak:bla:09:1}
do not immediately carry over to AFSs, since AFSs may have
rules of functional type, and $\beta$-reduction is a separate rewrite
step.
A short paper by Blanqui~\cite{bla:06:1} introduces static dependency
pairs on a form of rewriting which includes AFSs, but it poses some
restrictions, such as base-type rules.
The present work considers dynamic dependency pairs for AFSs and is
most related to \cite{sak:wat:sak:01}, but is adapted for the
different formalism.
Our method conservatively extends the one for first-order rewriting
and provides a characterisation of termination for left-linear AFSs.
We have chosen for a dynamic rather than a static approach because,
although the static approach is stronger when applicable, the dynamic
definitions can be given without restrictions.
The restrictions we do provide, to weaken the subterm property and
enable for instance argument filterings, are optional.
We will say some words about integrating the static and dynamic
approaches in Section~\ref{sec:improvements}.

\section{Preliminaries} \label{sec:preliminaries}

\summary{In this section, 
we present the formalism of Algebraic Functional Systems (AFSs).}

We consider higher-order rewriting as defined originally by
Jouannaud and Okada~\cite{jou:oka:91:1}, also called Algebraic
Functional Systems (AFSs).
Terms are built from simply-typed variables, abstraction and 
application (as in the simply-typed $\lambda$-calculus), and also
from function symbols which take a fixed number of
arguments. 
Terms and matching are modulo $\alpha$,
and
every AFS contains the $\beta$-reduction rule.
Several variations of the definition of AFSs exist;
here we roughly
follow \cite[Chapter 11.2.3]{ter:03}, which coincides with the format
currently used in the higher-order category of the annual
termination competition~\cite{termcomp}.

\paragraaf{Types and Terms}
Assuming a set $\setsorts$ of \emph{base types},
the set of \emph{simple types} (or just \emph{types}) 
is generated 
using the binary type constructor $\typepijl$, 
according to the following grammar:
\[
\settypes ::= \setsorts \mid \settypes \typepijl \settypes
\]
The arrow operator is right-associative.
Types are denoted by $\atype, \btype, \ldots$ and base types by $\abasetype,
\bbasetype\ldots$.
A type with at least one occurrence of $\ftypepijl$
is called a \emph{functional type}.
A \emph{type declaration} is an expression of the form 
$[\atype_1 \times \ldots \times \atype_n] \decpijl \btype$ with
$\btype$ and all $\atype_i$ types;
we write just $\btype$ if $n = 0$.
Type declarations are not types, but are used for typing purposes.

We assume a set $\setvar$, 
consisting of for each type infinitely many typed \emph{variables},
written as $\avar , \bvar , \cvar ,\ldots$.
We further assume a set $\setfun$, disjoint from $\setvar$,
consisting of \emph{function symbols}, equipped with a type declaration,
and written as $\afun , \bfun , \ldots$ 
or using more suggestive notation.
To stress the type (declaration) of a symbol $a$
we may write $a:\atype$.
The set of \emph{terms} over $\setfun$ consists of expressions $\aterm$
for which we can infer $\aterm : \atype$ for some type $\atype$ using
the clauses:

\begin{tabular}{lllll}
  \clausevar &
  & $\avar : \atype$
  & \mbox{} & if $\avar:\atype \in \setvar$ \\
  \clauseapp &
  & $\aterm \cdot \bterm : \btype$
  & \mbox{}& if $\aterm : \atype \typepijl \btype$ and
      $\bterm : \atype$  \\
  \clauseabs & 
  & $\abs{\avar}{\aterm} : \atype \typepijl \btype$
  & \mbox{} & if $\avar:\atype \in \setvar$ and $\aterm : \btype$ \\
  \clausefun &
  & $\afun (\aterm_1 , \ldots , \aterm_n) : \btype$
  & \mbox{} &
  if $\afun : {[\atype_1 \times \ldots \times \atype_n] \decpijl
    \btype} \in \F$ and 
  $\aterm_1 :\atype_1 , \ldots, \aterm_n:\atype_n$
\end{tabular} \\
Terms built using these clauses are called respectively
a variable, an application, an abstraction, and a functional term.
Note that a function symbol $\afun : {[\atype_1 \times \ldots \times
\atype_n] \typepijl \btype}$ takes exactly $n$ arguments, and
$\btype$ may be a functional type.  
The $\lambda$ binds occurrences of variables as in the
$\lambda$-calculus, and term equality is modulo $\alpha$-conversion
(bound variables may be renamed).
A variable in $\aterm$ which is not bound by some $\lambda$ is
\emph{free}, and the set of free variables of $\aterm$ is denoted by
$\FV(\aterm)$.
Application is left-associative.
Let $\head(\ )$ denote the head of an application, so
$\head(\app{\aterm}{\bterm}) = \head(\aterm)$ and $\head(\aterm) =
\aterm$ for non-applications.

A \emph{substitution} $[\vec{\avar}:=\vec{\aterm}]$, 
with $\vec{\avar}$ and $\vec{\aterm}$ non-empty finite vectors of equal
length, 
is the homomorphic extension of the type-preserving mapping
$\vec{\avar} \mapsto \vec{\aterm}$ from variables to terms.  
Substitutions are denoted $\asub, \bsub$,
and the result of applying $\asub$ to a term $\aterm$ 
is denoted $\subst{\aterm}{\asub}$.
The \emph{domain} $\domain(\gamma)$ of $\gamma =
[\vec{\avar}:= {\vec{\aterm}}]$ is $\{\vec{\avar}\}$.
Substituting does not capture free variables.

Let $\Box_\atype : \atype$ be a fresh symbol for every type $\atype$.
A \emph{context} $C[]$ is a term with a single occurrence of some
$\Box_\atype$.
The result of the replacement of $\Box_\atype$ in $C[]$
by a term $\aterm : \atype$
is denoted by $C[s]$.
Such replacements may capture free variables.
For example, $(\abs{\avar}{\bvar})[\bvar := \avar] =
\abs{\cvar}{\avar}$,
but for $C[] = \abs{\avar}{\Box_\atype}$ 
we have $C[\avar] = \abs{\avar}{\avar}$.

We say $\bterm$ is a \emph{subterm} of $\aterm$, notation $\aterm
\suptermeq \bterm$,
if $\aterm = C[\bterm]$ for some context $C$.
If in addition $C \neq \Box_\atype$, 
then $\bterm$ is a \emph{strict subterm} of $\aterm$,
notation $\aterm \supterm \bterm$.

\paragraaf{Rules and Rewriting}
A \emph{rewrite rule} over a set of function symbols $\setfun$
is a pair of terms $l \arrz r$ over $\setfun$
such that 
$l$ and $r$ have the same type, 
and all free variables of $r$ also occur in $l$.
In~\cite{kop:11:1} some termination-preserving transformations
on the general format of AFS-rules are presented.
Using these results, we can
additionally assume
that a left-hand side $l$ is of the form
$\app{f(l_1,\ldots,l_n)}{l_{n+1}} \cdots l_m$
(with $m \geq n \geq 0$),
and does not contain subterms of the form 
$\app{(\abs{\avar}{\aterm})}{\bterm}$.
(Many AFSs are defined in this way already.)
Note that we do not assume
$\eta$-normal
or $\eta$-exanded forms, and that we may have rules of functional
type.

Given a set of rewrite rules $\Rules$, the \emph{rewrite relation}
$\arr{\Rules}$ on terms is given by: \\
\begin{tabular}{lrcll}
\clauserule & $C[\subst{l}{\asub}]$ & $\arr{\Rules}$ &
  $C[\subst{r}{\asub}]$ & 
  with $l \arrz r \in \Rules$, $C$ a context,
  $\asub$ a substitution \\
\clausebeta & $C[(\abs{\avar}{\aterm}) \cdot \bterm]$ &
  $\arr{\Rules}$ & $C[\aterm[\avar:=\bterm]]$ & \\
\end{tabular} \\
We sometimes use the notation $\aterm \arr{\beta} \bterm$ for a rewrite
step using \clausebeta.
A \emph{headmost step} is a reduction $\aterm \arr{\Rules} \bterm$
using either clause, where $C$ has the form
$\app{\Box_\atype}{\aterm_1} \cdots \aterm_n$ with $n \geq 0$.
\smallskip

An \emph{algebraic functional system} (AFS) is a pair $(\setfun ,
\Rules)$ consisting of a set of function symbols $\setfun$ and a set
$\Rules$ of rewrite rules over $\setfun$;
it is often specified by giving only $\Rules$.

A function symbol $\afun$ is a \emph{defined symbol} of an AFS 
if there is a rule with left-hand side 
$\afun (l_1,\ldots,l_n) \cdot l_{n+1} \cdots l_m$, and a
\emph{constructor symbol} if not.
The sets of defined and constructor symbols are denoted by
$\Defineds$ and $\Constructors$ respectively.
A rewrite rule $l \arrz r$ is \emph{left-linear} if 
every variable occurs at most once free in $l$; 
an AFS is left-linear if all its rewrite rules are.
A rule $l \arrz r$ is \emph{collapsing} if $\head(r)$ is a variable.
A term $\aterm$ is called \emph{terminating} if every reduction
sequence starting in $\aterm$ is finite;
an AFS is terminating if
all its terms are.

We assume that an AFS has only finitely many rules.  
In Section~\ref{subsec:polymorphism} we shortly discuss
how to use dependency pairs to prove termination of 
AFSs with infinitely many rules.

\begin{example}\label{ex:twicedefinition}
The following AFS $\twice$ is the running example of this paper.
It has the following four function symbols:
$\nul : \nat$,  
$\suc: [\nat] \typepijl \nat$,
$\I : [\nat] \typepijl \nat$, and
$\twice : [\nat \typepijl \nat] \typepijl \nat \typepijl \nat$.
There are three rewrite rules:
\[
\begin{array}{rcl}
\I(\nul) & \arrz & \nul \\
\I(\suc(n)) & \arrz & \suc(\twice(\abs{x}{\I(x)}) \cdot n) \\
\twice(F) & \arrz & \abs{y}{F \cdot (F \cdot y)} \\
\end{array}
\]
Recall that we also have $\beta$-reduction steps.
The symbol $\I$ represents the identity function on natural numbers.
This system is terminating, but this is not trivial to prove;
neither recursive path orderings like HORPO~\cite{jou:rub:99:1}
and CPO~\cite{bla:jou:rub:08:1}, nor a static dependency pair
approach, can handle the second $\I$-rule, due to the subterm $\I(x)$.
The static approach gives a constraint $\up{\I}(\suc(n)) \gterm
\up{\I}(x)$, which is impossible to satisfy because $\gterm$ must
be closed under substitution, and $\suc(n)$ might be
substituted for the free variable
$x$.  CPO gives a similar problem.
\end{example}

\section{The First-Order Dependency Pair Approach}\label{sec:firstorder}

\summary{In this section, 
we recall the dependency pair approach for first-order rewriting.  
We emphasise those parts which are relevant for our higher-order approach.}

We assume that first-order term rewrite systems (TRSs) are already known;
they can also be thought of as AFSs 
where all function symbols have a type declaration
$[\o \times \ldots \times \o] \decpijl \o$ and where terms are
formed without clauses \clauseabs\ and \clauseapp.
In this section we recall those definitions and results from the
theory of 
dependency pairs for TRSs that we will generalize or adapt to the
higher-order setting in this paper.
The definitions here are close to those in~\cite{hir:mid:04:1};
our set-up is in between the one for the 
dependency pairs \emph{approach}~\cite{art:gie:00:1} 
and the
dependency pairs \emph{framework}~\cite{gie:thi:sch:05:2}.
This section is also meant to give additional background
for those not familiar with dependency pairs.

\subsection{Motivation}\label{subsec:motivation}
Two important properties of the termination method using
dependency pairs are that it is suitable for automation,
and that it can be used to prove termination of TRSs
which are not simply terminating.
A (well-known) example of such a TRS is:
\[
\begin{array}{rclrcl}
  \mins(\avar,\nul) & \arrz & \avar &
  \quot(\nul,\suc(\bvar)) & \arrz & \nul \\
  \mins(\suc(\avar),\suc(\bvar)) & \arrz & \mins(\avar,\bvar)\ \ \  &
  \quot(\suc(\avar),\suc(\bvar)) & \arrz &
    \suc(\quot(\mins(\avar,\bvar),\suc(\bvar))) \\
\end{array}
\]
A TRS is simply terminating if it can be proved terminating
using a \emph{reduction ordering} $\gterm$
(a well-founded ordering on terms 
which is both monotonic and closed under substitution)
that satisfies the \emph{subterm property}, which means that
$\afun(\aterm_1,\ldots, \aterm_n) \geqterm \aterm_i$ 
for every $i \in \{1 , \ldots , n\}$. 

\subsection{Dependency Pairs}\label{subsec:fo:dp}
An intuition behind the dependency pair approach is to 
identify those parts of the right-hand sides of rewrite rules which
may give rise to an infinite reduction.
Suppose we have a minimal non-termating term $t$, 
so a non-terminating term 
where all proper subterms are terminating.
An infinite reduction from $t$ has the form
$t \arrz^* l\gamma \arrz r\gamma \arrz \ldots$
with $l \rightarrow r$ a rewrite rule.
Then, a minimal non-terminating subterm of $r\gamma$
has as root-symbol a defined symbol from the pattern of $r$.
Thus, we are interested in subterms of right-hand sides of rewrite
rules with a defined symbol at the root.
Such subterms are called \emph{candidate terms} of $r$.

We obtain $\up{\F}$ by adding to each symbol $f$ in the 
signature $\F$ of a first-order TRS a symbol $\up{f}$ with
the same arity.
The \emph{dependency pairs} of a rewrite rule 
$f(l_1 , \ldots , l_n) \arrz r$
are all pairs
$\up{f} (l_1 , \ldots , l_n) \dppijl \up{\bfun} (p_1 , \ldots , p_m)$
with $r \suptermeq \bfun (p_1 , \ldots , p_m)$,
and $\bfun$ a defined symbol, 
and $\bfun(p_1,\ldots,p_m)$ not a subterm of some $l_i$.
The set of all dependency pairs of a TRS $(\F , \Rules)$ 
is denoted by $\DP(\Rules)$.

The $\quot$-example has the following dependency pairs:
\[
\begin{array}{rcl}
\up{\mins}(\suc(\avar),\suc(\bvar)) & \dppijl &
  \up{\mins}(\avar,\bvar) \\
\up{\quot}(\suc(\avar),\suc(\bvar)) & \dppijl &
  \up{\quot}(\mins(\avar,\bvar),\suc(\bvar)) \\
\up{\quot}(\suc(\avar),\suc(\bvar)) & \dppijl &
  \up{\mins}(\avar,\bvar) \\
\end{array}
\]
The first and third rewrite rule do not give dependency pairs, 
because their right-hand sides do not contain defined symbols.
The fourth rule gives two different dependency pairs.

A \emph{dependency chain} is 
a sequence $\rijtje{(l_i \dppijl p_i, \aterm_i, \bterm_i) \mid i \in \N}$, 
such that for all $i$:
\begin{enumerate}[(1)]
\item $l_i \dppijl p_i \in \DP(\Rules)$;
\item $\aterm_i = l_i \asub_i$ and $\bterm_i = p_i\asub_i$
 for some substitution $\asub_i$;
\item $\bterm_i \arrr{\Rules} \aterm_{i+1}$.
\end{enumerate}
Since $\bterm_i$ has the form $\up{\afun}(\vec{\cterm})$
and the marked symbol $\up{\afun}$ is not used in any rule, 
the reduction $\bterm_i \arrr{\Rules} \aterm_{i+1}$ does not use
headmost steps.
The chain is \emph{minimal} if all $\bterm_i$ are terminating under
$\arr{\Rules}$.
Termination of a TRS can be characterized using dependency chains:

\begin{theorem}[\cite{art:gie:00:1}]
\label{thm:fo:dpchain}
A TRS is terminating \emph{if and only if} it does not admit a
minimal dependency chain.
\end{theorem}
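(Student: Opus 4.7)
The plan is to prove both directions by contrapositive, connecting infinite (minimal) chains to infinite rewrite sequences. For the \emph{easy direction}, terminating implies no minimal chain: since a minimal chain is in particular a chain, it suffices to show that any infinite chain yields an infinite reduction. Given an infinite chain with entries $(l_i \dppijl p_i, \aterm_i, \bterm_i)$, write $\overline{\aterm}_i, \overline{\bterm}_i, \overline{l}_i, \overline{p}_i$ for the terms obtained by erasing the head marks. Each dependency pair arises from a rewrite rule $\overline{l}_i \arrz r_i$ with $r_i \suptermeq \overline{p}_i$, so $\overline{\aterm}_i = \overline{l}_i\asub_i \arr{\Rules} r_i\asub_i \suptermeq \overline{p}_i\asub_i = \overline{\bterm}_i$. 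Combined with condition~(3), each $\overline{\aterm}_i$ rewrites in at least one step to a term containing $\overline{\aterm}_{i+1}$ as a subterm; lifting these reductions into the surrounding contexts yields an infinite reduction starting at $\overline{\aterm}_0$, witnessing non-termination.

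For the \emph{hard direction}, no minimal chain implies terminating: contrapositively, assume the TRS is non-terminating and construct a minimal infinite chain. Fix a \emph{minimal non-terminating term} $\cterm$ --- a non-terminating term whose strict subterms are all terminating --- which exists because a term has only finitely many subterm positions. Then $\cterm$ has the form $\afun(\cterm_1,\ldots,\cterm_n)$ with each $\cterm_i$ terminating, and $\afun$ must be defined. Only finitely many steps of an infinite reduction from $\cterm$ can lie strictly inside the arguments (otherwise some $\cterm_i$ would be non-terminating), so we eventually reach a reduct $\afun(\cterm'_1,\ldots,\cterm'_n) = l\asub$ where $l \arrz r$ is a rule. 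Applying this rule, $r\asub$ is non-terminating and hence has a minimal non-terminating subterm. This subterm cannot lie inside any $\asub(\avar)$, since those positions sit in the terminating $\cterm'_i$; hence it has the form $\bfun(\vec{q})\asub$ where $\bfun(\vec{q})$ is a subterm of $r$ with defined root $\bfun$. By the same argument, $\bfun(\vec{q})$ cannot be a subterm of any $l_i$, so $\up{\afun}(l_1,\ldots,l_n) \dppijl \up{\bfun}(\vec{q})$ is a dependency pair. Taking this as the first chain entry and iterating with $\bfun(\vec{q})\asub$ as the new minimal non-terminating term produces an infinite chain. The passage from $\bterm_i$ to $\aterm_{i+1}$ consists only of the argument-reductions preceding the next root step, so the marked head is preserved; moreover, the chain is minimal because the unmarked arguments of $\bterm_i$ are strict subterms of the current minimal non-terminating term, hence terminating, and $\up{\bfun}$ occurs in no rule.

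The delicate part is the iterative construction in the second direction. At each stage one must verify both that the chosen subterm of $r$ genuinely produces a dependency pair (defined root, not a subterm of any $l_i$) and that the reduction $\bterm_i \arrr{\Rules} \aterm_{i+1}$ can avoid any step at the marked head. Both points rely on a careful separation, within each infinite reduction from a minimal non-terminating reduct, of steps strictly inside the arguments from the eventual root step.
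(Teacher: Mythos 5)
Your proof is correct and follows the standard Arts--Giesl argument that the paper itself only cites for this first-order statement: the easy direction by concatenating the root step, the subterm step, and the internal reduction into an infinite rewrite sequence, and the hard direction by iterating the choice of a minimal non-terminating subterm of $r\asub$ and checking it yields a genuine dependency pair with terminating arguments. This is also exactly the skeleton the paper later reuses (with extra cases for $\beta$-reduction, abstractions and functional rules) in its proof of the higher-order analogue, Theorem~\ref{thm:dependencychain}.
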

A higher-order generalization of this result is provided in 
Theorems~\ref{thm:dependencychain} and~\ref{thm:dependencychainll}.

\begin{example}
The TRS 
$\mathtt{nats}(n) \arrz \cons(n, \mathtt{nats}(\suc(n)))$
has an infinite dependency chain
\[
\begin{array}{lllllll}
( & \up{\mathtt{nats}}(n) \dppijl \up{\mathtt{nats}}(\suc(n)) & , &
  \up{\mathtt{nats}}(\nul) & , & \up{\mathtt{nats}}(\suc(\nul)) & ),
  \\
( & \up{\mathtt{nats}}(n) \dppijl \up{\mathtt{nats}}(\suc(n)) & , &
  \up{\mathtt{nats}}(\suc(\nul)) & , &
  \up{\mathtt{nats}}(\suc(\suc(\nul))) & ), \\
  & \ldots \\
\end{array}
\]
corresponding to
the infinite reduction $\mathtt{nats}(\nul) \arrz \mathtt{nats}(
\suc(\nul)) \arrz \mathtt{nats}(\suc(\suc(\nul))) \arrz \ldots$
\end{example}

\subsection{Using a Reduction Pair}\label{subsec:fo:reduction}
A TRS without 
infinite dependency chain is terminating.  
Absence of infinite dependency chains
can be demonstrated with a \emph{reduction pair}.
This
is a pair $(\geqterm,\gterm)$ of a \emph{quasi-order} and a
\emph{well-founded order}, such that:
\begin{iteMize}{$\bullet$}
\item $\geqterm$ and $\gterm$ are \emph{compatible}: either $\gterm
  \cdot \geqterm$ is included in $\gterm$, or $\geqterm \cdot \gterm$
  is;
\item $\geqterm$ and $\gterm$ are both \emph{stable} (preserved under
  substitution);
\item $\geqterm$ is monotonic (if $\aterm \geqterm \bterm$, then
  $C[\aterm] \geqterm C[\bterm]$).
\end{iteMize}

\begin{theorem}[\cite{art:gie:00:1}]\label{thm:fo:reductionpair}
A TRS is terminating, if there exists 
a reduction pair $(\geqterm,\gterm)$  
such that 
$l \gterm p$ for all dependency pairs $l \dppijl p$, 
and $l \geqterm r$ for all rules $l \arrz r$.
\end{theorem}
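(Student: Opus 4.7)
The plan is to argue by contrapositive, via Theorem~\ref{thm:fo:dpchain}. Assuming the existence of a reduction pair $(\geqterm,\gterm)$ satisfying the two orientation conditions, I would show that the TRS admits no (minimal) infinite dependency chain, and then invoke Theorem~\ref{thm:fo:dpchain} to conclude termination.

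Concretely, suppose for contradiction that $\langle (l_i \dppijl p_i,\aterm_i,\bterm_i) \mid i \in \N\rangle$ is an infinite dependency chain, with $\aterm_i = \subst{l_i}{\asub_i}$ and $\bterm_i = \subst{p_i}{\asub_i}$, and $\bterm_i \arrr{\Rules} \aterm_{i+1}$. The first step is to turn the dependency-pair condition $l_i \gterm p_i$ into $\aterm_i \gterm \bterm_i$ using stability of $\gterm$ under the substitution $\asub_i$. The second step is to turn the rule condition $l \geqterm r$ into $\bterm_i \geqterm^* \aterm_{i+1}$: by stability, every rewrite step at the root of some context gives $\subst{l}{\asub} \geqterm \subst{r}{\asub}$, and by monotonicity of $\geqterm$ this lifts through the context $C[\,]$, so a single $\arr{\Rules}$-step is contained in $\geqterm$; iterating yields the reflexive-transitive version.

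Combining the two, I obtain an infinite alternating chain
\[
\aterm_0 \gterm \bterm_0 \geqterm^* \aterm_1 \gterm \bterm_1 \geqterm^* \aterm_2 \gterm \cdots
\]
and compatibility of $\geqterm$ with $\gterm$ collapses this into an infinite strictly descending $\gterm$-sequence $\aterm_0 \gterm \aterm_1 \gterm \aterm_2 \gterm \cdots$, contradicting well-foundedness of $\gterm$. Hence no infinite dependency chain exists, and termination follows.

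The routine parts are the two stability/monotonicity applications; the only subtle point is that the rule-orientation hypothesis is phrased only for rewrite-rule steps, so I have to be explicit that $\arr{\Rules}$ here includes the rewrite rules of the TRS (first-order, so there is no separate $\beta$-clause to worry about in this section). Note also that I do not need the minimality clause of the dependency chain; the reduction-pair argument works for arbitrary infinite chains, which is why Theorem~\ref{thm:fo:reductionpair} is stated without a minimality assumption on $(\geqterm,\gterm)$. The main conceptual obstacle, more than any calculation, is simply the careful bookkeeping of where each of stability, monotonicity, and compatibility is invoked.
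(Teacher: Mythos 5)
Your proof is correct and is the standard argument; the paper itself does not prove this statement (it is cited from Arts and Giesl), but your reasoning — stability to instantiate the oriented pairs and rules, monotonicity to close $\geqterm$ under contexts, and compatibility to collapse the alternating chain into an infinite $\gterm$-descent — is exactly the argument the paper later gives for the higher-order counterpart, Theorem~\ref{thm:maintheorem}. Your observation that minimality of the chain is not needed here is also accurate.
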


\begin{example}\label{ex:quotpoly}
The $\quot$ example is terminating if there is a reduction pair
satisfying:
\[
\begin{array}{rclrcl}
\mins(\avar,\nul) & \geqterm & \avar &
\up{\mins}(\suc(\avar),\suc(\bvar)) & \gterm & \up{\mins}(\avar,\bvar) \\
\mins(\suc(\avar),\suc(\bvar)) & \geqterm & \mins(\avar,\bvar) &
\up{\quot}(\suc(\avar),\suc(\bvar)) & \gterm &
  \up{\quot}(\mins(\avar,\bvar),\suc(\bvar)) \\
\quot(\nul,\suc(\bvar)) & \geqterm & \nul &
\up{\quot}(\suc(\avar),\suc(\bvar)) & \gterm & \up{\mins}(\avar,\bvar) \\
\quot(\suc(\avar),\suc(\bvar)) & \geqterm &
  \multicolumn{2}{l}{\suc(\quot(\mins(\avar,\bvar),\suc(\bvar)))} \\
\end{array}
\]
These constraints are oriented with a polynomial interpretation
with $\algint{\mins(x,y)} =
\algint{\up{\mins}(x,y)} = \algint{x},\ \algint{\quot(x,y)} =
\algint{\up{\quot}(x,y)} = \algint{x}+\algint{y},\ 
\algint{\suc(x)} = \algint{x}+1,\ \algint{\nul} = 0$:
\[
\begin{array}{rclrcl}
\avar & \geq & \avar &
\avar+1 & > & \avar \\
\avar+1 & \geq & \avar &
\avar+\bvar+2 & > & \avar+\bvar+1 \\
\bvar+1 & \geq & 0 &
\avar+\bvar+2 & > & \avar \\
\avar+\bvar+2 & \geq & \avar+\bvar+2 \\
\end{array}
\]
Note that we needed dependency pairs to use an interpretation like
this: the resulting ordering $\gterm$ is not monotonic, since
$\algint{\mins(\aterm,\bterm)} = \algint{\mins(\aterm,\cterm)}$ even
if $\bterm \gterm \cterm$.
\end{example}

Rather than orienting all dependency pairs at once, we can use a
step by step approach.  
Let a set of dependency pairs $\P$ be called \emph{chain-free} 
if there is no minimal dependency chain using only pairs in
$\P$\footnote{In the language of~\cite{gie:thi:sch:05:2},
this corresponds to finiteness of the DP problem 
$(\P,\emptyset,\Rules,\mathtt{minimal})$.}. 
Note that $\emptyset$ is chain-free.
The following result has Theorem~\ref{thm:maintheorem} as a
higher-order counterpart.

\begin{theorem}\label{thm:firstorder:chainfree}
A set $\P = \P_1 \uplus \P_2$ is chain-free if $\P_2$ is chain-free, and
there is a reduction pair $(\geqterm,\gterm)$ such that $l \gterm p$
for $l \dppijl p \in \P_1$,\ $l \geqterm p$ for $l \dppijl p \in \P_2$
and $l \geqterm r$ for $l \arrz r \in \Rules$.
\end{theorem}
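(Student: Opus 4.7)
The plan is to argue by contradiction: assume $\P$ admits a minimal dependency chain $[(l_i \dppijl p_i, \aterm_i, \bterm_i) \mid i \in \N]$ with $\aterm_i = l_i\asub_i$, $\bterm_i = p_i\asub_i$, and $\bterm_i \arrr{\Rules} \aterm_{i+1}$, and derive that either $\gterm$ is not well-founded or $\P_2$ is not chain-free.

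First, I would translate each step of the chain into a comparison in the reduction pair. Using stability, the rule condition $l \geqterm r$ for $l \arrz r \in \Rules$ lifts to $l\bsub \geqterm r\bsub$ for every substitution, and then monotonicity of $\geqterm$ propagates this through contexts, so $\arr{\Rules} \subseteq \geqterm$, hence $\bterm_i \geqterm \aterm_{i+1}$. For the dependency-pair step itself, stability applied to $l_i \gterm p_i$ (if the pair lies in $\P_1$) or $l_i \geqterm p_i$ (if it lies in $\P_2$) gives $\aterm_i \gterm \bterm_i$ or $\aterm_i \geqterm \bterm_i$ respectively.

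Assembling these comparisons produces an infinite chain
\[
\aterm_0 \,(\gterm\text{ or }\geqterm)\, \bterm_0 \geqterm \aterm_1 \,(\gterm\text{ or }\geqterm)\, \bterm_1 \geqterm \aterm_2 \cdots
\]
in which a strict step $\gterm$ occurs exactly at the indices $i$ where $l_i \dppijl p_i \in \P_1$. Now the key step is a case split on how often $\P_1$ is used. If infinitely often, the compatibility of the reduction pair (either $\gterm \cdot \geqterm \subseteq \gterm$ or $\geqterm \cdot \gterm \subseteq \gterm$) lets me collapse the sequence into an infinite strictly descending chain in $\gterm$, contradicting well-foundedness. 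If only finitely often, there is some index $N$ beyond which every pair $l_i \dppijl p_i$ lies in $\P_2$; the suffix $[(l_i \dppijl p_i, \aterm_i, \bterm_i) \mid i \geq N]$ is still a minimal dependency chain (minimality and the reduction constraints are preserved on a suffix), and it uses only pairs from $\P_2$, contradicting that $\P_2$ is chain-free.

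The only delicate point I expect is packaging the compatibility argument cleanly so it works under both possible orientations of $\gterm \cdot \geqterm$ and $\geqterm \cdot \gterm$; apart from that, all the pieces (stability, monotonicity, well-foundedness) are routine consequences of the reduction-pair definition, and preservation of minimality on a suffix is immediate from the definition of a minimal chain.
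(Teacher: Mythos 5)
Your proof is correct and follows the standard argument; the paper itself states this first-order theorem without proof (it is background material), but your reasoning matches exactly the structure the paper uses for its higher-order counterpart, Theorem~\ref{thm:maintheorem}: translate rule steps and dependency-pair steps into $\geqterm$/$\gterm$ comparisons via stability and monotonicity, observe that chain-freeness of $\P_2$ forces infinitely many strict steps, and contradict well-foundedness of $\gterm$ using compatibility. The only cosmetic difference is that the paper asserts up front that infinitely many pairs lie in $\P_1$, whereas you derive this by a case split at the end; the two are equivalent.
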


\subsection{Argument Filterings}\label{subsec:fo:argfil}

In order to obtain a reduction pair, there are two ways we could go:
either we use approaches like the polynomial interpretations given in
Example~\ref{ex:quotpoly}, which directly give us a pair $(\geqterm,
\gterm)$ (where $\gterm$ may be non-monotonic),
or we use an existing
reduction ordering or pair
and adapt it with \emph{argument filterings}.
An argument filtering is a function $\argfil$ which maps terms of the
form $\afun(\avar_1,\ldots,\avar_n)$ with $\afun \in \up{\F}$ either
to a term $\afun_\argfil(\avar_{i_1},\ldots,\avar_{i_m})$
or to one of the $\avar_i$.
An argument filtering is applied to a term as follows:
\[
\begin{array}{rcll}
\filter(\afun(\aterm_1,\ldots,\aterm_n)) & = &
  \afun_\argfil(\filter(\aterm_{i_1}),\ldots,\filter(\aterm_{i_m})))
  & \mathrm{if}\ \argfil(\afun(\vec{\avar})) =
  \afun_\argfil(\avar_{i_1},\ldots,\avar_{i_m}) \\
\filter(\afun(\aterm_1,\ldots,\aterm_n)) & = &
  \filter(\aterm_i) & \mathrm{if}\ \argfil(\afun(\vec{\avar})) =
  \avar_i \\
\filter(\avar) & = & \avar & \mathrm{if}\ \avar\ \mathrm{a\ 
  variable} \\
\end{array}
\]
Phrased differently, $\filter(\afun(\aterm_1,\ldots,\aterm_n)) =
\argfil(\afun(\avar_1,\ldots,\avar_n))[\avar_1:=\filter(\aterm_1),
\ldots,\avar_n:=\filter(\aterm_n)]$.
Note that an argument filtering works both on unmarked and on
marked symbols.

Using argument filterings, we can eliminate troublesome subterms of
the dependency pair constraints.  To this end, we use the following
result, which corresponds to Theorem~\ref{thm:argfunrespectbeta}:

\begin{theorem}\label{thm:fo:argfil}
Given a reduction pair $(\geq,>)$ and an argument filtering
$\argfil$, define: $\aterm \geqterm \bterm$ iff $\filter(\aterm) \geq
\filter(\bterm)$ and $\aterm \gterm \bterm$ iff $\filter(\aterm) >
\filter(\bterm)$.  Then $(\geqterm,\gterm)$ is a reduction pair.
\end{theorem}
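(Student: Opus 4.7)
The plan is to verify the four defining properties of a reduction pair in turn---well-foundedness of $\gterm$, stability, compatibility, and monotonicity of $\geqterm$---by pushing each requirement through $\filter$ and invoking the corresponding property of $(\geq,>)$.

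The keystone is a substitution lemma: for every first-order substitution $\asub = [\vec{\avar}:=\vec{\aterm}]$, writing $\asub' := [\vec{\avar}:=\vec{\filter(\aterm)}]$, we have $\filter(\aterm\asub) = \filter(\aterm)\asub'$ for every term $\aterm$. This is a routine induction on $\aterm$, splitting on whether $\argfil(\afun(\vec{\avar}))$ yields a head $\afun_\argfil(\avar_{i_1},\ldots,\avar_{i_m})$ or a projection $\avar_j$; the variable case is covered by the third clause of $\filter$. With this lemma in hand, stability of $\gterm$ is immediate: $\aterm \gterm \bterm$ gives $\filter(\aterm) > \filter(\bterm)$, stability of $>$ gives $\filter(\aterm)\asub' > \filter(\bterm)\asub'$, and the lemma rewrites this to $\filter(\aterm\asub) > \filter(\bterm\asub)$, that is, $\aterm\asub \gterm \bterm\asub$. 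Stability of $\geqterm$ is identical. Well-foundedness and compatibility are equally direct: an infinite $\gterm$-chain would descend through $\filter$ to an infinite $>$-chain, and $\aterm \gterm \bterm \geqterm \cterm$ gives $\filter(\aterm) > \filter(\bterm) \geq \filter(\cterm)$, whence $\filter(\aterm) > \filter(\cterm)$ by compatibility of $(\geq,>)$.

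The only step requiring a case analysis is monotonicity of $\geqterm$, which it suffices to verify for one-step contexts and then iterate: assuming $\aterm \geqterm \bterm$, show that $\afun(\aterm_1,\ldots,\aterm,\ldots,\aterm_n) \geqterm \afun(\aterm_1,\ldots,\bterm,\ldots,\aterm_n)$. I split on $\argfil(\afun(\vec{\avar}))$. If it is a projection $\avar_j$, either the altered position is $j$ (and $\aterm \geqterm \bterm$ applies directly to the filtered sides) or it is not (and the two filtered sides are syntactically equal, so $\geqterm$ holds by reflexivity of $\geq$). If it is $\afun_\argfil(\avar_{i_1},\ldots,\avar_{i_m})$, either the altered position is some $i_k$ (and monotonicity of $\geq$ at that child position delivers the inequality on the filtered sides) or it is discarded (filtered sides again coincide). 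I expect this case split, though a little tedious, to be the only mildly involved part of the proof; nothing deeper is needed, and the argument applies uniformly to marked and unmarked symbols since $\filter$ treats them identically.
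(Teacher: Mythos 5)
Your proof is correct, and it follows essentially the same route as the paper: the paper states Theorem~\ref{thm:fo:argfil} as recalled first-order background without proof, but its higher-order generalisation (Theorem~\ref{thm:argfunrespectbeta}) is proved there with exactly your structure --- the substitution lemma $\filter(\aterm\asub) = \filter(\aterm)\asub'$ as the keystone, stability, well-foundedness and compatibility inherited directly through $\filter$, and monotonicity of $\geqterm$ handled by a case analysis on $\argfil(\afun(\vec{\avar}))$.
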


\begin{example}
Recall the constraints given in Example~\ref{ex:quotpoly}.  We use the
argument filtering with $\argfil(\mins(\avar,\bvar)) = \avar,\ 
\argfil(\quot(\avar,\bvar)) = \quot_\argfil(\avar)$ and
$\argfil(\afun(\vec{\avar})) = \afun_\argfil(\vec{\avar})$ for all
other symbols.  By Theorem~\ref{thm:fo:argfil} it suffices to find a
reduction ordering satisfying:
\[
\begin{array}{rclrcl}
\avar & \geq & \avar &
\up{\mins}_\argfil(\suc_\argfil(\avar),\suc_\argfil(\bvar)) & > &
  \up{\mins}_\argfil(\avar,\bvar) \\
\suc_\argfil(\avar) & \geq & \avar &
\up{\quot}_\argfil(\suc_\argfil(\avar),\suc_\argfil(\bvar)) & > &
  \up{\quot}_\argfil(\avar,\suc_\argfil(\bvar)) \\
\quot_\argfil(\nul_\argfil) & \geq & \nul_\argfil &
\up{\quot}_\argfil(\suc_\argfil(\avar),\suc_\argfil(\bvar)) & > &
  \up{\mins}_\argfil(\avar,\bvar) \\
\quot_\argfil(\suc_\argfil(\avar)) & \geq &
  \suc_\argfil(\quot_\argfil(\avar)) \\
\end{array}
\]
These altered constraints can easily be satisfied with a
lexicographic path ordering.
\vspace{-6pt}
\end{example}

\subsection{The Subterm Criterion}\label{subsec:fo:subcrit}

An alternative to reduction pairs, which often suffices to
eliminate some dependency pairs and is typically easy to
check, is the \emph{subterm criterion}.

A function $\nu$ that assigns 
to every $n$-ary dependency pair symbol $\up{\afun}$
one of its argument positions
is said to be a \emph{projection function}.
 We extend $\nu$
to a function on terms by defining:
\[\overline{\nu}(\up{\afun}(\aterm_1,\ldots,\aterm_n)) = \aterm_i\ 
\mathrm{if}\ \nu(\up{\afun}) = i\]

\begin{theorem}\label{thm:fo:subcrit}
A set of dependency pairs $\P = \P_1 \uplus \P_2$ is chain-free if
$\P_2$ is chain-free, and moreover there is a projection function
$\nu$ such that $\overline{\nu}(l) \supterm \overline{\nu}(p)$ for
all dependency pairs $l \dppijl p \in \P_1$ and $\overline{\nu}(l)
= \overline{\nu}(p)$ for all $l \dppijl p \in \P_2$.
\end{theorem}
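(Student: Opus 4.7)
My plan is to prove the contrapositive: assuming $\P$ admits a minimal dependency chain, I derive a contradiction. Fix an infinite minimal chain $[(l_i \dppijl p_i, \aterm_i, \bterm_i) \mid i \in \N]$ using only pairs in $\P$. Split into two cases depending on how often pairs from $\P_1$ are used.

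\emph{Case 1: only finitely many indices $i$ satisfy $l_i \dppijl p_i \in \P_1$.} Then by dropping an initial prefix, the tail of the chain is a minimal chain using only $\P_2$, contradicting that $\P_2$ is chain-free.

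\emph{Case 2: infinitely many indices use $\P_1$.} Here I track the sequence $\overline{\nu}(\aterm_0), \overline{\nu}(\bterm_0), \overline{\nu}(\aterm_1), \overline{\nu}(\bterm_1), \ldots$. For each $i$, since $\aterm_i = l_i\asub_i$ and $\bterm_i = p_i\asub_i$ share head symbols on which $\nu$ is defined, I obtain $\overline{\nu}(\aterm_i) = \overline{\nu}(l_i)\asub_i$ and $\overline{\nu}(\bterm_i) = \overline{\nu}(p_i)\asub_i$; stability of $\supterm$ and $=$ under substitution then gives $\overline{\nu}(\aterm_i) \supterm \overline{\nu}(\bterm_i)$ when $l_i \dppijl p_i \in \P_1$ and $\overline{\nu}(\aterm_i) = \overline{\nu}(\bterm_i)$ when $l_i \dppijl p_i \in \P_2$. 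For the reduction $\bterm_i \arrr{\Rules} \aterm_{i+1}$: since $\bterm_i$ and $\aterm_{i+1}$ have the same marked head symbol (no headmost step rewrites the marked root, as $\up{\afun}$ appears in no rule), the reduction takes place inside the arguments, so $\overline{\nu}(\bterm_i) \arrr{\Rules} \overline{\nu}(\aterm_{i+1})$.

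Therefore the combined sequence is an infinite $(\supterm \cup = \cup \arrr{\Rules})$-chain containing infinitely many strict-subterm steps. Moreover each $\bterm_i$ is $\Rules$-terminating by minimality, so its subterm $\overline{\nu}(\bterm_i)$ is $\Rules$-terminating, and likewise for $\overline{\nu}(\aterm_i)$. To derive a contradiction I use the commutation $\supterm \cdot \arrz \;\subseteq\; \arrz \cdot \supterm$: if $s = C[t]$ with $C \neq \Box$ and $t \arrz u$, then $s \arrz C[u] \supterm u$. Iterating this swap pushes every $\arrr{\Rules}$ step to the right past every $\supterm$; after processing an initial segment one either produces a pure infinite $\supterm$-descent (impossible, since $\supterm$ is well-founded on any term) or an infinite $\arr{\Rules}$-reduction starting from a terminating term (also impossible). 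Either way we get the desired contradiction.

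The main obstacle will be the bookkeeping of the commutation step: justifying that one can actually extract an infinite $\supterm$-descent from the interleaved sequence despite the rewriting steps. The cleanest way is to argue that the relation $\supterm \cdot \arrr{\Rules}$ restricted to $\Rules$-terminating terms is well-founded (since $\arrr{\Rules}$ preserves termination and each commutation lowers the number of $\arrz$ steps appearing before the first $\supterm$), rather than building the infinite descent by hand. Apart from that, all ingredients are routine.
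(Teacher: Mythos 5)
Your proposal is correct and follows essentially the same route as the paper (see the proof of Theorem~\ref{thm:subcrit}): project the chain through $\overline{\nu}$, observe that the reduction $\bterm_i \arrr{\Rules} \aterm_{i+1}$ never touches the marked root and hence descends to the projected arguments, and derive an infinite $(\arr{\Rules} \cup \supterm)$-sequence from a terminating term; the paper simply takes the well-foundedness of $\arr{\Rules} \cup \supterm$ on terminating terms for granted, whereas you spell it out via the quasi-commutation $\supterm \cdot \arr{\Rules} \subseteq \arr{\Rules} \cdot \supterm$. (Only a cosmetic slip: that commutation moves rewrite steps \emph{earlier}, not later, but your displayed inclusion and the final dichotomy are right.)
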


This holds because for all $\aterm_i$ and $\bterm_i$ in a dependency chain,
$\overline{\nu}(\aterm_i)$ and $\overline{\nu}(\bterm_i)$ are strict
subterms, and therefore terminating under 
$\arr{\Rules} \mathord{\cup}\ \supterm$; see also
Theorem~\ref{thm:subcrit}.

The subterm criterion
is not sufficient to show termination
of the $\quot$ example, but at least we can use it to eliminate some
dependency pairs; choosing $\nu(\up{\mins}) = \nu(\up{\quot}) = 2$:
\[
\begin{array}{rcccccl}
\overline{\nu}(\up{\mins}(\suc(\avar),\suc(\bvar))) & = &
\suc(\bvar) & \supterm & \bvar & = & \overline{\nu}(\up{\mins}(\avar,
\bvar)) \\
\overline{\nu}(\up{\quot}(\suc(\avar),\suc(\bvar))) & = &
\suc(\bvar) & = & \suc(\bvar) & = & \overline{\nu}(
\up{\quot}(\mins(\avar,\bvar),\suc(\bvar))) \\
\overline{\nu}(\up{\quot}(\suc(\avar),\suc(\bvar))) & = &
\suc(\bvar) & \supterm & \bvar & = & \overline{\nu}(\up{\mins}(
\avar,\bvar)) \\
\end{array}
\]
This shows that the TRS $\quot$ is non-terminating if and only if
there is no dependency chain where every step uses the dependency
pair $\up{\quot}(\suc(\avar),\suc(\bvar)) \dppijl
\up{\quot}(\mins(\avar,\bvar),\suc(\bvar))$.

\subsection{The Dependency Graph}\label{subsec:fo:graph}
To determine whether a system has a dependency chain, it
makes sense to ask what form such a chain would have.  This question
is studied with a
\begin{wrapfigure}{r}{0.53\textwidth}
\begin{tikzpicture}[->]
\begin{scope}[>=stealth]

\tikzstyle{veld} = [draw, fill=white,  drop shadow, 
  minimum height=0em, minimum width=0em, rounded corners]

\node [veld] (mins1) {
    $\up{\mins}(\suc(\avar),\suc(\bvar)) \dppijl \up{\mins}(\avar,\bvar)$
  };
\node (minsloop) [left of=mins1,node distance=2.9cm] {};

\node [veld] (quot1) [below of=mins1,node distance=1.2cm] {
    $\up{\quot}(\suc(\avar),\suc(\bvar)) \dppijl \up{\mins}(\avar,\bvar)$
  };

\node [veld] (quot2) [below of=quot1,node distance=1.2cm] {
    $\up{\quot}(\suc(\avar),\suc(\bvar)) \dppijl \up{\quot}(\mins(
    \avar,\bvar),\suc(\bvar))$
  };
\node (quotloop) [left of=quot2,node distance=3.7cm] {};

\draw (quot2) -- (quot1);
\draw (quot1) -- (mins1);
\draw (minsloop) to[out=135,in=225,looseness=5] (minsloop);
\draw (quotloop) to[out=135,in=225,looseness=5] (quotloop);

\end{scope}
\end{tikzpicture}
\end{wrapfigure}
\noindent
\emph{dependency graph}, a graph with as nodes the
dependency pairs of $\Rules$ and an edge
from $l \dppijl p$ to $u \dppijl v$ if $p\asub \arrr{\Rules}
u\bsub$ for some substitutions $\asub,\bsub$.
See for example the dependency graph of the $\quot$-TRS.

If there is a dependency chain $\rijtje{(\rho_i,
\aterm_i,\bterm_i) \mid i \in \N}$, 
then there is an edge in the graph from
each $\rho_i$ to $\rho_{i+1}$.  
Since the graph is finite,
a dependency chain corresponds to a cycle in the graph.

By definition, if a set of dependency pairs is chain-free, then the
same holds for any subset.  Since a dependency graph might have
exponentially many cycles, modern approaches typically consider only
\emph{maximal cycles}, also called \emph{strongly connected
components} (SCCs).

\begin{theorem}\label{thm:fo:graph}
$\Rules$ is terminating iff every SCC of its dependency
graph is chain-free.
\end{theorem}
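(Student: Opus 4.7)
The plan is to apply Theorem~\ref{thm:fo:dpchain} in both directions. The forward direction is immediate: if $\Rules$ is terminating, then by Theorem~\ref{thm:fo:dpchain} it admits no minimal dependency chain at all, so in particular no minimal chain can be built from the pairs of any single SCC, and every SCC is chain-free.

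For the reverse direction I would argue by contrapositive. Suppose $\Rules$ is not terminating, so by Theorem~\ref{thm:fo:dpchain} there is a minimal dependency chain $\rijtje{(\rho_i,\aterm_i,\bterm_i) \mid i \in \N}$. By the definition of the dependency graph, every consecutive pair $(\rho_i, \rho_{i+1})$ yields an edge from $\rho_i$ to $\rho_{i+1}$, since $\bterm_i = p_i \asub_i$ reduces to $\aterm_{i+1} = l_{i+1} \asub_{i+1}$, which is precisely the edge condition. The chain therefore traces an infinite walk in the graph. Since $\Rules$ is finite, $\DP(\Rules)$ is finite, so some dependency pair $\rho$ must occur infinitely often on this walk. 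Between two consecutive occurrences of $\rho$ the walk forms a cycle through $\rho$, so $\rho$ lies on a cycle and hence belongs to some SCC $\acycle$. Any other pair occurring infinitely often similarly shares a cycle with $\rho$ and therefore also lies in $\acycle$.

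Consequently, from some index $N$ onwards every $\rho_i$ lies in $\acycle$. The tail $\rijtje{(\rho_i,\aterm_i,\bterm_i) \mid i \geq N}$ is still a minimal dependency chain using only pairs from $\acycle$, contradicting the assumption that $\acycle$ is chain-free. The main points to check carefully are the identification of an SCC containing all infinitely-often-occurring pairs (which uses finiteness of the graph) and the fact that a suffix of a minimal dependency chain is again a minimal dependency chain — the three defining conditions of a chain are local to consecutive indices, and minimality (termination of each $\bterm_i$) is inherited by any suffix.
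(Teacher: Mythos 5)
Your proof is correct and matches the paper's approach: the paper does not spell out a proof for this first-order statement but refers to its higher-order counterpart, Lemma~\ref{lem:cyclenondangerous}, whose proof is exactly your argument (some pair recurs infinitely often by finiteness of $\DP(\Rules)$, all such pairs lie in one SCC, and a suffix of the minimal chain lives entirely in that SCC). Your explicit checks --- that consecutive chain elements give graph edges and that minimality is inherited by suffixes --- are the right points to verify and are handled correctly.
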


\noindent
This result is extended to AFSs in Lemma~\ref{lem:cyclenondangerous}.
The dependency graph is not in general computable, which is why
\emph{approximations} are often used.  An approximation of a
dependency graph $G$ is a graph with the same nodes as $G$, but
which may have additional edges.

The dependency graph of our running example $\quot$ has two cycles.
In order to prove termination,
it is sufficient to find a reduction pair such that
$\up{\mins}(\suc(\avar),\suc(\bvar)) \gterm \up{\mins}(\avar,\bvar)$
and $l \geqterm r$ for all rules, 
and a(nother) reduction pair with
$\up{\quot}(\suc(\avar),\suc(\bvar)) \gterm
\up{\quot}(\mins(\avar,\bvar), \suc(\bvar))$ 
and $l \geqterm r$ for all rules.  
The fact that we can deal with groups of dependency pairs separately
can make it significantly simpler to find reduction pairs.

Extending Theorem~\ref{thm:fo:graph}, we can iterate over a
dependency graph approximation, and obtain the following algorithm
(whose higher-order counterpart is presented in
Section~\ref{sec:algorithm}):

\begin{theorem}[\cite{hir:mid:05:1}]\label{thm:fo:alg}
A TRS is terminating if and only if this can be demonstrated with the
following algorithm:
\begin{enumerate}[\em(1)]
\item calculate dependency pairs, find an approximation $G$ for the
  dependency graph;
\item\label{alg:fo:start}
  if $G$ has no cycles, the TRS is terminating; otherwise, choose an
  SCC $\P$
\item\label{alg:fo:subcrit}
  try finding a projection function $\nu$ such that
  $\overline{\nu}(l) \supterm \overline{\nu}(p)$ for at least some
  $l \dppijl p \in \P$ and $\overline{\nu}(l) \suptermeq \overline{
  \nu}(p)$ for the rest; if this succeeds, remove the strictly
  oriented pairs from $G$ and continue with \ref{alg:fo:start}.
\item \label{alg:fo:main}
  find a reduction
  pair $(\geqterm,\gterm)$ such that $l \geqterm r$ for all rules $l
  \arrz r$, and for all dependency pairs $l \dppijl p \in \P$ either
  $l \gterm p$ or $l \geqterm p$; at least one pair must
  be oriented with $\gterm$ \emph{(**)};
\item remove the dependency pairs which were oriented with $\gterm$
  from $G$; continue with \ref{alg:fo:start}.
\end{enumerate}

\noindent\emph{(**)} To find $(\geqterm,\gterm)$ we may for instance
use argument filterings.
\end{theorem}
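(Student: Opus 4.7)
The statement is a biconditional, so I would split the argument into soundness (if the algorithm succeeds, then $\Rules$ terminates) and completeness (if $\Rules$ terminates, then the algorithm succeeds). Both directions are obtained by stringing together the theorems already proved in this section; the only novelty is handling the loop structure of the algorithm and verifying that a universal reduction pair exists for the completeness direction.

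For soundness, my plan is to start from Theorem~\ref{thm:fo:dpchain}, which reduces termination of $\Rules$ to chain-freeness of $\DP(\Rules)$, and then apply Theorem~\ref{thm:fo:graph} (in its SCC formulation) to further reduce this to chain-freeness of each SCC of the approximation $G$. I would then perform induction on the number of edges of $G$: each pass through step~(3) uses Theorem~\ref{thm:fo:subcrit} to split the current SCC $\P = \P_1 \uplus \P_2$ and reduce chain-freeness of $\P$ to chain-freeness of $\P_2$; each pass through step~(4) does the same using Theorem~\ref{thm:firstorder:chainfree}. Since both (**)-variants require at least one pair to be strictly oriented, each iteration removes at least one pair from $G$, so the loop terminates. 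When $G$ becomes cycle-free the empty set of remaining pairs is trivially chain-free, and the chain of reductions gives termination of $\Rules$.

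For completeness, assume $\Rules$ terminates and let $\P$ be any SCC produced by the algorithm. My plan is to exhibit a single reduction pair that orients all rules weakly and all pairs of $\P$ with at least one strict; this forces step~(4) to succeed in one pass, so the loop drains $G$ in at most $|\DP(\Rules)|$ iterations. The canonical choice is $\aterm \geqterm \bterm$ iff $\aterm \arrr{\Rules} \bterm$ or $\aterm \suptermeq \bterm$, and $\aterm \gterm \bterm$ iff $\aterm \arrp{\Rules} \bterm$ or $\aterm \supterm \bterm$, restricted to $\arr{\Rules}$-terminating terms. Stability and monotonicity are inherited from $\arr{\Rules}$ and $\supterm$; well-foundedness of $\gterm$ follows from termination of $\Rules$ combined with well-foundedness of $\supterm$ (a standard commutation/union argument). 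Every rule is oriented with $\geqterm$ by a single $\arr{\Rules}$-step, and every dependency pair $\up{\afun}(\vec{l}) \dppijl \up{\bfun}(\vec{p})$ is strictly oriented because $\afun(\vec{l}) \arrz r \suptermeq \bfun(\vec{p})$ with the subterm witness being strict (the right-hand side side-condition in the definition of $\DP(\Rules)$ ensures $\afun(\vec{l}) \neq \bfun(\vec{p})$).

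\textbf{Main obstacle.} The soundness direction is essentially bookkeeping on top of the previous theorems; the real subtlety sits in completeness. One must argue that the class of reduction pairs admitted in step~(4) is broad enough to include the rewrite-plus-subterm pair above — in particular, one must be careful that a reduction pair need not be monotonic in its strict component, which is precisely what allows the (in general non-monotonic) relation $\arrp{\Rules} \cup \supterm$ to qualify. This is also where the theorem of Hirokawa and Middeldorp~\cite{hir:mid:05:1} gains its ``if and only if'' strength: automatable instances such as polynomial interpretations or path orderings would not yield completeness, but abstracting to the reduction-pair level does.
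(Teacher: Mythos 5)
The paper itself gives no proof of this theorem (it is recalled from \cite{hir:mid:05:1}), so the natural comparison points are the higher-order analogues the paper does prove: the algorithm of Section~\ref{sec:algorithm} with Lemma~\ref{lem:subcycles} for the soundness direction, and Theorem~\ref{thm:maintheoremll} for the completeness direction. Your soundness argument follows that template and is fine in outline; the one step you gloss over is that after removing $\P_1$ the remainder $\P_2$ is in general not an SCC, so the induction additionally needs the (easy) first-order analogue of Lemma~\ref{lem:subcycles}: $\P_2$ is chain-free whenever every SCC of the subgraph of $G$ induced by $\P_2$ is. Theorems~\ref{thm:fo:subcrit} and~\ref{thm:firstorder:chainfree} alone only reduce $\P$ to $\P_2$, not to the new SCCs.

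The genuine gap is in your completeness construction. First, the pair you write down is not a reduction pair: with $\aterm \geqterm \bterm$ defined as $\aterm \arrr{\Rules} \bterm$ \emph{or} $\aterm \suptermeq \bterm$, the relation $\geqterm$ is neither transitive ($\aterm \arrr{\Rules} \bterm \suptermeq \cterm$ need not give $\aterm \geqterm \cterm$) nor monotonic ($f(a) \supterm a$ does not yield $g(f(a)) \geqterm g(a)$), and monotonicity of $\geqterm$ is part of the definition of a reduction pair in Section~\ref{subsec:fo:reduction}; the subterm component must live only in the strict relation, e.g.\ $\gterm\ =\ (\arr{\Rules} \cdot \suptermeq)^+$ with $\geqterm\ =\ \arrr{\Rules}$. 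Second, and more seriously, you must route both relations through the unmarking map: a dependency pair has the form $\up{\afun}(\vec{l}) \dppijl \up{\bfun}(\vec{p})$, and the marked term $\up{\afun}(\vec{l})$ neither rewrites to $\up{\bfun}(\vec{p})$ (marked symbols head no rule) nor contains it as a subterm (no marked symbol occurs inside $\vec{l}$), so as written your pair orients no dependency pair at all. Defining $\aterm \gterm \bterm$ iff $|\aterm|\ (\arr{\Rules} \cdot \suptermeq)^+\ |\bterm|$, where $|\cdot|$ erases marks, repairs both defects at once; strictness of the orientation then comes from the root rewrite step $\afun(\vec{l}) \arr{\Rules} r \suptermeq \bfun(\vec{p})$ and not from strictness of the subterm (the candidate may be $r$ itself, and the side condition in the definition of $\DP$ only excludes candidates occurring in the $l_i$). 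This is precisely the construction the paper uses in the proof of Theorem~\ref{thm:maintheoremll}. Your closing remark about why non-monotonicity of $\gterm$ is essential to the ``if and only if'' is correct and worth keeping.
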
\vspace{-9 pt}

\subsection{Usable Rules}\label{subsec:fo:userules}

We discuss one more optimisation.  
In the algorithm of Theorem~\ref{thm:fo:alg},
we consider in every iteration for a strongly connected
component $\P$ all rewrite rules.
Instead, 
we can restrict attention to the rules
that may be relevant for constructing a dependency chain
using dependency pairs from $\P$.
To this end the concept of \emph{usable rules} is defined.

First we need some definitions.
We denote by $\afun \gsymbuse \bfun$ 
that there is a rewrite rule
$\afun(l_1,\ldots,l_n) \arrz C[\bfun(r_1,\ldots,r_m)]$.
The reflexive-transitive closure of 
$\gsymbuse$ is denoted by $\gsymbuse^*$.
Overloading notation,
we write $\aterm \gsymbuse^* \bfun$
if there is a symbol $\afun$ in the term $\aterm$ such that 
$\afun \gsymbuse^*\bfun$.
So if not
$\aterm \gsymbuse^* \bfun$, then $\aterm$ cannot reduce to
a term containing the symbol $\bfun$.

\begin{definition}
The set of \emph{usable rules} of a term $\aterm$, notation
$\userules(\aterm)$,  consists of rules $\bfun(\vec{l}) \arrz r
\in \Rules$, where
$\aterm \gsymbuse^* \bfun$.  
For a set of dependency pairs $\P$, let $\userules(\P) =
\bigcup_{l \dppijl p \in \P} \userules(p)$.
\end{definition}

Using a reasoning originally due to Gramlich~\cite{gra:95:1}, 
it is shown in~\cite{hir:mid:04:1} that if $\P$ is not
chain-free, then there is a dependency chain over $\P$ where
the reduction $\bterm_i \arrz^* \aterm_{i+1}$ uses only the rules in
$\userules(\P) \cup \{ \symb{p}(\avar,\bvar) \arrz \avar,\ 
\symb{p}(\avar,\bvar) \arrz \bvar \}$ for some fresh symbol
$\symb{p}$ (these two rules are
usually considered harmless).  Thus, in each iteration of the
algorithm of Theorem~\ref{thm:fo:alg} we only have to prove $l
\geqterm r$ for the usable rules of $\P$, rather than for all rules.

\section{The Basic Higher-Order Dependency Pair Approach}
\label{sec:basic}

\summary{In this section we define a basic dependency pair
approach for AFSs.  
We show that an AFS is terminating if it does not have a minimal
dependency chain,
and that
for left-linear AFSs,
the absence of (minimal) dependency chains
characterises termination.
As in the first-order case, 
we organise the dependency pairs in a graph, 
and explain how to use reduction pairs.
}

\noindent
When extending the first-order dependency pair approach to AFSs,
new issues arise:
\begin{iteMize}{$\bullet$}
\item \emph{collapsing rules}: non-termination might also be caused
  by a higher-order variable being instantiated. 
  For example, the
  right-hand side of the non-terminating rule $\symb{f}(\symb{g}(F),
  \avar) \arrz \app{F}{\avar}$ doesn't even have defined
  symbols;
\item \emph{dangling variables}: given a rule $\mathtt{f}(\nul) \arrz
  \mathtt{g}(\abs{\avar}{\mathtt{f}(\avar)})$, the bound variable
  $\avar$ should probably not become free in the corresponding
  dependency pair;
\item \emph{rules of functional type} may lead to non-termination
  only because of their interaction with the (applicative)
  context they appear in;
\item \emph{typing issues}: to be able to use the usual term orderings,
  both sides of a dependency pair (or the constraints generated from
  it) should have, usually, the same type modulo renamings of base
  types.
\end{iteMize}

\noindent
Typing issues will be addressed in Section~\ref{subsec:typechange};
for the other problems we have to take precautions already in the
definition of dependency pairs.\vspace{-6 pt}

\subsection{Dependency Pairs}\label{subsec:deppair}

In order to define dependency pairs, we first
pre-process the rewrite rules and define candidate terms.
The complete definition of dependency pairs  may at first seem
somewhat baroque; this
is partly because we have to work around the issues of functional
rules and dangling variables, and partly because of several
optimisations we include to obtain an easier result system.

\paragraaf{Pre-processing} 
Pre-processing the rewrite rules is done by completion:

\begin{definition}[Pre-processing Rules] \label{def:completing}
An AFS is \emph{completed} by adding for each rule of the form $l
\arrz \abs{\avar_1 \ldots \avar_n}{r}$, with $r$ not an abstraction,
the following $n$ rewrite rules:
$\app{l}{\avar_1} \arrz \abs{\avar_2 \ldots \avar_n}{r},\ 
\ldots,\ 
\app{l}{\avar_1} \cdots \avar_n \arrz r$.
\end{definition}

Note that completing an AFS has no effect on termination, 
since the added rules can be simulated 
by at most $n+1$ steps using only the original rules.

\begin{example}\label{ex:completetwice}
The system $\twice$ from Example~\ref{ex:twicedefinition} is
completed by adding the rewrite rule $\app{\twice(F)}{n} \arrz
\app{F}{(\app{F}{n})}$.
\end{example}

\noindent
\emph{In the remainder of the paper, we assume that all AFSs are
completed.}

\medskip
\noindent
To understand why completion is necessary, consider the AFS with a
single rule $\symb{f}(\nul) \arrz
\abs{\avar}{\app{\symb{f}(\avar)}{\avar}}$.
The term $\symb{f}(\nul)$ in this AFS is terminating, but there
\emph{is} an infinite reduction
$\app{\symb{f}(\nul)}{\nul} \arrz \app{(\abs{\avar}{\app{\symb{f}(
\avar)}{\avar}})}{\nul} \arrz_\beta \app{\symb{f}(\nul)}{\nul}
\arrz \ldots$.
Rules like
this
might complicate the
analysis of dependency chains, because the important step does not
happen at the top.  The pre-processing makes sure that it could also
be done with a topmost step: $\app{\symb{f}(\nul)}{\nul}$ self-reduces
with a single step using the new rule $\app{\symb{f}(\nul)}{\avar}
\arrz \app{\symb{f}(\avar)}{\avar}$ which was added by completion.

It is worth noting that we did not add new rules for all functional
rules, only for those where the right-hand side is an abstraction.
A rule $\symb{f}(\nul) \arrz \symb{f}(\symb{A})$ of functional type is
left alone.  This is an optimisation: it would be natural to add a
rule $\app{\symb{f}(\nul)}{\avar} \arrz
\app{\symb{f}(\mathtt{A})}{\avar}$, but this might give a dependency
pair $\app{\symb{f}(\nul)}{\avar} \dppijl \up{A}$ which won't be
needed.  Instead of completing this rule, we will later add a
special dependency pair for it.

\paragraaf{Candidate terms} In the first-order definition of
dependency pairs, we identify subterms that may give rise to an
infinite reduction.  Taking subterms in a system with binders is
well-known to be problematic because bound variables may become free.
One solution is to substitute fresh constants in the place of a bound
variable which would otherwise become free.  In this way,
$\app{F}{\c}$ is a ``subterm'' of
$\abs{\avar}{\app{F}{(\app{F}{\avar})}}$.  This is the approach we
take here.

We assume for every type $\atype$ a fresh symbol $\c_\atype : \atype$.
Sometimes the sub-script indicating the type is omitted.  The set of
all those symbols is denoted by $\Constants$.  The symbols $\c_\atype$
are used to replace bound variables which become free by taking a
subterm.

\begin{definition}[Candidate Terms]\label{def:candidates}
Let $r$ be a term in an AFS.
A subterm $\bterm$ of $r$ is a \emph{candidate term} of $r$ 
if either
  $\bterm = \app{\afun(\bterm_1,\ldots,\bterm_m)}{\bterm_{m+1}}
  \cdots \bterm_n$ with $\afun$ a defined symbol and
  $n \geq m \geq 0$,
or
  $\bterm = \app{\avar}{\bterm_1} \cdots \bterm_n$ with
  $\avar$ free in $r$ and $n > 0$.

If $\bterm$ is a candidate term of $r$, and $\{\avar_1:\atype_1,
\ldots,\avar_n:\atype_n\}$ is the set of variables which occur bound
in $r$ but free in $\bterm$, then $\bterm[\avar_1:=\c_{\atype_1},
\ldots,\avar_n:=\c_{\atype_n}]$ is a \emph{closed candidate term} of
$r$.
We denote the set of closed candidate terms of $r$ by
$\candidatesof{r}$.
\end{definition}

\noindent
In the AFS $\twice$ we have $\candidatesof{F \cdot (F \cdot m)} =
\{ F \cdot (F \cdot m),\ F \cdot m \}$ and 
$\candidatesof{\suc(\twice(\abs{x}{\I(x)}) \cdot n)} = 
\{\twice(\abs{x}{\I(x)}) \cdot n, \twice(\abs{x}{\I(x)}),\ \I(\c_\nat)
\}$.  
If $\symb{f}$ is a defined symbol, then the candidate terms of 
$\app{\app{\app{\symb{f}(\symb{a})}{\symb{b}}}{\symb{c}}}{\symb{d}}$
are
$\symb{f}(\symb{a}),\ \app{\symb{f}(\symb{a})}{\symb{b}},\ 
\app{\app{\symb{f}(\symb{a})}{\symb{b}}}{\symb{c}}$ and
$\app{\app{\app{\symb{f}(\symb{a})}{\symb{b}}}{\symb{c}}}{\symb{d}}$.
Note that for example $x \cdot y$ is not a candidate term of
$\symb{g}(\abs{\avar}{\avar \cdot \bvar})$ because $\avar$ occurs
only bound.

\paragraaf{Dependency Pairs}
As in the first-order case, the definition of dependency pair
uses marked function symbols.
Let $\up{\setfun} = \setfun  \cup \{ \up{\afun}:\atype \,|\, 
\mbox{$\afun : \atype \in \Defineds$} \}$,
so $\setfun$ extended with for every defined symbol $\afun$ a
marked version $\up{\afun}$ with the same type declaration.
We denote by $\up{\setfun}_c$ the union of $\up{\setfun}$ and
$\Constants$.
The marked counterpart of a term $\aterm$, notation $\up{\aterm}$, is
$\up{\afun}(\aterm_1,\ldots,\aterm_n)$ if $\aterm =
\afun(\aterm_1,\ldots,\aterm_n)$ with $\afun$ in $\Defineds$, and
just $s$ otherwise.
For example,
$\up{(\twice(F))} = \up{\twice}(F)$ and 
$\up{(\twice(F) \cdot m  )} = \twice(F) \cdot m$.
Applications are not marked.

\begin{definition}[Dependency Pair]\label{def:deppair}
The set of \emph{dependency pairs} of a rewrite rule $l \arrz r$,
notation $\DP(l \arrz r)$, consists of:
\begin{iteMize}{$\bullet$}
  \item
  all pairs $\up{l} \dppijl \up{p}$ with $p \in \candidatesof{r}$
  such that $p$ is no strict subterm of $l$;
  \item
  if $l$ has a functional type $\atype_1 \typepijl \ldots \typepijl
  \atype_n \typepijl \abasetype$ ($n \geq 1$) and $\head(r)$ is
  either a variable or a term $\afun(\vec{\aterm})$ with $\afun \in
  \Defineds$:
  all pairs $\app{l}{\bvar_1} \cdots \bvar_k \dppijl \app{r}{\bvar_1} \cdots \bvar_k$
  with $k \in \{1 , \ldots , n\}$ and all $\bvar_i$ are fresh variables.
\end{iteMize}
We use $\DP(\Rules)$ (or just $\DP$ if $\Rules$ is clear from context)
for the set of all dependency pairs of rewrite rules of an AFS $\Rules$.
\end{definition}

\begin{example}\label{ex:dptwice}
The set of dependency pairs of the AFS $\twice$ consists of:
\[
\begin{array}{rclrcl}
\up{\I}(\suc(n)) & \dppijl & \app{\twice(\abs{x}{\I(x)})}{n} &
\up{\twice}(F) & \dppijl & \app{F}{(\app{F}{\c_\nat})} \\
\up{\I}(\suc(n)) & \dppijl & \up{\twice}(\abs{x}{\I(x)}) &
\up{\twice}(F) & \dppijl & \app{F}{\c_\nat} \\
\up{\I}(\suc(n)) & \dppijl & \up{\I}(\c_\nat) &
\app{\twice(F)}{m} & \dppijl & \app{F}{(\app{F}{m})} \\
& & & \app{\twice(F)}{m} & \dppijl & \app{F}{m} \\
\end{array}
\]
The last two dependency pairs originate from the rule added by
completion.  
\end{example}

\noindent
The second form of dependency pair deals with functional rules whose
right-hand side is not an abstraction.  To illustrate why they are
necessary, consider the system 
with function symbols
${\sf{A}} : [\o] \decpijl \o \typepijl \o$ and 
${\sf{B}} : [\o \typepijl \o] \decpijl \o$,
and one rewrite rule:
${\sf{A}}({\sf{B}}(F)) \arrz F$.
This system has no dependency pairs of the first kind, but does
admit a two-step loop:
$\aterm := 
{\sf{A}} ( {\sf{B}} ( \abs{x}{ {\sf{A}} (x) \cdot x})) 
\cdot 
{\sf{B}} ( \abs{x}{{\sf{A}} (x) \cdot x})
\arrz
(\abs{x}{{\sf{A}} (x) \cdot x}) \cdot {\sf{B}} ( \abs{x}{{\sf{A}}(x) \cdot x})
\arr{\beta}
s$.
The rule \emph{does} have a dependency pair of the second form,
$\app{{\sf A}({\sf B}(F))}{\avar} \dppijl \app{F}{\avar}$.

Comparing our approach to static dependency pairs as defined
in \cite{kus:iso:sak:bla:09:1}, the two main differences are that we
avoid bound variables becoming free, and that we include
\emph{collapsing} dependency pairs, where the right-hand side is
headed by a variable.

\subsection{Dependency Chains}

We can now investigate termination using \emph{dependency chains}:

\begin{definition} \label{def:dependencychain}
A \emph{dependency chain} is an infinite sequence 
$\rijtje{(\rho_i,\aterm_i, \bterm_i)\ |\ i \in \N}$
 such that for all $i$:
\begin{enumerate}[(1)]
\item $\rho_i \in \DP \cup \{\mathtt{beta}\}$;
\item \label{chain:dp}
  if $\rho_i = l_i \dppijl p_i \in \DP$ then there exists a
  substitution $\asub$
  such that $\aterm_i =
  \subst{l_i}{\asub}$ and $\bterm_i = \subst{p_i}{\asub}$;
\item if $\rho_i = \mathtt{beta}$ then $\aterm_i =
  \app{\app{(\abs{\avar}{\cterm})}{\dterm}}{\eterm_1} \cdots
  \eterm_k$ and either
  \begin{enumerate}[(a)]
  \item \label{chain:betahead}
    $k > 0$ and $\bterm_i = \app{\cterm[\avar:=\dterm]}{\eterm_1}
    \cdots \eterm_k$, or
  \item \label{chain:subterm}
    $k = 0$ and there exists a term $\eterm$ such that
    $\cterm \suptermeq \eterm$ and $\avar \in \FV(\eterm)$ and
    $\bterm_i = \up{\eterm}[\avar:=\dterm]$, but $\eterm \neq \avar$;
  \end{enumerate}
\item \label{chain:inn} $\bterm_i \arrr{in} \aterm_{i+1}$.
\end{enumerate}
A step $\arr{in}$ is obtained by rewriting some $\cterm_i$ inside a
term of the form 
$\app{\afun(\cterm_1,\ldots,\cterm_n)}{\cterm_{n+1}} \cdots \cterm_m$.
If $\bterm_i = \aterm_{i+1}$, then also $\bterm_i \arrr{in}
\aterm_{i+1}$, regardless of whether $\bterm_i$ has this form.
A dependency chain is \emph{minimal} if the strict subterms of each
$\bterm_i$ are terminating under $\arr{\Rules}$.
\end{definition}

This definition corresponds to the first-order definition, except
that a case for $\beta$-reduction is used, and that we explicitly
require that $\bterm_i \arrr{in} \aterm_{i+1}$: this is necessary
because $\bterm_i$ may be an application rather than a functional
term, and consequently may not be marked.

\begin{theorem}\label{thm:dependencychain}
If $\Rules$ is non-terminating, there is a minimal dependency chain
over $\DP(\Rules)$.
\end{theorem}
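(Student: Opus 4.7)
My plan is to mirror the standard argument for Theorem~\ref{thm:fo:dpchain}. The central notion is a minimal non-terminating term (MNT): a non-terminating term whose strict subterms all terminate; a routine well-foundedness argument shows that every non-terminating term contains an MNT subterm. A case analysis on the outer structure shows that an MNT $\aterm$ must be either $\afun(\aterm_1,\ldots,\aterm_n)\cdot\vec{\cterm}$ with $\afun$ a defined symbol, or a $\beta$-redex applied to further arguments $(\abs{\avar}{\cterm})\cdot\dterm\cdot\eterm_1\cdots\eterm_k$; bare variables, abstractions, and terms headed by a constructor are ruled out because their only reductions would lie in proper, hence terminating, subterms.

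The chain is built inductively. Given a current MNT $m_i$, I produce a step $(\rho_i,\aterm_i,\bterm_i)$ with $\aterm_i$ matching $m_i$ (up to head marking and the substitution of constants for bound-variable placeholders), together with a next MNT $m_{i+1}$ reachable from $\bterm_i$ via $\arrr{in}$. For a defined-head MNT, proper subterms all terminate, so the infinite reduction of $m_i$ contains only finitely many inner steps before a headmost rewrite step; these inner steps bring $m_i$ into the form $\subst{l}{\asub}\cdot\vec{w}$ for some rule $l\arrz r$. When $\vec{w}$ is empty I use a first-kind dependency pair $\up{l}\dppijl\up{p}$ for a suitable closed candidate $p$ of $r$; otherwise $l$ has functional type and I use a second-kind pair $\app{l}{\bvar_1}\cdots\bvar_k\dppijl\app{r}{\bvar_1}\cdots\bvar_k$. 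Since every $\subst{\avar}{\asub}$ with $\avar\in\FV(l)$ is a proper subterm of $m_i$ and therefore terminates, any MNT of the reduct $\subst{r}{\asub}\cdot\vec{w}$ must occur as $q\asub$ for a subterm $q$ of $r$ headed by a defined symbol or by a free variable of $r$---precisely the shape of a candidate term. For a $\beta$-redex MNT I take $\rho_i=\cbeta$, using subclause~\ref{chain:betahead} when $k>0$ and the $\beta$-reduct is itself MNT, and subclause~\ref{chain:subterm} when $k=0$, selecting $\eterm$ so that the next MNT occurs at $\up{\eterm}[\avar:=\dterm]$ inside the $\beta$-reduct $\cterm[\avar:=\dterm]$.

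The subtle point is that the closed-candidate construction replaces the variables $\vec{\avar}$ of $r$ which are bound in $r$ but free in $q$ by fresh constants $\vec{\c}$, whereas in the actual reduct $q\asub$ these $\vec{\avar}$ remain free. Because each $\c_\atype$ is fresh and does not appear in any rewrite rule, the substitution $[\vec{\avar}:=\vec{\c}]$ establishes a bijection between reductions of $q\asub$ and reductions of $q\asub[\vec{\avar}:=\vec{\c}]$; consequently $\bterm_i=\up{q\asub[\vec{\avar}:=\vec{\c}]}$ is non-terminating whenever $q\asub$ is, and the inner reductions needed to reach $m_{i+1}$ from $\bterm_i$ are precisely the images under this bijection. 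Minimality follows because the strict subterms of each $\bterm_i$ are images under such substitutions of proper subterms of MNTs. I expect the main obstacle to be this bookkeeping around closed candidates together with the $\beta$-case: when $\cterm[\avar:=\dterm]$ is itself non-terminating with $k>0$, so that subclause~\ref{chain:betahead} would destroy minimality, one must first descend into $\cterm[\avar:=\dterm]$ to find a fresh MNT and restart the inductive construction from that subterm.
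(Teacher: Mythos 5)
Your overall strategy is the paper's own: minimal non-terminating (MNT) subterms, the same case split on the head of an MNT, first-kind pairs for the base-type case and second-kind pairs for the applied case, and the two $\cbeta$ subclauses. The genuine gap is in your handling of closed candidates. You pick $q$ so that $q\asub$ is an MNT of the reduct and only afterwards apply $[\vec{\avar}:=\vec{\c}]$, relying on the claim that this substitution ``establishes a bijection between reductions''. Only one direction of that correspondence holds: a reduction of $q\asub$ lifts to one of $q\asub[\vec{\avar}:=\vec{\c}]$ (stability under substitution), which does transfer non-termination; the converse fails for non-left-linear rules, because identifying two distinct bound variables of the same type with the single constant $\c_\atype$ can \emph{create} redexes. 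The paper's example immediately after the theorem, $\mathtt{A}(\avar,\avar) \arrz \mathtt{B}(\abs{\bvar\cvar}{\mathtt{A}(\bvar,\cvar)})$, witnesses exactly this. The failing direction is the one your minimality argument needs. Concretely, with rules $\symb{w} \arrz \symb{w}$, $\symb{eq}(\avar,\avar) \arrz \symb{w}$, $\symb{d}(\avar) \arrz \symb{d}(\avar)$ and $\symb{f}(\nul) \arrz \symb{g}(\abs{\avar}{\abs{\bvar}{\symb{d}(\symb{eq}(\avar,\bvar))}})$, the MNT of the right-hand side is $\symb{d}(\symb{eq}(\avar,\bvar))$, since $\symb{eq}(\avar,\bvar)$ is a normal form; but $\up{\symb{d}}(\symb{eq}(\c_\nat,\c_\nat))$ has the non-terminating strict subterm $\symb{eq}(\c_\nat,\c_\nat)$, so your $\bterm_i$ is not minimal, and the theorem asserts a \emph{minimal} chain (minimality is essential for the subterm criterion, usable rules, and the monotonic-algebra results later in the paper).

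The repair is the one the paper makes: choose $p$ to be the \emph{smallest} subterm of $r$ such that $p[\vec{\avar}:=\vec{\c}]\asub$ is non-terminating, i.e., impose minimality \emph{after} the constant substitution. Then every immediate subterm of $p[\vec{\avar}:=\vec{\c}]\asub$ has the form $p'[\vec{\avar}:=\vec{\c}]\asub$ with $p \supterm p'$ and is terminating by that very minimality, so the new term is genuinely MNT and the induction continues entirely in the substituted world (in my example this yields $\up{\symb{eq}}(\c_\nat,\c_\nat)$ rather than $\up{\symb{d}}(\ldots)$). A secondary remark: your closing worry that subclause (3a) could ``destroy minimality'' when $k>0$ is unfounded --- the $\beta$-redex is then a proper subterm of the MNT, so reducing it yields an MNT or a terminating term by your own observation; and the proposed remedy of ``descending into $\cterm[\avar:=\dterm]$ to restart'' would not produce a legal chain, since Definition~\ref{def:dependencychain} only permits passing to a subterm through subclause (3b), not in the $\arrr{in}$ reduction from $\bterm_i$ to $\aterm_{i+1}$.
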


\begin{proof}
Given any non-terminating term, let $\fterm_{-1}$ be a minimal-sized
subterm that is still non-terminating ($\fterm_{-1}$ is \emph{MNT},
or \emph{Minimal Non-Terminating}).
We make the
observations:
\begin{enumerate}[(i)]
\item
\emph{If an MNT term is reduced at a non-top position, the
result is either also MNT, or terminating.}
This holds because, if $\fterm = C[\aterm] \arr{\Rules} C[\bterm]$
because $\aterm \arr{\Rules} \bterm$, and $\bterm$ is
non-terminating, then so is $\aterm$, contradicting minimality of
$\fterm$ unless $C = \Box_\atype$.
\item
\emph{If $\cterm \arrr{in} \dterm$, then $\up{\cterm}
\arrr{in} \up{\dterm}$.}
This holds by the nature of an internal step.
\end{enumerate}

\noindent
For any $i \in \N \cup \{-1\}$, let $\fterm_i$ be a MNT term,
and $\bterm_i = \up{\fterm_i}$.  
Then $\fterm_i$ is not an
abstraction, as abstractions can only be reduced by reducing their
immediate subterm, contradicting minimality.
For the
same reason $\fterm_i$ cannot have the form $\app{\avar}{\cterm_1}
\cdots \cterm_n$ with $\avar$ a variable, or
$\app{\afun(\cterm_1,\ldots,\cterm_n)}{\cterm_{n+1}} \cdots \cterm_m$
with $\afun$ a constructor symbol.
What remains are the forms:
\begin{enumerate}[(A)]
\item $\fterm_i = \app{\app{(\abs{\avar}{\cterm})}{\dterm}}{
  \eterm_1} \cdots \eterm_n$;
\item $\fterm_i = \app{\afun(\dterm_1,\ldots,\dterm_n)}{
  \dterm_{n+1}} \cdots \dterm_m$ with $\afun \in \Defineds$.
\end{enumerate}

We consider an infinite reduction starting in $\fterm_i$.
By minimality of $\fterm_i$ eventually a headmost step must be taken.
In case (A) this must be a $\beta$-step because the left-hand sides
of rules have the form $\app{\afun(\vec{l_1})}{\vec{l_2}}$;
therefore, 
the reduction
has the form
$\fterm_i \arrr{\Rules}
\app{\app{(\abs{\avar}{\cterm'})}{\dterm'}}{\eterm_1'} \cdots
\eterm_n' \arr{\beta} \app{\cterm'[\avar:=\dterm']}{\eterm_1'} \cdots
\eterm_n' \arr{\Rules} \ldots$  Since also $\app{\cterm[\avar:=
\dterm]}{\eterm_1} \cdots \eterm_n \arrr{\Rules} \app{\cterm'[\avar:=
\dterm']}{\eterm_1'} \cdots \eterm_n'$ the immediate beta-reduct of
$\fterm_i$ is non-terminating as well.  There are two sub-cases:
\begin{iteMize}{$\bullet$}
\item If $n > 0$, this reduct is MNT by (i); in this case choose
  $\fterm_{i+1} := \app{\cterm[\avar:=\dterm]}{\eterm_1} \cdots
  \eterm_n$ and let $\rho_{i+1},\aterm_{i+1},\bterm_{i+1} := \cbeta,
  \cterm_i, \cterm_{i+1}$.  Note that $\up{\aterm_{i+1}} = \aterm_{i+
  1}$ and $\up{\bterm_{i+1}} = \bterm_{i+1}$, and that case
  \ref{chain:betahead} of the definition of a dependency chain is
  satisfied.
\item If $n = 0$, let $\eterm$ be a minimal-sized subterm of $\cterm$
  where $\eterm[\avar:=\dterm]$ is still non-terminating.  By
  minimality of $\fterm_i$ both $\eterm$ and $\dterm$ are
  terminating, so $\FV(\eterm)$ contains $\avar$, but not $\eterm =
  \avar$.
  Since $\eterm$ is not a variable,
  $\up{(\eterm[\avar:=\dterm])} = \up{\eterm}[\avar:=\dterm]$.
  By minimality of $\eterm$, also $\eterm[\avar:=\dterm]$ is MNT
  (its direct subterms have the form $\eterm'[\avar:=\dterm]$ for
  a subterm $\eterm'$ of $\eterm$).
  Case \ref{chain:subterm} is satisfied with $\cterm_{i+1} :=
  \eterm[\avar:=\dterm]$ and $\rho_{i+1},\aterm_{i+1},\bterm_{i+1}
  := \cbeta, \cterm_i,\up{\cterm_{i+1}}$.
\end{iteMize}
Note that in both sub-cases, case \ref{chain:inn} is also satisfied,
since $\bterm_i = \aterm_{i+1}$.

In case (B), $\fterm_i =
\app{\afun(\dterm_1,\ldots,\dterm_n)}{\dterm_{n+1}} \cdots \dterm_m$,
we can always find a rule $l \arrz r$ and
term $\fterm_i' = \app{\subst{l}{\asub}}{\dterm_{j+1}'} \cdots
\dterm_m'$ such that $\fterm_i \arrr{in} \fterm_i'$, and
$\app{\subst{r}{\asub}}{\dterm_{j+1}'} \cdots \dterm_m'$ is still
non-terminating.
Choose $\aterm_{i+1} := \up{\fterm_i'}$; requirement
\ref{chain:inn} from Definition \ref{def:dependencychain} is satisfied
by (ii).
Since the rules were completed, we can assume that either $m = j$ or
$r$ is not an abstraction: if $r = \abs{\avar}{r'}$ and $m > j$ then
$\app{\subst{r}{\asub}}{\dterm_{j+1}'} \cdots \dterm_m'$ is a
$\beta$-redex, and (like above) may be reduced immediately without
losing termination; the same result would have been obtained with the
rule $\app{l}{\avar} \arrz r'$.

If $m > j$, then by (i) $\app{\subst{r}{\asub}}{\dterm_{j+1}'}
\cdots \dterm_m'$ is MNT.  Consequently, $\head(\subst{r}{\asub})$
cannot be a variable or a functional term $\bfun(\vec{\eterm})$ with
$\bfun$ a constructor symbol: either $\subst{r}{\asub}$ is headed by
an abstraction, or by a functional term with root symbol in
$\Defineds$.
Since $r$ itself is not an abstraction, its head must be a
variable or a functional term with defined root symbol.  Either way,
$\rho_{i+1} :=
\app{l}{\avar_{j+1}} \cdots \avar_m \dppijl \app{r}{\avar_{j+1}}
\cdots \avar_m$ is a dependency pair.  Let $\fterm_{i+1} :=
\app{\subst{r}{\asub}}{\dterm_{j+1}'} \cdots \dterm_m'$, and
$\bterm_{i+1} := \fterm_{i+1}$ (which equals
$\up{\fterm_{i+1}}$ as this is an application).
Requirement \ref{chain:dp} is satisfied.

Finally, if $m = j$, then $\fterm_i' = \subst{l}{\asub}$ and
$\subst{r}{\asub}$ is non-terminating.  Let $p$ be the smallest
subterm of $r$ such that $\subst{\subst{p}{[\vec{\avar}:=\vec{\c}]}}{
\gamma}$ is non-terminating, where $\{\vec{\avar}\} = \FV(p)
\setminus \FV(r)$.  Then $p$ is not a variable, for each
$\asub(\avar)$ is a subterm of $\subst{l}{\asub}$, and therefore
terminating (and the $\c_\atype$ do not reduce).  Thus, the immediate
subterms of $\subst{p}{[\vec{\avar}:=\vec{\c}]\asub}$ all have the
form $\subst{p'[\vec{\avar}:=\vec{\c}]}{\asub}$
with $p \supterm p'$, and are therefore terminating by minimality of
$p$: $\subst{\subst{p}{[\vec{\avar}:=\vec{\c}]}}{\asub}$ is MNT.  As
observed before, this can only be the case if this term is headed by
an abstraction or by a functional term with a defined root symbol.
And that can only be the case if $p$ is either headed by a functional
term with defined root symbol, or is an application headed by a
variable which is free in $r$ (as $r$ has no subterms $\app{(\abs{
\avar}{\cterm})}{\dterm}$).  Thus,
$\subst{p}[\vec{\avar}:=\vec{\c}]$ is a closed candidate term of $r$.
As $p[\vec{\avar}:=\vec{\c}]\asub$ is non-terminating, it is not a
strict subterm of the MNT term $l\asub$, so $\rho_{i+1} := \up{l}
\dppijl\up{\subst{p}{[\vec{\avar}:=\vec{\c}]}}$ is a dependency pair.
Choose $\fterm_{i+1} := \subst{\subst{p}{[\vec{\avar}:=\vec{\c}]}}{
\asub}$ and $\bterm_{i+1} := \up{\fterm_{i+1}} =
\subst{\up{\subst{p}{[\vec{\avar}:=\vec{\c}]}}}{\asub}$ (since
$\subst{p}{[\vec{\avar}:=\vec{\c}]}$ is not a variable).
We see that in this case, too, requirement \ref{chain:dp} is
satisfied.
\end{proof}

\begin{example}\label{ex:nonterm}
As we will see, $\twice$ does not admit a dependency chain.
As an example of a system which does admit one, consider the AFS with
the following three rules:
\[
\mathtt{f}(\nul) \arrz \mathtt{g}(\abs{x}{\mathtt{f}(x)},\mathtt{a})
\ \ \ \ \ \ \ 
\mathtt{g}(F,\mathtt{b}) \arrz \app{F}{\nul}
\ \ \ \ \ \ \ 
\mathtt{a} \arrz \mathtt{b} \\
\]
This system has four dependency pairs:
\[
\up{\mathtt{f}}(\nul) \dppijl \up{\mathtt{g}}(\abs{x}{\mathtt{f}(x)},
  \mathtt{a})\ \ \ \ \ \ \ 
\up{\mathtt{f}}(\nul) \dppijl \up{\mathtt{f}}(\c_\nat)\ \ \ \ \ \ \ 
\up{\mathtt{f}}(\nul) \dppijl \up{\mathtt{a}}\ \ \ \ \ \ \
\up{\mathtt{g}}(F,\mathtt{b}) \dppijl \app{F}{\nul}
\]
The rules admit an infinite reduction: $\mathtt{f}(\nul) \arrz
\mathtt{g}(\abs{x}{\mathtt{f}(x)},\mathtt{a}) \arrz
\mathtt{g}(\abs{x}{\mathtt{f}(x)},\mathtt{b}) \arrz
\app{(\abs{x}{\mathtt{f}(x)})}{\nul} \arrz_\beta
\mathtt{f}(\nul) \arrz \ldots$; following the steps in the proof of
Theorem~\ref{thm:dependencychain} (starting with $\mathtt{f}(\nul)$)
we obtain the following dependency chain:
\[
\begin{array}{rlclcll}
( &
\up{\mathtt{f}}(\nul) \dppijl \up{\mathtt{g}}(\abs{x}{\mathtt{f}(x)},
\mathtt{a}) & , & \up{\mathtt{f}}(\nul) & , &
\up{\mathtt{g}}(\abs{x}{\mathtt{f}(x)},\mathtt{a}) & ), \\
( & \up{\mathtt{g}}(F,\mathtt{b}) \dppijl \app{F}{\nul} & , &
\up{\mathtt{g}}(\abs{x}{\mathtt{f}(x)},\mathtt{b}) & , &
\app{(\abs{x}{\mathtt{f}(x)},\mathtt{a})}{\nul} & ), \\
( & \mathtt{beta} & , &
\app{(\abs{x}{\mathtt{f}(x)},\mathtt{a})}{\nul} & , &
\up{\mathtt{f}}(\nul) & ), \\
( & \up{\mathtt{f}}(\nul) \dppijl \up{\mathtt{g}}(\abs{x}{
\mathtt{f}(x)},\mathtt{a}) & , & \up{\mathtt{f}}(\nul) & , &
\up{\mathtt{g}}(\abs{x}{\mathtt{f}(x)},\mathtt{a}) & ), \\
& \ldots \\
\end{array}
\]
Note that between the first and second step, a $\arr{in}$ step
is done to reduce $\mathtt{a}$ to $\mathtt{b}$.  Also note that in
the third triple we use case~\ref{chain:subterm} from
Definition~\ref{def:dependencychain}, with $\eterm = \mathtt{f}(x)$.
\end{example}

\noindent
The converse of Theorem~\ref{thm:dependencychain} does not hold.
Consider for instance the AFS with symbols $\mathtt{A} : [\nat \times
\nat] \decpijl \nat$ and $\mathtt{B} : [\nat \typepijl \nat
\typepijl \nat] \decpijl \nat$, and a single rule:
$\mathtt{A}(\avar,\avar) \arrz \mathtt{B}(\abs{\bvar \cvar}{
\mathtt{A}(\bvar,\cvar)})$.
This (terminating!) AFS has a dependency pair
$\up{\mathtt{A}}(\avar,\avar) \dppijl \up{\mathtt{A}}(\c_\nat,\c_\nat)$,
which gives a dependency chain
$\up{\mathtt{A}}(\c_\nat,\c_\nat) \dppijl
\up{\mathtt{A}}(\c_\nat,\c_\nat) \dppijl \ldots$

We could try solving this problem by slightly altering the definition
of closed candidate terms: instead of substituting a variable $\avar :
\atype$ by a symbol $\c_\atype$, we could have replaced it with a
symbol $\c_\avar$, substituting all bound variables with different
symbols.  This choice was made in for example the first definition of
dependency pairs for HRSs~\cite{sak:wat:sak:01}.
But even with this change,
Theorem~\ref{thm:dependencychain} does not give an equivalence.
Consider for instance the AFS with the following rules:
\[
\begin{array}{rcl}
\symb{f}(\avar,\bvar,\suc(\cvar)) & \arrz & \symb{g}(\symb{h}(\avar,
  \bvar),\abs{\dvar}{\symb{f}(\dvar,\avar,\cvar)}) \\
\symb{h}(\avar,\avar) & \arrz &
  \symb{f}(\avar,\suc(\avar), \suc(\suc(\avar))) \\
\end{array}
\]
This system has three dependency pairs:
\[
\begin{array}{rcl}
\up{\symb{f}}(\avar,\bvar,\suc(\cvar)) 
  & \dppijl & \up{\symb{h}}(\avar,\bvar) \\
\up{\symb{f}}(\avar,\bvar,\suc(\cvar)) 
  & \dppijl & \up{\symb{f}}(\c_\dvar,\avar,\cvar) \\
\up{\symb{h}}(\avar,\avar) 
  & \dppijl & \up{\symb{f}}(\avar,\suc(\avar),\suc (\suc(\avar))) \\
\end{array}
\]
We get the following dependency chain:
$\up{\symb{f}} (\c_\dvar, \suc(\c_\dvar), \suc(\suc(\c_\dvar)) ) \dppijl
\up{\symb{f}} (\c_\dvar, \c_\dvar, \suc(\c_\dvar) ) \dppijl
\up{\symb{h}} (\c_\dvar , \c_\dvar) $\\$ \dppijl 
\up{\symb{f}} (\c_\dvar, \suc(\c_\dvar), \suc(\suc(\c_\dvar)) ) \dppijl
\ldots$
However, the AFS is terminating, intuitively because the 
bound variable destroys matching possibilities with the
non-left-linear rule.

For this reason, we have chosen to use the more elegant method with
symbols $\c_\atype$ instead of the slightly more powerful, but also
a fair bit more cumbersome, $\c_\avar$.  The latter style is less
pleasant because of $\alpha$-conversion: for example,
$\candidatesof{\symb{f}(\abs{\avar}{\symb{g}(\avar)})}$ should contain
$\symb{g}(\c_\bvar)$ for all variables $\bvar$.
Thus, to preserve correctness of definitions and proofs, we would
have to jump through a few hoops.  However, all results in this paper
also go through with such a definition; this was for instance
explored in the shorter version of this paper~\cite{kop:raa:11:1}.

\emptyline
The crucial point of both examples above is the combination of bound
variables and non-left-linear rules.
However, for left-linear AFSs, no such counterexample exists.
Intuitively, this holds because replacing variables by a symbol
$\c_\atype$ that does not occur in any left-hand side does not affect
applicability of any rule.  Thus, a dependency chain effectively
produces an infinite reduction $|s_i| \arr{\Rules} \cdot \suptermeq
|t_i| \arrr{\Rules} |s_{i+1}|$ (where $|\cterm|$ replaces any
$\up{\afun}$ in a term $\cterm$ by its unmarked counterpart), and
this implies the existence of an infinite $\arr{\Rules}$ reduction.

\begin{theorem}\label{thm:dependencychainll}
A left-linear AFS $\Rules$ is terminating 
if and only if
it does not admit a (minimal) dependency chain.
\end{theorem}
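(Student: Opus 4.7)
The ``$\Rightarrow$'' direction follows at once from Theorem~\ref{thm:dependencychain}: by contraposition, a system admitting no dependency chain admits no \emph{minimal} chain either, hence is terminating. Left-linearity is not needed for this direction. The new content is the ``$\Leftarrow$'' direction: assuming a dependency chain $\rijtje{(\rho_i, \aterm_i, \bterm_i) \mid i \in \N}$ exists, we must construct an infinite $\arr{\Rules}$-reduction and conclude non-termination.

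Let $|\cterm|$ denote the result of replacing each marked symbol $\up{\afun}$ in $\cterm$ by $\afun$; since the constants $\c_\atype$ do not occur in any rule, we may regard them as fresh free variables as far as $\arr{\Rules}$ is concerned. Examining the four clauses of Definition~\ref{def:dependencychain}, each chain triple supplies at least one honest $\arr{\Rules}$-step on the unmarked terms: for a first-kind dependency pair from a rule $l_0 \arrz r_0$ with closed candidate $q[\vec{y}:=\vec{\c}]$, the step $l_0\asub \arrz r_0\asub$ is an $\arr{\Rules}$-step whose contractum contains $q\asub$ as a subterm under binders over $\vec{y}$; for a second-kind pair, one $\arr{\Rules}$-step inside $|\aterm_i|$ yields $|\bterm_i|$ exactly; and both $\cbeta$ subcases yield a $\beta$-step, with $|\bterm_i|$ equal to the contractum in case~\ref{chain:betahead} and a (possibly binder-enclosed) subterm of it in case~\ref{chain:subterm}. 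In every case, the chain's trailing $\arrr{in}$ reduction gives $|\bterm_i|\arrr{\Rules}|\aterm_{i+1}|$ directly, because rules contain no marked symbols.

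The essential use of left-linearity is an \emph{anti-substitution lemma}: if $\Rules$ is left-linear, $\vec{\c}$ are constants not occurring in any rule, and $\aterm[\vec{y}:=\vec{\c}]\arr{\Rules}\bterm$, then there is $\bterm'$ with $\aterm\arr{\Rules}\bterm'$ and $\bterm'[\vec{y}:=\vec{\c}]=\bterm$. The proof is a direct inspection of a single rewrite step: since no $\c_\atype$ occurs in any rule pattern, any $\c_\atype$ appearing in the matching of a rule inside $\aterm[\vec{y}:=\vec{\c}]$ must come from the matching substitution $\mu$; by left-linearity $\mu$ is unique, so replacing each $\c_\atype$ by the corresponding $\avar$ in the range of $\mu$ yields a valid match in $\aterm$ whose contractum is $\bterm'$. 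Left-linearity is crucial, as illustrated by the non-linear counterexamples immediately preceding the theorem, in which two distinct bound variables coalesce after being replaced by the same $\c_\atype$, enabling a match that disappears when bound variables are reinstated.

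With this lemma, we assemble the infinite reduction by induction. Starting from $|\aterm_0|$ (with each $\c_\atype$ understood as a fresh free variable), we maintain at each step~$i$ a term $\fterm_i$ containing a designated subterm that, after an $\alpha$-correspondence between the current $\c_\atype$'s and certain bound variables in the ambient context, coincides with $|\aterm_i|$. The chain's step~$i$ supplies one honest $\arr{\Rules}$-step (the rule or $\beta$-step above), and the subsequent $\arrr{in}$ reduction on $|\bterm_i|$ is transplanted by the anti-substitution lemma into an $\arrr{\Rules}$ reduction on the corresponding subterm, producing $\fterm_{i+1}$. The main obstacle is the bookkeeping: tracking which $\c_\atype$ occurring in $|\aterm_{i+1}|$ corresponds to which freshly introduced binder $\vec{y}$ from the preceding candidate, and verifying that this correspondence propagates through the accumulating binders. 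Since each chain step contributes at least one $\arr{\Rules}$-step, the concatenated sequence is genuinely infinite, contradicting termination.
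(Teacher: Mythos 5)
Your proof takes essentially the same route as the paper's: the forward direction from Theorem~\ref{thm:dependencychain}, and for the converse an anti-substitution lemma (a linear, $\c$-free pattern matching $\aterm[\vec{y}:=\vec{\c}]$ also matches $\aterm$, so steps lift through the substitution) used to assemble an infinite reduction inside accumulating contexts. The ``bookkeeping'' you flag but leave open is precisely what the paper discharges with its invariant $\aterm_i'[\vec{\avar_0}:=\vec{\c},\ldots,\vec{\avar_{i-1}}:=\vec{\c}] = |\aterm_i|$ together with a third observation lifting subterm decompositions through $[\vec{\avar}:=\vec{\c}]$; otherwise the arguments coincide.
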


\begin{proof}
Theorem~\ref{thm:dependencychain} gives one direction.
For the other direction, 
assume a left-linear AFS $\Rules$
and suppose we have an infinite dependency chain (minimal or not).
We construct  an infinite $\arr{\Rules}\cdot \suptermeq$ sequence,
following roughly the intuition above.
We note:
\begin{enumerate}[(1)]
\item \label{ll:linearsubstitution}
  If $l$ is a linear term not containing any symbols
  $\c_\atype$, and $\asub$ is a substitution whose domain contains only
  variables in $l$, and if $\subst{l}{\asub} =
  \aterm[\vec{\avar}:=\vec{\c}]$ for some term $\aterm$ and set of
  variables $\{\vec{\avar}\}$, then there is a substitution
  $\bsub$ such that $\subst{l}{\bsub} = \aterm$ and
  $\asub = \bsub[\vec{\avar}:=\vec{\c}]$.
\item \label{ll:cstep}
  If $\aterm[\vec{\avar}:=\vec{\c}] \arr{\Rules} \bterm$, then
  there exists some $\bterm'$ such that $\aterm \arr{\Rules}
  \bterm'$ and $\bterm'[\vec{\avar}:=\vec{\c}] = \bterm$.
\item \label{ll:csubterm}
  If $\aterm[\vec{\avar}:=\vec{\c}] = C[\bterm]$, then there are $C'$
  and $\bterm'$ such that $\aterm = C'[\bterm']$ and
  $\bterm'[\vec{\avar}:=\vec{\c}] = \bterm$.
\end{enumerate}
(\ref{ll:linearsubstitution}) states that, if a linear term $l$
(typically the left-hand side of a rule) matches a term $\aterm$ with
some $\c$-symbols in it, it also matches $\aterm$ with those symbols
replaced by variables.

This holds by induction on $l$, assuming linearity over
$\domain(\asub)$: the cases where $l$ is a variable are
straightforward (if $l \in \domain(\asub)$ take $\bsub = [l:=\aterm]$,
otherwise let $\bsub := \emptyset$), if $l$ is an abstraction
$\abs{\avar}{l'}$ the induction hypothesis suffices ($\avar$ cannot
occur in domain or range of $\bsub$, for then it would also hold for
$\asub$), and if $l$ is an application or functional term we
use the linearity.  For example the functional case, if $l = f(l_1,
\ldots,l_n)$, then let $\asub_i$ be the restriction of $\gamma$ to
$\FV(l_i)$ for $1 \leq i \leq n$; by the induction hypothesis we find
suitable $\bsub_i$, and by linearity of $l$ each of those $l_i$ has
different variables, so $\bsub := \bsub_1 \cup \ldots \cup \bsub_n$
is well-defined.

(\ref{ll:cstep}) states that, if a term with some variables replaced
by $\c$-symbols reduces, then the original term reduces in a similar
way.  This holds by induction on the size of $\aterm$.
When the reduction is done in a subterm, the statement follows easily
with the induction hypothesis (immediate subterms of
$\aterm[\vec{\avar}:=
\vec{\c}]$ have the form $\aterm'[\vec{\avar}:=\vec{\c}]$ with
$\aterm \supterm \aterm'$, so the induction hypothesis is applicable).
In the base case, a $\beta$-step is easy, and if, for some rule $l
\arrz r$ and substitution
$\asub$, the term $\subst{\aterm}{[\vec{\avar}:=\vec{\c}]} =
\subst{l}{\asub}$, then by left-linearity of $\Rules$ we may use
(\ref{ll:linearsubstitution}):
there is a substitution $\bsub$ such that $\aterm = \subst{l}{\bsub}
\arr{\Rules} \subst{r}{\bsub} =: \bterm'$; certainly
$\subst{\subst{r}{\bsub}}{[\vec{\avar}:=\vec{\c}]} = \subst{r}{\asub}
= \bterm$ as required.

(\ref{ll:csubterm}) follows by induction on the size of C: if
C is the empty context take $\bterm' := \aterm$, otherwise use the
induction hypothesis; for instance if $C[] = \afun(\cterm_1,\ldots,
D_i[],\ldots,\cterm_n)$, then $\aterm = \afun(\aterm_1,
\ldots,\aterm_i,\ldots,\aterm_n)$ (with each $\aterm_j[\vec{\avar}:=
\vec{\c}] = \cterm_j$), and by the induction hypothesis on $\aterm_j$
there are $D_i',\bterm'$ such that $C' := \afun(\aterm_1,\ldots,D_i',
\ldots,\aterm_n)$ and $\bterm'$ satisfy the requirement.

\medskip \noindent
Now suppose there is a dependency chain $\rijtje{(\rho_i,
\aterm_i,\bterm_i) \mid i \in \N}$, and define $\aterm_0' :=
|\aterm_0|$ (that is, $\aterm_0$ with all marks removed).
For all $i \in \N$, suppose $\aterm_i'[\vec{\avar_0}:=\vec{\c},\ldots,
\vec{\avar_{i-1}}:=\vec{\c}] = |\aterm_i|$.
Whether $\rho_i$ is $\mathtt{beta}$ or a dependency pair,
$|\aterm_i| \arr{\Rules} C_i[\cterm_i]$ for some
term $\cterm_i$ and context $C_i$,
such that $|\bterm_i| = \cterm_i[\vec{\avar_i}:=
\vec{\c}]$ for some
variables $\vec{\avar_i}$.  By (\ref{ll:cstep}), (\ref{ll:csubterm})
also $\aterm_i' \arr{\Rules} C_i'[\cterm_i']$ and
$\cterm_i'[\vec{\avar_0}:=\vec{\c},\ldots,\vec{\avar_i}:=\vec{\c}] =
|\bterm_i|$.
By (\ref{ll:cstep}) we can find $\aterm_{i+1}'$ such that
$\cterm_i' \arrr{\Rules} \aterm_{i+1}'$ and $\aterm_{i+1}'
[\vec{\avar_0}:=\vec{\c},\ldots,\vec{\avar_i}:=\vec{\c}] =
|\aterm_{i+1}|$.
Thus, $\aterm_0'$ is non-terminating: $\aterm_0' \arrp{\Rules}
C_0'[\aterm_1'] \arrp{\Rules} C_0'[C_1'[\aterm_2']] \arrp{\Rules}
\ldots$
\end{proof}

\subsection{The Dependency Graph}\label{subsec:graph}

As in the first-order case, we use a dependency graph 
to organise the dependency pairs. 
The notions are very similar to the first-order definitions.

The \emph{dependency graph} of an AFS $\Rules$ is a graph with the
dependency pairs of $\Rules$ as nodes, and an edge from node
$l \dppijl p$ to node $l' \dppijl p'$ if either $\head(p)$ is a
variable, or there are substitutions $\asub$ and $\bsub$ such that
$\subst{p}{\asub} \arrr{\Rules,in} \subst{l'}{\bsub}$.

\begin{example}\label{ex:twicegraph}
The dependency graph of the AFS $\twice$:

\vspace{11pt}

\begin{tikzpicture}[->]
\begin{scope}[>=stealth]

\tikzstyle{veld} = [draw, fill=white,  drop shadow, 
  minimum height=0em, minimum width=0em, rounded corners]

\node (centre) { };

\node [veld] (ui1) [left of=centre,anchor=east,node distance=20mm] {
    $\up{\I}(\suc(n)) \dppijl \app{\twice(\abs{x}{\I(x)})}{n}$
  };
\node (ui1a) [left of=ui1,node distance=17mm] { };
\node (ui1b) [above of=ui1a,node distance=2mm] { };
\node (ui1c) [below of=ui1a,node distance=3mm] { };
\node (ui1d) [right of=ui1,node distance=24mm] { };

\node [veld] (ui2) [right of=centre,anchor=west,node distance=20mm] {
    $\up{\I}(\suc(n)) \dppijl \up{\twice}(\abs{x}{\I(x)})$
  };
\node (ui2a) [right of=ui2,node distance=16mm] { };
\node (ui2b) [above of=ui2a,node distance=2mm] { };
\node (ui2c) [below of=ui2a,node distance=3mm] { };
\node (ui2d) [left of=ui2,node distance =21mm] { };

\node [veld] (ui3) [above of=centre,node distance=20mm] {
    $\up{\I}(\suc(n)) \dppijl \up{\I}(\c_\nat)$
  };

\node [veld] (twice1) [above left of=centre,anchor=east,node distance=15mm] {
    $\app{\twice(F)}{m} \dppijl \app{F}{(\app{F}{m})}$
  };
\node (twice1a) [below of=twice1,node distance=2mm] { };
\node (twice1b) [right of=twice1a,node distance=15mm] { };
\node (twice1c) [right of=twice1b,node distance=5mm] { };
\node (twice1d) [left of=twice1,node distance=23mm] { };
\node (twice1e) [above of=twice1,node distance=2mm] { };

\node [veld] (twice2) [below left of=centre,anchor=east,node distance=13mm] {
    $\app{\twice(F)}{m} \dppijl \app{F}{m}$
  };
\node (twice2a) [above of=twice2,node distance=2mm] { };
\node (twice2b) [right of=twice2a,node distance=10mm] { };
\node (twice2c) [right of=twice2b,node distance=6mm] { };
\node (twice2d) [left of=twice2,node distance=18mm] { };
\node (twice2e) [below of=twice2,node distance=2mm] { };

\node [veld] (utwice1) [above right of=centre,anchor=west,node distance=15mm] {
    $\up{\twice}(F) \dppijl \app{F}{(\app{F}{\c_\nat})}$
  };
\node (utwice1a) [below of=utwice1,node distance=2mm] { };
\node (utwice1b) [left of=utwice1a,node distance=16mm] { };
\node (utwice1c) [left of=utwice1b,node distance=4mm] { };
\node (utwice1d) [right of=utwice1,node distance=22mm] { };
\node (utwice1e) [above of=utwice1,node distance=2mm] { };

\node [veld] (utwice2) [below right of=centre,anchor=west,node distance=13mm] {
    $\up{\twice}(F) \dppijl \app{F}{\c_\nat}$
  };
\node (utwice2a) [above of=utwice2,node distance=2mm] { };
\node (utwice2b) [left of=utwice2a,node distance=10mm] { };
\node (utwice2c) [left of=utwice2b,node distance=5mm] { };
\node (utwice2d) [right of=utwice2,node distance=17mm] { };
\node (utwice2e) [below of=utwice2,node distance=2mm] { };

\draw (twice1b) -- (twice2b);
\draw (twice2b) -- (twice1b);
\draw (twice1) -- (utwice1);
\draw (utwice1) -- (twice1);
\draw (twice2) -- (utwice2);
\draw (utwice2) -- (twice2);
\draw (utwice1b) -- (utwice2b);
\draw (utwice2b) -- (utwice1b);
\draw (twice1c) -- (utwice2c);
\draw (utwice2c) -- (twice1c);
\draw (utwice1c) -- (twice2c);
\draw (twice2c) -- (utwice1c);

\draw (ui1b) -- (twice1d);
\draw (twice1d) -- (ui1b);
\draw (ui1c) -- (twice2d);
\draw (twice2d) -- (ui1c);
\draw (utwice1c) -- (ui1d);
\draw (utwice2c) -- (ui1d);
\draw (ui2b) -- (utwice1d);
\draw (utwice1d) -- (ui2b);
\draw (ui2c) -- (utwice2d);
\draw (utwice2d) -- (ui2c);
\draw (twice1c) -- (ui2d);
\draw (twice2c) -- (ui2d);
\draw (twice2c) -- (ui3);
\draw (utwice2c) -- (ui3);
\draw (twice1) -- (ui3);
\draw (utwice1) -- (ui3);

\draw (twice1e) to[out=45,in=135,looseness=5] (twice1e);
\draw (utwice1e) to[out=135,in=45,looseness=5] (utwice1e);
\draw (twice2e) to[out=315,in=225,looseness=5] (twice2e);
\draw (utwice2e) to[out=225,in=315,looseness=5] (utwice2e);

\end{scope}
\end{tikzpicture}
\end{example}

\noindent
A \emph{cycle} is a set $\acycle$ of dependency pairs such that
between every two pairs $\rho,\pi \in \acycle$ there is a non-empty
path in the graph using only nodes in $\acycle$.
A cycle that is not contained in any other cycle is called a
\emph{strongly connected component} (SCC).  
To prove termination we must show that cycles in a dependency graph
are ``chain-free'' (see Theorem \ref{thm:maintheorem}).
The requirement to add an edge from any node of the form $l \dppijl
\app{\avar}{r_1} \cdots r_n$ (with $\avar$ a variable) to all other
nodes is necessary by clause~\ref{chain:subterm} in
Definition~\ref{def:dependencychain}: a dependency chain could have
a dependency pair of the form $l \dppijl \app{\avar}{\vec{r}}$
followed by $\mathtt{beta}$, and then any other dependency pair.
Hence a rule with leading free variables in the right-hand
side gives rise to many cycles.

A set of dependency pairs $\P$ is called \emph{chain-free} if there
is no minimal dependency chain using only dependency pairs in $\P
\cup \{\cbeta\}$.

\begin{lemma}\label{lem:emptynondangerous}
$\emptyset$ is chain-free.
\end{lemma}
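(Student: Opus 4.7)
The plan is to derive a contradiction from the existence of such a chain $\langle (\cbeta, s_i, t_i) \mid i \in \N \rangle$. First I analyse the shapes of $s_i$ and $t_i$: by Definition~\ref{def:dependencychain} each $s_i = \app{\app{(\abs{x}{\cterm})}{\dterm}}{\eterm_1} \cdots \eterm_k$, so its head is an abstraction. The connecting reduction $t_i \arrr{in} s_{i+1}$ consists of internal steps, each of which requires its source to have the shape $\app{\afun(c_1,\ldots,c_n)}{c_{n+1}} \cdots c_m$ with $\afun$ a function symbol, and each of which preserves that shape under contraction of an inner subterm. Since $s_{i+1}$ is again abstraction-headed, this reduction must have length zero, so $s_{i+1} = t_i$ and $t_i$ is itself abstraction-headed.

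Writing $|r|$ for the term obtained from $r$ by replacing every marked symbol $\up{\afun}$ by $\afun$, the mark-erasure map commutes with both $\arr{\beta}$ and taking subterms. In sub-case~(a) of Definition~\ref{def:dependencychain} we have $s_i \arr{\beta} t_i$, hence $|s_i| \arr{\beta} |s_{i+1}|$. In sub-case~(b) we have $s_i = \app{(\abs{x}{\cterm})}{\dterm}$ and $t_i = \up{\eterm}[x:=\dterm]$ for some $\eterm$ with $\cterm \suptermeq \eterm$, and therefore $|s_i| \arr{\beta} \cterm[x:=\dterm] \suptermeq \eterm[x:=\dterm] = |s_{i+1}|$. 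Thus in either sub-case, $|s_i| \arr{\beta} r_i \suptermeq |s_{i+1}|$ for some intermediate term $r_i$.

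Finally I would invoke strong normalisation of $\beta$-reduction on simply typed $\lambda$-terms (treating each AFS function symbol as a curried constant of its declared type), which is a classical result and transfers to our setting since the translation preserves and reflects $\arr{\beta}$. Combined with finite branching of the reduction graph, this yields that $L(r) := \max\{n \mid r \arr{\beta}^n r' \text{ for some } r'\}$ is a well-defined natural number for every term $r$. The function $L$ strictly decreases under $\arr{\beta}$, and is monotone under $\suptermeq$: any $\beta$-reduction from a subterm $u$ of $v$ lifts, by context-closure of $\arr{\beta}$, to a reduction of $v$ of the same length. Hence $L(|s_{i+1}|) \leq L(r_i) < L(|s_i|)$ for every $i$, which gives an infinite strictly descending sequence in $\N$, the desired contradiction. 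The delicate step will be the head analysis in the first paragraph (distinguishing abstraction-, variable-, function-symbol- and marked-function-symbol-headed terms, and checking internal reductions can only relate same-head terms); once that is settled, the $L$-argument is routine, and we observe that the minimality assumption on the chain is not even used.
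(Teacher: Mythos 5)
Your proof is correct and follows essentially the same route as the paper's one-line argument: each link of an all-$\cbeta$ chain yields $|s_i| \arr{\beta} \cdot \suptermeq |s_{i+1}|$, contradicting strong normalisation of the simply-typed $\lambda$-calculus. You merely make explicit two details the paper leaves implicit (that the $\arrr{in}$ segment must be empty because $s_{i+1}$ is abstraction-headed, and the well-foundedness of $\arr{\beta}\cdot\suptermeq$ via the maximal-reduction-length measure $L$), both of which check out.
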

 
\begin{proof}
Given a dependency chain with all $\rho_i = \mathtt{beta}$,
each $\aterm_i \arr{\beta} C_i[\aterm_{i+1}]$ for some context $C_i$,
contradicting termination of the simply-typed $\lambda$-calculus.
\end{proof}

Because the dependency graph cannot be computed in general, it is
common to use \emph{approximations} of the dependency graph,
which have the same nodes but possibly more edges.
A brute method to find an approximation
is to have an edge between $l \dppijl p$ and $l' \dppijl p'$ 
if either the head of $p$ is a variable, 
or if $p$ and $l'$ both have the form 
$\app{\afun(\aterm_1,\ldots,\aterm_n)}{\aterm_{n+1}} \cdots \aterm_m$
for the same function symbol $\afun$.
It is interesting to study more sophisticated methods 
to find approximations, but this is left for future work. 

As stated in Section~\ref{sec:preliminaries}, we assume a finite set
of rules, which leads to a finite set of dependency pairs.  In
Section~\ref{subsec:polymorphism} we will say a few words on
extending the technique to systems with infinitely many rules
(without having to deal with an infinite graph).

\begin{lemma}\label{lem:cyclenondangerous}
Let $G$ be an approximation of the dependency graph of an AFS $\Rules$.
Suppose that every SCC in $G$ is chain-free.
Then $\Rules$ is terminating.
\end{lemma}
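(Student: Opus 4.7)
The plan is to prove the contrapositive: if $\Rules$ is non-terminating then some SCC of $G$ is not chain-free. Theorem~\ref{thm:dependencychain} gives a minimal dependency chain $\langle(\rho_i, \aterm_i, \bterm_i) \mid i \in \N\rangle$ over $\DP(\Rules)$. By Lemma~\ref{lem:emptynondangerous} infinitely many $\rho_i$ must be dependency pairs (not $\cbeta$), and since $\DP(\Rules)$ is finite, there is a nonempty set $\P \subseteq \DP(\Rules)$ of dependency pairs each of which occurs infinitely often, together with an index $N$ beyond which only pairs in $\P \cup \{\cbeta\}$ appear. The tail of the chain from $N$ onwards is itself a minimal dependency chain.

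The technical heart of the argument is the claim that, for $i \geq N$, if $\rho_i = l \dppijl p$ and $j > i$ is the next index with $\rho_j$ a dependency pair, then there is an edge from $\rho_i$ to $\rho_j$ in the dependency graph of $\Rules$. If $\head(p)$ is a variable this is immediate from the first clause of the edge definition. Otherwise $\head(p)$ is a function symbol (possibly marked), and hence so is $\head(\bterm_i)$ where $\bterm_i = \subst{p}{\asub}$. Since $\arrr{\Rules,in}$ steps are by construction non-headmost, $\aterm_{i+1}$ retains that head; in particular it is not a $\beta$-redex, so neither case~\ref{chain:betahead} nor case~\ref{chain:subterm} of Definition~\ref{def:dependencychain} can apply to a $\cbeta$ step at position $i+1$. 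Hence $\rho_{i+1}$ is itself a dependency pair $l' \dppijl p'$, $j = i+1$, and $\bterm_i \arrr{\Rules,in} \aterm_{i+1} = \subst{l'}{\bsub}$ witnesses the required edge.

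Using this claim, since each pair in $\P$ recurs infinitely often past $N$, I can chain edges in both directions between any two $\rho, \pi \in \P$ through nodes that all belong to $\P$, so $\P$ is a cycle of the dependency graph. The approximation $G$ inherits these edges, so $\P$ is contained in some SCC $\mathcal{C}$ of $G$. The tail from $N$ is then a minimal dependency chain using only pairs in $\mathcal{C} \cup \{\cbeta\}$, contradicting chain-freeness of $\mathcal{C}$.

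The main obstacle is the edge-existence claim: the possible interleaving of $\cbeta$ steps, and especially the subterm-based case~\ref{chain:subterm} (where $\bterm_i$ is built by descending into a $\lambda$-body), could in principle disconnect two consecutive dependency pairs in the chain. The observation that unlocks the argument is that a non-collapsing dependency pair forces the very next triple to again be a dependency pair, because its right-hand side has a rigid functional head that is preserved by $\arr{in}$-reductions and therefore cannot become a $\beta$-redex; collapsing pairs need no such argument because the edge definition already connects them to every node of the graph.
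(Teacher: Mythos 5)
Your proposal is correct and follows essentially the same route as the paper's proof: extract the infinitely recurring dependency pairs from the minimal chain given by Theorem~\ref{thm:dependencychain}, show consecutive pairs are connected by edges so that they all land in one SCC of $G$, and contradict chain-freeness with the tail of the chain. The only difference is that you explicitly justify the edge-existence claim (via the head-preservation of $\arr{in}$-steps for non-collapsing pairs and the universal edges for collapsing ones), which the paper states without proof.
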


\begin{proof}
Since $\DP$ is finite, any (minimal) dependency chain $\rijtje{
(\rho_i,\aterm_i,\bterm_i) \mid i \in \N}$ has at least one
dependency pair $\rho_i$ which occurs infinitely often.  
Note that if $n < m$ then there is a path in $G$ from $\rho_n$ to
$\rho_m$ (if $\rho_n,\rho_m \neq \mathtt{beta}$).
Therefore,  there is a path in
$G$ from $\rho_i$ to itself, and hence $\rho_i$ is on a cycle.
Let $\acycle$ be the SCC containing $\rho_i$.
Then all $\rho_j$ with $j > i$ and $\rho_j \neq \mathtt{beta}$ are in
$\acycle$: from each such $\rho_j$ there is a path to $\rho_i$ and
back.
But then, $\{ \rho_j \mid j \geq i \}$ is a minimal dependency chain
in $\acycle$, so $\acycle$ is not chain-free, contradicting the
assumption!
Thus, there is no dependency chain, and therefore by
Theorem~\ref{thm:dependencychain}, $\Rules$ is terminating.
\end{proof}

\begin{example}\label{ex:twicecycles}
The dependency graph (approximation) of $\twice$ from Example
\ref{ex:twicegraph} has only one SCC:
\[
\left\{
\begin{array}{rclrcl}
\up{\I}(\suc(n)) & \dppijl & \twice(\abs{x}{\I(x)}) \cdot n &
\up{\twice}(F) & \dppijl & F \cdot (F \cdot \c_\nat) \\
\up{\I}(\suc(n)) & \dppijl & \up{\twice}(\abs{x}{\I(x)}) &
\up{\twice}(F) & \dppijl & F \cdot \c_\nat \\
\twice(F) \cdot m & \dppijl & F \cdot (F \cdot m) &
\twice(F) \cdot m & \dppijl & F \cdot m \\
\end{array}
\right\}
\]
Therefore $\twice$ is terminating if this set, which we shall call
$\scctwice$, is chain-free.
\vspace{-6pt}
\end{example}

\subsection{Reduction Triples}\label{subsec:redord}

The challenge, then, is to prove that given sets of dependency pairs
are chain-free.
We use the following definition:

\begin{definition}\label{def:reductionpair}
A \emph{reduction triple} consists of a quasi-ordering $\geqterm$,
a sub-relation $\geqterm_1$ of $\geqterm$, and a well-founded
ordering $\gterm$, all defined on terms built over $\up{\F}_c$, such
that:
\begin{enumerate}[(1)]
\item $\geqterm$ and $\gterm$ are \emph{compatible}: 
  either $\gterm \cdot \geqterm \mathord{\subseteq} \gterm$
  or $\geqterm \cdot \gterm \mathord{\subseteq} \gterm$;
\item $\geqterm,\ \geqterm_1$ and $\gterm$ are all \emph{stable}
  (closed under substitution);
\item $\geqterm_1$ is \emph{monotonic}: 
  (if $\aterm \geqterm_1 \bterm$ and $\aterm,\bterm$ share a type,
  then $C[\aterm] \geqterm_1 C[\bterm]$ for all $C[]$);
\item $\geqterm_1$ contains \texttt{beta} 
  (always $\app{(\abs{\avar}{\aterm})}{\bterm} \geqterm_1
  \aterm[\avar:=\bterm]$).
\end{enumerate}
\end{definition}

\noindent
A \emph{reduction pair} is a pair $(\geqterm,\gterm)$ such that
$(\geqterm,\geqterm,\gterm)$ is a reduction triple; this corresponds
to the first-order notion of a reduction pair.  The reduction
triple is a generalisation of this notion, where $\geqterm$ itself is
not required to be monotonic; we will need a non-monotonic $\geqterm$
in Section \ref{subsec:typechange} to compare terms with different
types.  This notion of a reduction triple is similar to the one which
appears in \cite{hir:mid:07:1}.

To deal with subterm reduction in dependency chains, 
an additional definition is needed.
\begin{definition}[Limited Subterm Property]
$\geqterm$ has the \emph{limited subterm property} if the following
requirement is satisfied:
\emph{
for all variables $\avar$ and terms $\aterm,\bterm,\cterm$ such that
$\aterm \suptermeq \cterm \supterm \avar$, there is a substitution
$\asub$ such that $\app{(\abs{\avar}{\aterm})}{\bterm} \geqterm
\up{\cterm}[\avar:=\bterm]\asub$.}
\end{definition}
Intuitively, the substitution $\asub$ can be used to replace free
variables in $\cterm$ which are bound in $\aterm$ by the corresponding
constants $\c_\atype$.
However, we will also use a more liberal replacement of those
variables, hence the general $\asub$.

The following theorem shows how reduction triples are used with
dependency pairs.

\begin{theorem}\label{thm:maintheorem}
A set $\P = \P_1 \uplus \P_2 $ of dependency pairs is chain-free if
$\P_2$ is chain-free, and there is a reduction triple $(\geqterm,
\geqterm_1,\gterm)$ such that:
\begin{iteMize}{$\bullet$}
  \item
  $l \gterm p$ for all $l \dppijl p \in \P_1$,
  \item
  $l \geqterm p$ for all $l \dppijl p \in \P_2$, 
  \item
  $l \geqterm_1 r$ for all $l \arrz r \in \Rules$,
  \item
  either $\P$ is non-collapsing 
  or $\geqterm$ has the limited subterm property.
\end{iteMize}
\end{theorem}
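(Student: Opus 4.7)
The plan is to assume toward contradiction that $\P$ is not chain-free, fix a minimal dependency chain $\langle(\rho_i,\aterm_i,\bterm_i)\mid i\in\N\rangle$ using pairs from $\P\cup\{\cbeta\}$, and derive either an infinite $\gterm$-descent (contradicting well-foundedness) or a minimal chain over $\P_2\cup\{\cbeta\}$ (contradicting chain-freeness of $\P_2$). The overall strategy parallels the first-order proof of Theorem~\ref{thm:firstorder:chainfree}, with careful additional accounting for the two $\beta$-subclauses of Definition~\ref{def:dependencychain}.

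First I would observe that $\geqterm_1$ absorbs every one-step reduction: stability of $\geqterm_1$ applied to $l\geqterm_1 r$ gives $\subst{l}{\asub}\geqterm_1\subst{r}{\asub}$ for every substitution, monotonicity propagates this through an arbitrary context, and $\geqterm_1$ contains \clausebeta\ by assumption; since $\geqterm_1\subseteq\geqterm$, the clause $\bterm_i\arrr{in}\aterm_{i+1}$ in the chain immediately yields $\bterm_i\geqterm\aterm_{i+1}$. Next, for each $i$ I would produce a comparison $\aterm_i\gterm\bterm_i$ when $\rho_i\in\P_1$ and $\aterm_i\geqterm\bterm_i$ otherwise: the dependency-pair cases are pure stability of $\gterm$ and $\geqterm$, and for $\rho_i=\cbeta$ in clause~\ref{chain:betahead} ($k>0$) monotonicity of $\geqterm_1$ applied to the head $\beta$-redex gives $\aterm_i\geqterm_1\bterm_i$ directly.

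The main obstacle is clause~\ref{chain:subterm}, where $\bterm_i=\up{\eterm}[\avar:=\dterm]$ differs from the natural $\beta$-reduct by the $\up{\cdot}$ marking and by the dropping of outer structure. If $\geqterm$ has the limited subterm property, applying it with $\aterm:=\cterm$, $\bterm:=\dterm$, $\cterm:=\eterm$ furnishes some $\asub$ with $\app{(\abs{\avar}{\cterm})}{\dterm}\geqterm\bterm_i\asub$, and I would thread $\asub$ through the remainder of the chain by stability so the inequalities still compose. If instead $\P$ is non-collapsing, the strategy is to show by induction that clause~\ref{chain:subterm} is never triggered, by exploiting that $\arr{in}$ preserves the head of an application: the right-hand side of every non-collapsing dependency pair in $\P$ is headed by a (marked or unmarked) function symbol, so each $\bterm_{i-1}$ arising from $\P$ is function-symbol-headed and hence $\aterm_i$ cannot be a bare top-level $\beta$-redex. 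The delicate part is to see that consecutive $\cbeta$-steps preserve this invariant, which reduces to a case analysis on the shape of $\cterm[\avar:=\dterm]$ and on the number $k$ of trailing arguments in clause~\ref{chain:betahead}.

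Combining the per-index comparisons with $\bterm_i\geqterm\aterm_{i+1}$ produces an infinite sequence $\aterm_0\mathrel{R_0}\bterm_0\geqterm\aterm_1\mathrel{R_1}\bterm_1\geqterm\cdots$ with each $R_i\in\{\gterm,\geqterm\}$. If infinitely many $R_i$ are $\gterm$, compatibility of $(\geqterm,\gterm)$ collapses the $\geqterm$-steps into an infinite strictly descending $\gterm$-sequence, contradicting well-foundedness. Otherwise only finitely many $\rho_i$ belong to $\P_1$, so from some index $N$ on every $\rho_i$ lies in $\P_2\cup\{\cbeta\}$, and the tail $\langle(\rho_i,\aterm_i,\bterm_i)\mid i\geq N\rangle$ is itself a minimal dependency chain over $\P_2\cup\{\cbeta\}$ (minimality inheriting pointwise from the original), contradicting chain-freeness of $\P_2$.
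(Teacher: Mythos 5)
Your proposal follows essentially the same route as the paper's proof: derive $\bterm_i \geqterm \aterm_{i+1}$ from stability, monotonicity and $\beta$-inclusion of $\geqterm_1$; get $\aterm_i \gterm \bterm_i$ or $\aterm_i \geqterm \bterm_i$ by stability for dependency-pair steps; use the limited subterm property for clause~\ref{chain:subterm} in the collapsing case, threading the resulting substitution through the tail of the chain exactly as the paper does; and conclude by compatibility and well-foundedness. The only presentational difference is that the paper invokes chain-freeness of $\P_2$ up front (so infinitely many $\rho_i$ lie in $\P_1$) and derives only the well-foundedness contradiction, whereas you split into two terminal cases; these are equivalent.

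There is, however, one step in the non-collapsing branch that would fail as stated. Your proposed invariant --- ``clause~\ref{chain:subterm} is never triggered'' --- is \emph{not} preserved across consecutive $\cbeta$-steps, and the case analysis you flag as delicate cannot rescue it: a clause~\ref{chain:betahead} step with $k=1$ whose result $\app{\cterm[\avar:=\dterm]}{\eterm_1}$ has an abstraction at the head yields (after the necessarily trivial $\arrr{in}$ step) a bare top-level $\beta$-redex, at which clause~\ref{chain:subterm} may fire at the very next index. The correct observation, which is the one the paper makes, is stronger and simpler: once some $\rho_i$ is a non-collapsing dependency pair, $\bterm_i$ is headed by a functional term, $\arrr{in}$ preserves this head, and hence \emph{no} $\cbeta$-step of either kind can occur at any later index. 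All $\cbeta$-steps are therefore confined to a finite initial prefix (finite because infinitely many $\rho_i$ lie in $\P_1$), which you simply discard before running your descent argument --- a move already implicit in the tail-taking of your final paragraph. With that repair the proof goes through.
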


\noindent
Here, a set $\P$ of dependency pairs is called
\emph{non-collapsing} if all elements of $\P$ are
non-collapsing.
Symmetrically, $\P$ is \emph{collapsing} if it contains at least one
collapsing pair $\up{l} \arrz \app{\avar}{\vec{p}}$.

\begin{proof}
Towards a contradiction, suppose there is such a reduction triple
$(\geqterm,\geqterm_1,\gterm)$, but $\P$ admits a minimal dependency
chain;
since
$\P_2$ is chain-free, infinitely many $\rho_i$ are in $\P_1$.

If $\P$ is non-collapsing, then the chain may start with some
$\mathtt{beta}$ steps, but once some $\rho_i \in \DP$, all
$\rho_j$ with $j > i$ must also be in $\DP$, because the head of each
$\bterm_i$ is a functional term, rather than an abstraction.  Thus,
for each $j$ either $\aterm_j \geqterm \bterm_j \geqterm
\aterm_{j+1}$, or (if $\rho_j \in \P_1$) even $\aterm_j \gterm
\bterm_j \geqterm \aterm_{j+1}$, contradicting well-foundedness of
$\gterm$ (the latter happens infinitely often).

Alternatively, suppose $\P$ is collapsing, and $\geqterm$ has
the limited subterm property.
Let 
$\rijtje{(\rho_i,\aterm_i,\bterm_i) 
\mid i \geq j}$
be a dependency chain over $\P$;
if $\rho_j \in \P_1$ then $\aterm_j \gterm \bterm_j
\geqterm \aterm_{j+1}$, if $\rho_j \in \P_2$ then $\aterm_j \geqterm
\bterm_j \geqterm \aterm_{j+1}$ and if $\rho_j = \cbeta$ then (by the
limited subterm property) there is a substitution $\bsub$ such that
$\aterm_j \geqterm \subst{\bterm_j}{\bsub} \geqterm \subst{\aterm_{j+
1}}{\bsub}$.  Since $\rijtje{(\rho_i,\subst{\aterm_i}{\bsub},
\subst{\bterm_i}{\bsub}) 
\mid i \geq j+1}$ is also a
dependency chain we can continue this reasoning recursively.  We
obtain a decreasing $\geqterm$ sequence with infinitely many $\gterm$
steps, which contradicts well-foundedness of $\gterm$.
\end{proof}
Theorem~\ref{thm:maintheorem} can be used to prove that every SCC
in the dependency graph approximation of an AFS is chain-free;
termination follows with Lemma \ref{lem:cyclenondangerous}.
In Section~\ref{sec:algorithm} we will give an algorithm similar to
the algorithm in Theorem~\ref{thm:fo:alg}.

\begin{example}\label{ex:twiceconclude}
Termination of $\twice$ is proved if there is a reduction triple
$(\geqterm,\geqterm_1,\gterm)$ with the limited subterm property, 
such that $l \geqterm_1 r$ for all rules, 
and $l \gterm p$ for every dependency pair in $\scctwice$ from
Example~\ref{ex:twicecycles} (choosing $\P_2 = \emptyset$, which is
chain-free).
\end{example}

For left-linear AFSs, where the existence of a minimal dependency
chain characterises termination by Theorem~\ref{thm:dependencychain},
a terminating AFS always has a suitable reduction pair.

\begin{theorem}\label{thm:maintheoremll}
A left-linear AFS is terminating if and only if there is a reduction
triple $(\geqterm,\geqterm_1,\gterm)$ such that 
$l \gterm p$ for every $l \dppijl p \in \DP$,
and $l \geqterm_1 r$ for every $l \arrz r \in \Rules$,
and $\geqterm$ has the limited subterm property.
\end{theorem}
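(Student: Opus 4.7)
The plan is to treat the two implications separately.

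For the backward direction, suppose a reduction triple $(\geqterm,\geqterm_1,\gterm)$ with the stated properties exists. I apply Theorem~\ref{thm:maintheorem} with $\P_1 := \DP(\Rules)$ and $\P_2 := \emptyset$. By Lemma~\ref{lem:emptynondangerous}, $\emptyset$ is chain-free, and the remaining hypotheses of Theorem~\ref{thm:maintheorem} (strict orientation of $\DP(\Rules)$, weak orientation of the rules, and the limited subterm property on $\geqterm$) are exactly what is assumed. Hence $\DP(\Rules)$ is chain-free, so no minimal dependency chain exists over $\DP(\Rules) \cup \{\cbeta\}$. Since $\Rules$ is left-linear, Theorem~\ref{thm:dependencychainll} then concludes that $\Rules$ is terminating.

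For the forward direction, assume $\Rules$ is terminating and left-linear. I would build a reduction triple semantically from $\arr{\Rules}$, lifted to terms over $\up{\setfun}_c$. Let $\arrz_*$ be the relation that combines $\arr{\Rules}$ (with marked heads $\up{\afun}$ treated as $\afun$ for matching purposes) with the two operations that add or remove a mark on the head of any subterm whose root is in $\Defineds$. Take $\geqterm_1$ as the reflexive-transitive closure of $\arrz_*$; take $\geqterm$ as the reflexive-transitive closure of $\arrz_* \cup \supterm$, and $\gterm$ as its strict transitive counterpart.

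Most reduction-triple conditions are then straightforward: monotonicity of $\geqterm_1$, compatibility of $\geqterm$ with $\gterm$, the inclusion $\geqterm_1 \subseteq \geqterm$, and containment of $\arr{\beta}$ in $\geqterm_1$. Orienting the dependency pairs goes as follows. For each $\up{l} \dppijl \up{p}$ arising from a candidate $p_0$ of some right-hand side $r$ with $p = p_0[\vec{\avar}:=\vec{\c}]$, the chain $\up{l} \arrz_* l \arr{\Rules} r \suptermeq p_0$ together with an optional marking step reaches $\up{p_0}$, witnessing $\up{l} \gterm \up{p_0}$; stability under the substitution $[\vec{\avar}:=\vec{\c}]$ then gives $\up{l} \gterm \up{p}$. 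Functional dependency pairs $\app{l}{\vec{\bvar}} \dppijl \app{r}{\vec{\bvar}}$ are oriented directly by $\arr{\Rules}$, and $l \geqterm_1 r$ holds by one $\arr{\Rules}$-step. The limited subterm property follows from the chain $\app{(\abs{\avar}{\aterm})}{\bterm} \arr{\beta} \aterm[\avar:=\bterm] \suptermeq \cterm[\avar:=\bterm]$, again composed with an optional mark step, using the empty substitution for $\asub$.

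The main obstacle is ensuring both stability and well-foundedness of $\gterm$. Stability is delicate because the subterm relation can expose bound variables as free through abstractions, and such variables may later be substituted; the remedy exploits that $\c_\atype$-constants are inert under substitution, together with left-linearity (paralleling the three technical observations in the proof of Theorem~\ref{thm:dependencychainll}), to close $\supterm$ under appropriate variable renamings. Well-foundedness of $\gterm$ reduces to termination of $\Rules$: marking and $\supterm$ individually cannot sustain an infinite sequence (bounded by defined-symbol occurrences and term size), so any infinite $\gterm$-chain would contain infinitely many $\arr{\Rules}$-steps, and the untagging map $|\cdot|$ used in the proof of Theorem~\ref{thm:dependencychainll} would lift such a chain to an infinite $\arr{\Rules}$-reduction on $\setfun$-terms, contradicting termination.
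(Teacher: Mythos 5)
Your backward direction is exactly the paper's: instantiate Theorem~\ref{thm:maintheorem} with $\P_1 = \DP$, $\P_2 = \emptyset$ and conclude via Theorem~\ref{thm:dependencychain}/\ref{thm:dependencychainll}. The forward direction also follows the paper's route in spirit (build the triple semantically from $\arr{\Rules}$ plus a subterm step), but the relations you actually define do not satisfy the reduction-triple axioms, and the two problems you wave at in your final paragraph are not repaired by what you propose there. First, \emph{well-foundedness}: your base relation $\arrz_*$ contains both mark-adding and mark-removing steps, so $\afun(\vec{\aterm}) \arrz_* \up{\afun}(\vec{\aterm}) \arrz_* \afun(\vec{\aterm}) \arrz_* \cdots$ is an infinite $\arrz_*$-sequence with no $\arr{\Rules}$-step at all; your claim that marking steps are ``bounded by defined-symbol occurrences'' is false for an invertible operation, so $\gterm$ as defined is not well-founded. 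Second, \emph{stability}: taking $\gterm \supseteq \supterm$ with the capture-exposing subterm relation gives $\abs{\avar}{\avar} \gterm \avar$, and stability would then force $\abs{\avar}{\avar} \gterm \bterm$ for every $\bterm$, which your relation cannot deliver. ``Closing $\supterm$ under variable renamings'' does not fix this: the exposed bound variables must be \emph{replaced by the constants} $\c_\atype$ inside the subterm relation itself, not merely renamed. The same root cause makes your treatment of the limited subterm property wrong: with $\asub$ the empty substitution the required inequality $\app{(\abs{\avar}{\aterm})}{\bterm} \geqterm \up{\cterm}[\avar:=\bterm]\asub$ fails whenever $\cterm$ contains variables bound in $\aterm$ other than $\avar$; the paper takes $\asub = [\vec{\bvar}:=\vec{\c}]$ for exactly this reason.

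The paper's definition builds both repairs in from the start and is worth internalising: set $\aterm \geqterm \bterm$ iff $|\aterm| \arrr{\Rules} |\bterm|$ and $\aterm \gterm \bterm$ iff $|\aterm|\ (\arr{\Rules} \cdot \csuptermeq)^+\ |\bterm|$, where $|\cdot|$ erases marks and $\csuptermeq$ is the subterm relation in which bound variables that would become free are replaced by the symbols $\c_\atype$. Working modulo $|\cdot|$ makes mark manipulation invisible (so no marking loops can enter $\gterm$), and $\csuptermeq$ is stable because the $\c_\atype$ are inert constants. Well-foundedness then reduces, as you correctly anticipate, to termination of $\Rules$ together with the fact that for left-linear systems replacing variables by $\c_\atype$ cannot create redexes (observations (\ref{ll:linearsubstitution})--(\ref{ll:csubterm}) in the proof of Theorem~\ref{thm:dependencychainll}); left-linearity is genuinely needed here, since $\c$-instantiation can create a redex for a non-linear pattern such as $\symb{f}(\avar,\avar)$. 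With these corrected definitions the orientation of the dependency pairs and of the rules goes through essentially as you describe.
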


\begin{proof}
By Lemma~\ref{lem:cyclenondangerous} and
Theorem~\ref{thm:maintheorem}, termination of $\Rules$ follows if
such a reduction triple exists.
For the other direction, let $\aterm \geqterm \bterm$ if $|\aterm|
\arrr{\Rules} |\bterm|$,
and let $\aterm \gterm
\bterm$ if $|\aterm|\ (\arr{\Rules} \cdot \csuptermeq)^+\ |\bterm|$,
where $\csuptermeq$ is the (reflexive) subterm relation where bound
variables which become free are replaced with symbols $\c_\atype$,
and $|\cterm|$ removes marks from $\cterm$.
It is evident that $(\geqterm,\geqterm,\gterm)$ is a reduction
triple, that $l \gterm p$ for all dependency pairs $l \dppijl p$ and
$l \geqterm r$ for all rules $l \arrz r$.  Moreover, $\geqterm$
has the limited subterm property with $\gamma$ the substitution
$[\vec{\bvar}:=\vec{\c}]$.
\end{proof}

\subsection{Type Changing}\label{subsec:typechange}

The situation so far is not completely satisfactory, because both
$\geqterm$ and $\gterm$ may have to compare terms of different types.
Consider for example the dependency pair $\up{\twice}(F) \dppijl
F \cdot \c_\nat$, where the left-hand side has a functional type and the
right-hand side does not.
Moreover, the comparison in the definition of limited subterm
property may concern terms of different types.
This is problematic because term orderings do not usually
relate terms of arbitrary different types; 
neither any version of the higher-order path ordering
\cite{jou:rub:99:1,bla:jou:rub:08:1} nor monotonic algebras
\cite{pol:96:1} are equipped to do this.

A solution is to manipulate the ordering requirements.  
Let $(\geqterm,\gterm)$ be a reduction pair (so a pair such
that $(\geqterm,\geqterm,\gterm)$ is a reduction triple).
Define $\geq$,\ $\geq_1$ and $>$ as follows:
\begin{iteMize}{$\bullet$}
\item $\aterm > \bterm$ if there are fresh variables $\avar_1,\ldots,
  \avar_n$ and terms $\cterm_1,\ldots,\cterm_m$ such that
  $\app{\aterm}{\avar_1} \cdots \avar_n \gterm \app{\bterm}{\cterm_1}
  \cdots \cterm_m$ and both sides have some base type;
\item $\aterm \geq \bterm$ if there are fresh variables $\avar_1,
\ldots, \avar_n$ and terms $\cterm_1,\ldots,\cterm_m$ such that
$\app{\aterm}{\avar_1} \cdots \avar_n\ R\ \app{\bterm}{\cterm_1}
  \cdots \cterm_m$ and both sides have some base type; here $R$ is
  the union of $\geqterm,\ \gterm \cdot \geqterm$ and $\geqterm \cdot
  \gterm$;
\item $\aterm \geq_1 \bterm$ if $\aterm \geqterm \bterm$ and
  $\aterm$ and $\bterm$ have the same type.
\end{iteMize}

\begin{lemma}\label{lem:triplepair}
$(\geq,\geq_1,>)$ as generated from a reduction pair 
$(\geqterm,\gterm)$ is a reduction triple.
\end{lemma}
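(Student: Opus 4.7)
The plan is to check each of the four clauses of Definition~\ref{def:reductionpair} for $(\geq, \geq_1, >)$, plus the tacit requirements that $\geq$ be a quasi-ordering and $>$ be well-founded, by reducing each to the corresponding property of the underlying reduction pair $(\geqterm, \gterm)$. The driving observation is that $\bterm$ has a unique type, so any witness sequence $\vec{\cterm}$ with $\app{\bterm}{\vec{\cterm}}$ of base type has a fixed length and fixed argument types; this is what lets me align and merge witnesses that sit on opposite sides of $\bterm$ by substitution.

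The three ``easy'' clauses are essentially free. For stability, given a witness $\app{\aterm}{\vec{\avar}} \gterm \app{\bterm}{\vec{\cterm}}$ of $\aterm > \bterm$ and a substitution $\sigma$, I rename the fresh $\vec{\avar}$ to avoid $\sigma$ and apply stability of $\gterm$; the arguments for $\geq$ and $\geq_1$ are identical. Monotonicity of $\geq_1$ follows from monotonicity of $\geqterm$ together with the fact that placing two terms of a common type into a context preserves that shared type. And $\geq_1$ contains $\mathtt{beta}$ because $\geqterm$ does and $\beta$-reduction preserves type.

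The main work is compatibility. By assumption either $\gterm \cdot \geqterm \subseteq \gterm$ or $\geqterm \cdot \gterm \subseteq \gterm$; I treat the first case, the other being symmetric. Given $\aterm > \bterm \geq \cterm$ with witnesses
\[\app{\aterm}{\vec{\avar}_1} \gterm \app{\bterm}{\vec{\cterm}_1} \quad\text{and}\quad \app{\bterm}{\vec{\avar}_2}\ R\ \app{\cterm}{\vec{\cterm}_2},\]
the rigidity of $\bterm$'s type forces $\vec{\cterm}_1$ and $\vec{\avar}_2$ to have matching lengths and argument types. Using stability of $R$, the substitution $[\vec{\avar}_2 := \vec{\cterm}_1]$ transforms the second inequality into $\app{\bterm}{\vec{\cterm}_1}\ R\ \app{\cterm}{\vec{\cterm}_2'}$. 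Concatenating with the first inequality and case-splitting on the three summands of $R$, transitivity of $\gterm$ together with the hypothesis $\gterm \cdot \geqterm \subseteq \gterm$ collapses each resulting composite into $\gterm$, yielding $\app{\aterm}{\vec{\avar}_1} \gterm \app{\cterm}{\vec{\cterm}_2'}$ and hence $\aterm > \cterm$.

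The remaining properties fall to the same merging trick: reflexivity of $\geq$ is witnessed by filling $\aterm$ out to base type with matching fresh variables on both sides and using reflexivity of $\geqterm$; transitivity of $\geq$ follows from the inclusion $R \cdot R \subseteq R$, which is a short case analysis combining transitivity of $\geqterm$ with the original compatibility hypothesis; and any infinite $>$-descent, merged step by step, yields an infinite $\gterm$-descent of base-type terms, contradicting well-foundedness of $\gterm$. The only real obstacle is the variable bookkeeping, and this is made painless by the observation that $\bterm$'s type rigidly pins down the witness arity on either side.
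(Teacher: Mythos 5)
Your proof is correct and follows essentially the same route as the paper: the paper's proof also rests on the observation that, by stability, the fresh variables on the left of a witness inequality can be instantiated with the terms appearing on the right of the preceding one, so that witnesses chain together, and then each property of $(\geq,\geq_1,>)$ reduces to the corresponding property (plus compatibility) of $(\geqterm,\gterm)$. The only item you leave implicit is the requirement that $\geq_1$ be a sub-relation of $\geq$, which the paper discharges by noting that $\aterm \geqterm \bterm$ with equal types gives $\app{\aterm}{\vec{\avar}} \geqterm \app{\bterm}{\vec{\avar}}$ by monotonicity of $\geqterm$.
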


\begin{proof}
We make the following observations:
\begin{enumerate}[(1)]
\item \label{triplepair:geq1}
  if $\aterm \geq_1 \bterm$ then by monotonicity
  $\aterm{\vec{\avar}} \geqterm \bterm{\vec{\avar}}$;
\item \label{triplepair:g}
  if $\aterm > \bterm$ then for any $\vec{\cterm}$ there are
  $\vec{\dterm}$ such that $\app{\aterm}{\vec{\cterm}}
  \gterm \app{\bterm}{\vec{\dterm}}$ (by stability of $\gterm$);
\item \label{triplepair:geq}
  if $\aterm \geq \bterm$ then for any $\vec{\cterm}$ there are
  $\vec{\dterm}$ such that either $\app{\aterm}{\vec{\cterm}}
  \geqterm \app{\bterm}{\vec{\dterm}}$ or $\app{\aterm}{\vec{\cterm}}
  \gterm \cdot \geqterm \app{\bterm}{\vec{\dterm}}$ or
  $\app{\aterm}{\vec{\cterm}} \geqterm \cdot \gterm
  \app{\bterm}{\vec{\dterm}}$ (by stability of both $\gterm$ and
  $\geqterm$).
\end{enumerate}
Each of the required properties on $\geq$, $\geq_1$, and $>$ now
follows easily from the properties on $\geqterm$ and $\gterm$.  For
example transitivity of $>$: if $\aterm > \bterm > \cterm$, then
there are terms $\vec{\dterm}$ such that $\app{\aterm}{\vec{\avar}}
\gterm \app{\bterm}{\vec{\dterm}}$, and by (\ref{triplepair:g}) there
are terms $\vec{\eterm}$ such that $\app{\bterm}{\vec{\dterm}} \gterm
\app{\cterm}{\vec{\eterm}}$; by transitivity of $\gterm$, therefore,
$\app{\aterm}{\vec{\avar}} \gterm \app{\cterm}{\vec{\eterm}}$, so
$\aterm > \cterm$.
Well-foundedness of $>$ follows from (\ref{triplepair:g}) and
well-foundedness of $\gterm$.  For stability, note that if $\aterm >
\bterm$ and $\asub$ is a substitution, then for fresh variables
$\vec{\avar}$ (which do not occur in domain or range of $\asub$) also
$\app{\aterm}{\vec{\avar}} \gterm \app{\bterm}{\vec{\dterm}}$, so
$\app{(\subst{\aterm}{\gamma})}{\vec{\avar}} =
\subst{(\app{\aterm}{\vec{\avar}})}{\gamma} \gterm
\subst{(\app{\bterm}{\vec{\dterm}})}{\gamma} =
\app{(\subst{\bterm}{\gamma})}{(\vec{\dterm}\gamma)}$; stability of
$\geq$ is similar.
$\geq_1$ is included in $\geq$ by (\ref{triplepair:geq1}), and
contains \texttt{beta} because $\geqterm$ does.
For compatibility, and for transitivity of $\geq$, we use a case
distinction on which form of $\geq$ is used, and transitivity of both
$\gterm$ and $\geqterm$, as well as compatibility between the two.
\end{proof}

The relations $\geq$ and $>$ are not necessarily computable, but
they do not need to be: we will only use specific instances.
To prove some set of dependency pairs $\P$ chain-free, we can choose
for every pair $l \dppijl p \in \P$ a corresponding base-type pair
$\overline{l} \dppijl \overline{p}$, and prove either
$\overline{l} \geqterm \overline{p}$ or $\overline{l} \gterm
\overline{p}$.  For example, we could assign
$\overline{l} := \app{l}{\avar_1} \cdots \avar_n$ and
$\overline{p} := \app{p}{\c_{\atype_1}} \cdots \c_{\atype_m}$.  This
is the choice we will use in examples in this paper.
Other choices for $\overline{p}$
are also possible.

\medskip
\emph{We assume a systematic way of choosing $\overline{l} \dppijl
\overline{p}$ given $l \dppijl p$.}

\medskip
To make sure that $\geq$ has the limited subterm property, we
consider a base-type version of subterm reduction, which has a strong
relation with $\beta$-reduction.

\begin{definition}
$\supterm^!$ is the relation on base-type terms (and $\suptermeq^!$
its reflexive closure) generated by the following clauses:
\begin{iteMize}{$\bullet$}
\item $\app{(\abs{\avar}{\aterm})}{\bterm_0} \cdots \bterm_n
  \supterm^! \cterm$ if $\app{\aterm[\avar:=\bterm_0]}{\bterm_1}
  \cdots \bterm_n \suptermeq^! \cterm$;
\item $\app{\afun(\aterm_1,\ldots,\aterm_m)}{\bterm_1} \cdots
  \bterm_n \suptermeq^! \cterm$ if $\app{\aterm_i}{\vec{\c}}
  \suptermeq^! \cterm$ for some $i$;
\item $\app{\aterm}{\bterm_1} \cdots \bterm_n \suptermeq^! \cterm$ if
  $\app{\bterm_i}{\vec{\c}} \suptermeq^! \cterm$ for some $i$
  \ \ ($\aterm$ may have any form).
\end{iteMize}
\end{definition}

\noindent
Here, $\app{\aterm}{\vec{\c}}$ is a term $\aterm$
applied to constants $\c_\atype$ of the right types.
We say $(\geqterm,\gterm)$ respects $\supterm^!$ if $\supterm^!$ is
contained in $(\geqterm \cup \gterm)^*$.  Note that, since $\geqterm$
contains $\cbeta$, the first clause is not likely to give problems.
$\supterm^!$ is interesting because if $\aterm \supterm \bterm$ and
$\aterm$ has base type, then there are terms $\cterm_1,\ldots,\cterm_n$
and a substitution $\asub$ on domain $\FV(\bterm) \setminus
\FV(\aterm)$ such that $\aterm \supterm^!
\app{\bterm\asub}{\cterm_1} \cdots \cterm_n$ (this is easy to see
with induction on the size of $\aterm$).  Consequently, $\geq$
has the limited subterm property if $(\geqterm,\gterm)$ respects
$\supterm^!$ and $\afun(\vec{\avar}) \geqterm \up{\afun}(\vec{\avar})$
for all $\afun \in \Defineds$ (the \emph{marking property}).

Using Theorem~\ref{thm:maintheorem} and the reduction triple
generated from a reduction pair, we obtain:

\begin{theorem}\label{thm:typepreserve}
A set of dependency pairs $\P = \P_1 \uplus \P_2$ is chain-free if
$\P_2$ is chain-free and there is a reduction pair $(\geqterm,
\gterm)$ such that:
\begin{enumerate}[\em(1)]
\item $\overline{l} \gterm \overline{p}$ for all $l \dppijl p \in
  \P_1$;
\item $\overline{l} \geqterm \overline{p}$ for all $l \dppijl p \in
  \P_2$;
\item \label{it:tp:rules}
  $l \geqterm r$ for all $l \arrz r \in \Rules$;
\item if $\P$ is collapsing, then $(\geqterm,\gterm)$ respects
  $\supterm^!$, and $\afun(\vec{\avar}) \geqterm \up{\afun}(
  \vec{\avar})$ for all $\afun \in \Defineds$.
\end{enumerate}
\end{theorem}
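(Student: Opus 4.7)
The plan is to derive Theorem~\ref{thm:typepreserve} as a direct consequence of Theorem~\ref{thm:maintheorem} applied to the reduction triple $(\geq,\geq_1,>)$ constructed from the given reduction pair $(\geqterm,\gterm)$ just before Lemma~\ref{lem:triplepair}. Lemma~\ref{lem:triplepair} already supplies that this is a reduction triple, so three of the four hypotheses of Theorem~\ref{thm:maintheorem} are essentially checked by unfolding definitions: from $\overline{l}\gterm\overline{p}$ (with $\overline{l},\overline{p}$ of base type by our systematic choice) we immediately get $l>p$ for $l\dppijl p\in\P_1$; the analogous unfolding gives $l\geq p$ for $\P_2$; and for rules, $l$ and $r$ share a type, so $l\geqterm r$ yields $l\geq_1 r$ at once.

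The only real content, and the step where I expect the work to lie, is verifying that $\geq$ has the limited subterm property whenever $\P$ is collapsing. Given $\aterm\suptermeq\cterm\supterm\avar$, I need a substitution $\asub$ with $\app{(\abs{\avar}{\aterm})}{\bterm}\geq\up{\cterm}[\avar:=\bterm]\asub$. By definition of $\geq$, this reduces to producing fresh variables $\vec{\avar}$ and terms $\vec{\dterm}$ with both sides lifted to base type and comparable via $\geqterm$ (or $\geqterm\cdot\gterm$, etc.). I would first apply $\vec{\avar}$ to obtain base type on the left, use that $\geqterm$ contains $\cbeta$ to rewrite to $\app{\aterm[\avar:=\bterm]}{\vec{\avar}}$, and then invoke the observation (stated in the paragraph introducing $\supterm^!$) that whenever $\aterm'\supterm\bterm'$ with $\aterm'$ of base type, there are terms $\vec{\cterm}$ and a substitution $\asub$ on $\FV(\bterm')\setminus\FV(\aterm')$ with $\aterm'\supterm^!\app{\bterm'\asub}{\vec{\cterm}}$. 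Applied to $\aterm[\avar:=\bterm]\supterm\cterm[\avar:=\bterm]$ (using $\aterm\suptermeq\cterm$), this yields $\app{\aterm[\avar:=\bterm]}{\vec{\avar}}\suptermeq^!\app{\cterm[\avar:=\bterm]\asub}{\vec{\dterm}}$ for an appropriate $\asub$ replacing the free variables of $\cterm$ that are bound in $\aterm$.

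The last step is to cross from $\cterm[\avar:=\bterm]\asub$ to $\up{\cterm}[\avar:=\bterm]\asub$: since $\cterm$ is not a variable (as $\cterm\supterm\avar$), the two terms differ only in that the head symbol is marked, and the marking property $\afun(\vec{\avar})\geqterm\up{\afun}(\vec{\avar})$ bridges this gap via monotonicity. Combining the $\cbeta$ step, the $\supterm^!$ chain (which lies in $(\geqterm\cup\gterm)^*$ by hypothesis~(4) of the theorem), and the marking step gives a sequence of $\geqterm/\gterm$ comparisons between $\app{(\abs{\avar}{\aterm})}{\bterm}$ applied to $\vec{\avar}$ and $\up{\cterm}[\avar:=\bterm]\asub$ applied to $\vec{\dterm}$, both of base type — exactly the condition encoded in the definition of $\geq$. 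All hypotheses of Theorem~\ref{thm:maintheorem} are then met, and chain-freeness of $\P$ follows.
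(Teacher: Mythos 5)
Your proposal matches the paper's own justification: the paper obtains Theorem~\ref{thm:typepreserve} precisely by feeding the triple $(\geq,\geq_1,>)$ of Lemma~\ref{lem:triplepair} into Theorem~\ref{thm:maintheorem}, with the limited subterm property supplied by the observation that a base-type term $\aterm \supterm \bterm$ satisfies $\aterm \supterm^! \app{\bterm\asub}{\vec{\cterm}}$ together with the marking property $\afun(\vec{\avar}) \geqterm \up{\afun}(\vec{\avar})$. Your reconstruction of that argument (beta step, $\supterm^!$ chain collapsed into the relation $R$ via compatibility, marking step) is exactly the intended reasoning, so the proof is correct and takes essentially the same route.
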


\noindent
Note that the theorem does not use the generated (and possibly not
computable) triple directly; we prove $\overline{l} \gterm
\overline{p}$ or $\overline{l} \geqterm \overline{p}$ for a specific
choice of $\overline{l}$ and $\overline{p}$.  The generated triple is
merely used in the reasoning that justifies
Theorem~\ref{thm:typepreserve}.

\begin{example}
To prove that $\scctwice$ is chain-free it suffices to find a
reduction pair $(\geqterm,\gterm)$ such that $l \geqterm r$ for all
rules, $(\geqterm,\gterm)$ respects $\supterm^!$ and satisfies the
marking property, and:
\[
\begin{array}{rclrcl}
\up{\I}(\suc(n)) & \gterm & \twice(\abs{x}{\I(x)}) \cdot n &
\app{\up{\twice}(F)}{x} & \gterm & F \cdot (F \cdot \c_\nat) \\
\up{\I}(\suc(n)) & \gterm & \up{\twice}(\abs{x}{\I(x)}) \cdot \c_\nat\ &
\up{\twice}(F) \cdot x & \gterm & F \cdot \c_\nat \\
\twice(F) \cdot  m & \gterm & F \cdot  (F \cdot  m) &
\twice(F) \cdot m & \gterm & F \cdot  m \\
\end{array}
\]
\end{example}

\emptyline
This completes the basis of dynamic dependency pairs for AFSs.

At this point, we might ask: \emph{what have we gained?}  Is it
easier to use Theorem~\ref{thm:typepreserve} than to use a
conventional approach like CPO~\cite{bla:jou:rub:08:1}? 
Can we even find a reduction pair which respects $\supterm^!$?
And if so, couldn't we use the same reduction pair without dependency
pairs?

The answer to these questions will be explored in the coming
sections.
First (Section~\ref{sec:weakdp}), we will consider an extension
limited to
\emph{fully extended, left-linear AFSs}.  With this restriction, we
can weaken the limited subterm property, and obtain a variation of
usable rules.
Next, in Section~\ref{sec:redpair}, we will study two ways to find a
suitable reduction pair: using interpretations in a \emph{weakly
monotonic algebra}, and \emph{argument functions}, a generalisation
of argument filterings.  Finally, in Section~\ref{sec:noncollapse} we
will see additional ways to prove chain-freeness of a set $\P$ if
$\P$ is \emph{non-collapsing}.
All results are combined in the algorithm of
Section~\ref{sec:algorithm}.

\section{Dependency Pairs for Local AFSs}\label{sec:weakdp}

\summary{In this section we consider \emph{local} AFSs, 
and define \emph{formative rules} for local AFSs.  We add tags to symbols
below a $\lambda$, and prove that we only need the limited subterm
property for tagged symbols.  We use this to weaken the requirements
on a reduction pair.
}

The limited subterm property is weaker than the requirements used in
Theorem~\ref{thm:typepreserve}: subterm reduction \emph{only} has to
be done following $\beta$-reduction.  That is, we only need it for
terms which occur below a $\lambda$-abstraction, when a bound
variable is substituted.

To exploit this property, we will pay special attention to
\emph{local} AFSs.  In a local AFS we can (mostly) avoid reducing
terms below an abstraction.
Knowing this, the limited subterm property only requires that
$\app{\afun(\aterm_1,\ldots,\aterm_n)}{\aterm_{n+1}} \cdots \aterm_m
\geqterm \app{\aterm_i}{\vec{\c}}$ for symbols $\afun$ which cannot,
at that time, be reduced anyway.  This makes it
possible to use for instance argument filterings, as we will see in
this section and Section~\ref{sec:argfun}.
As a bonus,
locality
also allows us to
define \emph{formative rules}, a variation of usable rules.  These
are discussed in Section~\ref{subsec:formrules}.

\subsection{Intuition}\label{subsec:weak:intuition}
The idea to tag symbols and rules in order to weaken the limited
subterm property originates in the notion of \emph{weak reductions},
defined in~\cite{cag:hin:98:1} (following a definition from Howard
in 1968).  A weak reduction in the $\lambda$-calculus does not use
steps between a $\lambda$-abstraction and its binder.  This notion
generalises to AFSs in the obvious way.

Consider AFSs where the left-hand sides of all rules are
\emph{linear} (so no free variables occur more than once), and
\emph{free of abstractions} (so the $\lambda$ symbol does not occur
in them).  This limitation is not as strong as it might seem at
first; the $\beta$-reduction ``rule'' is not included in this.
As it turns out, we can prove the following statement:

\emph{\textbf{Claim:} in a left-linear, and left-abstraction-free
AFS, if there is a minimal dependency chain, then there is one
where the reduction $\bterm_i \arrr{\Rules} \aterm_{i+1}$ always
uses only weak steps.}

To see why this matters, let us consider a
colouring of the function symbols.  In a given term $\aterm$, make
all symbol occurrences either red or green: red if the symbol occurs
between an abstraction and its binder, green otherwise.  So if
$\aterm = C[\afun(\bterm_1,\ldots,\bterm_n)]$, make the $\afun$
red
if some $\bterm_i$ contains freely a variable which is bound in
$\aterm$, green if not.
We say $\aterm$ is well-coloured if it
uses this colouring.  Colour the rules in the same way; by the
restrictions, the left-hand sides are entirely green, while the
right-hand sides may contain red symbols.

Now consider a weak
reduction step on a well-coloured term.  If the term is reduced by a
coloured rule, then the result is also well-coloured.  If the term is
reduced with a $\beta$-step, then the result may have some red
symbols outside an abstraction; however, it can become
well-coloured again by painting these red symbols green.  We never
have to paint green symbols red. 
Inventing notation, we can summarise this as follows:

\emph{\textbf{Claim:} if $\aterm \arr{\Rules,\mathit{weak}} \bterm$,
then $\mathit{colour}(\aterm) \arr{\Rules_{\mathit{colour}}} \cdot
\arrr{\mathit{make\_green}} \mathit{colour}(\bterm)$.}

Combining the two claims, we can colour dependency chains.  In the
$\mathtt{beta}$-with-subterm step (\ref{chain:subterm}),
which led to the need for the limited subterm property, we
take a term which was originally below an abstraction, reduce it to a
subterm which still contains the bound variable, and substitute it.
Importantly, the subterm clause $\cterm \suptermeq \dterm$ can be
derived with steps $\abs{\avar}{\aterm} \supterm \aterm,\ 
\app{\aterm_1}{\aterm_2} \supterm \aterm_i$ and $\afun(\aterm_1,
\ldots,\aterm_n) \supterm \aterm_i$, \emph{where the $\afun$ is
always a red symbol}.

Considering the red and green symbols as different symbols altogether
(related only by the $\mathit{make\_green}$ rules) we thus see that
it will not give problems to use an argument filtering, provided we
use it only for the green symbols!

\emptyline
This summarises the ideas which we shall use to simplify the limited
subterm property.  Since colours do not work well in papers, we will
use tags: a red symbol $\afun$ corresponds with a symbol
$\afun^-$, and a green symbol remains unchanged.  Moreover, if we
focus on the colours, and forget about the weak reductions, it turns
out that we do not need to require that the left-hand sides of rules
contain no $\lambda$-abstractions at all: it suffices if the rules
are \emph{local}.

\subsection{Local AFSs}\label{subsec:weak}
Both to weaken the limited subterm property, and for formative rules,
we shall restrict attention to so-called local AFSs where,
intuitively, matching is purely local.  This means that to
apply a rule we do not have to check whether two subterms are equal,
or whether a symbol occurs in a subterm.  Locality combines the
restrictions that the system is \emph{left-linear} and \emph{fully
extended}.  A left-linear, left-abstraction-free AFS is always local.

The locality restriction appears in the literature both for
HRSs~\cite{oos:97:1,bru:08}, where a pattern is called local if it
is fully extended and linear, and for combinatory reduction
systems (CRSs) in~\cite{mel:96}; the latter definition is slightly
different but has a similar underlying intuition.
The definition for AFSs here follows~\cite{oos:97:1,bru:08}, although
the definition of full-extendedness is technically (but not
conceptually) different from the one for HRSs.
We will use locality to be able to (mostly) postpone reductions below
an abstraction.
In the explanations below, we will argue that left-linearity and
full-extendedness are both necessary to do this.

\paragraaf{Left-linearity}
When a system is not left-linear, a reduction deep inside a term may
be needed to create a topmost redex.  For instance, consider the
non-left-linear AFS with rules $\{ \mathsf{f}(x,x) \arrz \mathsf{b},
\mathsf{a} \arrz \mathsf{b} \}$.
In the reduction 
$\mathsf{f}(\abs{\avar}{\mathsf{a}},\abs{\avar}{\mathsf{b}}) \arrz
\mathsf{f}(\abs{\avar}{\mathsf{b}},\abs{\avar}{\mathsf{b}}) \arrz
\mathsf{b}$
a reduction below an abstraction is necessary to create the syntactic
equality required for the $\mathsf{f}$-rule.
Thus, this step cannot be postponed.

\paragraaf{Full-Extendedness}
We say a term $l$ is \emph{fully extended} if free variables in $l$
do not occur below an abstraction; a rule $l \arrz r$ is called
fully extended if $l$ is.

For the intuition of this restriction, consider a rule
$\mathsf{f}(\abs{\avar}{\bvar}) \arrz \bvar$.
This rule does not match a term $\mathsf{f}(
\abs{\avar}{\suc(\avar)})$, since
$\bvar$ cannot be instantiated with $\suc(\avar)$, as $\avar$ is
bound.  Nor does $\mathsf{f}(\abs{\avar}{\app{F}{\avar}})$ match
this term, since $\suc(\avar)$ does not instantiate the application
$\app{F}{\avar}$.  Whenever the left-hand side of a rule contains a
free variable below an abstraction, this variable matches only
subterms which do not contain the abstraction-variable.
Therefore,
such a rule could require a reduction
deep inside a term to create a topmost redex.  For example, in an
AFS with rules $\{ \mathsf{f}(\abs{\avar}{\bvar}) \arrz \bvar,\ 
\mathsf{g}(\avar,\bvar) \arrz \mathsf{a} \}$, we cannot postpone
the first step in the reduction $\mathsf{f}(\abs{\avar}{\mathsf{h}(
\abs{\bvar}{\mathsf{g}(\avar,\bvar)})}) \arrz \mathsf{f}(\abs{\avar}{
\mathsf{h}(\abs{\bvar}{\mathsf{a}})}) \arrz \mathsf{h}(\abs{\bvar}{
\mathsf{a}})$, as it is needed to create the second redex.

This notion of fully extended mostly corresponds
with the definition for HRSs; there, however, a rule
$\mathsf{f}(\abs{\avar}{\app{F}{\avar}}) \arrz r$ \emph{does} match
$\mathsf{f}(\abs{\avar}{\suc(\avar)})$, so such rules are also
accepted.

\begin{definition} \label{def:llfe}
An AFS $(\F , \Rules)$ is \emph{\llfe} 
if all $l \arrz r \in \Rules$
are left-linear and fully extended.
\end{definition}

\begin{example}
Our running example, $\Rules_\twice$, is \llfe, since all left-hand
sides of rules are linear and fully extended (in fact, they contain
no abstractions at all).
\end{example}
To demonstrate the prominence of \llfe\ AFSs, in the 2011 version of
the Termination Problem Data Base, used in the annual termination
competition~\cite{termcomp}, 138 out of 156 benchmarks in the
higher-order category are \llfe\ (in fact, these are all
left-abstraction-free).

\subsection{Tagging Unreducable Symbols}

Obviously, when there are rules where the left-hand side contains an
abstraction, such as $\mathsf{f}(\abs{\avar}{\mathsf{g}(\avar)},F)
\arrz r$, it may be impossible to avoid reducing inside an
abstraction in order to create a redex.  However, the colouring
intuition still goes through; we merely need to ``paint symbols
green'' a few times more.

Following the colouring intuition, we will mark all function symbols
which occur between a $\lambda$-abstraction and its binder with a
special tag (``colouring red'').  The symbol can only be reduced by
removing the tag first (``painting green'').

\begin{definition}\label{def:ttag}
Let $\F^-$ be the set $\{ \afun^- : \atype \mid \afun : \atype \in \F
\}$, so a set containing a ``tagged'' symbol $\afun^-$ for all
function symbols $\afun \in \F$.
For a set of variables $Z$, define $\ttag_Z$ as follows:
\[
\begin{array}{rcl}
\ttag_Z(\avar) & = & \avar \\
\ttag_Z(\c_\atype) & = & \c_\atype \\
\ttag_Z(\app{\aterm}{\bterm}) & = &
  \app{\ttag_Z(\aterm)}{\ttag_Z(\bterm)} \\
\ttag_Z(\abs{\avar}{\aterm}) & = &
  \abs{\avar}{\ttag_{Z \cup \{\avar\}}(\aterm)} \\
\ttag_Z(\afun(\aterm_1,\ldots,\aterm_n)) & = &
  \left\{\begin{array}{ll}
  \afun(\ttag_Z(\aterm_1),\ldots,\ttag_Z(\aterm_n)) &
    \mathrm{if}\ \FV(\afun(\vec{\aterm})) \cap Z = \emptyset \\
  \afun^-(\ttag_Z(\aterm_1),\ldots,\ttag_Z(\aterm_n)) &
    \mathrm{if}\ \FV(\afun(\vec{\aterm})) \cap Z \neq \emptyset \\
  \end{array}\right. \\
\end{array}
\]
We denote $\ttag(\aterm) := \ttag_\emptyset(\aterm)$.
Define $\Rules^\ttag := \{ l \arrz \ttag_\emptyset(r) \mid
  l \arrz r \in \Rules \} \cup \{ \afun^-(\avar_1,\ldots,\avar_n)
  \arrz \afun(\avar_1,\ldots,\avar_n) \mid \afun^- \in \F^- \}$.
\end{definition}

Note that, apart from the untagging rules, $\Rules^\ttag$ isn't all
that different from $\Rules$: $\ttag(r)$ is almost exactly $r$, only
the symbols below an abstraction may be marked with a $-$ sign.

\begin{example}
$\ttag(\mathtt{f}(\abs{\avar}{\mathtt{g}(\avar,\mathtt{g}(\nul))}))
= \mathtt{f}(\abs{\avar}{\mathtt{g}^-(\avar,\mathtt{g}(\nul))})$.
\end{example}

\begin{example}\label{ex:twicetagged}
Consider our running example $\Rules_\twice$ (with completed rules):
\[
\begin{array}{rclrcl}
\I(\nul) & \arrz & \nul &
\twice(F) & \arrz & \abs{\bvar}{\app{F}{(\app{F}{\bvar})}} \\
\I(\suc(n)) & \arrz & \suc(\app{\twice(\abs{\avar}{\I(\avar)})}{n}
\ \ \ &
\app{\twice(F)}{m} & \arrz & \app{F}{(\app{F}{m})} \\
\end{array}
\]
We have seen that $\Rules_\twice$ is \llfe.
$\Rules^\ttag$ consists of the following rules:
\[
\begin{array}{rclrcl}
\I(\nul) & \arrz & \nul &
\twice(F) & \arrz & \abs{\bvar}{\app{F}{(\app{F}{\bvar})}} \\
\I(\suc(n)) & \arrz & \suc(\app{\twice(\abs{\avar}{\I^-(\avar)})}{n})
\ \ \ &
\app{\twice(F)}{n} & \arrz & \app{F}{(\app{F}{n})} \\
\nul^- & \arrz & \nul &
\suc^-(n) & \arrz & \suc(n) \\
\I^-(n) & \arrz & \I(n) &
\twice^-(F) & \arrz & \twice(F) \\
\end{array}
\]
That is, the rules from $\Rules$, with a tag added to the $\I$
symbol which occurs below an abstraction, and furthermore the
untagging rules.  In termination proofs we can typically
ignore the rules $\afun^-(\vec{\avar}) \arrz \afun(\vec{\avar})$
where $\afun^-$ does not occur in the right-hand side of any rule (in
this example: $\nul^- \arrz \nul,\ \suc^-(n) \arrz \suc(n)$ and
$\twice^-(F,n) \arrz \twice(F,n)$), as they have little function.
\end{example}

In the proofs later on in this section, we will use the following
properties of $\Rules^\ttag$:

\begin{lemma}\label{lem:dropvars}
$\ttag_{X \cup Y}(\aterm) \arrr{\Rules^\ttag} \ttag_X(\aterm)$
for any set of rules $\Rules$; if the variables in $Y$ don't occur
in $\aterm$ even $\ttag_{X \cup Y}(\aterm) = \ttag_X(\aterm)$.
\end{lemma}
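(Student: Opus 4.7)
I would prove both claims together by structural induction on $\aterm$, and I expect the proof to be almost entirely routine bookkeeping.

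The plan is to observe that $\ttag_{X \cup Y}$ and $\ttag_X$ agree on variables and the constants $\c_\atype$, so the base cases are immediate. For an application $\aterm = \app{\aterm_1}{\aterm_2}$, the induction hypothesis gives $\ttag_{X \cup Y}(\aterm_i) \arrr{\Rules^\ttag} \ttag_X(\aterm_i)$ for $i = 1,2$, and these reductions can be performed in sequence inside the application context (using the \clauserule\ context closure of AFSs), yielding the desired reduction. For an abstraction $\aterm = \abs{\avar}{\aterm'}$, unfolding the definition gives
\[
\ttag_{X \cup Y}(\abs{\avar}{\aterm'}) = \abs{\avar}{\ttag_{(X \cup \{\avar\}) \cup Y}(\aterm')}, \quad
\ttag_X(\abs{\avar}{\aterm'}) = \abs{\avar}{\ttag_{X \cup \{\avar\}}(\aterm')},
\]
so the induction hypothesis applied with $X' := X \cup \{\avar\}$ and $Y' := Y$ (noting that $X \cup Y \cup \{\avar\} = (X \cup \{\avar\}) \cup Y$) produces a reduction that lifts under the $\lambda$.

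The only case that requires the untagging rules is $\aterm = \afun(\aterm_1,\ldots,\aterm_n)$. Here I would distinguish by whether $\FV(\afun(\vec{\aterm})) \cap X$ is empty. If it is non-empty, both $\ttag_{X \cup Y}(\aterm)$ and $\ttag_X(\aterm)$ have head symbol $\afun^-$, so the induction hypotheses on the $\aterm_i$ suffice. Otherwise, $\FV(\afun(\vec{\aterm})) \cap X = \emptyset$; if additionally $\FV(\afun(\vec{\aterm})) \cap Y = \emptyset$ both heads are $\afun$ and again the induction hypotheses suffice. The interesting sub-case is $\FV(\afun(\vec{\aterm})) \cap X = \emptyset$ but $\FV(\afun(\vec{\aterm})) \cap Y \neq \emptyset$, where the left-hand side is $\afun^-(\ttag_{X \cup Y}(\aterm_1),\ldots,\ttag_{X \cup Y}(\aterm_n))$ and the right-hand side is $\afun(\ttag_X(\aterm_1),\ldots,\ttag_X(\aterm_n))$. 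I would first apply the induction hypotheses inside the arguments to reach $\afun^-(\ttag_X(\aterm_1),\ldots,\ttag_X(\aterm_n))$ and then fire the untagging rule $\afun^-(\vec{\avar}) \arrz \afun(\vec{\avar})$ from $\Rules^\ttag$ at the head to reach the right-hand side.

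The second claim follows from essentially the same induction: when $Y \cap \FV(\aterm) = \emptyset$, the problematic sub-case of the function-symbol case cannot occur (since then $\FV(\afun(\vec{\aterm})) \cap Y = \emptyset$ reduces it to one of the two ``head-symbols agree'' sub-cases), and the recursive calls on the $\aterm_i$ (resp.\ on $\aterm'$ with $Y' = Y$, still disjoint from $\FV(\aterm')$) continue to satisfy the disjointness hypothesis, so by induction the equality $\ttag_{X \cup Y}(\aterm) = \ttag_X(\aterm)$ holds. The main thing to watch in both inductions is that the abstraction case is compatible with the augmentation of the tagged set by the bound variable; modulo $\alpha$-conversion one may assume $\avar \notin X \cup Y$ if convenient, but this is not strictly necessary since set union absorbs duplicates. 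I do not anticipate any genuine obstacle beyond getting this case analysis right.
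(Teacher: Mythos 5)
Your proof is correct and follows exactly the route the paper takes: the paper's proof is the one-line remark ``easy induction on the size of $\aterm$; we only use the untagging rules,'' and your case analysis (with the untagging rule $\afun^-(\vec{\avar}) \arrz \afun(\vec{\avar})$ fired precisely when $\FV(\afun(\vec{\aterm}))$ meets $Y$ but not $X$) is the intended elaboration of that induction. Nothing is missing.
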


\begin{proof}
Easy induction on the size of $\aterm$; we only use the untagging
rules $\afun^-(\vec{\avar}) \arrz \afun(\vec{\avar})$.
\end{proof}

\begin{lemma}\label{lem:tagbasesubstitute}
$\ttag(\aterm)\gamma^\ttag = \ttag(\aterm\gamma)$
where $\gamma^\ttag = [\avar:=\ttag(\gamma(\avar)) \mid \avar \in
\domain(\gamma)]$.
\end{lemma}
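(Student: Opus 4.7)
My plan is to proceed by structural induction on $\aterm$, after strengthening the statement to the following: for every finite set $Z$ of variables,
\[
\ttag_Z(\aterm)\,\gamma^\ttag \;=\; \ttag_Z(\aterm\gamma),
\]
working under the Barendregt convention so that the bound variables occurring in $\aterm$ are chosen disjoint from $\domain(\gamma)$ and from $\bigcup_{\avar \in \domain(\gamma)}\FV(\gamma(\avar))$. Simultaneously, I maintain the invariant that $Z$ is disjoint from $\domain(\gamma) \cup \bigcup_\avar \FV(\gamma(\avar))$; this is preserved because the only enlargements of $Z$ arise when passing below a $\lambda$, whose binder is fresh by Barendregt. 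The lemma we want is then the case $Z = \emptyset$.

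The constant and variable cases are almost immediate. For $\aterm = \avar \in \domain(\gamma)$, the left-hand side is $\ttag_\emptyset(\gamma(\avar))$ and the right-hand side is $\ttag_Z(\gamma(\avar))$; these agree because the invariant gives $Z \cap \FV(\gamma(\avar)) = \emptyset$, so the equality form of Lemma~\ref{lem:dropvars} lets us discard $Z$. The application case follows directly from the induction hypothesis applied to each factor. For $\aterm = \abs{\bvar}{\aterm'}$, Barendregt makes $\bvar$ irrelevant to $\gamma$, so $(\abs{\bvar}{\aterm'})\gamma = \abs{\bvar}{(\aterm'\gamma)}$ without renaming, and the induction hypothesis applied to $\aterm'$ with $Z' := Z \cup \{\bvar\}$ (which still satisfies the invariant) closes the case.

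The main obstacle is the function-symbol case $\aterm = \afun(\aterm_1,\ldots,\aterm_n)$, since the tag applied to $\afun$ depends on whether $\FV(\afun(\vec{\aterm})) \cap Z$ is non-empty, and one must show this condition is preserved by the substitution. The free-variable calculation gives
\[
\FV(\afun(\vec{\aterm\gamma})) \;=\; \bigl(\FV(\afun(\vec{\aterm})) \setminus \domain(\gamma)\bigr) \;\cup\; \bigcup_{\avar \in \domain(\gamma)\cap \FV(\afun(\vec{\aterm}))} \FV(\gamma(\avar)).
\]
Intersecting with $Z$ eliminates the second union by the invariant, and leaves the first set unchanged because $\domain(\gamma)\cap Z = \emptyset$ (again by the invariant); hence the tagging predicate has the same value on both sides. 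Combining this with the componentwise induction hypothesis $\ttag_Z(\aterm_i)\gamma^\ttag = \ttag_Z(\aterm_i\gamma)$ yields the desired equality and completes the proof.
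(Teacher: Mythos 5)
Your proposal is correct and follows essentially the same route as the paper's proof: the same strengthening to an arbitrary set $Z$ disjoint from the domain and range of $\gamma$, the same induction with the same case analysis, the same appeal to Lemma~\ref{lem:dropvars} in the variable case, and the same key observation $Z \cap \FV(\afun(\vec{\aterm})) = Z \cap \FV(\afun(\vec{\aterm})\gamma)$ in the function-symbol case. Your explicit free-variable calculation and invocation of the Barendregt convention merely spell out details the paper leaves implicit.
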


\begin{proof}
We prove by induction on the size of $\aterm$: for any set of
variables $Z$, whose members do not occur in either domain or range
of $\gamma$, we have $\ttag_Z(\aterm)\gamma^\ttag =
\ttag_Z(\aterm\gamma)$.

If $\aterm$ is a variable not in $\domain(\asub)$, both sides are
just $\aterm$.

If $\aterm$ is a variable in $\domain(\asub)$, we must see that
$\ttag(\gamma(\aterm)) = \ttag_Z(\gamma(\aterm))$, which holds by
the second part of Lemma~\ref{lem:dropvars}.

If $\aterm$ is an application $\app{\bterm}{\cterm}$, then
$\ttag_Z(\aterm)\gamma^\ttag = \app{(\ttag_Z(\bterm)\gamma^\ttag)}{
(\ttag_Z(\cterm)\gamma^\ttag)}$, which by the induction hypothesis
equals $\app{\ttag(\bterm\gamma)}{\ttag(\cterm\gamma)} =
\ttag(\app{(\bterm\gamma)}{(\cterm\gamma)}) = \ttag(\aterm\gamma)$.

If $\aterm = \afun(\aterm_1,\ldots,\aterm_n)$ the induction
hypothesis on each of the $\aterm_i$ also suffices because $Z \cap
\FV(\aterm) = Z \cap \FV(\subst{\aterm}{\asub})$, which is easy to
see by the requirements on $Z$.

Finally, if $\aterm = \abs{\bvar}{\aterm'}$, then 
$\subst{\ttag_Z(\aterm)}{\asub^\ttag} =
\subst{(\abs{\bvar}{\ttag_{Z \cup \setop
\bvar\setcl}(\aterm')})}{\asub^\ttag} =
\abs{\bvar}{(\subst{\ttag_{Z \cup \setop \bvar\setcl}(\aterm')}{
  \asub^\ttag})}$,
which by the induction hypothesis equals
$\abs{\bvar}{\ttag_{Z \cup \setop \bvar\setcl}(\subst{\aterm'}{
\asub})} = \ttag_Z(\subst{\aterm}{\asub})$ as required.
\end{proof}

\begin{lemma}\label{lem:tagsubstitute}
If $Z$ is a set of variables, $\aterm,\bterm$ terms and $\avar$ a
variable not in $Z$ or $\FV(\bterm)$, then
$\ttag_{Z \cup \{\avar\}}(\aterm)[\avar:=\ttag_Z(\bterm)]
\arrr{\Rules^\ttag} \ttag_Z(\aterm[\avar:=\bterm])$ for any set
of rules $\Rules$.
\end{lemma}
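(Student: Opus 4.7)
The plan is to prove the statement by structural induction on $\aterm$, using the preceding two lemmas. The variable case, constant case, and application case should be immediate: if $\aterm = \avar$ both sides simplify to $\ttag_Z(\bterm)$, if $\aterm$ is any other variable or a $\c_\atype$ both sides are literally $\aterm$, and the application case $\aterm = \app{\aterm_1}{\aterm_2}$ follows by distributing $\ttag$, substitution, and $\arrr{\Rules^\ttag}$ over application and invoking the induction hypothesis on each subterm.

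For the abstraction case $\aterm = \abs{\bvar}{\aterm'}$, I would first use $\alpha$-conversion to assume $\bvar \neq \avar$ and $\bvar \notin \FV(\bterm)$. Unfolding $\ttag$ and the substitution, the goal becomes $\ttag_{Z \cup \{\avar,\bvar\}}(\aterm')[\avar:=\ttag_Z(\bterm)] \arrr{\Rules^\ttag} \ttag_{Z \cup \{\bvar\}}(\aterm'[\avar:=\bterm])$. Since $\bvar \notin \FV(\bterm)$, Lemma~\ref{lem:dropvars} gives $\ttag_Z(\bterm) = \ttag_{Z \cup \{\bvar\}}(\bterm)$, so this is exactly the induction hypothesis applied with tag-set $Z \cup \{\bvar\}$.

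The genuinely interesting case is $\aterm = \afun(\aterm_1,\ldots,\aterm_n)$. Write $V = \FV(\afun(\vec{\aterm}))$. If $V \cap (Z \cup \{\avar\}) = \emptyset$, then $\avar$ does not occur in $\aterm$, the substitution is a no-op on both sides, and neither side has a symbol to tag, so the two sides are syntactically equal. Otherwise the LHS begins with $\afun^-$; applying the induction hypothesis component-wise produces $\afun^-(\ttag_Z(\aterm_1[\avar:=\bterm]),\ldots,\ttag_Z(\aterm_n[\avar:=\bterm]))$. Now I case-split on whether $\FV(\afun(\vec{\aterm})[\avar:=\bterm]) \cap Z$ is empty: if it is nonempty, the RHS also begins with $\afun^-$ and we are done; if it is empty, the RHS begins with $\afun$, and one application of the untagging rule $\afun^-(\vec{\avar}) \arrz \afun(\vec{\avar})$ — which is precisely why these rules were added to $\Rules^\ttag$ — finishes the reduction.

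The main obstacle, then, is conceptual rather than technical: one has to notice that the $\afun^-$-to-$\afun$ mismatch in this last subcase is exactly what the untagging rules in $\Rules^\ttag$ are designed to repair, and this is why the statement uses $\arrr{\Rules^\ttag}$ rather than equality. Everything else is routine bookkeeping about how $\ttag_Z$ interacts with substitution, with Lemma~\ref{lem:dropvars} used twice (once to shrink the tag-set under an abstraction, and implicitly to equate $\ttag_Z$ applied at different scopes when a bound variable doesn't actually occur).
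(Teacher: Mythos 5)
Your proof is correct and follows essentially the same route as the paper's: induction on the structure of $\aterm$, with Lemma~\ref{lem:dropvars} handling the abstraction case and the untagging rules $\afun^-(\vec{\avar}) \arrz \afun(\vec{\avar})$ absorbing the possible mismatch in the functional-symbol case. The only difference is a harmless reorganisation of the case split for $\afun(\vec{\aterm})$ (you split first on whether any tag is needed at the head, the paper splits on whether $\avar$ occurs in $\aterm$), and both analyses cover all cases.
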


\begin{proof}
By induction on the size of $\aterm$.

If $\aterm = \avar$, then both sides are equal to $\ttag_Z(\bterm)$.

If $\aterm$ is another variable $\bvar$, then both sides are just
$\bvar$.

If $\aterm = \app{\cterm}{\dterm}$ we use the induction hypothesis:
$\ttag_{Z \cup \{\avar\}}(\aterm)[\avar:=\ttag_Z(\bterm)] =
\app{\ttag_{Z \cup \{\avar\}}(\cterm)[\avar:=\ttag_Z(\bterm)]}{
\ttag_{Z \cup \{\avar\}}(\dterm)[\avar:=\ttag_Z(\bterm)]}
\arrr{\Rules^\ttag}
\app{\ttag_Z(\cterm[\avar:=\bterm])}{\ttag_Z(\dterm[\avar:=\bterm])
} = \ttag_Z(\app{\cterm[\avar:=\bterm]}{\dterm[\avar:=\bterm]}) =
\ttag_Z(\aterm[\avar:=\bterm])$.

If $\aterm = \abs{\bvar}{\cterm}$, we use the second part of
Lemma~\ref{lem:dropvars}:
$\ttag_{Z \cup \{\avar\}}(\aterm)[\avar:=\ttag_Z(\bterm)] =
\abs{\bvar}{\ttag_{Z \cup \{\avar,\bvar\}}(\cterm)[\avar:=
\ttag_Z(\bterm)]}$, which by Lemma~\ref{lem:dropvars} equals
$\abs{\bvar}{\ttag_{Z \cup \{\avar,\bvar\}}(\cterm)[\avar:=\ttag_{Z
\cup \{\bvar\}}(\bterm)]} $\\$
\arrr{\Rules^\ttag} \abs{\bvar}{\ttag_{Z
\cup \{\bvar\}}(\cterm [\avar:=\bterm])} = \ttag_Z(\aterm[\avar:=
\bterm])$ by the induction hypothesis.

Finally, if $\aterm = \afun(\cterm_1,\ldots,\cterm_n)$, then there is
little to do if $\avar$ does not occur in $\aterm$: by the second
part of Lemma~\ref{lem:dropvars}, $\ttag_{Z \cup \{\avar\}}(\aterm)[
\avar:=\ttag_Z(\bterm)] = \ttag_Z(\aterm)[\avar:=\ttag_Z(\bterm)]$
and since $\avar$ does not occur in either $\aterm$ or $\ttag_Z(
\aterm)$, this is exactly $\ttag_Z(\aterm[\avar:=\bterm])$.
So assume that $\avar \in \FV(\aterm)$; then $\ttag_{Z \cup
\{\avar\}}(\aterm)[\avar:=\ttag_Z(\bterm)] = \afun^-(\ttag_{Z \cup
\{\avar\}}(\cterm_1),\ldots,\ttag_{Z \cup \{\avar\}}(\cterm_n))
[\avar:=\ttag_Z(\bterm)]$, which by the induction hypothesis reduces
to $\afun^-(\ttag_Z(\cterm_1[\avar:=\bterm]),\ldots,\ttag_Z(
\cterm_n[\avar:=\bterm]))$.  If variables of $Z$ occur in
$\aterm[\avar:=\bterm]$ this is exactly $\ttag_Z(\aterm[\avar:=
\bterm])$, otherwise it reduces in one step to
$\afun(\ttag_Z(\cterm_1[\avar:=\bterm]),\ldots,\ttag_Z(\cterm_n
[\avar:=\bterm]))
= \ttag_Z(\aterm)$.
\end{proof}

The following lemma expresses that a reduction to a term of a certain
form $l$ can be done by only reducing subterms headed by untagged
(``green'') symbols.  Later on, we will use this to see that
the reduction $\bterm_i \arrr{\Rules} \aterm_{i+1}$ in a dependency
chain can be assumed to reduce only untagged symbols.

\begin{lemma}\label{lem:weak}
Let $\Rules$ be a \llfe\ AFS, $l$ a linear, fully extended term and
$\asub$ a substitution on domain $\FV(l)$.
If $\aterm$ is terminating and $\aterm \arrr{\Rules} l\asub$, then
there is a substitution $\bsub$ such that
$\ttag(\aterm) \arrr{\Rules^\ttag} \subst{l}{\bsub^\ttag}$, where
$\bsub^\ttag = [\avar := \ttag(\bsub(\avar)) \mid \avar \in
\domain(\bsub)]$, and $\bsub(\avar) \arrr{\Rules} \asub(\avar)$ for
all $\avar$.
\end{lemma}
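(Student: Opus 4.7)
The plan is to proceed by induction on the length $k$ of the reduction $\aterm \arrr{\Rules}^k l\asub$, with termination of $\aterm$ ensuring that every intermediate reduct is terminating so that the inductive hypothesis always applies.

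For the base case $k = 0$ we have $\aterm = l\asub$, and I take $\bsub := \asub$; the requirement $\bsub(\avar) \arrr{\Rules} \asub(\avar)$ then holds trivially. Lemma~\ref{lem:tagbasesubstitute} gives $\ttag(l\asub) = \ttag(l)\asub^\ttag$, and by untagging each tagged symbol inside $\ttag(l)$ one by one (using the untagging rules of $\Rules^\ttag$ together with stability under substitution) we reach $l\asub^\ttag = l\bsub^\ttag$. Full-extendedness of $l$ is what lets this untagging go through cleanly, since it ensures that no free variable of $l$ sits below an $l$-abstraction and therefore the substituted parts and the $l$-skeleton do not interfere.

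For the step case, write $\aterm \arr{\Rules} \aterm' \arrr{\Rules}^{k-1} l\asub$ and apply the IH to $\aterm'$ to obtain $\bsub'$ with $\ttag(\aterm') \arrr{\Rules^\ttag} l\bsub'^\ttag$ and $\bsub'(\avar) \arrr{\Rules} \asub(\avar)$. The single step is then simulated in $\Rules^\ttag$. A $\beta$-step at any position is handled cleanly by Lemma~\ref{lem:tagsubstitute}, which yields $\ttag(\aterm) \arr{\beta}\cdot\arrr{\Rules^\ttag} \ttag(\aterm')$, so one may take $\bsub := \bsub'$. For a rule step $\aterm = E[l'\asub''] \arr{\Rules} E[r'\asub''] = \aterm'$, the left-linear, fully-extended LHS $l'$ allows us to untag every tagged $l'$-symbol in $\ttag_Z(l'\asub'')$ (where $Z$ collects the abstraction variables above the redex), exposing a match for $l'$; the rule $l' \arrz \ttag(r')$ is then applied, producing $\ttag(r')\asub''^{\ttag,Z}$.

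The principal difficulty is the in-abstraction rule-step case, where the simulated step may produce a term with strictly fewer tags than $\ttag(\aterm')$: the rule's right-hand side $\ttag(r')$ tags only according to $r'$'s own bound variables, and is oblivious to the $Z$-variables that $\asub''$ may introduce from the enclosing abstractions. This is handled by exploiting the flexibility in the choice of $\bsub$: when the step reduces a subterm that under the IH reduction becomes part of $\bsub'(\avar)$ for some $\avar$, we redefine $\bsub(\avar)$ to be the pre-step subterm. This still satisfies $\bsub(\avar) \arr{\Rules} \bsub'(\avar) \arrr{\Rules} \asub(\avar)$, and the pre-step form has less tagging under $\ttag$, so that $l\bsub^\ttag$ exactly matches the less-tagged target reached by the simulation together with the (appropriately shortened) IH reduction. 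The combinatorial heart of the proof is therefore a careful case analysis matching the position of the simulated step against the $l$-pattern and the substitution positions inside $l\asub$, and showing that the assignment of $\bsub$ can always be adjusted to absorb any tag loss incurred during the simulation.
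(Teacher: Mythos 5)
Your base case and your treatment of $\beta$-steps are fine, and you have correctly located the crux: a rule step fired strictly below a $\lambda$-binder cannot be simulated so as to land on $\ttag(\aterm')$. If $\aterm$ contains a redex $l'\asub''$ under a binder for $\bvar$ and the contractum $r'\asub''$ still contains $\bvar$, then after untagging the $l'$-pattern and firing $l' \arrz \ttag(r')$ you obtain a term whose $r'$-symbols carry only the tags coming from $r'$'s own internal abstractions, whereas $\ttag(\aterm')$ additionally tags every $r'$-symbol sitting above an occurrence of $\bvar$. Since $\Rules^\ttag$ can only \emph{remove} tags, the simulated term and $\ttag(\aterm')$ are separated in the wrong direction, so the induction hypothesis (a reduction \emph{from} $\ttag(\aterm')$) cannot be invoked. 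The repair you sketch --- redefine $\bsub(\avar)$ to the pre-step subterm and ``absorb the tag loss'' --- is exactly the part that needs a proof and is where the proposal breaks down: which variable position of $l$ (if any) the affected region ends up in is determined only by the \emph{entire} remaining reduction, not by the single step being simulated; the residual IH reduction cannot simply be replayed from the less-tagged term towards a target that must have the precise shape $l\bsub^\ttag$ (an arbitrary subset of tags removed from $\ttag(\cterm)$ is in general not $\ttag$ of anything); and once you start excising or rearranging parts of the reduction, your induction measure --- the length of the reduction --- no longer decreases. Note also that your induction never actually uses the termination of $\aterm$, which is a warning sign, since the statement's termination hypothesis is doing real work.

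The paper avoids the problem by never simulating a non-headmost rule step at all. It proves a strengthened statement by induction on the \emph{structure of $l$}, interleaved with an outer induction on $\aterm$ ordered by $\arr{\Rules} \cup \supterm$ (this is where termination of $\aterm$ is used, since headmost $\beta$-steps and rule steps can lengthen the remaining reduction) and an inner induction on reduction length. When $l$ is a variable in $\domain(\asub)$ it sets $\bsub(l) := \aterm$ immediately --- freezing the substitution at the \emph{earliest} term occupying that position --- so that no reduction inside substitution positions, and in particular no reduction under a binder, is ever performed; only headmost steps (which occur at tag-free positions) and pattern decomposition via linearity and full-extendedness are needed. To fix your proof you would essentially have to restructure it along these lines.
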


\begin{proof}
Towards an induction hypothesis, we will prove the lemma for a term
$l$ which is linear and fully extended in the variables in $\domain(
\asub)$, and such that $\domain(\asub) \subseteq \FV(l)$; $l$ may
have more variables which do not occur in this domain, and which it
is not necessarily linear and fully extended in.  We use induction
first on $\aterm$, using $\arr{\Rules} \cup \supterm$ (this is
well-founded because $\aterm$ is terminating by assumption), second
on the length of the reduction $\aterm \arrr{\Rules}
\subst{l}{\asub}$.

First suppose $l$ is a variable in $\domain(\asub)$, so $\asub =
[l:=\asub(l)]$.  Choose $\bsub(l) = \aterm$.  Then certainly
$\bsub(l) \arrr{\Rules} \asub(l)$, and $\ttag(\aterm)
\arrr{\Rules^\ttag} \ttag(\aterm) = l\bsub^\ttag$.
If $l$ is a variable not in $\domain(\asub)$, and $\aterm
\arrr{\Rules} l\asub = l$ without headmost steps, then $\asub$ is
empty and $\aterm = l$; indeed $\ttag(\aterm) = l \arrr{\Rules^\ttag}
l$.

Next, let $l$ be an abstraction $\abs{\avar}{l'}$ and suppose
$\aterm = \abs{\avar}{\aterm'}$ and $\aterm' \arrr{\Rules} l'\asub$.
Since $l$ is fully extended in the variables of $\domain(\asub)$,
this $l'$ contains no variables in $\domain(\asub)$; that is, $\asub$
is empty, and we must see that $\ttag(\aterm) \arrr{\Rules} l$.  By
the induction hypothesis
$\ttag(\aterm') \arrr{\Rules} l'$, and therefore indeed $\ttag(
\aterm) = \abs{\avar}{\ttag_{\{\avar\}}(\aterm')}
\arrr{\Rules^\ttag} \abs{\avar}{\ttag(\aterm')}$ by
Lemma~\ref{lem:dropvars}, $\arrr{\Rules^\ttag} \abs{\avar}{l'} = l$.

If $l = \afun(l_1,\ldots,l_n)$ and $\aterm = \afun(\aterm_1,\ldots,
\aterm_n)$ and each $\aterm_i \arrr{\Rules} l_i\asub$, then by
linearity of $l$ each $l_i$ has different variables (at least, insofar
as $\domain(\asub)$ is concerned).  We can write $\asub = \asub_1
\cup \ldots \cup \asub_n$ with each $\asub_i$ the restriction of
$\asub$ to $\domain(l_i)$; all $\asub_i$ have disjunct domains.
Then also $\aterm_i \arrr{\Rules} l_i\asub_i$, so by the induction
hypothesis there are $\bsub_1,\ldots,\bsub_n$ such that each
$\ttag(\aterm_i) \arrr{\Rules^\ttag} l_i\bsub_i^\ttag$, and always
$\bsub_i(\avar) \arrr{\Rules} \asub_i(\avar)$.  The induction step
holds with $\bsub := \bsub_1 \cup \ldots \cup \bsub_n$.

If $l = \app{l_1}{l_2}$ and $\aterm = \app{\aterm_1}{\aterm_2}$ and
each $\aterm_i \arrr{\Rules} l_i\asub$, we use linearity in
the same way.

If none of these cases hold, the reduction $\aterm \arrr{\Rules} l\asub$
must use a headmost step, so $\aterm \arrr{\Rules} \cterm \arr{\Rules}
\dterm \arrr{\Rules} l\asub$, and either:
\begin{iteMize}{$\bullet$}
\item $\cterm = \app{(\abs{\avar}{\eterm})}{\fterm_0} \cdots \fterm_n$
  and $\dterm =\app{\eterm[\avar:=\fterm_0]}{\fterm_1} \cdots
  \fterm_n$ ($n \geq 0$),\ \ \emph{or}
\item $\cterm = \app{l'\asub'}{\fterm_1} \cdots \fterm_n$ and
  $\dterm = \app{r'\asub'}{\fterm_1} \cdots \fterm_n$ for some
  $l' \arrz r',\asub,\fterm_1,\ldots,\fterm_n$ ($n \geq 0$)
\end{iteMize}
We can safely assume that the reduction $\aterm \arrr{\Rules}
\cterm$ does not use any headmost steps.

In the first case, $\aterm$ must have the form
$\app{(\abs{\avar}{\eterm'})}{\fterm_0'} \cdots \fterm_n'$
with $\eterm' \arrr{\Rules} \eterm$ and each
$\fterm_i' \arrr{\Rules} \fterm_i$.  But then also
$\app{\eterm'[\avar:=\fterm_0']}{\fterm_1'} \cdots \fterm_n'
\arrr{\Rules} \dterm \arrr{\Rules} l\asub$; by the first induction
hypothesis we find a suitable $\bsub$ such that $\aterm \arr{\beta}
\app{\eterm'[\avar:=\fterm_0']}{\fterm_1'} \cdots \fterm_n'
\arrr{\Rules^\ttag} l\bsub^\ttag$.

In the second case, let $l'' := \app{l}{\avar_1} \cdots \avar_n$ for
fresh variables $\avar_1,\ldots,\avar_n$, and let $\asub'' :=
\asub' \cup [\avar_1:=\fterm_1,\ldots,\avar_n:=\fterm_n]$.  Since
$l' \arrz r'$ is a rule, $l''$ is both linear and fully extended, and
the reduction to $l''\asub'' = \cterm$ is shorter than the original
reduction; by the second induction hypothesis we find $\csub$ such
that $\ttag(\aterm) \arrr{\Rules^\ttag} l''\csub^\ttag =
\app{l'\csub^\ttag}{(\avar_1\csub^\ttag)} \cdots
(\avar_n\csub^\ttag)$, where each $\csub(\bvar)$ reduces to
$\asub'(\bvar)$ for $\bvar \in \domain(\asub')$, and $\csub(\avar_i)
\arrr{\Rules} \fterm_i$.

Now, $l''\csub^\ttag \arr{\Rules^\ttag} \app{\ttag(r')\csub^\ttag
}{(\vec{\avar}\csub^\ttag)}$, which by
Lemma~\ref{lem:tagbasesubstitute} $= \app{\ttag(r'\csub)}{(\vec{\avar
}\csub^\ttag)} = \ttag((\app{r'}{\vec{\avar}})\csub)$.
By simply removing all tags, every $\arr{\Rules^\ttag}$ step can be
translated to a $\arr{\Rules}^=$ step on untagged terms, and therefore
we also see that $\aterm \arrr{\Rules} l''\csub \arr{\Rules} (\app{
r}{\vec{\avar}})\csub$, and by the choice of $\csub$ we know:
$(\app{r}{\vec{\avar}})\csub \arrr{\Rules} \app{(r'\asub')}{\vec{
\fterm}} \arrr{\Rules} l\asub$.  Therefore we can apply the first
induction hypothesis, and see that $\ttag(\aterm) \arr{\Rules^\ttag
}^+ \ttag((\app{r}{\vec{\avar}})\csub) \arrr{\Rules^\ttag}
l\bsub^\ttag$, for a suitable $\bsub$.
\end{proof}

In Section~\ref{subsec:weakrevised}, Lemma~\ref{lem:weak} will play
an essential role in the construction of a ``tagged dependency
chain''.  But first, let us consider \emph{formative rules}, another
gain from locality.

\subsection{Formative Rules}\label{subsec:formrules}

Recall that in the first-order setting it is not required to prove
$l \geqterm r$ for \emph{all} rewrite rules: to prove that a set of
dependency pairs $\P$ is chain-free it suffices to consider only its
\emph{usable rules}.
The definition of usable rules cannot easily be extended to our
setting, because we normally have to deal with collapsing dependency
pairs.  Therefore we take a different approach with the same goal of
restricting attention to rules which are in some way relevant to a
set of dependency pairs.  Where usable rules are defined from the
right-hand sides of dependency pairs, our \emph{formative rules} are
based on the left-hand sides.

The intuition behind formative rules is that (due to left-linearity
and full-extendedness), only the formative rules of some rule $l
\arrz r$ can contribute to the creation of its pattern.

We consider a fixed set of rules $\Rules$, which has already been
completed.  The formative rules are a subset of $\Rules^+$, which is
the set $\Rules \cup \{\app{l}{\avar_1} \cdots \avar_n \arrz \app{r}{
\avar_1} \cdots \avar_n \mid l \arrz r \in \Rules$, all $\avar_i$
fresh variables, $r$ not an abstraction and $\app{l}{\avar_1} \cdots
\avar_n$ well-typed$\}$.

\begin{definition}[Formative Rules]\label{def:formrules}
Let $X$ be a set of variables, and $\aterm$ a $\beta$-normal term
(that is, $\aterm$ has no subterms $\app{(\abs{\avar}{\bterm})}{
\cterm}$) such
that for any subterm $\app{\avar}{\bterm}$ of $\aterm$ with $\avar
\in \setvar$, either $\avar$ is not free in $\aterm$, or $\avar \in X$.
Let $\Symbols_X(\aterm)$ be recursively defined as follows:
\[
\begin{array}{rcl}
\Symbols_X(\abs{\bvar}{\aterm} : \atype)
  & = & \{ \langle \mathit{ABS}, \atype \rangle \} \cup
  \Symbols_{X \cup \{\bvar\}}(\aterm) \\
\Symbols_X(\app{\afun(\aterm_1,\ldots,\aterm_n)}{\aterm_{n+1}} \cdots
  \aterm_m : \atype) & = & \{ \langle \afun, \atype \rangle \} \cup
  \Symbols_X(\aterm_1) \cup \ldots \cup \Symbols_X(\aterm_m) \\
\Symbols_X(\app{\bvar}{\aterm_1} \cdots \aterm_n : \atype) & = &
  \{ \langle \mathit{VAR}, \atype \rangle \cup \Symbols_X(\aterm_1)
  \cup \ldots \cup \Symbols_X(\aterm_m) \\
  & & \ \ \ \ \ \ \ (\bvar \in X,\ n \geq 0) \\
\Symbols_X(\bvar) & = & \emptyset\ \ (\bvar \in \setvar \setminus X) \\
\end{array}
\]
Note that in a local AFS, all left-hand sides of the rules satisfy
these constraints for $X = \emptyset$.

For $a \in \F \cup \{\mathit{ABS},\mathit{VAR}\}$, we say a term
$\aterm : \atype$ \emph{has form} $\langle a, \atype \rangle$ if
either $a = \mathit{ABS}$ and $\aterm$ is an abstraction, or $a \in
\F$ and $\aterm$ can be written $\app{a(\vec{\bterm})}{\vec{\cterm}}$,
or $\aterm = \app{\avar}{\vec{\bterm}}$ for some variable $\avar$
(and $a$ may be anything).  A pair $\langle a, \atype \rangle$ with
$a \in \F \cup \{\mathit{ABS},\mathit{VAR}\}$ is called a \emph{typed
symbol}.

For two typed symbols $A,B$,
write $A \gsymbform B$ if there is a rule $l \arrz r \in \Rules^+$
such that $r$ has form $A$, and $B \in \Symbols_\emptyset(l)$.  Let
$\gsymbform^*$ denote the reflexive-transitive closure of
$\gsymbform$.

The \emph{formative symbols} of a term $\aterm$ are those typed
symbols $B$ such that $A \gsymbform^* B$ for some $A \in
\Symbols_\emptyset(\aterm)$ (if defined).

The \emph{formative rules} of a term $\aterm$, notation
$\formrules(\aterm)$, are those rules $l \arrz r \in \Rules^+$ such
that $r$ has form $B$ for some formative symbol $B$ of $\aterm$.

The set of formative rules of a dependency pair,
$\formrules(\app{\afun(l_1,\ldots,l_n)}{l_{n+1}} \cdots l_m \dppijl
p)$, is defined as $\bigcup_{1 \leq i \leq m} \formrules(l_i)$.  
For a set $\P$ of dependency pairs, $\formrules(\P) =
\bigcup_{l \dppijl p \in \P} \formrules(l \dppijl p)$.
\end{definition}

Note that in a finite system it is easy to calculate the formative
symbols of a term, and consequently the formative rules can be found
automatically.

\begin{example}\label{ex:urtwice}
Recall the rules for the (completed) system $\twice$:
\[
\begin{array}{lrcllrcl}
(A) & \I(\nul) & \arrz & \nul &
(C) & \twice(F) & \arrz & \abs{y}{F \cdot (F \cdot y)}  \\
(B) & \I(\suc(n)) & \arrz & \suc(\twice(\abs{x}{\I(x)}) \cdot  n) &
(D) & \twice(F) \cdot  m & \arrz & F \cdot (F \cdot m) \\
\end{array}
\]
Here $\Rules^+ = \Rules$.
In this context, let $l = \suc(n)$.  Then $\Symbols_\emptyset(l) = \{
\langle \suc, \nat \rangle \}$, and:
\begin{iteMize}{$\bullet$}
\item (B) and (D) both have form $\langle \suc,\nat \rangle$, so
  $\langle \suc,\nat \rangle \gsymbform \langle \suc,\nat \rangle,
  \langle \I,\nat \rangle, \langle \twice,\nat \rangle$
\item (D) also has forms $\langle \I,\nat \rangle$ and $\langle
  \twice,\nat \rangle$, but no other rules do
\end{iteMize}
Thus, the formative symbols of $l$ are exactly $\langle \suc,\nat
\rangle,\ \langle \I,\nat \rangle$ and $\langle \twice,\nat \rangle$.
(B) and (D), but not (A) and (C), are formative rules of $l$.
Observing that a dependency pair with left-hand side
$\twice(F)\cdot n$ or $\up{\twice}(F)$ has no formative rules
(since $\Symbols_\emptyset(F) = \Symbols_\emptyset(n) =
\emptyset$),
the formative rules of the SCC $\scctwice$ from
Example~\ref{ex:twicecycles} are (B) and (D).
\end{example}

\begin{example}
For an example that uses multiple types, and more rules of functional
type, consider the system with symbols
\[
\begin{array}{rclrcl}
\cons & : & [(\nat \typepijl \nat) \times \listex] \decpijl \listex &
\nil & : & \listex \\
\headex & : & [\listex] \decpijl \nat \typepijl \nat &
\tailex & : & [\listex] \decpijl \listex \\
\trueex & : & \booleanex &
\falseex & : & \booleanex \\
\mathtt{test} & : & [\nat \typepijl \nat] \decpijl \booleanex &
\suc & : & [\nat] \decpijl \nat \\
\ifex & : & \multicolumn{4}{l}{
  [\booleanex \times (\nat \typepijl \stringex) \times
  (\nat \typepijl \stringex)] \decpijl \nat \typepijl \stringex} \\
\end{array}
\]
And rules:
\[
\begin{array}{lrcllrcl}
(A) & \ifex(\trueex,F_1,F_2) & \arrz & F_1 &
(D) & \headex(\cons(F,t)) & \arrz & F \\
(B) & \ifex(\falseex,F_1,F_2) & \arrz & F_2 &
(E) & \tailex(\cons(F,t)) & \arrz & t \\
(C) & \mathtt{test}(\abs{\avar}{\suc(\avar)}) & \arrz & \trueex \\
\end{array}
\]
For $\Rules^+$, we add the rules:
\[
\begin{array}{lrcl}
(F) & \app{\ifex(\trueex,F_1,F_2)}{\avar} & \arrz & \app{F_1}{\avar} \\
(G) & \app{\ifex(\falseex,F_1,F_2)}{\avar} & \arrz & \app{F_2}{\avar} \\
(H) & \app{\headex(\cons(F,t))}{\avar} & \arrz & \app{F}{\avar} \\
\end{array}
\]
This is a contrived example, to demonstrate all aspects of formative
rules in one system.  We consider the formative rules of the
dependency pair $\app{\ifex(\trueex,F_1,F_2)}{\avar} \dppijl
\app{F_1}{\avar}$.  That is, $\formrules(\trueex)$,
since the free variables $F_1,F_2,\avar$ do not have formative rules.
We observe:
\begin{iteMize}{$\bullet$}
\item $\Symbols_\emptyset(\trueex) = \langle \trueex, \booleanex
  \rangle$
\item rule (C) is the only rule with
  form $\langle \trueex, \booleanex \rangle$, so
  $\langle \trueex, \booleanex \rangle \gsymbform \langle \mathtt{
  test}, \booleanex \rangle, \linebreak
  \langle \mathit{ABS}, \nat \typepijl
  \nat \rangle, \langle \suc, \nat \rangle, \langle \mathit{VAR},
  \nat \rangle$, that is, the elements of
  $\Symbols_\emptyset(\mathtt{test}(\abs{\avar}{\suc(\avar)}))$;
\item $\langle \mathit{ABS}, \nat \typepijl \nat \rangle \gsymbform
  \langle \headex, \nat \typepijl \nat \rangle, \langle \cons, \listex
  \rangle$ by rule (D), and rule (H) has both form $\langle \suc,\nat
  \rangle$ and $\langle \mathit{VAR}, \nat \rangle$, so these two
  $\gsymbform \langle \headex, \nat \rangle, \langle \cons, \listex
  \rangle$; no other rule has a form $\langle \mathtt{test},
  \booleanex \rangle, \langle \mathit{ABS}, \nat \typepijl \nat
  \rangle, \langle \suc, \nat \rangle$ or $\langle \mathit{VAR},\nat
  \rangle$
\item (D) also has form $\langle \headex, \nat \typepijl \nat
  \rangle$ (but we already know that all symbols in the left-hand
  side are formative symbols of $\trueex$), and only (E) has form
  $\langle \cons,\listex \rangle$, so the latter $\gsymbform \langle
  \tailex, \listex \rangle, \langle \cons, \listex \rangle$
\item Hence, the formative symbols of the given dependency pair are:\\
  $
  \langle \trueex, \booleanex \rangle,
  \langle \mathtt{test}, \booleanex \rangle,
  \langle \suc, \nat \rangle,
  \langle \mathit{VAR}, \nat \rangle,
  \langle \headex, \nat \rangle,
  \langle \cons, \listex \rangle,
  \langle \tailex,\newline \listex \rangle,
  \langle \mathit{ABS}, \nat \typepijl \nat \rangle,
  \langle \headex, \nat \typepijl \nat \rangle
  $.
\end{iteMize}

\noindent
The formative rules are therefore (C), (D), (E) and (H).
\end{example}

\noindent
For formative rules we have a result very similar to
Lemma~\ref{lem:weak}, both in nature and in proof.

\begin{lemma}\label{lem:formativereduction}
Suppose $\Rules$ is local, and let $l$ be a $\beta$-normal, linear,
fully extended term, which does not have leading free variables.
Let $\asub$ be a substitution with domain $\FV(l)$, and $\aterm$ a
term which is terminating over $\arr{\Rules}$, and suppose $\aterm
\arrr{\Rules} l\asub$.
Then there is a substitution $\bsub$ on $\FV(l)$ such that $\aterm
\arrr{\formrules(l)} l\bsub$, and moreover each $\bsub(\avar)
\arrr{\Rules} \asub(\avar)$.
\end{lemma}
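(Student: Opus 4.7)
The plan is to mirror the structure of the proof of Lemma~\ref{lem:weak} using a nested induction: primarily on $\aterm$ with respect to the well-founded relation $\arr{\Rules}$ (well-founded because $\aterm$ is terminating), and secondarily on the length of the reduction $\aterm \arrr{\Rules} l\asub$. As in that proof, I would state the invariant in a slightly stronger form, both allowing $l$ to be linear and fully extended only with respect to $\domain(\asub)$ and tracking a set $X$ of already-bound variables, so that the recursive cases on $l$ can use $\Symbols_X$ and a relativised version of formative rules.

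The base-like cases are straightforward. If $l$ is a variable in $\domain(\asub)$, set $\bsub(l) := \aterm$; then $l\bsub = \aterm$ and $\formrules(l) = \emptyset$. If $l = \abs{\bvar}{l'}$ and the reduction preserves the top abstraction, then $\aterm = \abs{\bvar}{\aterm'}$ with $\aterm' \arrr{\Rules} l'\asub$, and one appeals to the induction hypothesis on $\aterm'$ with $l'$ (extending the tracked set of bound variables by $\bvar$). Similarly, if $l = \app{\afun(l_1,\ldots,l_n)}{l_{n+1}}\cdots l_m$ and $\aterm$ shares its top shape with $l$ and all subsequent reduction is internal, use linearity of $l$ to split $\asub$ across the immediate subterm positions, apply the induction hypothesis componentwise, and glue the resulting substitutions. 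In either case the inclusion $\formrules(l_i) \subseteq \formrules(l)$ needed to recombine the pieces follows from the recursive definition of $\Symbols_X$ together with transitivity of $\gsymbform^*$.

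The main case is when the reduction \emph{must} use a headmost step; write $\aterm \arrr{\Rules} \cterm \arr{\Rules} \dterm \arrr{\Rules} l\asub$ where the first segment is headmost-free. If the headmost step is a $\beta$-step, then $\aterm$ has the same redex spine as $\cterm$ up to internal reductions, so a single $\beta$-step from $\aterm$ produces a strict $\arr{\Rules}$-successor still reducing to $l\asub$, and the primary induction hypothesis closes the case. Otherwise the step uses a rule $l' \arrz r'$ at the head of $\cterm = \app{l'\asub'}{\fterm_1}\cdots\fterm_n$; set $l'' := \app{l'}{\avar_1}\cdots\avar_n$ and $r'' := \app{r'}{\avar_1}\cdots\avar_n$ for fresh $\avar_i$, and $\asub'' := \asub' \cup [\vec{\avar}:=\vec{\fterm}]$, so that $l'' \arrz r'' \in \Rules^+$. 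Apply the secondary induction hypothesis to the strictly shorter reduction $\aterm \arrr{\Rules} l''\asub''$ to obtain $\csub$ with $\aterm \arrr{\formrules(l'')} l''\csub$ and each $\csub(\bvar) \arrr{\Rules} \asub''(\bvar)$; fire the head rule, $l''\csub \arr{\Rules} r''\csub$; and then apply the primary induction hypothesis to $r''\csub \arrr{\Rules} l\asub$ (the term $r''\csub$ being an $\arr{\Rules}$-successor of $\aterm$) to obtain $\bsub$ with $r''\csub \arrr{\formrules(l)} l\bsub$.

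The main obstacle, and the key new ingredient relative to Lemma~\ref{lem:weak}, is verifying that the head rule $l'' \arrz r''$ itself belongs to $\formrules(l)$ and that $\formrules(l'') \subseteq \formrules(l)$, so that the three pieces concatenate into a genuine $\formrules(l)$-reduction. The second inclusion reduces to the first: once the form of $r''$ is known to be a formative symbol of $l$, every element of $\Symbols_\emptyset(l'')$ is also formative for $l$ by the definition of $\gsymbform$ and transitivity of $\gsymbform^*$. To establish the first I would strengthen the inductive statement further to assert that, if $\aterm$ is not a variable, then the top form of $\aterm$ is a formative symbol of $l$. This sharpened claim propagates cleanly through the preceding cases (in the internal subcases $\aterm$ and $l$ share a top form, which is trivially in $\Symbols_\emptyset(l)$; a collapsing-head $r''$ is automatic since such a term is assigned every form by the definition of ``has form''), and applied to $r''\csub$ via the primary induction hypothesis it yields exactly that the form of $r''$ is formative for $l$, closing the induction.
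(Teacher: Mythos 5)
Your proposal follows the paper's argument almost step for step: the same strengthening with a set $X$ of already-bound variables and a relativised $\formrules_X$, the same nested induction on $\aterm$ (ordered by $\arr{\Rules} \cup \supterm$) and on the length of the reduction, the same use of linearity to split $\asub$ in the matching cases, and the same decomposition of a headmost rule step via $l''$, $r''$, $\asub''$ and the secondary induction hypothesis. The one place you genuinely diverge is the key step, membership of $l'' \arrz r''$ in $\formrules_X(l)$. The paper gets this directly: it chooses the rule step to be the \emph{last} non-$\beta$ headmost step, so that the remaining reduction $r''\asub'' \arrr{\Rules} l\asub$ is internal; hence when $\head(r')$ is a function symbol it must coincide with the head symbol of $l$, whose typed form already sits in $\Symbols_X(l)$, and the collapsing case is immediate because a variable-headed $r''$ has every form of its type. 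You instead cut at the \emph{first} headmost step and compensate by threading an extra invariant through the primary induction. That route does work and is arguably more uniform, but it buys nothing the paper's choice of cut point does not already give for free.

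Two points in your version need repair. First, the invariant as stated --- ``if $\aterm$ is not a variable, then the top form of $\aterm$ is a formative symbol of $l$'' --- is false when $l$ is a variable in $\domain(\asub)$: there $\Symbols_X(l) = \emptyset$, so $l$ has no formative symbols at all, while $\aterm$ may be an arbitrary non-variable term. It is also undefined when $\aterm$ is a $\beta$-redex, which has no form under Definition~\ref{def:formrules}. You must condition the invariant on $l \notin \domain(\asub)$ (and on $\aterm$ having a form); this is harmless, since in the only place you invoke it --- the call on $r''\csub \arrr{\Rules} l\asub$ --- the domain-variable case for $l$ has already been dispatched, and the $\beta$-redex and collapsing cases are covered by your separate observations. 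Second, you assert $l'' \arrz r'' \in \Rules^+$ without comment, but $\Rules^+$ only contains $\app{l'}{\vec{\avar}} \arrz \app{r'}{\vec{\avar}}$ with $n \geq 1$ when $r'$ is not an abstraction; as in the paper, you need to first normalise the reduction so that headmost steps with rules $l' \arrz \abs{\avar}{r'}$ are only taken at the very top, which is possible because the rules are completed and $l$ is $\beta$-normal. Neither issue is fatal, but both must be fixed for the induction to close.
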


\begin{proof}
We will prove something slightly stronger, which implies the lemma.
Let $X$ be a set of variables, and $l$ a $\beta$-normal term,
linear in $\FV(l) \setminus X$, and such that if a
free variable $\avar$ occurs inside an abstraction, or at the head of
an application in $l$, then $\avar \in X$.
Let $\asub$ be a substitution with domain $\FV(l) \setminus X$, and
$\aterm$ a terminating term such that $\aterm \arrr{\Rules} l\asub$.

Let $\formsymb_X(l)$ denote the set of typed symbols $B$ such that
$A \gsymbform^* B$ for some $A \in \Symbols_X(l)$, and
$\formrules_X(l)$ is the set of rules $l' \arrz r'$ in $\Rules^+$ such
that $r'$ has form $B$ for some $B \in \formsymb_X(l)$.
We will find a substitution $\bsub$ on $\FV(l)
\setminus X$ such that $\aterm \arrr{\formrules(l)} l\bsub$, and
always $\bsub(\avar) \arrr{\Rules} \asub(\avar)$.

It is clear that, for $X = \emptyset$, the definitions of
$\formrules_X(l)$ and $\formrules(l)$ coincide.  Thus, the case $X =
\emptyset$ implies the lemma -- but for the induction step we will
need a larger $X$.

Before proving this claim, let us make the following observations:
\begin{enumerate}[(1)]
\item\label{it:symbolssub}
  if $\Symbols_X(\aterm) \subseteq \Symbols_Y(\bterm)$, then
  $\formsymb_X(\aterm) \subseteq \formsymb_Y(\bterm)$, so
  $\formrules_X(\aterm) \subseteq \formrules_Y(\bterm)$
\item\label{it:extravars}
  if the variables of $Y$ do not occur in $\aterm$, then
  $\Symbols_{X \cup Y}(\aterm) = \Symbols_X(\aterm)$
\item\label{it:subtermsub}
  if $\aterm \supterm \bterm$ and $Y = \FV(\bterm) \setminus
  \FV(\aterm)$, then $\Symbols_{X \cup Y}(\bterm) \subseteq
  \Symbols_X(\aterm)$
\item\label{it:formrulesub}
  if $u \arrz v \in \formrules_X(\aterm)$, then $\formsymb_\emptyset(
  u) \subseteq \formsymb_X(\aterm)$
\end{enumerate}
All of these are obvious by considering the respective definitions.

Now we have all the preparations to prove the required result, using
induction on $\aterm$ with $\arr{\Rules} \cup \supterm$.  Because the
rules have been completed and $l$ is $\beta$-normal, we can first
transform the reduction $\aterm \arrr{\Rules} l\asub$ into a
reduction which never takes a headmost step with a rule $l' \arrz
\abs{\avar}{r'}$ which is not also a topmost step (we can replace
these steps one by one, and by induction on $\aterm$ with
$\arr{\Rules}$ we eventually obtain a reduction without such steps).
Having done this, we use a second induction, on the length of the
reduction $\aterm \arrr{\Rules} l\asub$.  Now, we can prove the
claim.  Consider the form of $l$.

If $l$ is a variable in $\domain(\asub)$, then $\asub = [l:=\asub(l)]
$; choosing $\bsub := [l:=\aterm]$ we are done.

If $l = \app{\avar}{l_1} \cdots l_n$ with $\avar \in X$, and $\aterm
= \app{\avar}{\aterm_1} \cdots \aterm_n$ and each $\aterm_i
\arrr{\Rules} l_i\asub$, then by linearity of $l$ over
$\domain(\asub)$ we can write $\asub = \asub_1 \cup \ldots \cup
\asub_n$ where $\asub_i$ is the restriction of $\asub$ to
$\FV(l_i)$.  By the induction hypothesis we can find $\bsub_1,\ldots,
\bsub_n$ such that each $\aterm_i \arrr{\formrules_X(l_i)} l_i\bsub_i$
and $\bsub_i \arrr{\Rules} \asub_i$.  Choose $\bsub := \bsub_1 \cup
\ldots \cup \bsub_n$; this is well-defined because all $\bsub_i$ have
disjunct domains.  By (\ref{it:subtermsub}) and
(\ref{it:symbolssub}) each $\formrules_X(l_i) \subseteq \formrules_X(
l)$, so indeed $\aterm \arrr{\formrules_X(l)} l\bsub$, and also each
$\bsub(\avar) \arrr{\Rules} \asub(\avar)$.

If $l = \app{\afun(l_1,\ldots,l_m)}{l_{m+1}} \cdots l_n$, and $\aterm
= \app{\afun(\aterm_1,\ldots,\aterm_m)}{\aterm_{m+1}} \cdots \aterm_n$
and each $\aterm_i \arrr{\Rules} l_i\asub$, we use linearity in almost
exactly the same way.

If $l = \abs{\avar}{l'}$ and $\aterm = \abs{\avar}{\aterm'}$ and
$\aterm' \arrr{\Rules} l'\asub$, then by assumption the
term $l'$ contains only variables in $X$, so $\domain(\gamma) =
\emptyset$; we must show that $\aterm' \arrr{\formrules_X(l)} l'$.
By (\ref{it:subtermsub}), (\ref{it:extravars}) and
(\ref{it:symbolssub}), it suffices if $\aterm'
\arrr{\formrules_{X \cup \{\avar\}}(l')} l'$, and this is
exactly what the induction hypothesis gives us!

By the restrictions on $l$, it must have one of the forms above; if
we are not yet done, therefore, the reduction $\aterm \arrr{\Rules}
l\asub$ uses a headmost step.

If $\aterm$ has the form $\app{\app{(\abs{\avar}{\bterm})}{\cterm}}{
\vec{\dterm}}$, then the first headmost step must be a $\beta$-step:
$\aterm \arrr{\Rules} \app{\app{(\abs{\avar}{\bterm'})}{\cterm'}}{
\vec{\dterm}} \arr{\beta} \app{\bterm'[\avar:=\cterm']}{\vec{\dterm'}
} \arrr{\Rules} l\asub$; we might as well $\beta$-reduce
immediately, and have $\aterm \arr{\beta} \app{\bterm[\avar:=
\cterm]}{\vec{\dterm}} \arrr{\Rules} \app{\bterm'[\avar:=\cterm']}{
\vec{\cterm'}} \arrr{\Rules} l\asub$; the first induction hypothesis
gives a suitable $\bsub$.

If $\aterm$ does not have this form, there is at least one
headmost step which is not a $\beta$-reduction.  The reduction has a
form $\aterm \arrr{\Rules} \bterm \arr{\Rules}
\cterm \arrr{\Rules} l\asub$, where $\bterm = \app{l'\asub'}{\dterm_1
} \cdots \dterm_n \arr{\Rules} \app{r'\asub'}{\dterm_1} \cdots
\dterm_n$ for some rule $l' \arrz r'$, substitution $\asub'$ and terms
$\dterm_1,\ldots,\dterm_n$ ($n \geq 0$); we can choose $\bterm,
\cterm$ in such a way that the reduction $\cterm \arrr{\Rules} l\asub$
does not contain any headmost steps other than perhaps $\beta$-steps.
Let $l'' := \app{l'}{\avar_1} \cdots \avar_n$ and $r'' :=
\app{r'}{\avar_1} \cdots \avar_n$ for suitably typed fresh variables
$\avar_1,\ldots,\avar_n$; then $l'' \arrz r''$ is in $\Rules^+$,
because we have made sure that either $r'$ is not an abstraction, or
$n = 0$.
Let $\asub'' := \asub' \cup [\avar_1:=\dterm_1,\ldots,\avar_n:=
\dterm_n]$.  Then $\bterm = l''\asub''$ and $\cterm = r''\asub''$.
Applying the second induction hypothesis on the reduction $\aterm
\arrr{\Rules} l''\asub''$, we find some substitution $\csub$ such
that $\aterm \arrr{\formrules_\emptyset(l'')} l''\csub \arr{\Rules}
r''\csub \arrr{\Rules} r''\asub'' \arrr{\Rules} l\asub$.  Note that
$\formrules_\emptyset(l'') \subseteq \Rules^+$, and that
$\arr{\Rules^+}$ defines the same relation as $\arr{\Rules}$.  Thus,
$\aterm \arrr{\Rules} l''\csub \arr{\Rules} r''\csub$; we can
apply the first induction hypothesis to find a suitable $\bsub$
such that $r''\csub \arrr{\formrules_X(l)} l\bsub$.

Suppose $l'' \arrz r'' \in \formrules_X(l)$.  Then by
(\ref{it:formrulesub}) and (\ref{it:symbolssub}), also
$\formrules_\emptyset(l'') \subseteq \formrules_X(l)$, so we have a
reduction $\aterm \arrr{\formrules_X(l)} l''\csub \arr{\formrules_X(
l)} r''\csub \arrr{\formrules_X(l)} l\bsub$, and we are done.
To see that this is indeed the case, first suppose that $\head(r')$
is a variable.  Whatever the form of $l$ is (since $l \notin
\domain(\asub)$), $\Symbols_X(l)$ contains a pair $\langle
\afun,\atype \rangle$, where $\atype$ is the type of $l$ (and also
the type of $\aterm,\ l''$ and $r''$), and $\afun \in \F \cup
\{ \mathit{ABS}, \mathit{VAR} \}$.  We immediately see that $l''
\arrz r'' \in \formrules_X(l)$.
Alternatively, if $\head(r')$ is a function symbol, then $\cterm$ is not a
$\beta$-redex; as the reduction $\cterm \arrr{\Rules} l\asub$
does not use other headmost steps, we have $\cterm
\arrr{\Rules,in} l\asub$, and $l =
\app{\afun(l_1,\ldots,l_k)}{l_{k+1}} \cdots l_m$, where $\afun$ is
also the head-symbol of $r'$.  But then $\langle \afun,\atype \rangle
\in \Symbols_X(l)$, so also $l'' \arrz r'' \in \formrules_X(l)$.
\end{proof}

Of course, Lemma~\ref{lem:formativereduction} and
Lemma~\ref{lem:weak} can be combined: the latter doesn't care
\emph{which} rules it is given, so if $\aterm \arrr{\Rules} l\asub$,
then there are $\bsub,\csub$ such that $\aterm \arrr{\formrules(l)}
l\bsub$, and $\ttag(\aterm) \arrr{\formrules(l)^\ttag}
l\csub^\ttag$ and each $\csub(\avar) \arrr{\formrules(l)}
\bsub(\avar) \arrr{\Rules} \asub(\avar)$.  In the following, we will
use this combination of lemmas to see that, for \llfe\ AFSs, a
dependency chain can be assumed to use tagged steps and formative
rules in the reduction $\bterm_i \arrr{\Rules,in} \aterm_{i+1}$.

\emph{Comment:}
The formative rules technique is also applicable to first-order
rewriting, in particular for many-sorted TRSs (or for innermost
rewriting where types may be added by~\cite{fuh:gie:par:sch:swi:11}).
However, we have not yet investigated whether the technique leads to
an improvement in current state-of-the-art termination provers.

\subsection{Revised Dependency Pair Results for Local AFSs}
\label{subsec:weakrevised}

We may now revise the results from Section~\ref{sec:basic} to take
locality into account.  As before, we assume that the rules in
$\Rules$ are all completed, and let $\DP$ be the dependency pairs of
$\Rules$.  Because of Lemmas~\ref{lem:weak}
and~\ref{lem:formativereduction} we can consider an alternative
definition of dependency chain.

\begin{definition}\label{def:taggedchain}
A \emph{tagged dependency chain} is a sequence
$\rijtje{(\rho_i,\aterm_i, \bterm_i)\ |\ i \in \N}$
with for all $i$:
\begin{enumerate}[(1)]
\item $\rho_i \in \DP \cup \{\mathtt{beta}\}$
\item
  if $\rho_i = l_i \dppijl p_i \in \DP$ then $\aterm_i =
  \subst{l_i}{\asub^\ttag}$ and $\bterm_i = \subst{\ttag(p_i)}{
  \asub^\ttag}$ for some substitution $\asub$
\item if $\rho_i = \mathtt{beta}$ then $\aterm_i =
  \ttag(\app{\app{(\abs{\avar}{\cterm})}{\dterm}}{\eterm_1} \cdots
  \eterm_k)$ and either
  \begin{enumerate}[(a)]
  \item
    $k > 0$ and $\bterm_i = \ttag(\app{\cterm[\avar:=\dterm]}{
    \eterm_1} \cdots \eterm_k)$, or
  \item
    $k = 0$ and there exists a term $\eterm$ such that
    $\cterm \suptermeq \eterm$ and $\avar \in \FV(\eterm)$ and
    $\bterm_i = \ttag(\up{\eterm}[\avar:=\dterm])$, but $\eterm
    \neq \avar$
  \end{enumerate}
\item $\bterm_i \arrr{\formrules(l_{i+1})^\ttag,in} \aterm_{i+1}$
\end{enumerate}
\end{definition}

\noindent
A tagged dependency chain is \emph{minimal} if
$\mathit{untag}(\cterm)$ is terminating under $\arr{\Rules}$ for all
strict subterms $\cterm$ of each $\bterm_i$ (where $\mathit{untag}(\ 
)$ removes the $-$ tags).

This definition is similar to the original definition of a dependency
chain, but uses tags for $\aterm_i$ and $\bterm_i$ and limits the
rules in the $\arrr{in}$ reduction to the formative rules of the
pattern which is created.
We obtain the following variation of
Theorem~\ref{thm:dependencychainll}:

\begin{theorem}\label{thm:dependencychainweak}
A \llfe\ AFS $\Rules$ is non-terminating if and only if it admits a
minimal tagged dependency chain.
\end{theorem}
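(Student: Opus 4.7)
The plan is to prove the two directions separately, reusing the general results of Section~\ref{sec:basic} for one direction and the two key technical lemmas of this section for the other.

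For the ($\Leftarrow$) direction, I will argue that untagging a tagged dependency chain yields an ordinary dependency chain in $\Rules$. Writing $\mathit{untag}$ for the map that replaces every $\afun^-$ by $\afun$, I observe that the head of each $\aterm_i$ and $\bterm_i$ is unaffected by tagging (tags live strictly below abstractions), so the dependency-pair and $\cbeta$ clauses of Definition~\ref{def:dependencychain} are preserved after untagging. The inner reduction $\bterm_i \arrr{\formrules(l_{i+1})^\ttag,in} \aterm_{i+1}$ becomes an $\arrr{\Rules,in}$ reduction because formative rules are a subset of $\Rules^+$ (whose additional rules act identically to $\Rules$ after contracting the extra $\beta$-redex), and the auxiliary tag-removal rules $\afun^-(\vec{\avar}) \arrz \afun(\vec{\avar})$ collapse to identities after untagging. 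Since an \llfe\ AFS is left-linear, Theorem~\ref{thm:dependencychainll} then yields non-termination.

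For the ($\Rightarrow$) direction, I start from the minimal dependency chain $\rijtje{(\rho_i,\aterm_i,\bterm_i)\mid i \in \N}$ granted by Theorem~\ref{thm:dependencychain}, and construct a minimal tagged chain inductively by adjusting the substitutions along the way. Concretely, I maintain the invariant that the tagged prefix up to index $i$ is in the form required by Definition~\ref{def:taggedchain}, where the underlying substitution $\asub_i'$ satisfies $\asub_i'(\avar) \arrr{\Rules} \asub_i(\avar)$ (so in particular all strict subterms of $\bterm_i'$ untag-reduce to strict subterms of $\bterm_i$, preserving minimality). The base case is obtained by tagging $\aterm_0$.

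For the inductive step, the head of $\bterm_i$ is either a marked symbol $\up{\bfun}$ or the head of a $\beta$-redex, so the reduction $\bterm_i \arrr{\Rules,in} \aterm_{i+1}$ lives strictly inside the immediate arguments. Each such argument reduces to the corresponding argument of $l_{i+1}\asub_{i+1}$, and, by locality, each argument of $l_{i+1}$ is a linear, fully extended, $\beta$-normal term (with no leading free variables). I apply Lemma~\ref{lem:formativereduction} position by position to restrict the rules used to $\formrules(l_{i+1})$, then glue the resulting per-argument substitutions using linearity of $l_{i+1}$; applying Lemma~\ref{lem:weak} with $\Rules := \formrules(l_{i+1})$ then lifts the reduction to the tagged world, producing the required step $\bterm_i' \arrr{\formrules(l_{i+1})^\ttag,in} l_{i+1}\csub^\ttag =: \aterm_{i+1}'$. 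The $\cbeta$ cases are handled directly: in case~(a) we simply tag the $\beta$-contraction (using Lemma~\ref{lem:tagsubstitute} to commute $\ttag$ with the substitution of $\dterm$ for $\avar$), and in case~(b) we tag the subterm-plus-substitution just as in the proof of Theorem~\ref{thm:dependencychain}.

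The main obstacle is the bookkeeping of substitutions, since each application of Lemmas~\ref{lem:weak} and~\ref{lem:formativereduction} replaces $\asub$ by a strictly larger substitution $\csub$ with $\csub(\avar) \arrr{\Rules} \asub(\avar)$. I will need to propagate this discrepancy coherently along the chain, in particular checking that the new $\aterm_{i+1}'$ is of the form $l_{i+1}\csub^\ttag$ required by Definition~\ref{def:taggedchain} and that the inductive minimality invariant (strict subterms of $\bterm_i'$ being terminating after untagging) continues to hold; both points follow from the reduction relation between the old and new substitutions combined with minimality of the starting chain.
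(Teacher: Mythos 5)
Your ($\Leftarrow$) direction is fine and coincides with the paper's: untag the chain and invoke Theorem~\ref{thm:dependencychainll}.

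The ($\Rightarrow$) direction, however, has a genuine gap in the minimality bookkeeping. You transform a given minimal chain step by step, replacing each substitution $\asub_i$ by a $\bsub_i$ with $\bsub_i(\avar) \arrr{\Rules} \asub_i(\avar)$, and you justify minimality of the new chain by saying that the strict subterms of the new $\bterm_i'$ ``untag-reduce to strict subterms of $\bterm_i$, preserving minimality''. This inference is invalid: termination is not closed under anti-reduction. A strict subterm of $\bterm_{i+1}' = \ttag(p_{i+1})\bsub^\ttag$ of the form $q\bsub$ with $p_{i+1} \supterm q$ reduces to the terminating term $q\asub_{i+1}$, and each $\bsub(\avar)$ is itself terminating, but $q\bsub$ may still be non-terminating (e.g.\ $q = \app{F}{\avar}$ with $\bsub(F)$ and $\bsub(\avar)$ terminating normal forms whose combination loops, while $\asub(\avar)$ is a reduct of $\bsub(\avar)$ for which the combination is harmless). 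So the transformed chain need not be minimal, and minimality is part of the statement and is needed downstream (subterm criterion, usable rules).

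The paper avoids this by not transforming an existing chain at all: it re-runs the minimal-non-terminating-term construction of Theorem~\ref{thm:dependencychain} from scratch, inserting Lemmas~\ref{lem:weak} and~\ref{lem:formativereduction} exactly at the point where a headmost rule step is taken. There the passage from $\asub$ to $\bsub$ only needs to preserve \emph{non-termination} of $\app{r\bsub}{\vec{\dterm}}$ (which does follow from $r\bsub\vec{\dterm} \arrr{\Rules} r\asub\vec{\dterm'}$, since non-termination \emph{is} closed under anti-reduction), and minimality of the next $\bterm_{i+1}$ is then re-established by freshly selecting a minimal non-terminating candidate subterm, rather than inherited from the previous chain. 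To repair your argument you would have to interleave the lemma applications with such a re-selection step, which is essentially the paper's proof.
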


\begin{proof}
If we remove the tags from a tagged dependency chain, we obtain a
normal dependency chain.  Since \llfe\ AFSs are left-linear,
Theorem~\ref{thm:dependencychainll} provides one direction.

For the other direction, we follow the proof of
Theorem~\ref{thm:dependencychain}; in each step $i$ we have a
minimal non-terminating, untagged term $\fterm_i$, and $\bterm_i =
\ttag(\up{\fterm_i})$.
If $\head(\fterm_i)$ is an abstraction we follow the proof of
Theorem~\ref{thm:dependencychain} to find $\fterm_{i+1}$; the
requirements of Definition~\ref{def:taggedchain} are satisfied for
$\aterm_{i+1} := \ttag(\fterm_i)$ and $\bterm_{i+1} := \ttag(\up{
\fterm_{i+1}})$.  Otherwise, let $\fterm_i = \app{\afun(\dterm_1,
\ldots,\dterm_m)}{\dterm_{m+1}} \cdots \dterm_n$.

Since $\fterm_i$ is MNT, an infinite $\arrr{\Rules}$-reduction
starting in $\fterm_i$ must eventually take a headmost step,
say $\fterm_i \arrr{\Rules,in} \app{\subst{l}{\asub}}{\dterm_{k+1}'}
\cdots \dterm_n'$ (with $k \geq m$), where $l \arrz r \in \Rules$ and
$\app{\subst{r}{\asub}}{\dterm_{k+1}'} \cdots \dterm_m'$ is still
non-terminating.  Write $l = \app{\afun(\tilde{l}_1,\ldots,
\tilde{l}_m)}{\tilde{l}_{m+1}} \cdots \tilde{l}_k$; by
left-linearity all $\tilde{l}_j$ have disjunct free variables.
Applying Lemmas~\ref{lem:weak} and~\ref{lem:formativereduction} on
each of the
$\dterm_j$ and $\tilde{l}_j$ (with the suitable part $\asub_j$ of
$\gamma$), such a redex can be reached with tagged steps and
formative rules:
$\ttag(\dterm_j) \arrr{\formrules(\tilde{l_j})^\ttag} \subst{\tilde{
l_j}}{\bsub_j^\ttag}$, and $\bsub_j \arrr{\Rules} \asub_j$.
Choosing $\bsub := \bsub_1 \cup \ldots \cup
\bsub_n$ we have that $\app{\subst{r}{\bsub}}{\dterm_{k+1}} \cdots
\dterm_n \arrr{\Rules} \app{\subst{r}{\asub}}{\dterm_{k+1}'} \cdots
\dterm_n'$ is still non-terminating.
Let $\fterm_i' := \app{\subst{l}{\bsub}}{\dterm_{k+1}} \cdots
\dterm_m$ and $\aterm_{i+1} := \app{\subst{l}{\bsub^\ttag}}{
\ttag(\dterm_{k+1})} \cdots \ttag(\dterm_m)$ and continue the
proof as before; in the resulting dependency pair $l_{i+1} \dppijl
p_{i+1}$ all $\tilde{l}_j$ are immediate subterms of $l_{i+1}$, so
$\formrules(\tilde{l}_j) \subseteq \formrules(l_{i+1} \dppijl
p_{i+1})$.  We have $\fterm_{i+1} = |p_{i+1}|\csub$ for some
substitution $\chi$, so $\bterm_{i+1} := \ttag(\fterm_{i+1}) =
\ttag(p_{i+1})\csub^\ttag$ as required, by
Lemma~\ref{lem:tagbasesubstitute}.
\end{proof}

\begin{example}
Consider once more the non-terminating system from
Example~\ref{ex:nonterm}
\[
\begin{array}{rclrclrcl}
\mathtt{f}(\nul) & \arrz & \mathtt{g}(\abs{\avar}{\mathtt{f}(\avar)},
  \mathtt{a}) \ \ \ \ &
\mathtt{g}(F, \mathtt{b}) & \arrz & \app{F}{\nul} \ \ \ \ &
\mathtt{a} & \arrz & \mathtt{b} \\
\end{array}
\]
Noting that $\Rules^\ttag$ consists of the rules
\[
\begin{array}{rclrcl}
\mathtt{f}(\nul) & \arrz & \mathtt{g}(\abs{\avar}{\mathtt{f}^-
  (\avar)}, \mathtt{a})\ \ \ \ \ \ \ \ &
\mathtt{g}(F, \mathtt{b}) & \arrz & \app{F}{\nul} \\
\mathtt{a} & \arrz & \mathtt{b} &
\mathtt{f}^-(\avar) & \arrz & \mathtt{f}(\avar) \\
\end{array}
\]
as well as some other rules
$\cfun^-(\vec{\avar}) \arrz \cfun(\vec{\avar})$,
we have the following tagged dependency chain:
\[
\begin{array}{lllll}
(& \up{\mathtt{f}}(\nul) \dppijl \up{\mathtt{g}}(\abs{\avar}{\mathtt{
  f}^-(\avar)},\mathtt{a}), & \up{\mathtt{f}}(\nul), &
  \up{\mathtt{g}}(\abs{\avar}{\mathtt{f}^-(\avar)},\mathtt{a}) & ) \\
(& \up{\mathtt{g}}(F,\mathtt{b}) \dppijl \app{F}{\nul}, &
  \mathtt{g}(\abs{\avar}{\mathtt{f}^-(\avar)},\mathtt{b}), &
  \app{(\abs{\avar}{\mathtt{f}^-(\avar)})}{\nul} & ) \\
(& \mathtt{beta}, & \app{(\abs{\avar}{\mathtt{f}^-(\avar)})}{\nul}, &
  \up{\mathtt{f}}(\nul) & )\\
(& \up{\mathtt{f}}(\nul) \dppijl \up{\mathtt{g}}(\abs{\avar}{\mathtt{
  f}^-(\avar)},\mathtt{a}), & \up{\mathtt{f}}(\nul), &
  \up{\mathtt{g}}(\abs{\avar}{\mathtt{f}^-(\avar)},\mathtt{a}) & ) \\
 & \ldots \\
\end{array}
\]
Here, the $\mathtt{beta}$ step uses case \ref{chain:subterm} with
$\eterm = \mathtt{f}(\avar)$.
\end{example}

\noindent
As in Section~\ref{sec:basic} we consider the dependency graph of
$\Rules$.
A set $\P \subseteq \DP$ is \emph{tagged-chain-free} if there is no
minimal tagged dependency chain using only dependency pairs from
$\P$, and $\mathtt{beta}$.  As before,
$\emptyset$ is tagged-chain-free, and
$\Rules$ is terminating if and only
if every SCC in a graph approximation is tagged-chain-free.
Thus, we can use reduction triples which orient the
parts of a \emph{tagged} dependency chain.  Importantly, this
affects the limited subterm property.
\begin{definition}[Tagged Subterm Property]
$\geqterm$ has the \emph{tagged subterm property} if the following
requirement is satisfied:
\emph{
for all variables $\avar$ and terms $\aterm,\bterm,\cterm$ such that
$\aterm \suptermeq \cterm \supterm \avar$, there is a substitution
$\asub$
such that
$\ttag(\app{(\abs{\avar}{\aterm})}{\bterm}) \geqterm
\ttag(\up{\cterm}[\avar:=\bterm]\asub)$.}
\end{definition}

As we will see shortly, the tagged subterm property is an improvement
over the limited subterm property because we do not have to take the
subterms of untagged functional terms $\afun(\vec{\aterm})$.
It is easy to adapt the proof of Theorem~\ref{thm:maintheorem}
to obtain the following result:

\begin{theorem}\label{thm:maintheoremweak}
A set $\P = \P_1 \uplus \P_2$ of dependency pairs is tagged-chain-free
if $\P_2$ is tagged-chain-free and there is a reduction triple
$(\geqterm,\geqterm_1,\gterm)$ such that:
\begin{iteMize}{$\bullet$}
\item $l \gterm \ttag(p)$ for all $l \dppijl p \in \P_1$;
\item $l \geqterm \ttag(p)$ for all $l \dppijl p \in \P_2$;
\item $l \geqterm_1 \ttag(r)$ for all $l \arrz r \in \formrules(\P)$;
\item $\afun^-(\vec{\avar}) \geqterm_1 \afun(\vec{\avar})$ for all
  $\afun^- \in \F^-$;
\item either $\P$ is non-collapsing or $\geqterm$ has the tagged
  subterm property.
\end{iteMize}
\end{theorem}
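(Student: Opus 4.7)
The plan is to mirror the proof of Theorem~\ref{thm:maintheorem} closely, replacing the limited subterm property by its tagged counterpart and tracking how tags and formative rules interact with the reduction triple. Assume for contradiction that $(\geqterm, \geqterm_1, \gterm)$ satisfies the four bullet points, yet $\P$ admits a minimal tagged dependency chain $\rijtje{(\rho_i, \aterm_i, \bterm_i) \mid i \in \N}$ over $\P \cup \{\cbeta\}$. Since $\P_2$ is tagged-chain-free, infinitely many $\rho_i$ lie in $\P_1$. I aim to extract an infinite $\geqterm$-descending sequence containing infinitely many strict $\gterm$-steps, contradicting well-foundedness of $\gterm$.

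First I would orient each $\aterm_i$ with $\bterm_i$. For $\rho_i = l_i \dppijl p_i \in \P_1$, Definition~\ref{def:taggedchain} gives $\aterm_i = \subst{l_i}{\asub^\ttag}$ and $\bterm_i = \subst{\ttag(p_i)}{\asub^\ttag}$ for some $\asub$, so stability of $\gterm$ applied to the hypothesis $l_i \gterm \ttag(p_i)$ yields $\aterm_i \gterm \bterm_i$; analogously $\rho_i \in \P_2$ gives $\aterm_i \geqterm \bterm_i$. For $\rho_i = \cbeta$ in sub-case (a), unfolding $\ttag$ across the outer application and abstraction exhibits a $\beta$-redex at the head; since $\geqterm_1$ contains $\cbeta$ and is monotonic, the $\beta$-step is $\geqterm_1$-oriented, and Lemma~\ref{lem:tagsubstitute} rewrites the $\beta$-reduct further down to $\bterm_i$ using only untagging rules, each oriented by the hypothesis $\afun^-(\vec{\avar}) \geqterm_1 \afun(\vec{\avar})$. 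Transitivity and monotonicity of $\geqterm_1$ combine these into $\aterm_i \geqterm_1 \bterm_i$.

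Next I would orient the chain step $\bterm_i \arrr{\formrules(l_{i+1})^\ttag, in} \aterm_{i+1}$. Every rule in $\formrules(l_{i+1})^\ttag$ is either of shape $l \arrz \ttag(r)$ with $l \arrz r \in \formrules(\P)$, weakly oriented by the third bullet, or an untagging rule oriented by the fourth. Since $\arr{in}$ operates below the head, stability and monotonicity of $\geqterm_1$ propagate each single rewrite into an $\geqterm_1$-comparison; hence $\bterm_i \geqterm_1 \aterm_{i+1}$. Using $\geqterm_1 \subseteq \geqterm$ and compatibility of the reduction triple, one then obtains for every $i$ either $\aterm_i \gterm \aterm_{i+1}$ (when $\rho_i \in \P_1$) or $\aterm_i \geqterm \aterm_{i+1}$.

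The main obstacle is the $\cbeta$ sub-case (b), handled exactly as in Theorem~\ref{thm:maintheorem}. If $\P$ is non-collapsing, every $\bterm_i$ arising from a $\DP$-step has a functional head; since $\arrr{in}$ preserves the head, $\rho_{i+1}$ cannot be $\cbeta$ in sub-case (b) once a $\DP$-step has occurred, and any initial prefix of pure $\cbeta$-steps may be discarded because $\geqterm_1$ already handles sub-case (a). If $\P$ is collapsing, the tagged subterm property supplies a substitution $\asub$ with $\aterm_i \geqterm \ttag(\up{\eterm}[\avar:=\dterm]\asub)$, which by Lemma~\ref{lem:tagbasesubstitute} equals $\subst{\bterm_i}{\asub^\ttag}$; applying $\asub^\ttag$ to the entire tail preserves its status as a minimal tagged dependency chain by stability of $\geqterm$, $\geqterm_1$, $\gterm$ and of the reduction in clause (4) of Definition~\ref{def:taggedchain}, so the argument iterates. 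The infinitely many $\P_1$-steps then contribute infinitely many $\gterm$-decreases to the resulting infinite $\geqterm$-sequence, contradicting well-foundedness of $\gterm$.
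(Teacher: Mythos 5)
Your proposal is correct and follows essentially the same route as the paper's proof: the paper likewise reduces everything to the argument of Theorem~\ref{thm:maintheorem}, orienting the $\cbeta$ sub-case (a) via $\geqterm_1 \supseteq \cbeta$ together with Lemma~\ref{lem:tagsubstitute} and the untagging rules, and sub-case (b) via the tagged subterm property together with Lemma~\ref{lem:tagbasesubstitute} and substitution of the tail of the chain. Your write-up is in fact somewhat more explicit than the paper's sketch (e.g.\ on orienting the $\arrr{\formrules(l_{i+1})^\ttag,in}$ steps), but introduces no new ideas and no gaps.
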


\begin{proof}
If the properties above are satisfied, then every minimal tagged
dependency chain over $\P$ leads to an infinite decreasing
$\gterm$-chain, contradicting well-foundedness of $\gterm$.  The
elements of this proof are straightforward, following the proof of
Theorem~\ref{thm:maintheorem}, except perhaps for the proof that
there is a
substitution $\bsub$ such that
$\aterm_i \geqterm \bterm_i\bsub$ when $\aterm_i =
\ttag(\app{\app{(\abs{\avar}{\cterm})}{\dterm}}{\vec{\eterm}})$ and
either $\bterm_i = \ttag(\up{\fterm}[\avar:=\dterm])$ (if $|\vec{
\eterm}| = 0$ and $\cterm \suptermeq \fterm \supterm \avar$), or
$\bterm_i = \ttag(\app{\subst{\cterm}{[\avar:=\dterm]}}{\vec{\eterm}})$.

The first case of this holds by the tagged subterm property:
$\ttag(\aterm_i) = \ttag(\app{(\abs{\avar}{\cterm})}{\dterm}) \geqterm
\ttag(\up{\fterm}[\avar:=\dterm]\asub)$ for some $\gamma$, and this
equals $\ttag(\up{\fterm}[\avar:=\dterm])\asub^\ttag$ by
Lemma~\ref{lem:tagbasesubstitute}; let $\bsub := \asub^\ttag$.

For the second case,
$\ttag(\aterm_i)
= \app{\app{(\abs{\avar}{\ttag_{\setop\avar\setcl}(\cterm)})}{
\ttag(\dterm)}}{\ttag(\vec{\eterm})} \geqterm
\app{\subst{\ttag_{\setop\avar\setcl}(\cterm)}{[\avar:=
\ttag(\dterm)]}}{\ttag(\vec{\eterm})}$ (since $\geqterm_1$
includes $\mathtt{beta}$), $\geqterm \app{\ttag(\subst{\cterm}{[\avar:=
\dterm]})}{\ttag(\vec{\eterm})} = \ttag(\bterm_i)$ by
Lemma~\ref{lem:tagsubstitute} and because always
$\afun^-(\vec{\avar}) \geqterm_1 \afun(\vec{\avar})$.
\end{proof}

Theorem~\ref{thm:maintheoremll} also has a counterpart: if a \llfe\ 
AFS is non-terminating, then there is a reduction triple which
satisfies the constraints of Theorem~\ref{thm:maintheoremweak} for
$\P = \P_1 = \DP$: if $(\geqterm,\geqterm_1,\gterm)$ is the reduction
triple from Theorem~\ref{thm:maintheoremll}, let $\aterm\ R'\ \bterm$
if $\mathit{untag}(\aterm)\ R\ \mathit{untag}(\bterm)$.  Then
$(\geqterm',\geqterm_1',\gterm')$ satisfies the required properties.

\emptyline
Theorem~\ref{thm:maintheoremweak} is comparable to Theorem
\ref{thm:maintheorem}, and as before, the
result is likely not immediately usable due to typing problems.
Moreover, it is not obvious that the tagged subterm property is
really weaker than the limited subterm property.
So to complete the work, we re-examine the results of
Section~\ref{subsec:typechange}.  To start, let us reconsider the
definition of $\supterm^!$.

\begin{definition}[Refinement of $\supterm^!$]\label{def:suptermS}
Let $S$ be a special
set of function symbols.  $\supterm^S$ is the relation on base-type
terms (and $\suptermeq^S$ its reflexive closure) generated by the
following clauses:
\begin{iteMize}{$\bullet$}
\item $\app{(\abs{\avar}{\aterm})}{\bterm_0} \cdots \bterm_n
  \supterm^S \cterm$ if $\app{\aterm[\avar:=\bterm_0]}{\bterm_1}
  \cdots \bterm_n \suptermeq^S \cterm$
\item $\app{\afun(\aterm_1,\ldots,\aterm_m)}{\bterm_1} \cdots
  \bterm_n \suptermeq^S \cterm$ if $\app{\aterm_i}{\vec{\c}}
  \suptermeq^S \cterm$ and $\afun \in S$\ \ \ \ 
  \small{$\longleftarrow$ \emph{here we differ from $\supterm^!$}}
\item $\app{\aterm}{\bterm_1} \cdots \bterm_n \suptermeq^S \cterm$ if
  $\app{\bterm_i}{\vec{\c}} \suptermeq^S \cterm$\ \ ($\aterm$ may
  have any form)
\end{iteMize}
\end{definition}

Note that our original definition of $\supterm^!$ is just a special
case of this definition; $\supterm^!$ can be described as
$\supterm^\F$.  For \llfe\ AFSs, we can limit ourselves to
$\supterm^{\F^-}$, shortly denoted $\supterm^-$.

In correspondence with Theorem~\ref{thm:typepreserve}, we
derive the following result:

\begin{theorem}\label{thm:weakfinish}
A set of dependency pairs $\P = \P_1 \uplus \P_2$ is tagged-chain-free
if $\P_2$ is tagged-chain-free and there is a reduction pair
$(\geqterm,\gterm)$ such that:
\begin{enumerate}[\em(1)]
\item $\overline{l} \gterm \overline{\ttag(p)}$ for all $l \dppijl
  p \in \P_1$;
\item $\overline{l} \geqterm \overline{\ttag(p)}$ for all $l \dppijl
  p \in \P_2$;
\item $l \geqterm \ttag(r)$ for all $l \arrz r \in \formrules(\P)$;
\item $\afun^-(\avar_1,\ldots,\avar_n) \geqterm \afun(\avar_1,\ldots,
  \avar_n)$ for all $\afun^- \in \F^-$;
\item \label{it:weak:collapse}
  if $\P$ is collapsing, then $(\geqterm, \gterm)$ respects
  $\supterm^-$, and $\afun^-(\vec{\avar}) \geqterm \up{\afun}(
  \vec{\avar})$ for all
  $\afun \in \Defineds$.
\end{enumerate}
\end{theorem}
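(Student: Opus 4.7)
The plan is to reduce Theorem~\ref{thm:weakfinish} to Theorem~\ref{thm:maintheoremweak} by applying the construction of Lemma~\ref{lem:triplepair} to the given reduction pair $(\geqterm,\gterm)$, obtaining a reduction triple $(\geq,\geq_1,>)$; the four numbered hypotheses of Theorem~\ref{thm:weakfinish}, together with the tagged subterm property where required, will then instantiate Theorem~\ref{thm:maintheoremweak} with this triple.

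The orientation conditions for $\P_1$ and $\P_2$ are essentially definitional. Since $\overline{l}$ is obtained by appending fresh variables and $\overline{\ttag(p)}$ by appending terms (for instance $\vec{\c}$) to reach base type, the premise $\overline{l} \gterm \overline{\ttag(p)}$ is exactly the definition of $l > \ttag(p)$; analogously for $\geq$ and $\P_2$. The rule condition becomes $l \geq_1 \ttag(r)$ because $l$ and $\ttag(r)$ have the same type, and the untagging rules $\afun^-(\vec{\avar}) \geq_1 \afun(\vec{\avar})$ are type-preserving for the same reason.

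The real work is to prove the tagged subterm property for $\geq$ in the collapsing case. Given $\aterm \suptermeq \cterm \supterm \avar$ and a term $\bterm$, I would append fresh variables $\vec{\dvar}$ of suitable types to the left-hand side $\ttag(\app{(\abs{\avar}{\aterm})}{\bterm})$ to obtain a base-type term, apply clause~(1) of Definition~\ref{def:suptermS} to reduce the head $\beta$-redex, and then descend from $\ttag_{\{\avar\}}(\aterm)[\avar:=\ttag(\bterm)]$ toward an instantiated copy of $\cterm$. The key observation is that because $\avar$ is free in $\cterm$ and $\aterm \suptermeq \cterm$, every function symbol strictly above $\cterm$ inside $\aterm$ has $\avar$ among the free variables of its argument-block, and is therefore tagged in $\ttag_{\{\avar\}}(\aterm)$. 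Hence clause~(2) of $\supterm^-$ fires at each such symbol; clause~(3) handles intermediate applications; and clause~(1) handles any further abstractions encountered, consuming the fresh $\vec{\dvar}$ and the $\vec{\c}$-arguments that $\supterm^-$ introduces, and thereby producing the substitution $\asub$ on the variables of $\FV(\cterm) \setminus (\FV(\aterm) \cup \{\avar\})$.

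The term obtained at the end of this descent has the shape $\ttag_W(\cterm)[\avar:=\ttag(\bterm)]\asub$ for some $W \supseteq \{\avar\}$, whereas the target is $\ttag(\up{\cterm}[\avar:=\bterm]\asub)$; the two differ only in the tagging of certain function symbols. Using monotonicity of $\geqterm$, the hypothesis $\afun^-(\vec{\avar}) \geqterm \afun(\vec{\avar})$ at interior positions, and $\afun^-(\vec{\avar}) \geqterm \up{\afun}(\vec{\avar})$ at the head when $\head(\cterm)$ is a defined symbol, each mismatched tagged symbol can be rewritten into its required form. The main obstacle will be the bookkeeping for this descent: proving, by induction on $\aterm \supterm \cterm$, that the accumulated substitution $\asub$ and the inserted $\vec{\c}$-arguments consistently reproduce $\ttag(\up{\cterm}[\avar:=\bterm]\asub)$, and that every tag discrepancy that arises can be discharged by the provided untagging hypotheses, while respecting the fact that $\supterm^-$ must be used on base-type terms throughout.
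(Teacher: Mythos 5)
Your proposal follows essentially the same route as the paper's proof: pass to the generated reduction triple of Lemma~\ref{lem:triplepair}, discharge the orientation conditions definitionally, and establish the tagged subterm property by a descent (the paper phrases it as an induction ordered by $\supterm^-$) that exploits exactly your key observation — every symbol above $\cterm$ whose arguments contain $\avar$ free is tagged, so clause (2) of $\supterm^-$ applies, with the hypotheses $\afun^-(\vec{\avar}) \geqterm \afun(\vec{\avar})$ and $\afun^-(\vec{\avar}) \geqterm \up{\afun}(\vec{\avar})$ discharging the tag mismatches at the end. The bookkeeping you flag as the main obstacle is handled in the paper via Lemma~\ref{lem:tagsubstitute} and the accumulation of $\asub$ through the abstraction case, but the structure is the same.
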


\begin{proof}[Proof of Theorem~\ref{thm:weakfinish}]
Let $(\geqterm,\gterm)$ be a reduction pair satisfying the requirements
in the Theorem, and let $(\geq,\geq_1,>)$ be the reduction triple
generated by $(\geqterm,\gterm)$ as defined in
Section~\ref{subsec:typechange}.  This triple clearly satisfies the
first four requirements of Theorem~\ref{thm:maintheoremweak}.  For
the last one, let $\P$ be collapsing; we must see that $\geq$
has the tagged subterm property.

So let $\avar,\aterm,\bterm,\cterm$ be given such that $\aterm
\suptermeq \cterm \supterm \avar$.  We must see that
$\app{\app{(\abs{\avar}{\ttag_{\setop\avar\setcl}(\aterm)})}{\ttag(
\bterm)}}{\vec{\bvar}}\ (\gterm \cup \geqterm)^*\ 
\app{\ttag(\up{\cterm}[\avar:=\bterm]\asub)}{
\vec{\dterm}}$ for some substitution $\asub$ with $\avar \notin
\domain(\asub)$ and some terms $\vec{\dterm}$.
Since $\geqterm$ contains $\mathtt{beta}$ and
$\ttag_{\setop\avar\setcl}(\app{\aterm}{\vec{\bvar}}) =
\app{\ttag_{\setop\avar\setcl}(\aterm)}{\vec{\bvar}}$ it
suffices if we can prove that for all base-type terms $\fterm$ with
$\fterm \suptermeq \cterm \supterm \avar$ we have:
$\subst{\ttag_{\setop\avar\setcl}(\fterm)}{[\avar:=\ttag(\bterm)]}
\ (\gterm \cup \geqterm)^*\ 
\app{\ttag(\up{\cterm}[\avar:=\bterm]\asub)}{\vec{\dterm}}$
for some $\asub,\vec{\dterm}$.  This gives what we need by
choosing $\fterm = \app{\aterm}{\vec{\bvar}}$.
We prove this statement by induction on $\fterm$, ordered with
$\supterm^-$ (it is easy to see that this relation is well-founded).

Note (**): $\ttag_{\setop\avar\setcl}(\eterm) \geqterm \ttag(\eterm)$
by Lemma~\ref{lem:dropvars}
and because always $\bfun^-(\vec{\cvar}) \geqterm \bfun(\vec{\cvar})$.

For the base case, let $\fterm = \app{\cterm}{\dterm_1} \cdots
\dterm_n$ for some terms $\vec{\dterm}$.  Then
$\subst{\ttag_{\setop\avar\setcl}(\fterm)}{[\avar:=\ttag(\bterm)]} =
\app{\subst{\ttag_{\setop\avar\setcl}(\cterm)}{[\avar:=\ttag(\bterm
)]}}{\dterm_1'} \cdots \dterm_n'$, where each $\dterm_i' =
\subst{\ttag_{\setop\avar\setcl}(\dterm_i)}{[\avar:=\ttag(\bterm)]}$.
Since $\cterm \supterm \avar$ we know that $\ttag_{\setop\avar\setcl}(
\cterm) \geqterm \ttag(\up{\cterm})$: either $\cterm$ does not have the
form $\afun(\cterm_1,\ldots,\cterm_m)$ with $\afun \in \Defineds$, in
which case $\ttag_{\setop\avar\setcl}(\cterm) \geqterm \ttag(\cterm)
= \ttag(\up{\cterm})$ by (**), or $\cterm$ does have this form and
$\ttag_{\setop\avar\setcl}(\cterm) =
\afun^-(\ttag_{\setop\avar\setcl}(\cterm_1),\ldots,
\ttag_{\setop\avar\setcl}(\cterm_m)) \geqterm
\afun^-(\ttag(\cterm_1),\ldots,\ttag(\cterm_m)) \geqterm
\up{\afun}(\ttag(\cterm_1),\ldots,\ttag(\cterm_m)) =
\ttag(\up{\cterm})$ by assumption ($\P$ is collapsing,
so both properties from (\ref{it:weak:collapse}) hold).
Thus,
$\ttag_{\setop\avar\setcl}(\fterm)[\avar:=\ttag(\bterm)] \geqterm
\app{\subst{\ttag(\up{\cterm})}{[\avar:=\ttag(\bterm)]}}{\vec{
\dterm'}} = \app{\ttag(\up{\cterm}[\avar:=\bterm])}{\vec{\dterm'}}$.

Now to consider each of the induction cases:
\begin{enumerate}[(1)]
\item \label{ic:weakf:lam}
  $\fterm = \app{\app{(\abs{\bvar}{\aterm})}{\eterm}}{\dterm_1}
  \cdots \dterm_n$ and $\aterm \suptermeq \cterm$;
\item \label{ic:weakf:fun}
  $\fterm = \app{\afun(\eterm_1,\ldots,\eterm_m)}{\dterm_1} \cdots
  \dterm_n$ and some $\eterm_i \suptermeq \cterm$.
\item \label{ic:weakf:app}
  $\fterm = \app{\fterm'}{\dterm_1} \cdots \dterm_n$ and one of the
  $\dterm_i \suptermeq \cterm$;
\end{enumerate}
These are the only forms $\fterm$ can have.
In very general terms, each of these cases is easy because $\gterm
\cup \geqterm$ includes $\supterm^-$ (in case~\ref{ic:weakf:fun} we
use that $\FV(\afun(\vec{\eterm}))$ contains $\avar$, so the tagging
function replaces $\afun$ by $\afun^-$).  Precise derivations are
given below.

In case~\ref{ic:weakf:lam},
$\ttag_{\setop\avar\setcl}(\fterm)[\avar:=\ttag(\bterm)] \geqterm
(\app{\subst{\ttag_{\setop\avar,\bvar\setcl}(\aterm)}{[\bvar:=
\ttag_{\setop\avar\setcl}(\eterm)]}}{\ttag_{\setop\avar\setcl}(
\vec{\dterm})})[\avar:=\ttag(\bterm)]$ (since $\geqterm$ contains
$\mathtt{beta}$), $\geqterm (\app{\ttag_{\setop\avar\setcl}(\aterm
[\bvar:=\eterm])}{\ttag_{\setop\avar\setcl}(\vec{\dterm})})[\avar:=
\ttag(\bterm)]$ by Lemma~\ref{lem:tagsubstitute}, and this is exactly
$\subst{\ttag_{\setop\avar\setcl}(\app{\aterm[\bvar:=
\eterm]}{\vec{\dterm}})}{[\avar:=\ttag(\bterm)]}$.
Since $\aterm[\bvar:=\eterm] \suptermeq \cterm[\bvar:=\eterm]$ we
can use the induction hypothesis to find $\asub',\vec{\dterm'}$ such
that this term $(\gterm \cup \geqterm)^*\ 
\app{\ttag(\up{\cterm[\bvar:=\eterm][\avar:=\bterm]\asub'})}{
\vec{\dterm'}} =
\app{\ttag(\up{\cterm[\avar:=\bterm][\bvar:=\eterm[\avar:=\bterm]]
\asub'})}{
\vec{\dterm'}}$, which proves the statement for $\asub := [\bvar:=
\eterm[\avar:=\bterm]]\asub'$.

In case~\ref{ic:weakf:fun}, $\ttag_{\setop\avar\setcl}(\fterm) =
\app{\afun^-(\ttag_{\setop\avar\setcl}(\vec{\eterm}))}{\ttag_{
\setop\avar\setcl}(\vec{\dterm})}$ because $\afun(\vec{\eterm})
\supterm  \avar$, and this $(\gterm \cup \geqterm)^*\ 
\app{\ttag_{\setop\avar\setcl}(\eterm_i)}{\vec{\c}} =
\ttag_{\setop\avar\setcl}(\app{\eterm_i}{\vec{\c}})$ because
$(\gterm,\geqterm)$ respects $\supterm^-$.  We complete
by induction.

Finally, case~\ref{ic:weakf:app}.
$\ttag_{\setop\avar\setcl}(\fterm) = \app{\ttag_{\setop\avar
\setcl}(\fterm')}{\ttag_{\setop\avar\setcl}(\dterm_1)} \cdots
\ttag_{\setop\avar\setcl}(\dterm_n)\ (\gterm \cup \geqterm)^*\ 
\app{\ttag_{\setop\avar\setcl}(\dterm_i)}{\vec{\c}} =
\ttag_{\setop\avar\setcl}(\app{\dterm_i}{\vec{\c}})$, which by the
induction hypothesis $(\gterm \cup \geqterm)^*\ 
\app{\ttag(\subst{\subst{\up{\cterm}}{\asub}}{[\avar:=
\bterm]})}{\vec{\dterm}}$ as required.
\end{proof}

\begin{example}\label{ex:weaktwicefinish}
The dependency graph of $\twice$ has only one SCC, as given in
Example~\ref{ex:twicecycles}, and whose formative rules we calculated
in Example~\ref{ex:urtwice}.  Therefore, by
Theorems~\ref{thm:maintheoremweak} and~\ref{thm:weakfinish},
$\twice$ is terminating if there is a reduction pair
$(\geqterm,\gterm)$ which respects $\supterm^-$, and orients the
following constraints:
\[
\begin{array}{rclrcl}
\up{\I}(\suc(n)) & \gterm & \twice(\abs{x}{\I^-(x)}) \cdot n &
\app{\up{\twice}(F)}{x} & \gterm & F \cdot (F \cdot \c_\nat) \\
\up{\I}(\suc(n)) & \gterm & \up{\twice}(\abs{x}{\I^-(x)}) \cdot \c_\nat\ &
\up{\twice}(F) \cdot x & \gterm & F \cdot \c_\nat \\
\twice(F) \cdot m & \gterm & F \cdot (F \cdot  m) &
\twice(F) \cdot m & \gterm & F \cdot m \\
\I(\suc(n)) & \geqterm & \suc(\twice(\abs{x}{\I^-(x)}) \cdot n) &
\twice(F) \cdot m & \geqterm & F \cdot (F \cdot m) \\
\afun^-(\vec{\avar}) & \geqterm & \afun(\vec{\avar})\ (\forall \afun
  \in \F) &
\afun^-(n) & \geqterm & \up{\afun}(n)\ (\forall \afun \in \Defineds) \\
\end{array}
\]
The top six are requirements for orienting dependency pairs, the
next two are the formative rules of this SCC, and the final two are
required to deal with the marked symbols.
\end{example}

Theorem~\ref{thm:weakfinish} is a real improvement over
Theorem~\ref{thm:typepreserve} because (A) we only need to orient
the formative rules of a set of dependency pairs, (B) the requirement
that $\supterm^-$ is included in $\gterm \cup \geqterm$ is
significantly weaker than the requirement for $\supterm^!$ to be
included, and (C) the requirement that $\afun(\vec{\avar}) \geqterm
\up{\afun}(\vec{\avar})$ was replaced by the requirement that
$\afun^-(\vec{\avar}) \geqterm \up{\afun}(\vec{\avar})$, which removes
the direct relationship between a defined symbol and its marked
version.  In the next sections we will see how we can use this
increased strength.

\section{Finding a Reduction Pair}\label{sec:redpair}

\summary{In this section, we will see two different ways of finding a
reduction pair to solve the constraints generated by either the basic
or local dependency pair approach.  First, we consider how
interpretations in a so-called \emph{weakly monotonic algebra} can be
used in the dependency pair setting.  Second, we show how to use
\emph{argument functions} to alter an existing reduction pair such as
the higher-order recursive path ordering.}

Whether we use Theorem~\ref{thm:typepreserve} or
Theorem~\ref{thm:weakfinish}, we have:
\begin{iteMize}{$\bullet$}
\item a set $\Sigma$ of function symbols: in the basic case,
  $\Sigma = \up{\F}_c$, in the local case $\Sigma = \up{\F}_c \cup
  \{ \afun^- : \atype \mid \afun : \atype \in \F \}$
\item a set $S$ of ``protected'' funtion symbols: in the basic
  case, $S = \F$, in the local case $S = \{ \afun^- : \atype \mid
  \afun : \atype \in \F \}$
\item a set $A$ of constraints of the form $l \geqorgterm p$ and a
  set $B$ of constraints $l \geqterm r$: in the basic case these are
  given by the dependency pairs and the rules, in the local case the
  right-hand side is adapted with $\ttag$ and $B$ contains only the
  formative rules.
\end{iteMize}
In both cases, we must find a reduction pair $(\geqterm,\gterm)$ such
that $l \gterm p$ for at least one of the constraints in $A$,\ 
$l \geqterm p$ for the remainder of them, and $l \geqterm r$ for the
constraints in $B$.  Moreover, the reduction pair may have to respect
$\supterm^S$, which is definitely the case if:
\begin{iteMize}{$\bullet$}
\item $\app{\aterm}{\bterm_1} \cdots \bterm_n \geqterm \app{\bterm_i}{
  \vec{\c}}$ if both sides have base type;
\item $\app{\afun(\aterm_1,\ldots,\aterm_m)}{\vec{\bterm}} \geqterm
  \app{\aterm_i}{\vec{\c}}$ if $\afun \in S$ and both sides have base
  type
\end{iteMize}
We will consider two ways of finding a reduction pair which uses the
possibilities created by the dependency pair approach.  First, we
shall consider \emph{weakly monotonic algebras}, where we natively
have a quasi-ordering $\geqterm$ which is not just the reflexive
closure of $\gterm$, and which is not a simplification ordering.
Second, we show how an existing reduction pair can be modified with
\emph{argument functions}, a generalisation of the notion of argument
filterings.

\subsection{Weakly Monotonic Algebras}\label{sec:monalg}

A semantic method to prove termination of first-order term rewriting
is to interpret terms in a well-founded algebra, 
such that whenever $\aterm \arrz \bterm$, 
their interpretations in the algebra decrease: $\algint{\aterm} >
\algint{\bterm}$.
Such an algebra is called a \emph{termination model}
if $\algint{l} > \algint{r}$ for every rewrite rule $l \arrz r$,
and some additional properties
guarantee that this implies $\algint{C[l\asub]} > \algint{C[r\asub]}$
for all contexts $C$ and substitutions $\asub$.
A first-order TRS is terminating if and only if 
it has a termination model~\cite{hue:opp:80:1,zan:94:1}.
In his PhD thesis~\cite{pol:96:1}, van de Pol (extending on a joint
paper with Schwichtenberg~\cite{pol:sch:95:1}), generalises this
approach to HRSs, with higher-order rewriting modulo
$\alpha\beta\eta$, and shows that a HRS is terminating if it has a
termination model; the converse does not hold.

Here we consider interpretations of AFS terms in a weakly monotonic
algebra, and use these
to solve dependency pair
constraints.  Since $>$ does not have to be monotonic when using
dependency pairs, the theory of \cite{pol:96:1} can be significantly
simplified.  

\paragraaf{Type Interpretation}
In first-order algebra interpretations, all terms are mapped to an
element of some well-founded set $\basealgebra$.  In the higher-order
setting it turns out to be impractical to map all terms to the same
set.  Rather, terms of a type $\atype \typepijl \btype$ are mapped to
functions.

As a basis,
let $\basealgebra = (\basealgebraset, \vee,
0, >)$\footnote{$\basealgebra$ might be a well-ordered monoid or a
join-semilattice, but both of these have requirements on $\vee$ which
we will not need; with our definitions, $\vee$ may be a supremum
operator, but also for instance addition.}, where $\basealgebraset$
is a set,
$\vee$ a binary operator on $\basealgebraset$, $0$ an element of
$\basealgebraset$ and $>$ a well-founded partial order on
$\basealgebraset$ (with reflexive closure $\geq$), such that:
\begin{iteMize}{$\bullet$}
\item $0$ is a minimum element, so $a \geq 0$ for all $a \in
  \basealgebraset$;
\item $x \vee y \geq x$ and $x \vee y \geq y$ for all $x,y \in \basealgebraset$
\item $x \vee 0 = x$ for all $x,y \in \basealgebraset$
\end{iteMize}

\medskip \noindent
To each type $\atype$ we associate a set $\WM_\atype$ of
\emph{weakly monotonic functionals} and two relations $\gwm^\atype$
and $\geqwm^\atype$, defined inductively as follows.

For a base type $\abasetype$:  
\begin{iteMize}{$\bullet$}
\item 
  $\WM_\abasetype = \basealgebraset$,
\item 
  $\gwm^\abasetype \mathord{=} >$,
  and $\geqwm^\abasetype \mathord{=} \geq$
\end{iteMize}

For a functional type $\atype \ftypepijl \btype$:
\begin{iteMize}{$\bullet$}
\item
$\WM_{\atype \ftypepijl \btype}$ 
consists of the functions from 
  $\WM_\atype$ to $\WM_\btype$,
  such that $\geqwm$ is preserved
  (that is, if $x \geqwm^\atype y$ then
  $f(x) \geqwm^\btype f(y)$),
\item  
  $f \gwm^{\atype \ftypepijl \btype} g$ iff $f(x) \gwm g(x)$ for all
  $x \in \WM_\atype$,
\item
  $f \geqwm^{\atype \ftypepijl \btype} g$ iff $f(x) \geqwm g(x)$ for
  all $x \in \WM_\atype$. 
\end{iteMize}

\medskip \noindent
Thus, $\WM_{\atype \ftypepijl \btype}$ is a subset of the function
space $\WM_\atype \rightarrow \WM_\btype$, consisting of 
functions which preserve $\geqwm$.
$\gwm^\atype$ and $\geqwm^\atype$ are an order and
quasi-order respectively for all types $\atype$, and they are
compatible.
We
usually omit the type $\atype$ in the notation, and 
write just $\gwm$ and $\geqwm$.
If either $x \gwm y$ or $x = y$ then 
$x\geqwm y$, but the converse implication does not hold.

This definition differs from the one in~\cite{pol:96:1} in that we use
only one set $\basealgebraset$ rather than different sets
$\basealgebraset_\abasetype$ for all base types $\abasetype$; this is
done because we must occasionally compare terms of different types.
The original definition also does not use an operator $\vee$ (but
does use a $\oplus$, which satisfies the requirements); this we added
because we work with AFSs rather than HRS, and it is convenient to
have a ``maximum'' function to interpret application.

\paragraaf{Term Interpretation} For some valuable background, let us
first consider the most relevant definitions and results
from~\cite{pol:96:1}.  In the HRS formalism considered there, function
symbols do not have an arity; they come equipped with a type, rather
than a type declaration.  An \emph{interpretation function}
$\constvaluation$ on the signature is used to associate to each
closed term a weakly monotonic functional.
Let $\aterm$ be a $\lambda$-term, $\constvaluation_\afun$ an element
of $\WM_\atype$ for all $\afun : \atype \in \F$, and $\varvaluation$
a \emph{valuation} which assigns to all variables $\avar : \atype \in
\FV(\aterm)$ an element of $\WM_\atype$.  Then $\lamalgintc{\aterm}$
is defined by the following clauses:
\[
\begin{array}{llll}
\lamalgintc{\afun} & = & \constvaluation_\afun & \mathrm{if}\ \afun
  : \atype \in \F \\
\lamalgintc{\avar} & = & \varvaluation(\avar) &
  \mathrm{if}\ \avar \in \setvar \\
\lamalgintc{\app{\aterm}{\bterm}} & = &
  \lamalgintc{\aterm}(\lamalgintc{\bterm}) \\
\lamalgintc{\abs{\avar}{\aterm}} & = & \fatlambda n.\lamalgint{\aterm
  }_{\constvaluation,\varvaluation \cup \{\avar \mapsto n \}} &
  \mathrm{if}\ \avar \notin \domain(\varvaluation)\ \ 
  \text{(always applicable with $\alpha$-conversion)} \\
\end{array}
\]
Here, $\fatlambda$ denotes function construction: $\fatlambda x_1
\ldots x_n.P(x_1,\ldots,x_n)$ is the function which takes $n$
arguments $x_1,\ldots,x_n$ and returns $P(x_1,\ldots,x_n)$.

\begin{lemma}\label{lem:algintfacts}
Some facts on these interpretations: 
\begin{enumerate}[\em(1)]
\item \emph{(Substitution Lemma)}
  \label{lem:algintfacts:substitution}
  Given a substitution $\asub = [\avar_1:=\aterm_1,\ldots,\avar_n:=
  \aterm_n]$ and a valuation $\varvaluation$ whose domain does not
  include the $\avar_i$: $\lamalgintc{\aterm\asub} =
  \lamalgint{\aterm}_{\constvaluation,\varvaluation \circ \asub}$.
  Here, $\varvaluation \circ \asub$ is the valuation $\varvaluation
  \cup \{ \avar_1 \mapsto \lamalgintc{\aterm_1}, \ldots,\avar_n
  \mapsto \lamalgintc{\aterm_n} \}$.
\item\label{lem:algintfacts:interprete}
  If $\aterm : \atype$ is a term, then $\lamalgintc{\aterm} \in
  \WM_\atype$ for all valuations $\varvaluation$.
\end{enumerate}
\end{lemma}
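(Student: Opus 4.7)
\emph{Plan.} Both parts are proved by structural induction on $\aterm$; however, for part~(\ref{lem:algintfacts:interprete}) the obvious induction hypothesis is too weak to handle abstractions, so I would strengthen it to a joint statement and prove both items together.

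For the Substitution Lemma~(\ref{lem:algintfacts:substitution}), I would do a straightforward induction on $\aterm$. If $\aterm = \afun \in \F$, both sides equal $\constvaluation_\afun$. If $\aterm = \avar_i \in \domain(\asub)$, then $\aterm\asub = \aterm_i$, so the left-hand side is $\lamalgintc{\aterm_i}$, while the right-hand side is $(\varvaluation \circ \asub)(\avar_i) = \lamalgintc{\aterm_i}$ by the very definition of $\varvaluation \circ \asub$. If $\aterm$ is a variable not in $\domain(\asub)$, both sides reduce to $\varvaluation(\aterm)$. Applications are immediate from the induction hypothesis. For abstractions $\aterm = \abs{\bvar}{\aterm'}$, I would first $\alpha$-rename $\bvar$ so that it lies neither in $\domain(\asub)$ nor in any $\FV(\asub(\avar_i))$ nor in $\domain(\varvaluation)$; then for every $n$ the valuation $(\varvaluation \cup \{\bvar \mapsto n\}) \circ \asub$ coincides with $(\varvaluation \circ \asub) \cup \{\bvar \mapsto n\}$, and the induction hypothesis applied to $\aterm'$ gives equality of the two $\fatlambda$-terms pointwise, hence as weakly monotonic functionals.

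For part~(\ref{lem:algintfacts:interprete}), the obstacle is the abstraction case: $\fatlambda n.\lamalgint{\aterm}_{\constvaluation,\varvaluation \cup \{\avar \mapsto n\}}$ must actually lie in $\WM_{\atype \ftypepijl \btype}$, i.e.\ preserve $\geqwm$ in~$n$. To get this, I would prove by one joint induction on $\aterm$ the stronger statement: (i) $\lamalgintc{\aterm} \in \WM_\atype$, and (ii) if $\varvaluation_1,\varvaluation_2$ are valuations with $\varvaluation_1(\bvar) \geqwm \varvaluation_2(\bvar)$ for every $\bvar \in \FV(\aterm)$, then $\lamalgint{\aterm}_{\constvaluation,\varvaluation_1} \geqwm \lamalgint{\aterm}_{\constvaluation,\varvaluation_2}$. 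For constants and variables (i) is clear; (ii) for a variable is the hypothesis on valuations, and for a constant it is reflexivity. For $\app{\aterm}{\bterm}$ of type $\btype$, (ii) follows from the induction hypothesis on $\aterm$ (monotonicity in $\varvaluation$ plus the fact that functionals in $\WM_{\atype \ftypepijl \btype}$ preserve $\geqwm$ in their argument) applied to the pair $\lamalgint{\bterm}_{\constvaluation,\varvaluation_1} \geqwm \lamalgint{\bterm}_{\constvaluation,\varvaluation_2}$; (i) follows similarly. For $\abs{\avar}{\aterm'}$, part~(ii) of the induction hypothesis on $\aterm'$ applied to $\varvaluation \cup \{\avar \mapsto n_1\}$ and $\varvaluation \cup \{\avar \mapsto n_2\}$ (whenever $n_1 \geqwm n_2$) shows that the constructed $\fatlambda$-function preserves $\geqwm$, giving (i); and extending the pointwise comparison over $\avar$ while varying $\varvaluation$ analogously yields (ii).

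The hard part is really just setting up the strengthened invariant in (ii); once it is in place, every case is mechanical, and it immediately resolves the monotonicity obligation inherent in the definition of $\WM_{\atype \ftypepijl \btype}$. I expect no surprises beyond bookkeeping for $\alpha$-conversion in the substitution lemma and for the domain of valuations when shrinking to $\FV(\aterm)$ in~(ii).
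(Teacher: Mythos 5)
The paper does not prove this lemma at all: it is quoted as a known fact from van de Pol's thesis \cite{pol:96:1}, from which the whole $\lamalgint{\cdot}$ machinery is imported. So there is no in-paper proof to compare against; what matters is whether your reconstruction is sound, and it is. Your key move --- strengthening the induction hypothesis for part~(2) to the joint statement that $\lamalgintc{\aterm}$ lies in $\WM_\atype$ \emph{and} is weakly monotonic as a function of the valuation $\varvaluation$ --- is exactly what is needed, since without it the abstraction case cannot establish that $\fatlambda n.\lamalgint{\aterm}_{\constvaluation,\varvaluation\cup\{\avar\mapsto n\}}$ preserves $\geqwm$ in $n$; this is also how the argument is run in the cited source. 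The application case of~(ii) correctly uses both halves of the definition of $\WM_{\atype\ftypepijl\btype}$ (preservation of $\geqwm$ in the argument, and the pointwise definition of $\geqwm$ on function types), and the substitution lemma is routine once you $\alpha$-rename and invoke the (unstated but easy) fact that $\lamalgintc{\aterm}$ depends only on $\varvaluation$ restricted to $\FV(\aterm)$. You flag both pieces of bookkeeping yourself, so I see no gap.
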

Lemma~\ref{lem:algintfacts}(\ref{lem:algintfacts:interprete})
provides a convenient way to find weakly monotonic functionals.
For example, 
For $n \in \basealgebraset$ and $\atype = \btype_1 \typepijl \ldots
\typepijl \btype_k \typepijl \abasetype$ the \emph{constant function}
$n_\atype = \fatlambda x_1 \ldots x_k.n \in \WM_\atype$ (as $n_\atype
= [\abs{\avar_1 \ldots \avar_n}{\bvar}]_{\{\bvar \mapsto n\}}$).
Similarly, writing $f(\vec{0})$ for $f(0_{\btype_1},\ldots,0_{
\btype_n})$, the function $\fatlambda f.f(\vec{0})$ is in
$\WM_{\atype \ftypepijl \abasetype}$.
A weakly monotonic functional not given in \cite{pol:96:1}, but
which will be needed to deal with term application, is $\max_\atype$,
defined as follows: \\
$\begin{array}{lrcll}
\ \ \ \ \ &
\max_\abasetype(x,y) & = & x \vee y & (\mathrm{for}\ x,y \in \basealgebraset)\\
\ \ \ \ \ &
\max_{\atype \ftypepijl \btype}(f,y) & = &
  \fatlambda x.\max_{\btype}(f(x),y) & (\mathrm{for}\ f \in
  \WM_{\atype \ftypepijl \btype}, y \in \basealgebra) \\
\end{array}$ \\
Using induction on the type of the first argument (and once more
Lemma~\ref{lem:algintfacts}(\ref{lem:algintfacts:interprete})),
it is easy to see that
$\max_{\atype} \in \WM_{\atype \ftypepijl \abasetype
\ftypepijl \atype}$.

\medskip
In HRSs, terms are $\alpha \beta \eta$-equivalence classes, so
$\lamalgintc{\app{(\abs{\avar}{\aterm})}{\bterm}} =
\lamalgintc{\aterm}$ if $\avar \notin \FV(\aterm)$, which is not very
convenient in the present setting of AFSs, where terms are considered
modulo $\alpha$ only.
To adapt the results, we can think of application as a function
symbol.

\begin{definition}\label{def:interpretation}
A \emph{signature interpretation} associates a weakly monotonic
functional $\constvaluation_\afun \in \WM_{\atype_1 \ftypepijl \ldots
\ftypepijl \atype_n \ftypepijl \btype}$ to every function symbol
$\afun : [\atype_1 \times \ldots \times \atype_n] \decpijl \btype$
of the signature.
The pair $(\basealgebra, \constvaluation)$ is a \emph{weakly montonic
algebra}.
A \emph{valuation} is a function $\varvaluation$ with a
finite domain of variables, such that $\varvaluation(\avar) \in
\WM_\atype$ for every variable $\avar : \atype$ in its domain. 

Fixing $\constvaluation$ and $\varvaluation$,
the \emph{interpretation}  of a term $\aterm$,
denoted $\algintc{\aterm}$,
is a weakly monotonic functional defined by induction on 
the definition of terms as follows:
\[
\begin{array}{lll}
\algintc{\avar} & = & \varvaluation(\avar)\ \ \mathrm{if}\ \avar
  \in \setvar \\
\algintc{\afun(\aterm_1,\ldots,\aterm_n)} & = &
  \constvaluation_\afun(\algintc{\aterm_1},\ldots,\algintc{\aterm_n})
  \\
\algintc{\abs{\avar}{\aterm}} & = & \fatlambda n.
  \algint{\aterm}_{\constvaluation,\varvaluation \cup \{\avar \mapsto
  n \}}\ \ 
  \mathrm{if}\ \avar \notin \domain(\varvaluation) \\
\algintc{\app{\aterm}{\bterm}} & = & \max(\algintc{\aterm}(
  \algintc{\bterm}), \algintc{\bterm}(\vec{0})) \\
\end{array}
\]
\end{definition}
\noindent
We assume $\varvaluation$ is defined on all free variables
of $\aterm$.  This definition corresponds with the one for HRSs,
if we replace application by a function symbol which is
interpreted with $\max$.

\begin{example}\label{ex:algebratwice}
In our running example $\twice$, consider an interpretation in the
natural numbers.
Say $\constvaluation_\I = \fatlambda n.n$ and
$\constvaluation_\suc = \fatlambda n.n+1$.  Then $\algintc{\I(\suc(
\avar))} = \varvaluation(\avar)+1$.
\end{example}

\sparagraaf{Reduction Pair.} 
Since $\gwm$ is in general not closed under taking contexts, it
cannot be used directly like in first-order rewriting: $\algintc{l}
\gwm \algintc{r}$ does not in general imply $\algintc{C[l\asub]} \gwm
\algintc{C[r\asub]}$.  This issue (which Van De Pol handles by
imposing the restriction that $\constvaluation_\afun$ must be
\emph{strict}) disappears entirely in the context
of dependency pairs.

\begin{theorem}\label{thm:monalg}
Let $\constvaluation$ be an interpretion of the signature 
$\Sigma$\footnote{$\Sigma$ is the signature introduced at the start
of Section~\ref{sec:redpair}, either $\up{\F}_c$ or $\up{\F}_c \cup
\{\afun^- : \atype \mid \afun : \atype \in \F \}$.} such that:
\begin{iteMize}{$\bullet$}
\item $\constvaluation$ maps each $\c_\atype$ to $0_\atype$;
\item for all $\afun : [\atype_1 \times \ldots \times \atype_m]
  \typepijl \btype_1 \typepijl \ldots \typepijl \btype_k \typepijl
  \abasetype \in S$, all $1 \leq i \leq m$ and all $n \in
  \WM_{\atype_i}$: $\constvaluation_\afun(0_{\atype_1},\ldots,n,
  \ldots,0_{\atype_m},0_{\btype_1},\ldots,0_{\btype_k}) \geqwm
  n(\vec{0})$.
\end{iteMize}
Define $\aterm \geqterm \bterm$ if
$\algintc{\aterm} \geqwm \algintc{\bterm}$ for all valuations
$\varvaluation$, and $\aterm \gterm \bterm$ if
$\algintc{\aterm} \gwm \algintc{\bterm}$ for all valuations
$\varvaluation$.
Then $(\geqterm,\gterm)$ is a reduction pair which respects
$\supterm^S$.
\end{theorem}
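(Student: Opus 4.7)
The plan is to verify, one by one, each of the defining properties of a reduction pair (Definition~\ref{def:reductionpair} with $\geqterm_1 = \geqterm$) together with the condition of respecting $\supterm^S$ (i.e.\ every $\supterm^S$-step lies in $(\geqterm \cup \gterm)^*$, equivalently in $\geqterm$). The guiding strategy is that each such property reduces mechanically to the analogous property of $\gwm, \geqwm$ on weakly monotonic functionals, which in turn reduces by induction on the type to the corresponding property of $>, \geq$ on $\basealgebraset$.

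Most of the verifications are routine bookkeeping. First I would lift Lemma~\ref{lem:algintfacts} to the AFS-style interpretation $\algintc{\cdot}$; the only new clause is the $\max$-based interpretation of application, handled by an easy induction. With this in hand, reflexivity, transitivity, irreflexivity, and compatibility of $(\geqterm, \gterm)$ are inherited pointwise from $(\geqwm,\gwm)$. Well-foundedness of $\gterm$ follows by reducing any infinite $\gterm$-chain to an infinite descending $>$-chain in $\basealgebraset$: apply all functionals in the chain to the minimum elements $0_\atype := \fatlambda \vec{x}.0$, and invoke the fact, proved by induction on $\atype$, that $f \gwm g$ in $\WM_{\atype \ftypepijl \btype}$ entails $f(\vec{0}) > g(\vec{0})$. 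Stability is immediate from the Substitution Lemma and the ``for all $\varvaluation$'' quantification built into the definition of $\geqterm, \gterm$. The $\beta$-clause $(\abs{\avar}{\aterm}) \cdot \bterm \geqterm \aterm[\avar:=\bterm]$ unfolds, using the Substitution Lemma, to $\max(\algintc{\aterm[\avar:=\bterm]},\, \algintc{\bterm}(\vec{0})) \geqwm \algintc{\aterm[\avar:=\bterm]}$, which holds because $x \vee y \geq x$.

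Monotonicity of $\geqterm$ is proved by induction on the context $C[\cdot]$. The only subtle aspect is that a context may introduce binders capturing free variables of the terms being compared; however, since $\aterm \geqterm \bterm$ is required to hold for \emph{every} valuation $\varvaluation$, updating $\varvaluation$ at a bound variable inside $\algintc{\abs{\avar}{D[\aterm]}} = \fatlambda n.\algint{D[\aterm]}_{\constvaluation, \varvaluation \cup \{\avar \mapsto n\}}$ causes no difficulty. The induction also uses that every building block ($\constvaluation_\afun$, function construction via $\fatlambda$, and $\max$) is weakly monotonic in each argument.

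The main obstacle, and the only step using the hypotheses of the theorem directly, is respecting $\supterm^S$. By transitivity it suffices to treat the three one-step clauses of Definition~\ref{def:suptermS}. The $\beta$-clause is the $\beta$-property already established. For $\app{\aterm}{\bterm_1}\cdots \bterm_n \suptermeq^S \app{\bterm_i}{\vec{\c}}$, unfolding the $n$ nested $\max$-operations in $\algintc{\app{\aterm}{\bterm_1}\cdots \bterm_n}$ yields a value dominating $\algintc{\bterm_i}(\vec{0})$, which equals $\algintc{\app{\bterm_i}{\vec{\c}}}$ by the hypothesis that $\constvaluation$ sends each $\c_\atype$ to $0_\atype$. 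The key clause is $\app{\afun(\aterm_1,\ldots,\aterm_m)}{\bterm_1}\cdots \bterm_k \suptermeq^S \app{\aterm_i}{\vec{\c}}$ with $\afun \in S$: here the $\max$-unfolding gives $\algintc{\mathrm{LHS}} \geqwm \constvaluation_\afun(\algintc{\aterm_1},\ldots,\algintc{\aterm_m})(\algintc{\bterm_1},\ldots,\algintc{\bterm_k})$; weak monotonicity of $\constvaluation_\afun$, together with the fact (provable by type-induction) that $0_\btype$ is the minimum of $\WM_\btype$, lets us dominate $\constvaluation_\afun(0,\ldots,\algintc{\aterm_i},\ldots,0,\vec{0})$; and the second hypothesis on $\constvaluation_\afun$ finally delivers $\geqwm \algintc{\aterm_i}(\vec{0}) = \algintc{\app{\aterm_i}{\vec{\c}}}$. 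It is precisely to support this three-step chain of inequalities that the theorem's algebraic hypotheses on $\constvaluation_\afun$ and on $\constvaluation_{\c_\atype}$ are formulated as they are, so the care needed here is in matching types correctly through the $\max$-unfolding and the monotonicity step.
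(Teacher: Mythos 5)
Your proposal is correct and follows essentially the same route as the paper's proof: each reduction-pair property is inherited pointwise from $\gwm,\geqwm$, well-foundedness is obtained by applying to the minimal elements $\vec{0}$, stability comes from the Substitution Lemma together with the universal quantification over valuations, and respecting $\supterm^S$ is reduced to the two key inequalities, with the $\afun\in S$ clause handled exactly as in the paper via weak monotonicity, minimality of $0_\atype$, and the stated hypothesis on $\constvaluation_\afun$. The only differences are presentational (induction on contexts rather than a one-step case distinction for monotonicity, and an explicit treatment of the $\beta$-clause of $\supterm^S$).
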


\begin{proof}
It is easy to see that $\gterm$ is an ordering and $\geqterm$ a
quasi-ordering; $\gterm$ is well-founded because if $f \gwm g$ then
also $f(\vec{0}) \gwm f(\vec{0})$ (and $\gwm$ is well-founded in
$\basealgebra$).  $\geqterm$ is monotonic as we see with a simple
case distinction; if $\algintc{\aterm} \geqwm \algintc{\bterm}$ for
all $\varvaluation$, then:
\begin{iteMize}{$\bullet$}
\item $\algintc{\abs{\avar}{\aterm}} = \fatlambda n.
  \algint{\aterm}_{\constvaluation,\varvaluation \cup \setop x
  \mapsto n\setcl} \geqwm
  \fatlambda n.\algint{\bterm}_{\constvaluation,\varvaluation \cup
  \setop x \mapsto n\setcl} = \algintc{\abs{\avar}{\bterm}}$
\item $\algintc{\app{\aterm}{\cterm}} \geqwm \algintc{\app{\bterm}{
  \cterm}}$ by weak monotonicity of
  $\fatlambda f.\fatlambda n.\max(f(n),n(\vec{0}))$
\item $\algintc{\app{\cterm}{\aterm}} \geqwm \algintc{\app{\cterm}{
  \bterm}}$ by weak monotonicity of
  $\fatlambda f.\fatlambda n.\max(f(n),n(\vec{0}))$
\item $\algintc{\afun(\ldots,\aterm,\ldots)} \geqwm \algintc{\afun(
  \ldots,\bterm,\ldots)}$ by weak monotonicity of
  $\constvaluation_\afun$
\end{iteMize}
In addition, $\geqterm$ contains $\cbeta$ because $\algintc{\app{(\abs{
\avar}{\aterm})}{\bterm}} = \max(\algint{\aterm}_{\constvaluation,
\varvaluation \circ [\avar\mapsto\algintc{\bterm}]},\algintc{\bterm}(
\vec{0})) \geqwm \algint{\aterm}_{\constvaluation,\varvaluation
\circ [\avar\mapsto\algintc{\bterm}]}$, which is exactly $\algintc{
\aterm[\avar:=\bterm]}$ by the substitution Lemma.
Compatibility is inherited from compatibility of $\gwm$ and $\geqwm$
on $\basealgebra$ (where $\geqwm$ is the reflexive closure of $\gwm$):
if $\aterm \gterm \bterm \geqterm \cterm$, then $\algintc{\aterm} =
\fatlambda n_1 \ldots n_k.f(\vec{n}) \gwm \algintc{\bterm} =
\fatlambda \vec{n}.g(\vec{n}) \geqwm \algintc{\cterm} = \fatlambda
\vec{n}.h(\vec{n})$, and we are done because also $f(\vec{n}) \gwm
h(\vec{n})$.

Finally, stability follows by the substitution Lemma: always
$\algintc{\cterm\asub} = \algint{\cterm}_{\constvaluation,
\varvaluation \circ \asub}$, so if $\algintc{\aterm} \gwm
\algintc{\bterm}$ for all valuations $\varvaluation$, this also holds
for the valuation $\varvaluation' = \varvaluation \circ \asub$.

As observed at the start of Section~\ref{sec:redpair},
$(\geqterm,\gterm)$ respects $\supterm^S$ if:
\begin{iteMize}{$\bullet$}
\item $\algintc{\app{\aterm}{\vec{\bterm}}} \geqwm
\algintc{\app{\bterm_i}{\vec{\c}}}$; this holds
because, by the use of $\max$ for
applications,
$\algintc{\app{\aterm}{\vec{\bterm}}} \geqwm \algintc{\bterm_i}(
\vec{0}) = \max(\ldots \max(\algintc{\bterm_i}(\vec{0}),0_{\atype_1}),
\ldots, 0_{\atype_n}) = \algintc{\app{\bterm_i}{\vec{\c}}}$;
\item $\algintc{\app{\afun(\vec{\aterm})}{\vec{\bterm}}} \geqwm
\algintc{\app{\aterm_i}{\vec{\c}}}$ if $\afun \in S$; this holds
because
$\algintc{\app{\afun(\vec{\aterm})}{\vec{\bterm}}} \geqwm
\constvaluation_\afun(\algintc{\aterm_1},\newline
\ldots,\algintc{\aterm_n},
\algintc{\bterm_1},\ldots,\algintc{\bterm_m})$, which by weak
monotonicity and assumption
$\geqwm \constvaluation_\afun(0_{\atype_1},\ldots,
\algintc{\aterm_i},\ldots, 0_{\atype_n}, 0_{\btype_1}, \ldots,
0_{\btype_m}) \geqwm \algintc{\aterm_i}(
\vec{0}) = \algintc{\app{\aterm_i}{\vec{\c}}}$.
\end{iteMize}\vspace{-\baselineskip}
\end{proof}

\noindent
Although it is not in general possible to (automatically) determine
whether a suitable interpretation exists, one could for instance try
polynomial interpretations over the natural numbers.  Since addition
and multiplication are both weakly monotonic, such an interpretation
is sound, and like in first-order rewriting there are some easily
automatable techniques for finding suitable polynomials.  The
automation of individual reduction pairs is beyond the scope of this
paper, however; we refer to~\cite{fuh:kop:12:1} for a more detailed
discussion.

\begin{example}\label{ex:twicemono}
To prove that $\scctwice$ is tagged-chain-free, write $\scctwice =
\P_1 \uplus \P_2$, where $\P_1$ consists of the two $\up{I}$
dependency pairs, and $\P_2$ of the remainder.  By
Theorem~\ref{thm:maintheoremweak} it suffices to prove that $\P_2$ is
tagged-chain-free, if we can find a reduction pair which respects
$\supterm^-$ and orients the following requirements:
\[
\begin{array}{rclrcl}
\up{\I}(\suc(n)) & \gterm & \twice(\abs{x}{\I^-(x)}) \cdot n &
\app{\up{\twice}(F)}{x} & \geqterm & F \cdot (F \cdot \c_\nat) \\
\up{\I}(\suc(n)) & \gterm & \up{\twice}(\abs{x}{\I^-(x)}) \cdot \c_\nat\ &
\up{\twice}(F) \cdot x & \geqterm & F \cdot \c_\nat \\
\twice(F) \cdot m & \geqterm & F \cdot (F \cdot  m) &
\twice(F) \cdot m & \geqterm & F \cdot m \\
\I(\suc(n)) & \geqterm & \suc(\twice(\abs{x}{\I^-(x)}) \cdot n) &
\twice(F) \cdot m & \geqterm & F \cdot (F \cdot m) \\
\afun^-(\vec{\avar}) & \geqterm & \afun(\vec{\avar})\ (\forall \afun
  \in \F) &
\afun^-(n) & \geqterm & \up{\afun}(n)\ (\forall \afun \in \Defineds) \\
\end{array}
\]
Let:
\begin{iteMize}{$\bullet$}
\item $\constvaluation_\I = \constvaluation_{\up{\I}} =
  \constvaluation_{\I^-} = \fatlambda n.n$
\item $\constvaluation_\nul = \constvaluation_{\nul^-} = 0$
\item $\constvaluation_\suc = \constvaluation_{\suc^-} =
  \fatlambda n.n$
\item $\constvaluation_\twice = \constvaluation_{\up{\twice}} =
  \constvaluation_{\twice^-} = \fatlambda f n.f(f(n))$
\end{iteMize}
Then it is clear that the bottom two constraints are satisfied.  With
some calculation we can see that the others hold as well, using the
following case distinction for the $\twice$ cases: if $m \geq F(m)$,
then also $m \geq F(F(m))$ by weak monotonicity; if $F(m) \geq m$,
then also $F(F(m)) \geq m$.  Therefore $\max(F(F(m)),m) \geq
\max(F(\max(F(m),m)),\max(F(m),m))$.

To complete the termination proof of $\twice$, it suffices to
find a reduction pair such that:
\[
\begin{array}{rclrcl}
\app{\twice(F)}{n} & \gterm & \app{F}{(\app{F}{n})}\ \ \ \  &
\app{\up{\twice}(F)}{n} & \gterm & \app{F}{(\app{F}{\c_\nat})} \\
\app{\twice(F)}{n} & \gterm & \app{F}{n} &
\app{\up{\twice}(F)}{n} & \gterm & \app{F}{\c_\nat} \\
\end{array}
\]
This is satisfied with an interpretation with
$\constvaluation_{\up{\twice}} = \constvaluation_\twice =
\fatlambda f n.\max(f(f(n)),n)+1$.
\end{example}

\begin{example}\label{ex:goodmap}
A well-known example of 
higher-order rewriting is $\map$:
\[
\begin{array}{rcl}
\map(F,\nil) & \arrz & \nil \\
\map(F,\cons(h,t)) & \arrz & \cons(\app{F}{h},\map(F,t)) \\
\end{array}
\]
Applying Theorem~\ref{thm:maintheoremweak}
to prove termination,
it suffices to
find a reduction pair $(\geqterm,\gterm)$ with:
\[
\begin{array}{rcl}
\up{\map}(F,\cons(h,t)) & \gterm & \app{F}{h} \\
\up{\map}(F,\cons(h,t)) & \gterm & \up{\map}(F,t) \\
\map(F,\cons(h,t)) & \geqterm & \cons(\app{F}{h},\map(F,t)) \\
\end{array}
\]
Note that the elements of $S$ do not occur in the rules, so we can
pretty much ignore them (see also Section~\ref{sec:algorithm}).
Using an interpretation in the natural numbers with the usual greater
than, $0$ and max-operator, consider the following
constant interpretation: $\constvaluation_{\up{\map}} = \fatlambda
f.\fatlambda x.f(x)+x,\ \constvaluation_\map = \fatlambda f.
\fatlambda x.x\cdot f(x)+x,\ \constvaluation_\cons = \fatlambda x.
\fatlambda y.x+y+1$.  Taking $\varvaluation = [F:=f,h:=n,t:=m]$, the
requirements above become:
\[
\begin{array}{rcl}
f(n+m+1)+n+m+1 & > & \max(f(n),n) \\
f(n+m+1)+n+m+1 & > & f(m)+m \\
(n+m+1)\cdot f(n+m+1) +n+m+1& \geq & \max(f(n),n) + m \cdot f(m) + m
  + 1 \\
\end{array}
\]
Taking into account that $f$ must be a weakly monotonic functional,
we of course have $f(n+m+1) \geq f(n), f(m)$.  Thus, it is not hard
to see that all requirements are true.
\vspace{-6pt}
\end{example}

\subsection{Argument Functions}\label{sec:argfun}

In Section~\ref{subsec:fo:argfil} we saw that, in the first-order
dependency pair approach, (simplification) orderings may be combined
with \emph{argument filterings}.  Such filterings either eliminate
some direct arguments $\aterm_i$ from a term $\afun(\aterm_1,\ldots,
\aterm_n)$, or replace the term by one of the $\aterm_i$.  We
consider an extension of this technique, called \emph{argument
functions}.

Let $\Sigma$ and $S$ be sets of function symbols, as introduced at
the start of Section~\ref{sec:redpair}.

\begin{definition}[Argument Function]
Let $\Fex$ be a set of function symbols, and $\argfil$ a
type-respecting function mapping terms $\afun(\avar_1,\ldots,\avar_n)$
with $\afun \in \Sigma$ to some term over $\Fex$; we require that
$\FV(\argfil(\afun(\vec{\avar}))) \subseteq \{\vec{\avar}\}$.
We extend $\argfil$ to a function $\filter(\ )$,
called an \emph{argument function},
on all terms as follows:
\[
\begin{array}{lll}
\filter(\app{\aterm}{\bterm}) & = & \app{\filter(\aterm)}{
  \filter(\bterm)} \\
\filter(\abs{\avar}{\aterm}) & = & \abs{\avar}{\filter(\aterm)} \\
\filter(\avar) & = & \avar\ \quad (\avar \in \setvar) \\
\filter(\afun(\aterm_1,\ldots,\aterm_n)) & = & 
  \argfil(\afun(\avar_1,\ldots,\avar_n))[\avar_1:=
  \filter(\aterm_1),\ldots,\avar_n:=\filter(\aterm_n)]
\end{array}
\]
\end{definition}\vspace{5 pt}

\noindent
An \emph{argument filtering} is an argument function where each
$\pi(\afun(\avar_1,\ldots,\avar_n))$ has the form
$\afun'(\avar_{i_1},\ldots,\avar_{i_k})$, or $\avar_i$.
However, we can choose $\argfil$ entirely different as well.
For example,
if
$\argfil(\twice(\avar)) = \abs{\bvar}{
\app{\avar}{\bvar}}$,
then the term $\app{\twice(\abs{n}{\I(n)})}{\nul}$
is mapped to $\app{(\abs{\bvar}{\app{(\abs{n}{\I(n)})}{\bvar}})}{
\nul}$.

An argument function $\argfil$ has the \emph{$S$-subterm property} if
$\FV(\argfil(\afun(\vec{\avar}))) = \{\vec{\avar}\}$ for all $\afun
\in S$.

\begin{theorem}\label{thm:argfunrespectbeta}
Let $(\geqterm,\gterm)$ be a reduction pair on terms over $\Fex$.
Define $\geq,>$ on terms over $\Sigma$ as follows:
$\aterm \geq \bterm$ iff $\filter(\aterm) \geqterm \filter(\bterm)$
and $\aterm > \bterm$ iff $\filter(\aterm) \gterm \filter(\bterm)$.

Then $(\geq,>)$ is a reduction pair.
If $\argfil$ has the $S$-subterm property and $(\geqterm,\gterm)$
respects $\supterm^!$ and $\aterm \geqterm \c_\atype$ for all terms
$\aterm$ of type $\atype$, then $(\geq,>)$ respects $\supterm^S$.
\end{theorem}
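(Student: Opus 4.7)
The plan is to first verify that $(\geq,>)$ is a reduction pair, and then prove respect of $\supterm^S$ by induction on its derivation. A preparatory substitution lemma for $\filter$ is essential: for any substitution $\asub$ whose bound variables are avoided by $\aterm$, $\filter(\aterm\asub) = \filter(\aterm)\asub^\filter$, where $\asub^\filter(\avar) := \filter(\asub(\avar))$. This follows by induction on $\aterm$; the functional case relies on $\FV(\argfil(\afun(\vec{\avar}))) \subseteq \{\vec{\avar}\}$, and the abstraction case on alpha-conversion.

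For the reduction pair properties, transitivity, compatibility, and well-foundedness transfer from $(\geqterm,\gterm)$ because $\geq$ and $>$ are defined by pulling back along $\filter$, and stability is immediate from the substitution lemma. The inclusion of \texttt{beta} uses the defining equations of $\filter$ together with beta inclusion of $\geqterm$: $\filter(\app{(\abs{\avar}{\aterm})}{\bterm}) = \app{(\abs{\avar}{\filter(\aterm)})}{\filter(\bterm)} \geqterm \filter(\aterm)[\avar:=\filter(\bterm)] = \filter(\aterm[\avar:=\bterm])$. Monotonicity of $\geq$ is the subtlest point: given $\aterm \geq \bterm$ and a context $C$, I would induct on $C$; the nontrivial case is $C = \afun(\vec{\dterm_1},\Box,\vec{\dterm_2})$ with $\afun \in \Sigma$, where the hole's image appears at zero or more positions of $\argfil(\afun(\vec{\avar}))$, and applying monotonicity of $\geqterm$ once per such position yields the needed comparison.

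For respect of $\supterm^S$, I would induct on its derivation, showing $\filter(\aterm)\,(\geqterm \cup \gterm)^*\,\filter(\bterm)$ whenever $\aterm \supterm^S \bterm$. The beta clause is dispatched by the \texttt{beta} inclusion and the substitution lemma, followed by the induction hypothesis. The application clause $\app{\aterm}{\vec{\bterm}} \supterm^S \app{\bterm_i}{\vec{\c}}$ exploits that $\filter(\app{\aterm}{\vec{\bterm}}) = \filter(\aterm) \cdot \filter(\vec{\bterm})$: the third clause of $\supterm^!$ in the $\Fex$-world gives a direct descent to $\filter(\bterm_i) \cdot \vec{\c}$, and respect of $\supterm^!$ supplies the chain to that point. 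Bridging to the actual target $\filter(\app{\bterm_i}{\vec{\c}}) = \filter(\bterm_i) \cdot \argfil(\c_{\atype_1}) \cdots \argfil(\c_{\atype_k})$ is where the hypothesis $\aterm \geqterm \c_\atype$ enters, applied pointwise under monotonicity to relate each $\argfil(\c_{\atype_j})$ to the corresponding $\c_{\atype_j}$ produced by $\supterm^!$.

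The main obstacle is the functional clause $\app{\afun(\vec{\aterm})}{\vec{\bterm}} \supterm^S \app{\aterm_i}{\vec{\c}}$ with $\afun \in S$. Here the $S$-subterm property ensures $\avar_i \in \FV(\argfil(\afun(\vec{\avar})))$, so $\filter(\aterm_i)$ occurs as a subterm of $\filter(\app{\afun(\vec{\aterm})}{\vec{\bterm}}) = \argfil(\afun(\vec{\avar}))[\vec{\avar}:=\filter(\vec{\aterm})] \cdot \filter(\vec{\bterm})$; but the path from the root to this occurrence may traverse arbitrary $\Fex$-symbols, applications, and abstractions. I would trace such a path and step through it using the three clauses of $\supterm^!$ on $\Fex$-terms (clause two for functional nodes, clause three for application nodes, and clause one, via beta, for any abstraction encountered once it is pushed against its applied arguments), descending to $\filter(\aterm_i) \cdot \vec{\c}$. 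The final bridge to $\filter(\app{\aterm_i}{\vec{\c}})$ is then the same $\c$-minimality step used in the application case.
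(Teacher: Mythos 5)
Your proposal is correct and follows essentially the same route as the paper's proof: the same substitution lemma $\filter(\aterm\asub) = \subst{\filter(\aterm)}{\asub^\argfil}$, the same transfer of the reduction-pair axioms (including the multi-occurrence monotonicity argument for the functional case), and the same combination of the $S$-subterm property, respect of $\supterm^!$, and $\c$-minimality for the hard functional clause — the paper merely packages your explicit path-tracing as the general fact that a base-type term $\supterm^!$-reaches any of its subterms applied to \emph{some} arguments $\vec{\cterm}$, with minimality of the $\c_\atype$ then collapsing those $\vec{\cterm}$ to $\vec{\c}$. One small imprecision: in your application-clause bridge, minimality only yields $\argfil(\c_\atype) \geqterm \c_\atype$, which points the wrong way for passing from $\app{\filter(\bterm_i)}{\vec{\c}}$ to $\filter(\app{\bterm_i}{\vec{\c}})$; like the paper, you should simply take $\argfil$ to fix the symbols $\c_\atype$ and reserve the minimality hypothesis for normalising the arbitrary applied arguments that arise when the target subterm occurs at the head of an application.
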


\begin{proof}
We first make the following observation (**): $\filter(\subst{\aterm}{
\asub}) = \subst{\filter(\aterm)}{\asub^\argfil}$, where
$\asub^\argfil(\avar) = \filter(\asub(\avar))$.  This holds by
induction on the form of $\aterm$.  The only non-obvious case is when
$\aterm = \afun(\aterm_1,\ldots,\aterm_n)$ and $\argfil(\afun(\avar_1,
\ldots,\avar_n)) = p$; then $\filter(\subst{\aterm}{\asub}) =
p[\avar_1:=\filter(\subst{\aterm_1}{\asub}),\ldots,\avar_n:=
\filter(\subst{\aterm_n}{\asub})] = p[\avar_1:=
\subst{\filter(\aterm_1)}{\asub^\argfil},\ldots,
\avar_n:=\subst{\filter(\aterm_n)}{\asub^\argfil}]$ by the induction
hypothesis, $= \subst{\filter(\aterm)}{\asub^\argfil}$
because $\FV(p) \subseteq \{\vec{\avar}\}$.

Having this, it is easy to see that $\geq$ and $>$ are both stable,
and compatibility, well-foundedness, transitivity and
(anti-)reflexivity are inherited from the corresponding properties of
$\geqterm$ and $\gterm$.  As for monotonicity of $\geq$, the only
non-trivial question is whether $\afun(\aterm_1,\ldots,\aterm_n) \geq
\afun(\aterm_1',\ldots,\aterm_n')$ when each $\aterm_i \geq
\aterm_i'$, but this is clear by monotonicity of $\geqterm$: if each
$\filter(\aterm_i) \geqterm \filter(\aterm_i')$, then
$\filter(\afun(\vec{\aterm})) = \argfil(\afun(\vec{\avar}))[
  \vec{\avar}:=\filter(\vec{\aterm})] \geqterm
\argfil(\afun(\vec{\avar}))[ \vec{\avar}:=\filter(\vec{\aterm'})] =
\filter(\afun(\vec{\aterm'})) $.  Also $\geq$ contains $\mathtt{beta}$
by (**): $\filter(\app{(\abs{ \avar}{\aterm})}{\bterm}) =
\app{(\abs{\avar}{\filter(\aterm)})}{ \filter(\bterm)} \geqterm
\filter(\aterm)[\avar:=\filter(\bterm)]$ (as $\geqterm$ contains
$\mathtt{beta}$), and this equals $\filter(\aterm[ \avar:=\bterm])$ as
required.  Thus, $(\geq,>)$ is a reduction pair.

As observed at the start of Section~\ref{sec:redpair}, to see that
$(\geq,>)$ respects $\supterm^S$ it suffices if
$\app{\aterm}{\bterm_1} \cdots \bterm_n \geq \app{\bterm_i}{\vec{\c}}$
and $\app{\afun(\aterm_1,\ldots,\aterm_m)}{\bterm_1} \cdots \bterm_n
\geq \app{\aterm_i}{\vec{\c}}$ if both sides have base type, and
$\afun \in S$.  Suppose $(\geqterm,\gterm)$ respects $\supterm^!$ and
$\argfil$ has the S-subterm property.
Then obviously $\filter(\app{\aterm}{\bterm_1} \cdots \bterm_n) =
\app{\filter(\aterm)}{\filter(\bterm_1)} \cdots \filter(\bterm_n)
\geqterm \app{\filter(\bterm_i)}{\vec{\c}} =
\filter(\app{\bterm_i}{\vec{\c}})$.

For the second part, let $\afun \in S$ and $\fterm = \app{\afun(
\vec{\aterm})}{\vec{\bterm}}$; we must see that $\filter(\fterm) =
\app{\argfil(\afun(\avar_1,\ldots,\avar_m)[\avar_1\linebreak:=\filter(\aterm_1),
\ldots,\avar_m:=\filter(\aterm_m)]}{\filter(\bterm_1)} \cdots
\filter(\bterm_n) \geqterm \filter(\app{\aterm_i}{\vec{\c}})$.
Since $\argfil$ has the S-subterm property,
$\avar_i$\linebreak
occurs in $\argfil(\afun(\vec{\avar}))$.
Therefore, $\filter(\aterm_i)$ is a subterm of $\filter(\fterm)$ and
it contains no free variables which \linebreak are bound in
$\filter(\fterm)$.
But then we must have $\filter(\app{\afun(\vec{\aterm})}{\vec{\bterm}
}) \suptermeq^! \app{\filter(\aterm_i)}{\vec{\cterm}}$ for some terms
$\vec{\cterm}$; since the $\c_\atype$ are minimal elements by
assumption, this term $\geqterm \app{\filter(\aterm_i)}{\vec{\c}} =
\filter(\app{\cterm_i}{\vec{\c}})$ as required.
\end{proof}

Theorem \ref{thm:argfunrespectbeta} allows us to modify terms before
applying an ordering.  Even argument functions which respect
the $\F$-subterm property can be useful; for example, if there is a
rule
$\afun(\avar_1,\ldots,\avar_n) \arrz r$, then an argument
function with $\argfil(\afun(\vec{\avar})) = r$ is probably a good
idea.

\paragraaf{Using Argument Functions With CPO}
Unfortunately, Theorem~\ref{thm:argfunrespectbeta} cannot directly be
used with the higher-order recursive path ordering or the more
powerful computability path ordering: these relations are not
transitive, do not respect $\supterm^!$, and do not have minimal
elements.  But we \emph{can} alter either relation a bit, and obtain
a usable reduction pair.

Let $>_{\mathsf{CPO}}$ be the relation given by
CPO~\cite{bla:jou:rub:08:1}, with some fixed status function, type
ordering, and precedence $\geq_F$ on the symbols in $\Sigma \setminus
\Constants$.  Introducing new symbols $@^{\atype,\btype},
\Lambda^{\atype,\btype}$ and $\c_\atype$ for all types, extend the
precedence with $\afun >_F @^{\atype,\btype}, \Lambda^{\atype,\btype},
\c_\atype$ if $\afun \in \Sigma \setminus \Constants$, for all types
$@^{\atype,\btype} >_F \Lambda^{\atype',\btype'} >_F \c_\ctype$ and
moreover $@^{\atype,\btype} >_F @^{\ctype,\dtype}$ if $\ctype
\typepijl \dtype$ is a strict subtype of $\atype \typepijl \btype$.
Define $(\geqtermcpo,\gtermcpo)$
as follows: $\aterm \geqtermcpo \bterm$ if for all closed
substitutions $\asub$: $\mu(\aterm\asub) >_{\mathsf{CPO}}^*
\mu(\bterm\asub)$, and $\aterm \gtermcpo \bterm$ if for all closed
substitutions $\asub$: $\mu(\aterm\asub) >_{\mathsf{CPO}}^+
\mu(\bterm\asub)$.  Here, $\mu$ is given by:
\[
\begin{array}{rcll}
\mu(\avar) & = & \avar & \mathrm{if}\ \avar \in \setvar \\
\mu(\abs{\avar}{\aterm}) & = & \Lambda^{\atype,\btype}(
  \abs{\avar}{\mu(\aterm)}) & \mathrm{if}\ \abs{\avar}{\aterm} :
  \atype \typepijl \btype \\
\mu(\app{\aterm}{\bterm}) & = & @^{\atype,\btype}(\mu(\aterm),
  \mu(\bterm)) & \mathrm{if}\ \aterm : \atype \typepijl \btype \\
\mu(\afun(\aterm_1,\ldots,\aterm_n)) & = & \afun(\mu(\aterm_1),
  \ldots,\mu(\aterm_n)) \\
\end{array}
\]
Since $\mu(\aterm)[\avar:=\mu(\bterm)] = \mu(\aterm[\avar:=\bterm])$
(as demonstrated by an easy induction),
it is not hard to see that $(\geqtermcpo,\gtermcpo)$ is indeed a
reduction pair, that the $\c_\atype$ are minimal elements, and that
we can ignore the $\asub$: since $>_{\mathsf{CPO}}$ is stable,
$\aterm \gtermcpo \bterm$ holds if
$\mu(\aterm) >_{\mathsf{CPO}}^+ \mu(\bterm)$.

Moreover, always $\app{\aterm}{\bterm_1} \cdots \bterm_n \geqtermcpo
\app{\bterm_i}{\vec{\c}}$ and $\app{\afun(\aterm_1,\ldots,\aterm_n)}{
\vec{\bterm}} \geqtermcpo \app{\aterm_i}{\vec{\c}}$:
\begin{iteMize}{$\bullet$}
\item In CPO, $\bfun(\cterm,\dterm) >_{\mathsf{CPO}} \eterm$ if
  $\cterm >_{\mathsf{CPO}} \eterm$, regardless of type differences.
  Since the $@^{\atype,\btype}$ are function symbols, we thus have:
  $\mu(\app{\cterm}{\dterm}) >_{\mathsf{CPO}} \eterm$ if
  $\mu(\cterm) >_{\mathsf{CPO}} \eterm$.

  Let $\bterm_i : \atype = \atype_1 \typepijl \ldots \typepijl
  \atype_n \typepijl \abasetype$.
  Then $\mu(\app{\aterm}{\vec{\bterm}})
  >_{\mathsf{CPO}} \mu(\app{\bterm_i}{\vec{\c}})$ if
  (selecting the first argument $n-i$ times)
  $\mu(\app{\aterm}{\bterm_1} \cdots \bterm_i) = @^{\atype,\btype}(
  @(\ldots @(\mu(\aterm), \mu(\bterm_1)),
  \ldots \mu(\bterm_{i-1})),\mu(\bterm_i))
  >_{\mathsf{CPO}} \mu(\app{\bterm_i}{\vec{\c}})$.
  Since $\atype$ is a strict subtype of $\atype \typepijl \btype$,
  so $@^{\atype,\btype} >_F @^{\atype_j,\atype_{j+1} \ftypepijl \ldots
  \atype_n \ftypepijl \abasetype}$ for all $j$, and because certainly
  $@^{\atype,\btype}(\ldots) >_{\mathsf{CPO}} \c_{\atype_j}$ for all
  $j$, this indeed holds, as $\mu(\app{\bterm_i}{\vec{\c}}) =
  @^{\atype_n,\abasetype}(\ldots @^{\atype_1,\atype_2 \ftypepijl
  \ldots \atype_n \ftypepijl \abasetype}(\mu(\bterm_i),
  \c_{\atype_1}),\ldots, \c_{\atype_n})$.
\item Similarly, $\mu(\app{\afun(\vec{\aterm})}{\vec{\bterm}})
  >_{\mathsf{CPO}} \mu(\app{\aterm_i}{\vec{\c}})$ if
  $\afun(\mu(\aterm_1),\ldots,\mu(\aterm_n))
  >_{\mathsf{CPO}} \mu(\app{\aterm_i}{\vec{\c}})$, which holds
  because $\afun >_F @^{\atype_j,\atype_{j+1} \ftypepijl \atype_n
  \ftypepijl \abasetype}$ for all $j$, and also $\afun >_F
  \c_{\atype_j}$.
\end{iteMize}

\noindent
Rather than altering existing definitions of HORPO and CPO with a
transformation like the one above, we might consider definitions
which natively have minimal elements -- this is an addition which
should not be hard to include in the well-foundedness proof.

\begin{example}\label{ex:from}
Consider the following AFS:
\[
\F = \left\{
\begin{array}{rclrcl}
\nul & : & \nat &
\ifex & : & [\booleanex \times \lijst \times \lijst] \decpijl \lijst
  \\
\suc & : & [\nat] \decpijl \nat &
\symb{lteq} & : & [\nat \times \nat] \decpijl \booleanex \\
\trueex & : & \booleanex &
\symb{from} & : & [\nat \times \lijst] \decpijl \lijst \\
\falseex & : & \booleanex &
\symb{chain} & : & [\nat \typepijl \nat \times \lijst] \decpijl
  \lijst \\
\nil & : & \lijst &
\symb{incch} & : & [\lijst] \decpijl \lijst \\
\cons & : & [\nat \times \lijst] \decpijl \lijst \\
\end{array}
\right\}
\]
\[
\Rules = \left\{
\begin{array}{rclrcl}
\symb{lteq}(\suc(\avar),\nul) & \arrz & \falseex &
\ifex(\trueex,\avar,\bvar) & \arrz & \avar \\
\symb{lteq}(\nul,\avar) & \arrz & \trueex &
\ifex(\falseex,\avar,\bvar) & \arrz & \bvar \\
\symb{lteq}(\suc(\avar),\suc(\bvar)) & \arrz & \symb{lteq}(\avar,
  \bvar) &
\symb{from}(\avar,\nil) & \arrz & \nil \\
\symb{incch}(\avar) & \arrz & \symb{chain}(\abs{\bvar}{\suc(\bvar)},
  \avar) &
\symb{chain}(F,\nil) & \arrz & \nil \\
\symb{from}(\avar,\cons(\bvar,\cvar)) & \arrz &
  \multicolumn{4}{l}{
    \ifex(\symb{lteq}(\avar,\bvar),\cons(\bvar,\cvar),
    \symb{from}(\avar,\cvar))
  } \\
\symb{chain}(F,\cons(\bvar,\cvar)) & \arrz &
  \multicolumn{4}{l}{
    \cons(\app{F}{\bvar},\symb{chain}(F,\symb{from}(\app{F}{\bvar},
    \cvar)))
  } \\
\end{array}
\right\}
\]
\end{example}

\noindent
This AFS, which appears in the termination problem
database (see~\cite{termcomp}), has eight dependency pairs, and the
following dependency graph:
\vspace{-6pt}

\begin{wrapfigure}{r}{0.25\textwidth}
\vspace{-10pt}
\begin{tikzpicture}[->]
\begin{scope}[>=stealth]
\tikzstyle{veld} = [draw, fill=white,
  minimum height=0em, minimum width=0em, rounded corners]

\node (n6) [veld] {6};
\node (n1) [veld, above of=n6] {1};
\node (n4) [veld, right of=n1] {4};
\node (n5) [veld, below of=n4] {5};
\node (n3) [veld, below of=n5] {3};
\node (n2) [veld, left  of=n6] {2};
\node (n8) [veld, below of=n6] {8};
\node (n7) [veld, left  of=n8] {7};

\draw (n1) to[out=90,in=180,looseness=4] (n1);
\draw (n2) -- (n6);
\draw (n2) -- (n7);
\draw (n2) -- (n8);
\draw (n4) -- (n1);
\draw (n5) -- (n3);
\draw (n5) -- (n4);
\draw (n5) to[out=-45,in=45,looseness=4] (n5);
\draw (n6) -- (n1);
\draw (n6) -- (n2);
\draw (n6) -- (n3);
\draw (n6) -- (n4);
\draw (n6) -- (n5);
\draw (n6) to[out=100,in=170,looseness=5] (n6);
\draw (n6) -- (n7);
\draw (n6) -- (n8);
\draw (n7) -- (n6);
\draw (n7) to[out=270,in=180,looseness=4] (n7);
\draw (n7) -- (n8);
\draw (n8) -- (n4);
\draw (n8) -- (n5);
\draw (n8) -- (n3);
\end{scope}
\end{tikzpicture}
\vspace{-15pt}
\end{wrapfigure}

\ 
\begin{enumerate}[(1)]
\item\label{dp:lteq}
  $\up{\symb{lteq}}(\suc(\avar),\suc(\bvar)) \dppijl
  \up{\symb{lteq}}(\avar,\bvar)$
\item\label{dp:incch}
  $\up{\symb{incch}}(\avar) \dppijl \up{\symb{chain}}(
  \abs{\bvar}{\suc(\bvar)},\avar)$
\item\label{dp:from:1}
  $\up{\symb{from}}(\avar,\cons(\bvar,\cvar)) \dppijl
  \up{\ifex}(\symb{lteq}(\avar,\bvar),\cons(\bvar,\cvar),
  \symb{from}(\avar,\cvar))$
\item\label{dp:from:2}
  $\up{\symb{from}}(\avar,\cons(\bvar,\cvar)) \dppijl
  \up{\symb{lteq}}(\avar,\bvar)$
\item\label{dp:from:3}
  $\up{\symb{from}}(\avar,\cons(\bvar,\cvar)) \dppijl
  \up{\symb{from}}(\avar,\cvar)$
\item\label{dp:chain:1}
  $\up{\symb{chain}}(F,\cons(\bvar,\cvar)) \dppijl \app{F}{\bvar}$
\item\label{dp:chain:2}
  $\up{\symb{chain}}(F,\cons(\bvar,\cvar)) \dppijl
  \up{\symb{chain}}(F,\symb{from}(\app{F}{\bvar},\cvar))$
\item\label{dp:chain:3}
  $\up{\symb{chain}}(F,\cons(\bvar,\cvar)) \dppijl
  \up{\symb{from}}(\app{F}{\bvar},\cvar)$
\end{enumerate}

\noindent
We consider the SCC consisting of pairs \ref{dp:incch},
\ref{dp:chain:1} and \ref{dp:chain:2}.  As this is a \llfe\ AFS, we
may use almost unrestricted argument functions.
Let $\argfil(\suc(\avar)) = \argfil(\suc^-(\avar)) = \avar,\ 
\argfil(\up{\symb{incch}}(\avar)) = \up{\symb{chain}}(\abs{\bvar}{
\bvar},\avar),\ \argfil(\symb{incch}(\avar)) = \symb{chain}(
\abs{\bvar}{\bvar},\avar),\ \argfil(\symb{lteq}(\avar,\bvar)) =
\symb{lteq}'$ and $\argfil(\symb{from})(\avar,\bvar) =
\symb{from}'(\bvar)$; for other symbols, $\argfil$ is the identity.
Then the three dependency pairs of interest are oriented with
$(\geqtermcpo,\gtermcpo)$ if $\cons >_F \symb{from}'$:
\begin{enumerate}[(1)]
\item[(\ref{dp:incch})] $\filter(\up{\symb{incch}}(\avar)) =
  \up{\symb{chain}}(\abs{\bvar}{\bvar},\avar) \geqtermcpo
  \up{\symb{chain}}(\abs{\bvar}{\bvar},\avar) =
  \filter(\up{\symb{chain}}(\abs{\bvar}{\suc^-(\bvar)},\avar))$
\item[(\ref{dp:chain:1})] $\up{\symb{chain}}(F,\cons(\bvar,\cvar))
  \gtermcpo \app{F}{\bvar}$
\item[(\ref{dp:chain:2})] $\up{\symb{chain}}(F,\cons(\bvar,\cvar))
  \gtermcpo \up{\symb{chain}}(F,\symb{from}'(\cvar))$
\end{enumerate}
All rules, even the non-formative ones, are oriented as well, with
for instance the precedence $\symb{chain},\up{\symb{chain}} >_F \cons
>_F \symb{from}' >_F \ifex,\symb{lteq}' >_F \falseex,\trueex$
\[
\begin{array}{rclrcl}
\symb{lteq}' & \geqtermcpo & \falseex &
\ifex(\trueex,\avar,\bvar) & \geqtermcpo & \avar \\
\symb{lteq}' & \geqtermcpo & \trueex &
\ifex(\falseex,\avar,\bvar) & \geqtermcpo & \bvar \\
\symb{lteq}' & \geqtermcpo & \symb{lteq}' &
\symb{from}'(\nil) & \geqtermcpo & \nil \\
\symb{chain}(\abs{\bvar}{\bvar},\avar) & \geqtermcpo &
  \symb{chain}(\abs{\bvar}{\bvar},\avar) &
\symb{chain}(F,\nil) & \geqtermcpo & \nil \\
\symb{from}'(\cons(\bvar,\cvar)) & \geqtermcpo &
  \multicolumn{4}{l}{
    \ifex(\symb{lteq}',\cons(\bvar,\cvar),\symb{from}'(\cvar))
  } \\
\symb{chain}(F,\cons(\bvar,\cvar)) & \geqtermcpo &
  \multicolumn{4}{l}{
    \cons(\app{F}{\bvar},\symb{chain}(F,\symb{from}'(\cvar)))
  } \\
\end{array}
\]
The remaining dependency pair (2) is clearly tagged-chain-free, so
this AFS is terminating if the two SCCs $\{($\ref{dp:lteq}$)\}$ and
$\{($\ref{dp:from:3}$)\}$ are tagged-chain-free.

\section{Non-collapsing Dependency Pairs}\label{sec:noncollapse}

Many powerful aspects of the first-order dependency pair framework,
such as the subterm criterion and usable rules, break in the presence
of collapsing dependency pairs.  But for \emph{parts} of a
termination proof, we may have non-collapsing dependency pairs.
Consider for example the system from Example~\ref{ex:from}: the
dependeny graph has three SCCs, and two of them (the SCCs
$\{($\ref{dp:lteq}$)\}$ and $\{($\ref{dp:from:3}$)\}$) are
non-collapsing.  Their chain-freeness can be
demonstrated with a reduction triple which does not have the
limited subterm property.

In this section, we will briefly discuss the subterm criterion and
usable rules, limited to non-collapsing sets $\P$.  To avoid double
work between the basic and tagged dependency
pair approach, we note that tags add very little advantage in this
setting: tagging mainly plays a role in the limited or tagged subterm
property.  The primary advantage of Theorem~\ref{thm:weakfinish} over
Theorem~\ref{thm:typepreserve}, therefore, is the addition of
formative rules.  So let us revise the definition of
chain-free: \emph{a set $\P$ of dependency pairs is chain-free if
there is no minimal dependency chain $\rijtje{(\rho_i,\aterm_i,
\bterm_i) \mid i \in \N}$ with all $\rho_i \in \P \cup \{\cbeta\}$
such that $\bterm_i \arrr{\formrules(\P)} \aterm_{i+1}$ for all $i$}.
Here, $\formrules(\P) = \Rules$ if the AFS under consideration is not
\llfe.

\begin{lemma}\label{lem:ignoretags}
A set $\P$ of dependency pairs is tagged-chain-free if it is
chain-free.
\end{lemma}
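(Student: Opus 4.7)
The plan is to show the contrapositive: given a minimal tagged dependency chain over $\P\cup\{\cbeta\}$, construct from it an ordinary minimal dependency chain over $\P\cup\{\cbeta\}$ whose reductions use only rules from $\formrules(\P)$. The idea is simply to erase the $-$ tags throughout, since tags are invisible to the first three clauses of a dependency chain and the tagging rules $\afun^-(\vec{\avar})\arrz\afun(\vec{\avar})$ become trivial identities after untagging.

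Concretely, suppose $\rijtje{(\rho_i,\aterm_i,\bterm_i)\mid i\in\N}$ is a minimal tagged dependency chain with all $\rho_i\in\P\cup\{\cbeta\}$. Define $\aterm_i':=\mathit{untag}(\aterm_i)$ and $\bterm_i':=\mathit{untag}(\bterm_i)$. I would verify each clause of Definition~\ref{def:dependencychain} in turn. For the dependency pair case (clause~2), note that $\aterm_i=l_i\asub^\ttag$ and $\bterm_i=\ttag(p_i)\asub^\ttag$; since left-hand sides of rules contain no tagged symbols, $\mathit{untag}(l_i\asub^\ttag)=l_i\asub$, and similarly $\mathit{untag}(\ttag(p_i)\asub^\ttag)=p_i\asub$, where $\asub$ is $\asub^\ttag$ with tags stripped from its range. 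The two $\cbeta$ sub-cases (clauses~3a and~3b) carry over directly, because $\mathit{untag}$ commutes with $\beta$-contraction, with subterm-taking, and with substitution, and because $\ttag$ merely adds tags that $\mathit{untag}$ removes.

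The key step is clause~4: the reduction $\bterm_i\arrr{\formrules(l_{i+1})^\ttag,in}\aterm_{i+1}$ must be turned into a reduction $\bterm_i'\arrr{\formrules(\P),in}\aterm_{i+1}'$. Each rewrite step in $\formrules(l_{i+1})^\ttag$ is either a step with $l\arrz\ttag(r)$ for some $l\arrz r\in\formrules(l_{i+1})$, or an untagging step $\afun^-(\vec{\avar})\arrz\afun(\vec{\avar})$. Untagging the first kind yields a $\formrules(l_{i+1})$-step (since $\mathit{untag}(\ttag(r))=r$), and untagging the second kind yields the trivial identity $\afun(\vec{\avar})=\afun(\vec{\avar})$. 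Moreover $\formrules(l_{i+1})\subseteq\formrules(\P)$, and internal-ness is preserved because $\mathit{untag}$ does not change the applicative-functional structure of a term. Hence $\bterm_i'\arrr{\formrules(\P),in}\aterm_{i+1}'$.

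For minimality, every strict subterm of $\bterm_i'$ is $\mathit{untag}(\cterm)$ for some strict subterm $\cterm$ of $\bterm_i$, and by minimality of the tagged chain $\mathit{untag}(\cterm)$ is terminating under $\arr{\Rules}$. Thus $\rijtje{(\rho_i,\aterm_i',\bterm_i')}$ witnesses that $\P$ is not chain-free. I do not expect any real obstacle; the only point that deserves a sentence of care is the commutation $\mathit{untag}(\ttag(t)\asub^\ttag)=t\,\mathit{untag}\circ\asub^\ttag$, which follows by a routine induction on $t$ together with Lemma~\ref{lem:tagbasesubstitute}.
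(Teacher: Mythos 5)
Your proof is correct and is exactly the argument the paper intends: its own proof of this lemma is the one-liner ``obvious consequence of Theorem~\ref{thm:dependencychainweak}, by removing the tags,'' and your write-up simply makes explicit the routine verifications (untagging turns $\formrules(l_{i+1})^\ttag$-steps into $\formrules(\P)$-steps or identities, and minimality transfers because tagged-minimality is already phrased via $\mathit{untag}$). No difference in approach, just more detail.
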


\begin{proof}
Obvious consequence of Theorem~\ref{thm:dependencychainweak}, by
removing the tags.
\end{proof}

We will consider only how to prove chain-freeness for the sets
under consideration.

\subsection{The Subterm Criterion}\label{subsec:subcrit}

Let $\P$ be non-collapsing, and let
$\mathcal{H}$ be the set of function symbols $\afun$ such that a
left- or right-hand side of a dependency pair has the form
$\app{\afun(\vec{\aterm})}{\vec{\bterm}}$.  A \emph{projection
function} for $\mathcal{H}$ is a function $\nu$ which assigns to each
$\afun \in \mathcal{H}$ a number $i$ such that for all
$l \dppijl p \in \P$, the following function $\overline{
\nu}$ is well-defined for both $l$ and $p$:
\[
\overline{\nu}(\app{\afun(\aterm_1,\ldots,\aterm_m)}{\aterm_{m+1}}
\cdots \aterm_n) = \aterm_{\nu(\afun)}
\]
This differs from the first-order definition only in that we cater
for dependency pairs which do not have the form $\up{\afun}(l_1,
\ldots,l_n) \dppijl \up{\bfun}(p_1,\ldots,p_m)$.

\begin{theorem}\label{thm:subcrit}
Let $\P = \P_1 \uplus \P_2$ be a set of non-collapsing dependency
pairs, and suppose a projection function $\nu$ exists such that
$\overline{\nu}(l) \supterm \overline{\nu}(p)$ for all $l \dppijl p
\in \P_1$ and $\overline{\nu}(l) = \overline{\nu}(p)$ for all $l
\dppijl p \in \P_2$.
Then $\P$ is chain-free if and only if $\P_2$ is.
\end{theorem}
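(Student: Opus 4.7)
\paragraaf{Proof Plan}

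The forward direction is immediate: any minimal dependency chain over $\P_2 \cup \{\cbeta\}$ is a fortiori one over $\P \cup \{\cbeta\}$. For the converse, I would assume $\P_2$ is chain-free and, towards a contradiction, fix a minimal dependency chain $\rijtje{(\rho_i,\aterm_i,\bterm_i) \mid i \in \N}$ with $\rho_i \in \P \cup \{\cbeta\}$ and $\bterm_i \arrr{\formrules(\P)} \aterm_{i+1}$.

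First I would show that $\cbeta$ occurs only in an initial segment. Because $\P$ is non-collapsing, whenever $\rho_i \in \P$ the right-hand side $p_i$ has the form $\app{\afun(\vec{c})}{\vec{d}}$ with $\afun \in \mathcal{H}$, so $\bterm_i = \subst{p_i}{\asub}$ has a function-symbol head. An $\arr{in}$-step only rewrites inside the arguments of such a term, preserving the head, so $\aterm_{i+1}$ is never a $\beta$-redex and hence $\rho_{i+1} \neq \cbeta$. A chain consisting exclusively of $\cbeta$-steps is ruled out by Lemma~\ref{lem:emptynondangerous}, so there is a least $i_0$ with $\rho_{i_0} \in \P$, and then $\rho_i \in \P$ for all $i \geq i_0$. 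If only finitely many of these lay in $\P_1$, then from some index on the $\rho_i$ would all belong to $\P_2$, giving a minimal dependency chain over $\P_2 \cup \{\cbeta\}$ and contradicting the chain-freeness of $\P_2$. Hence infinitely many $\rho_i$ lie in $\P_1$.

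Next I would push the chain through the projection. Since $\overline{\nu}(\app{\afun(\aterm_1,\ldots,\aterm_m)}{\aterm_{m+1}} \cdots \aterm_n) = \aterm_{\nu(\afun)}$ commutes with substitution, the hypotheses $\overline{\nu}(l) \supterm \overline{\nu}(p)$ and $\overline{\nu}(l) = \overline{\nu}(p)$ translate to $\overline{\nu}(\aterm_i) \supterm \overline{\nu}(\bterm_i)$ for $\rho_i \in \P_1$ and $\overline{\nu}(\aterm_i) = \overline{\nu}(\bterm_i)$ for $\rho_i \in \P_2$. Because an $\arr{in}$-step inside $\bterm_i = \app{\afun(\vec{c})}{\vec{d}}$ rewrites exactly one of its arguments, it projects to either an $\arr{\Rules}$-step (when the affected argument is the $\nu(\afun)$-th one) or to equality. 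Chaining these facts for $i \geq i_0$ yields
\[
\overline{\nu}(\aterm_{i_0}) \suptermeq \overline{\nu}(\bterm_{i_0}) \arrr{\Rules} \overline{\nu}(\aterm_{i_0+1}) \suptermeq \overline{\nu}(\bterm_{i_0+1}) \arrr{\Rules} \cdots
\]
an infinite sequence in $\arr{\Rules} \cup \supterm$ containing infinitely many strict $\supterm$-steps.

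Finally, since $\overline{\nu}(\bterm_{i_0})$ is a strict subterm of $\bterm_{i_0}$ (the head of $\bterm_{i_0}$ carries at least one argument), minimality of the chain yields that $\overline{\nu}(\bterm_{i_0})$ is $\arr{\Rules}$-terminating. Subterms of $\arr{\Rules}$-terminating terms are themselves $\arr{\Rules}$-terminating, and together with the well-foundedness of $\supterm$ on each individual term this implies well-foundedness of $\arr{\Rules} \cup \supterm$ on the set of terminating terms, contradicting the infinite sequence above. The main obstacle is the projection step: one has to check carefully that $\overline{\nu}$ is defined at every $\aterm_i$ and $\bterm_i$ for $i \geq i_0$ (which is exactly what the non-collapsing assumption and the definition of projection function guarantee), that it commutes with substitution, and that $\arr{in}$-steps project cleanly to $\arr{\Rules}^=$-steps; the remaining well-foundedness argument is standard, as in the first-order case.
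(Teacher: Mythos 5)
Your proof is correct and follows essentially the same route as the paper's: eliminate the $\cbeta$ steps using non-collapsingness, project the chain through $\overline{\nu}$ to obtain an infinite $\arr{\Rules} \cup \supterm$ sequence with infinitely many strict $\supterm$ steps, and contradict termination of the (strict subterms of the) minimal chain's terms. You merely spell out in more detail two points the paper compresses (that $\cbeta$ can only occur in an initial segment, and that $\arr{in}$-steps project to $\arr{\Rules}^=$-steps), which is fine.
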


\begin{proof}
If $\P$ is chain-free then obviously $\P_2$ is chain-free.
For the other direction, suppose $\P_2$ is chain-free and
that we have a minimal dependency chain
$\rijtje{(\rho_i,\aterm_i,\bterm_i) \mid i \in \N}$ over $\P$ and
suitable projection function $\nu$.
As before, we can assume this
chain does not use $\cbeta$, so always $\rho_i = l_i \dppijl p_i$.
As the subterm relation is stable, $\overline{\nu}(l_i) \suptermeq
\overline{\nu}(p_i)$ implies that $\overline{\nu}(\aterm_i)
\suptermeq \overline{\nu}(\bterm_i)$.

Since each $\bterm_i \arrr{\Rules} \aterm_{i+1}$, we have a reduction
$\overline{\nu}(\aterm_1) \suptermeq \overline{\nu}(\bterm_1)
\arrr{\Rules} \overline{\nu}(\aterm_2) \suptermeq
\overline{\nu}(\bterm_2) \arrr{\Rules} \ldots$
Since $\P_2$ is chain-free, $\rho_i \in \P_i$ for infinitely many $i$,
so $\overline{\nu}(\aterm_i) \supterm \overline{\nu}(\bterm_i)$
infinitely often.
Thus we have an infinite $\arr{\Rules} \cup \supterm$ reduction
starting in a terminating term $\overline{\nu}(\aterm_1)$,
contradiction.
\end{proof}

\begin{example}
Both of the remaining SCCs of Example~\ref{ex:from} are chain-free,
as is evident with a projection $\nu(\up{\symb{lteq}}) = 1$ for
the SCC $\{($\ref{dp:lteq}$)\}$ and $\nu(\up{\symb{from}}) = 2$ for
the SCC $\{($\ref{dp:from:3}$)\}$.
\vspace{-6pt}
\end{example}

\subsection{Usable Rules}

Formative rules provide a nice
counterpart of usable
rules, but it would be even nicer if we had both!  Usable rules can
be defined, but with severe restrictions; we follow
the ideas in~\cite{suz:kus:bla:11}, where usable rules
are defined for static dependency pairs.

\begin{definition}[Usable Rules]\label{def:userules}
A term $\aterm$ is considered \emph{risky} if it has a subterm
$\app{\avar}{\bterm}$ with $\avar \in \FV(\aterm)$.
Let $\afun \gsymbuse \bfun$ if there is a rewrite
rule $\afun(l_1,\ldots,l_n) \arrz r$ where either $r$ contains the
symbol $\bfun$, or $r$ is risky, or $r$ is an abstraction or
variable of functional type.
The reflexive-transitive closure of $\gsymbuse$ is
denoted by $\gsymbuse^*$.
Overloading notation, let $\aterm \gsymbuse^* \bfun$ denote that
either $\aterm$ contains a symbol $\afun$ with $\afun \gsymbuse^*
\bfun$, or $\aterm$ is risky.

The set of usable rules of a term $\aterm$, notation $\userules(
\aterm,\Rules)$, consists of those rules $\app{\afun(l_1,\ldots,
l_n)}{l_{n+1}} \cdots l_m \arrz r \in \Rules$ such that $\aterm
\gsymbuse^* \afun$.
The set of usable rules of a non-collapsing dependency pair $l
\dppijl \app{\afun(p_1,\ldots,p_n)}{p_{n+1}} \cdots p_m$ is the union
$\userules(p_1,\Rules) \cup \ldots \cup \userules(p_m,\Rules)$.
The set of usable rules of a set of dependency pairs, $\userules(\P,
\Rules)$, is the union $\bigcup_{\rho \in \P} \userules(\rho,\Rules)$
if $\P$ is non-collapsing and just $\Rules$ otherwise.
\end{definition}

\begin{example}
Consider an AFS for list manipulation which has four rules:
\[
\begin{array}{rcl}
\map(F,\nil) & \arrz & \nil \\
\map(F,\cons(h,t)) & \arrz & \cons(\app{F}{h},\map(F,t)) \\
\append(\nil,l) & \arrz & l \\
\append(\cons(h,t),l) & \arrz & \cons(\append(h,t),l) \\
\end{array}
\]
The dependency graph of this system has two SCCs: the set
$\{ \up{\map}(F,\cons(h,t)) \dppijl \app{F}{h},\ 
\up{\map}(F,\cons(h,t)) \dppijl \up{\map}(F,t)) \}$ and
the set $\{ \up{\append}(\cons(h,t),l) \dppijl \up{\append}(h,t) \}$.
The former contains a dependency pair with $\app{F}{h}$ in the
right-hand side, so its usable rules are just $\Rules$.  But the
latter set has only the $\append$ rules as usable rules.
\end{example}

Very similar to the first-order case, we can prove the following
result:

\begin{theorem}\label{thm:userules}
Let $\P = \P_1 \uplus \P_2$ be a set of non-collapsing dependency
pairs, and suppose there is a reduction pair $(\geqterm,\gterm)$ such
that:
\begin{iteMize}{$\bullet$}
\item $\overline{l} \gterm \overline{p}$ for all $l \dppijl p \in
  \P_1$
\item $\overline{l} \geqterm \overline{p}$ for all $l \dppijl p \in
  \P_2$
\item $l \geqterm r$ if $l \arrz r \in \userules(\P,\formrules(\P))$
\item $\symb{p}_\atype(\avar,\bvar) \geqterm \avar,\bvar$ for fresh
  symbols $\symb{p}_\atype : [\atype \times \atype] \decpijl \atype$.
\end{iteMize}
Then $\P$ is chain-free if and only if $\P_2$ is chain-free.
\end{theorem}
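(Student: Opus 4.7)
The ``only if'' direction is immediate, since any minimal dependency chain witnessing non-chain-freeness of $\P_2$ also witnesses non-chain-freeness of $\P$. For the converse, I plan to adapt to the higher-order AFS setting the classical interpretation argument due to Gramlich and refined by Hirokawa--Middeldorp. Assume $\P_2$ is chain-free and, toward a contradiction, that $\P$ admits a minimal dependency chain $\rijtje{(\rho_i,\aterm_i,\bterm_i)\mid i\in\N}$ in which every reduction $\bterm_i\arrr{\formrules(\P)}\aterm_{i+1}$ uses only formative rules. Because $\P$ is non-collapsing, the head of each $\bterm_i$ is a function symbol rather than an abstraction or a free variable, so (as in the proof of Theorem~\ref{thm:maintheorem}) after discarding an initial segment I may assume no $\rho_i$ equals $\cbeta$; hence $\aterm_i=l_i\asub_i$ and $\bterm_i=p_i\asub_i$ for some $l_i\dppijl p_i\in\P$ and substitution $\asub_i$.

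The key construction is an interpretation $I(\ )$ on terms over $\F\cup\Constants\cup\{\p_\atype\mid\atype\in\Types\}$ that is the identity on variables, is defined homomorphically on abstractions and applications, and on a functional term $\bfun(\aterm_1,\ldots,\aterm_n)$ acts as
\[
I(\bfun(\aterm_1,\ldots,\aterm_n)) \;=\; \bfun(I(\aterm_1),\ldots,I(\aterm_n))
\]
when $\bfun$ is reachable from $\P$ via $\gsymbuse^*$, and otherwise ``freezes'' the symbol by wrapping it together with its direct subterms in a nested $\p$-projection at the appropriate type. I plan to verify two properties of $I$: first, that $I$ commutes with substitution in a suitable sense, so that $I(l_i\asub_i)$ and $I(p_i\asub_i)$ are obtained from $\overline{l_i}$ and $\overline{p_i}$ by the substitution $I\circ\asub_i$; and second, that each rewrite step $\bterm_i\arr{\formrules(\P)}\cdots$ translates into a step using either a rule from $\userules(\P,\formrules(\P))$ or a $\p$-projection rule (after applying $I$ to both sides), because a non-usable rule in $\formrules(\P)$ can only ever fire under $I$ by being projected away.

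Given these two properties, stability of $\geqterm$ and $\gterm$ combined with the hypotheses of the theorem yield $I(\aterm_i)\gterm I(\bterm_i)$ whenever $\rho_i\in\P_1$ and $I(\aterm_i)\geqterm I(\bterm_i)$ whenever $\rho_i\in\P_2$, while $I(\bterm_i)\geqterm I(\aterm_{i+1})$ follows by orienting each step of $\bterm_i\arrr{\formrules(\P)}\aterm_{i+1}$ with the usable rules and the $\p$-rules, both assumed to be weakly decreasing. Since $\P_2$ is chain-free, infinitely many $\rho_i$ must lie in $\P_1$, giving an infinite $\geqterm$-sequence with infinitely many strict $\gterm$-steps, which contradicts well-foundedness of $\gterm$ (using the type-changing adjustment of Section~\ref{subsec:typechange} to stay at base type when necessary).

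The hard part will be verifying that non-usable symbols introduced under $I$ never escape into positions where they could trigger a genuine rewrite step that $I$ cannot simulate. This is exactly what the three clauses in Definition~\ref{def:userules} are designed for: the ``risky-subterm'' clause ensures that variables occurring in a head-of-application position in a right-hand side are treated as propagating all remaining rules into the usable set, while the clauses propagating through abstractions and variables of functional type ensure that a function-valued non-usable fragment cannot later be applied to an argument and initiate an uncovered reduction. Making precise the invariant ``$I$-images only rewrite via usable rules and $\p$-rules'', and checking it at every term shape (variable, abstraction, application, functional term), is where the bulk of the higher-order work concentrates; once it is in place, the rest of the argument mirrors the first-order template.
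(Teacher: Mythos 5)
Your overall strategy is the one the paper uses: a Gramlich-style interpretation that is homomorphic on usable symbols and collapses non-usable ones into $\p_\atype$-combinations, followed by the standard reduction-pair argument on the transformed chain. The ``only if'' direction, the elimination of $\cbeta$ steps for non-collapsing $\P$, and the role you assign to the ``risky'' clauses of Definition~\ref{def:userules} all match the paper's proof sketch.

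However, there is a genuine gap in the central construction. You define $I$ on a term headed by a non-usable symbol by ``wrapping it together with its direct subterms in a nested $\p$-projection''. That is not enough: the interpretation of such a term $\aterm = \app{\afun(\cterm_1,\ldots,\cterm_n)}{\cdots}$ must combine, via $\p_\atype$ and a bottom constant $\bot_\atype$, \emph{both} the term with interpreted arguments \emph{and} the interpretations of all one-step $\arr{\formrules(\P)}$-reducts of $\aterm$ (the paper's $D_\atype(\{\bterm \mid \aterm \arr{\Rules_1} \bterm\})$ component). Your claim that ``a non-usable rule can only ever fire under $I$ by being projected away'' is exactly what fails without the reducts: if a non-usable rule rewrites $\afun(\vec{\cterm})$ at the root to some $r\asub$, then $I(r\asub)$ is in general not reachable by $\Ce$-projections from any combination of $\afun$ and the $I(\cterm_i)$ (consider $\afun(\avar) \arrz \bfun(\avar)$ with both symbols non-usable). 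With the reducts included, the root step is simulated by projecting onto the appropriate element of the reduct list, and this is also precisely where finiteness of $\Rules$ enters (every terminating term has only finitely many direct reducts, so the $\p$-list is well-defined) --- a point your plan never invokes and which must be made explicit, together with the well-definedness of $I$ by induction on $\arr{\Rules} \cup \supterm$ over terminating terms (this is where minimality of the chain is used). Once the interpretation is corrected in this way, the remainder of your argument goes through as you describe.
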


Here, the pair $\overline{l},\overline{p}$ is determined from $l,p$
in a systematic way like in Section~\ref{subsec:typechange}.  The
$\symb{p}_\atype(\avar,\bvar) \geqterm \avar,\bvar$ constraints are
trivially oriented both with algebra interpretations and CPO.

The proof of Theorem~\ref{thm:userules} takes some work, but has
no novelties compared to the proof for the static method
in~\cite{bla:jou:rub:08:1}.  We shall be relatively brief about it.

\begin{proof}[Proof Sketch of Theorem~\ref{thm:userules}]
Let $\symb{p}_\atype : [\atype \times \atype] \decpijl \atype$ and
$\bot_\atype : \atype$ be new symbols for all types $\atype$, and let
$\Ce$ be the set of all rules $\symb{p}_\atype(\avar,\bvar) \arrz
\avar,\ \symb{p}_\atype(\avar,\bvar) \arrz \bvar$.  We will see that
any minimal dependency chain $\rijtje{(\rho_i,\aterm_i,\bterm_i) \mid
i \in \N}$ with all $\rho_i \in \P$ can be transformed into a (not
necessarily minimal) dependency chain $\rijtje{(\rho_i,\aterm_i',
\bterm_i') \mid i \in \N}$ which uses the same dependency pairs,
but where $\bterm_i' \arrr{\userules(\P,
\formrules(\P)) \cup \Ce} \aterm_{i+1}'$ for all $i$.
Following the proof of Theorems~\ref{thm:maintheorem}
and~\ref{thm:typepreserve}, if a reduction pair as described in the
theorem exists, then this transformed chain must use dependency pairs
in $\P_2$ infinitely often, so the same holds for the original.
Thus, if $\P$ is not chain-free, then neither is $\P_2$.  Of course,
if $\P$ is chain-free then so is its subset $\P_2$.

To transform the minimal dependency chain
$\rijtje{(\rho_i,\aterm_i,\bterm_i) \mid i \in \N}$, we note that:

\noindent(I) \emph{every term has only finitely many direct reducts.}

\noindent
This is obvious because $\Rules$ is finite.
Let $\Rules_1 := \formrules(\P)$ and say a symbol $\afun$ is a
\emph{usable symbol} if $p \gsymbuse^* \afun$ for some dependency
pair $l \dppijl p \in \P$ (where $\gsymbuse^*$ is based on $\Rules_1$
rather than $\Rules$), or $\afun$ is a constructor symbol.  We
assume:\medskip

\noindent(II) \emph{$\userules(\P,\Rules_1) \neq \Rules_1$.}

\noindent
This is a safe assumption, because if $\userules(\P,\Rules_1) =
\Rules_1$ the theorem is automatically satisfied.  Now, because
of (I) we can define the function $\varphi$ on terminating terms,
given by the clauses:
\begin{iteMize}{$\bullet$}
\item $\varphi(\abs{\avar}{\cterm}) = \abs{\avar}{\varphi(\cterm)}$
\item $\varphi(\app{\avar}{\cterm_1} \cdots \cterm_n) =
  \app{\avar}{\varphi(\cterm_1)} \cdots \varphi(\cterm_n)$
\item $\varphi(\app{\afun(\cterm_1,\ldots,\cterm_n)}{\cterm_{n+1}}
  \cdots \cterm_m) = \app{\afun(\varphi(\cterm_1),\ldots,
  \varphi(\cterm_n))}{\varphi(\cterm_{n+1})} \cdots \varphi(\cterm_m)
  $ if $\afun$ is a usable symbol
\item $\varphi(\aterm) = \symb{p}_\atype(\app{\afun(
  \varphi(\cterm_1),\ldots,\varphi(\cterm_n))}{\varphi(\cterm_{n+1})}
  \cdots \varphi(\cterm_m),D_\atype(\{\bterm \mid \aterm
  \arr{\Rules_1} \bterm\}))$ if $\aterm : \atype$ and $\aterm =
  \app{\afun(\cterm_1,\ldots,\cterm_n)}{\cterm_{n+1}} \cdots \cterm_m$
  with $\afun$ not a usable symbol
\item $\varphi(\aterm) = D_\atype(\{\bterm \mid \aterm \arr{\Rules_1}
  \bterm\})$ if $\aterm : \atype$ and $\aterm$ does not have any of
  these forms
\end{iteMize}
Here, $D_\atype(\emptyset) = \bot_\atype$ and $D_\atype(X) =
\p_\atype(\varphi(\cterm),D_\atype(X \setminus \{\cterm\}))$ if $X$
is non-empty and $\cterm$ is lexicographically its smallest element
(this is only defined for finite sets $X$).

Writing $\asub^\varphi$ for a substitution $[\avar:=\varphi(\asub(
\avar)) \mid \avar \in \domain(\asub)]$, it is not hard to see that:\medskip

\noindent(III) \emph{if $\aterm$ has no subterms $\app{(\abs{\avar}{\bterm})}{
\cterm}$ or $\app{\avar}{\cterm}$ with $\avar \in \domain(\asub)$,
then $\varphi(\aterm\asub) \arrr{\Ce} \aterm\asub^\varphi$.  If
all symbols occurring in $\aterm$ are usable symbols, then even
$\varphi(\aterm\asub) = \aterm\asub^\varphi$.}

\noindent
The proof holds by induction on the form of $\aterm$, noting that the
last case of the definition of $\varphi$ is never applicable.
With induction on the size of $X$ it follows easily that
$D_\atype(X) \arrr{\Ce} \varphi(\aterm)$ for any $\aterm \in
X$.  Combining this with (III) we can derive:\medskip

\noindent(IV) \emph{if $\aterm \arr{\Rules_1} \bterm$ with $\aterm$
terminating, then $\varphi(\aterm) \arrr{\userules(\P,\Rules_1) \cup
\Ce} \varphi(\bterm)$.}

\noindent
This holds with induction on the form of $\aterm$; all
the induction cases are trivial either with the induction hypothesis
or with the observation that $D_\atype(X) \arrr{\Ce} \varphi(\cterm)$
if $\cterm \in X$.  By the same observation, the base case (a headmost
step) is easy if $\aterm = \app{\afun(\vec{\cterm})}{\vec{\dterm}}$
with $\afun$ not a usable symbol, or if $\aterm$ reduces with a
headmost $\beta$-step.  What remains is the case when $\aterm =
\app{\afun(\vec{\cterm})}{\vec{\dterm}}$ with $\afun$ a usable
symbol, and the reduction is headmost: $\aterm = \app{l\asub}{
\dterm_{k+1}} \cdots \dterm_m$ and $\bterm = \app{r\asub}{
\dterm_{k+1}} \cdots \dterm_m$.  But since $\afun$ is a usable
symbol, $l \arrz r$ is a usable rule, so since rules are
$\beta$-normal: $\varphi(\aterm) = \app{\varphi(l\asub)}{
\varphi(\dterm_{k+1})} \cdots \varphi(\dterm_m) \arrr{\Ce}
\app{l\asub^\varphi}{\varphi(\dterm_{k+1})} \cdots
\varphi(\dterm_m) \arr{\userules(\P,\Rules_1)}
\app{r\asub^\varphi}{\varphi(\dterm_{k+1})} \cdots \varphi(\dterm_m)
= \app{\varphi(r\asub)}{\varphi(\dterm_{k+1})} \cdots
\varphi(\dterm_m)$ by (III), which equals $\varphi(\bterm)$ because
either $r$ has the form $\app{\bfun(\vec{\eterm})}{\vec{\fterm}}$,
or $m = k$ (since otherwise $\userules(\P,\Rules_1) = \Rules_1$,
contradicting (II)).

Now we are almost there.  Let $\rijtje{(\rho_i,\aterm_i,\bterm_i)
\mid i \in \N}$ be a minimal dependency chain with all $\rho_i \in
\P$, so for each $i$ we can write $\rho_i = l_i \dppijl p_i$ and
$\aterm_i = l_i\asub_i,\ \bterm_i = p_i\asub_i$.  Let
$\aterm_i' = l_i\asub_i^\varphi$ and $\bterm_i' =
p_i\asub_i^\varphi$ (this is well-defined because the strict subterms
of $\aterm_i$ and $\bterm_i$ are terminating).  Then by (III) and (IV)
$p_i\asub_i^\varphi \arrr{\userules(\P,\formrules(\P)) \cup \Ce,in}
l_{i+1}\asub_{i+1}^\varphi$ as required.
\end{proof}

The notion of usable rules is in particular useful for systems with
a first-order subset, as the usable rules of first-order dependency
pairs will typically be first-order.  As discussed
in~\cite{fuh:kop:11:1}, we can use this to apply first-order
termination proving techniques (and termination tools!) to a part of
a higher-order dependency pair problem.\newpage

\section{Algorithm}\label{sec:algorithm}

\summary{In this section, we combine all the results so far in a
ready-to-use algorithm.}

All the results in this paper can be readily combined in one
algorithm, to prove termination of an AFS (\llfe\ or not):
\begin{enumerate}[(1)]
\item \label{alg:prepare}
  \begin{iteMize}{$\bullet$}
  \item
  determine whether the system is \llfe\ (Definition~\ref{def:llfe}), 
  \item
  complete the system (Definition~\ref{def:completing}),
  \item
  determine its dependency pairs (Definition~\ref{def:deppair})
  \item
  calculate an
  approximation $G$ for the dependency graph (Section~\ref{subsec:graph});
  \end{iteMize}
\item \label{alg:start}
  remove all nodes and edges from $G$ which are not part of a cycle
  (Section~\ref{subsec:graph});
\item \label{alg:maxcyc}
  if $G$ is empty return \texttt{terminating};
  otherwise choose an SCC $\P$ (Section~\ref{subsec:graph});
\item \label{alg:subcrit}
  if $\P$ is non-collapsing, find a projection function $\nu$
  such that $\overline{\nu}(l) \suptermeq \overline{\nu}(p)$ for all
  $l \dppijl p \in \P$, and $\overline{\nu}(l)
  \supterm \overline{\nu}(p)$ at least once
  (Section~\ref{subsec:subcrit});
  if this succeeds, remove all
  strictly oriented pairs from $G$ and continue
  with~\ref{alg:start}, otherwise continue
  with~\ref{alg:weaknonweak};
\item \label{alg:weaknonweak}
  a. if $\P$ is non-collapsing:
  \begin{iteMize}{$\bullet$}
  \item let $S := \emptyset$ and $\Sigma := \up{\F}_c$;
  \item let $\psi(\aterm)$ be defined as just $\aterm$;
  \item let $A\! :=\! \userules(\P,\formrules(\P))$ if
    $\Rules$ is \llfe, $A\! :=\! \userules(\P,\Rules)$ otherwise
    (Definition~\ref{def:formrules},\ref{def:userules});
  \end{iteMize}
  b. if $\P$ is collapsing and $\Rules$ is not \llfe:
  \begin{iteMize}{$\bullet$}
  \item let $S := \F$ and $\Sigma := \up{\F}_c$;
  \item let $\psi(\aterm)$ be defined as just $\aterm$;
  \item let $A := \Rules \cup \{ \afun(\vec{\avar}) \arrz \up{\afun}(
    \vec{\avar}) \mid \afun \in \Defineds\}$;
  \end{iteMize}
  c. if $\P$ is collapsing and $\Rules$ is \llfe:
  \begin{iteMize}{$\bullet$}
  \item let $S$ be the set of function symbols $\afun^- : \atype$
    where $\afun : \atype \in \F$ and $\afun$ occurs below an
    abstraction in a right-hand side of $\formrules(\P) \cup
    \P$, and let $\Sigma := \up{\F}_c \cup S$;
  \item let $\psi(\aterm)$ be defined as $\ttag(\aterm)$
    (Definition~\ref{def:ttag});
  \item let $A := \formrules(\P)^\ttag \cup \{ \afun^-(\vec{\avar})
    \arrz \afun(\vec{\avar}), \bfun^-(\vec{\avar}) \arrz \up{\bfun}(
    \vec{\avar}) \mid \afun^-,\bfun^- \in S,\ \bfun \in \Defineds
    \}$;
  \end{iteMize}
\item \label{alg:problem}
  determine a partitioning $\P = \P_1 \uplus \P_2$
  and a reduction pair (Section~\ref{subsec:redord})
  $(\geqterm,\gterm)$ with:
  \begin{enumerate}
  \item $\app{l}{\avar_1} \cdots \avar_n \gterm \app{\psi(p)}{\c_{
    \atype_1}} \cdots \c_{\atype_m}$ for $l \dppijl p \in \P_1$
    (both sides base type, fresh $\vec{\avar}$);
  \item $\app{l}{\avar_1} \cdots \avar_n \geqterm \app{\psi(p)}{\c_{
    \atype_1}} \cdots \c_{\atype_m}$ for $l \dppijl p \in \P_2$
    (both sides base type, fresh $\vec{\avar}$);
  \item $l \geqterm r$ for $l \arrz r \in A$;
  \item either $\P$ is non-collapsing, or $(\geqterm,\gterm)$
    respects $\supterm^S$ (Definition~\ref{def:suptermS});
  \item if $\Rules$ is \llfe\ and $\P$ collapsing: \\
    for all $\afun \in \F^- \setminus S$: $\afun^-(\vec{\avar})
    \geqterm \afun(\vec{\avar}),\up{\afun}(\vec{\avar})$ and
    $\app{\afun^-(\vec{\avar})}{\vec{\bvar}} \geqterm
    \app{\avar_i}{\vec{\c}}$ (**).
  \end{enumerate}
  if this step fails, return \texttt{fail}; \\
  \emph{suitable reduction pairs can be found with e.g. weakly
  monotonic algebras (Section~\ref{sec:monalg}), or argument
  functions and CPO (Section~\ref{sec:argfun})}
\item \label{alg:removed1}
  remove all pairs in $\P_1$ from the graph, and continue with
  (\ref{alg:start}).
\end{enumerate}
(**) Since the symbols $\afun \in \F^- \setminus S$ do not occur in
any of the other constraints, all the common reduction pairs (CPO,
weakly monotonic algebras and most existing first-order techniques)
can easily be extended to satisfy these requirements; for example,
given a path ordering which satisfies the rest of the requirements,
just add ``$\afun^- >_F \afun$'' to the function symbol precedence.
That is why we split off these requirements rather than including the
symbols in $S$, and why we generally ignore this restriction on
$(\geqterm,\gterm)$ in examples.

This algorithm iterates over the graph approximation, removing nodes
and edges until none remain; this technique originates
in~\cite{hir:mid:05:1}.  It is justified by the following
observation:

\begin{lemma}\label{lem:subcycles}
Let $G$ be an approximation of the dependency graph of an AFS
$\Rules$, and let $\P$ be a set of dependency pairs.  Suppose that
every SCC in the subgraph of $G$ which contains only nodes in $\P$ is
chain-free.  Then also $\P$ is chain-free.
\end{lemma}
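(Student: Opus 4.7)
The plan is to argue by contradiction, mimicking the proof of Lemma~\ref{lem:cyclenondangerous} but internalising the argument to the subgraph $G|_{\P}$ of $G$ obtained by keeping only nodes in $\P$ (and the edges between them). Suppose $\P$ is not chain-free, so that there exists a minimal dependency chain $\rijtje{(\rho_i,\aterm_i,\bterm_i) \mid i \in \N}$ with every $\rho_i \in \P \cup \{\cbeta\}$. Because infinitely many $\rho_i$ are non-$\cbeta$ (by Lemma~\ref{lem:emptynondangerous}) and $\DP$ is finite, some dependency pair $\rho \in \P$ occurs infinitely often in the chain.

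Next I would exploit the key property of the graph approximation: whenever $\rho_n,\rho_m \in \DP$ with $n < m$, the reduction $\bterm_n \arrr{\Rules,in} \aterm_{n+1} \arrr{} \cdots \arrr{} \aterm_m$ (with interleaved dependency pair or $\cbeta$ steps) witnesses instances $\subst{p_n}{\asub_n}$ and $\subst{l_m}{\bsub_m}$ such that there is a path from $\rho_n$ to $\rho_m$ in $G$, and every intermediate node of this path is some $\rho_k$ with $n < k < m$. Since all such $\rho_k$ lie in $\P$ by assumption on the chain, this path stays inside $G|_{\P}$. Applying this to two successive occurrences of $\rho$, we find a nontrivial cycle through $\rho$ in $G|_{\P}$, so $\rho$ belongs to some SCC $\acycle$ of $G|_{\P}$.

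I would then show that from some index $I$ onwards, every non-$\cbeta$ pair $\rho_j$ lies in $\acycle$: fixing an occurrence $\rho_i = \rho$ with $i \geq I$, for any $j > i$ there is a path $\rho \to \cdots \to \rho_j$ in $G|_{\P}$, and for any later occurrence $\rho_k = \rho$ with $k > j$ there is a path $\rho_j \to \cdots \to \rho$ in $G|_{\P}$, so $\rho_j$ is SCC-equivalent to $\rho$. Hence the tail $\rijtje{(\rho_j,\aterm_j,\bterm_j) \mid j \geq I}$ is a minimal dependency chain whose non-$\cbeta$ pairs lie in $\acycle$; since $\formrules(\acycle) \supseteq \formrules(l_{j+1})$ for each such next left-hand side (and using, if needed, the refined construction behind Theorem~\ref{thm:dependencychainweak} to obtain a chain whose reductions use only $\formrules(l_{j+1})$), this tail witnesses that $\acycle$ is not chain-free, contradicting the hypothesis.

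The only subtle point, and the one requiring care, is the passage from the global formative-rules set $\formrules(\P)$ in the definition of chain-freeness to the local set $\formrules(\acycle)$ needed for the sub-SCC. I expect this to be essentially routine because the construction in Theorem~\ref{thm:dependencychainweak} gives a chain where each internal reduction $\bterm_i \arrr{} \aterm_{i+1}$ can be assumed to use only the formative rules of the next left-hand side $l_{i+1}$, and once we are inside $\acycle$ these are automatically contained in $\formrules(\acycle)$. Apart from that bookkeeping, the proof is a straightforward pigeonhole-plus-SCC argument in the style of Lemma~\ref{lem:cyclenondangerous}.
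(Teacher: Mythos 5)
Your proof is correct and is exactly what the paper intends: the paper's own proof of this lemma is the single line ``a trivial adaptation of the proof of Lemma~\ref{lem:cyclenondangerous}'', and your pigeonhole-plus-SCC argument carried out inside the subgraph $G|_{\P}$ (using that the path between successive non-$\cbeta$ chain elements stays within $\P$) is precisely that adaptation. The one subtlety you flag --- that under the revised definition of chain-freeness the internal reductions must be shifted from $\formrules(\P)$ down to $\formrules(\acycle)$ --- is genuinely glossed over by the paper, and your plan to repair it via the formative-rules reduction machinery (Lemma~\ref{lem:formativereduction}) is the right way to do so.
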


\begin{proof}
A trivial adaptation of the proof of
Lemma~\ref{lem:cyclenondangerous}.
\end{proof}

The algorithm seeks to prove that $\DP(\Rules)$ is either
tagged-chain-free (if $\Rules$ is \llfe) or chain-free (if not),
which suffices by Theorems~\ref{thm:dependencychainweak}
and~\ref{thm:dependencychain} respectively.
\begin{enumerate}[(1)]
\item To prove that $\DP(\Rules)$ is (tagged-)chain-free, it suffices
  to prove that all SCCs of $G$ are either tagged-chain-free (if
  $\Rules$ is \llfe\ and $\P$ is non-collapsing) or chain-free (if
  $\Rules$ is not \llfe\ or $\P$ is collapsing); this is valid by
  Lemmas~\ref{lem:cyclenondangerous} and~\ref{lem:ignoretags}.
\item Thus, until $G$ is empty, we choose an SCC $\P$ and show that
  $\P$ is tagged-chain-free or chain-free as required if some strict
  subset $\P'$ is (tagged-)chain-free (using the subterm criterion
  or a reduction pair, which is justified by Theorems
  \ref{thm:typepreserve}, \ref{thm:weakfinish} and
  \ref{thm:subcrit}). By Lemma~\ref{lem:subcycles} this is the case
  if all sub-SCCs $\P_1,\ldots,\P_n$ of $\P'$ are (tagged-)chain-free.
  To see that all remaining SCCs of $G$ and $\P_1,\ldots,\P_n$ are
  (tagged-)chain-free, we remove all strictly oriented pairs from
  $G$, and remove all nodes not on a cycle, so the SCCs of the result
  are exactly the sets that must be proved (tagged-)chain-free.
\end{enumerate}

\begin{example}\label{ex:eval}
We apply the algorithm on the AFS $\eval$,
with the 
following symbols:
\[
\begin{array}{rclrclrcl}
\nul & : & \M &
\dom & : & [\M \times \M \times \M]  \decpijl \M &
\eval & : & [\M \times \M] \decpijl \M \\
\suc & : & [\M] \decpijl \M\ \ \  &
\fun & : & [\M \typepijl \M \times \M \times \M] \decpijl \M\ \ \  \\
\end{array}
\]
and the following rewrite rules:
\[
\begin{array}{rclrcl}
\dom(\suc(x),\suc(y),\suc(z)) & \arrz & \suc(\dom(x,y,z))\ \ \ \  &
\dom(x,y,\nul) & \arrz & x \\
\dom(\nul,\suc(y),\suc(y)) & \arrz & \suc(\dom(\nul,y,z)) &
\dom(\nul,\nul,z)& \arrz & \nul \\
\eval(\fun(F,x,y),z) & \arrz & \app{F}{(\dom(x,y,z))} \\
\end{array}
\]
The $\fun$ symbol represents a function over the natural numbers
with an interval it is defined on; $\eval$ calculates its value in
a point, provided the point is in the domain of the function.
We prove termination of $\eval$ by following the algorithm.

\emph{(\ref{alg:prepare}): complete the rules, calculate the
dependency pairs and approximate the dependency graph.}
The AFS $\eval$ is local.
Because $\eval$ has no rules of the form $l \arrz \abs{\avar}{r}$,
the completed system is the same as the original system.
We have the following four dependency pairs:
\[
\begin{array}{rclrcl}
\up{\eval}(\fun(F,x,y),z) & \dppijl & \app{F}{\dom(x,y,z)} &
\up{\dom}(\suc(x),\suc(y),\suc(z)) & \dppijl & \up{\dom}(x,y,z) \\
\up{\eval}(\fun(F,x,y),z) & \dppijl & \up{\dom}(x,y,z) &
\up{\dom}(\nul,\suc(y),\suc(z)) & \dppijl & \up{\dom}(\nul,y,z) \\
\end{array}
\]
We will use the following (approximation of the) dependency graph:

\begin{tikzpicture}[->]
\begin{scope}[>=stealth]

\tikzstyle{veld} = [draw, fill=white,  drop shadow, 
  minimum height=0em, minimum width=0em, rounded corners]

\node (topcentre) { };
\node (botcentre) [below of=topcentre, node distance=15mm] { };

\node [veld] (e1) [left of=topcentre, anchor=east, node distance=5mm] {
    $\up{\eval}(\fun(F,x,y),z) \dppijl \app{F}{\dom(x,y,z)}$
  };
\node (e1a) [above of=e1, node distance=2mm] {};
\node (e1b) [below of=e1, node distance=2mm] {};
\node (e1c) [right of=e1b, node distance=32mm] {};

\node [veld] (e2) [right of=topcentre,anchor=west,node distance=5mm] {
    $\up{\eval}(\fun(F,x,y),z) \dppijl \up{\dom}(x,y,z)$
  };
\node (e2a) [below of=e2, node distance=2mm] {};
\node (e2b) [left of=e2a, node distance=30mm] {};

\node [veld] (d1) [left of=botcentre,anchor=east,node distance=5mm] {
    $\up{\dom}(\suc(x),\suc(y),\suc(z)) \dppijl \up{\dom}(x,y,z)$
  };
\node (d1a) [below of=e1b, node distance=11mm] {};
\node (d1b) [right of=d1a, node distance=32mm] {};
\node (d1c) [below of=d1a, node distance=5mm] {};

\node [veld] (d2) [right of=botcentre,anchor=west,node distance=5mm] {
    $\up{\dom}(\nul,\suc(y),\suc(z)) \dppijl \up{\dom}(\nul,y,z)$
  };
\node (d2a) [below of=e2a, node distance=11mm] {};
\node (d2b) [left of=d2a, node distance=30mm] {};
\node (d2c) [below of=d2a, node distance=5mm] {};

\draw (e1a) to[out=45,in=135,looseness=5] (e1a);
\draw (e1) -- (e2);
\draw (e1b) -- (d1a);
\draw (e1c) -- (d2b);
\draw (e2b) -- (d1b);
\draw (e2a) -- (d2a);
\draw (d1) -- (d2);
\draw (d1c) to[out=225,in=315,looseness=5] (d1c);
\draw (d2c) to[out=225,in=315,looseness=5] (d2c);
\end{scope}
\end{tikzpicture}

\emph{Step~\ref{alg:start}: remove nodes and edges not part of a
cycle.}  This removes the second $\up{\eval}$ dependency pair, and
leaves the following graph: \\
\begin{tikzpicture}[->]
\begin{scope}[>=stealth]

\tikzstyle{veld} = [draw, fill=white,  drop shadow, 
  minimum height=0em, minimum width=0em, rounded corners]

\node (topcentre) { };
\node (botcentre) [below of=topcentre, node distance=15mm] { };

\node [veld] (e1) [left of=topcentre, anchor=east, node distance=5mm] {
    $\up{\eval}(\fun(F,x,y),z) \dppijl \app{F}{\dom(x,y,z)}$
  };
\node (e1a) [above of=e1, node distance=2mm] {};

\node [veld] (d1) [left of=botcentre,anchor=east,node distance=5mm] {
    $\up{\dom}(\suc(x),\suc(y),\suc(z)) \dppijl \up{\dom}(x,y,z)$
  };
\node (d1c) [below of=d1, node distance=2mm] {};

\node [veld] (d2) [right of=botcentre,anchor=west,node distance=5mm] {
    $\up{\dom}(\nul,\suc(y),\suc(z)) \dppijl \up{\dom}(\nul,y,z)$
  };
\node (d2c) [below of=d2, node distance=2mm] {};

\draw (e1a) to[out=45,in=135,looseness=5] (e1a);
\draw (d1c) to[out=225,in=315,looseness=5] (d1c);
\draw (d2c) to[out=225,in=315,looseness=5] (d2c);
\end{scope}
\end{tikzpicture}

\emph{Step~\ref{alg:maxcyc}: return \texttt{terminating} if $G$ is
empty, otherwise choose an SCC $\P$.}
Since the graph is not empty, we choose $\P = \{
\up{\dom}(\nul,\suc(y),\suc(z)) \dppijl \up{\dom}(\nul,y,z)\}$.

\emph{Step~\ref{alg:subcrit}: apply the subterm criterion if
possible.}
The chosen set is indeed non-collapsing; we choose a projection
function $\nu(\up{\dom}) = 2$, and have $\overline{\nu}(\up{\dom}(
\nul,\suc(y),\suc(z))) = \suc(y) \supterm u = \overline{\nu}(
\up{\dom}(\nul,y,z))$, and can remove this dependency pair from $G$.

\begin{tikzpicture}[->]
\begin{scope}[>=stealth]

\tikzstyle{veld} = [draw, fill=white,  drop shadow, 
  minimum height=0em, minimum width=0em, rounded corners]

\node (topcentre) { };

\node [veld] (e1) [left of=topcentre, anchor=east, node distance=5mm] {
    $\up{\eval}(\fun(F,x,y),z) \dppijl \app{F}{\dom(x,y,z)}$
  };
\node (e1a) [above of=e1, node distance=2mm] {};

\node [veld] (d1) [right of=topcentre,anchor=west,node distance=5mm] {
    $\up{\dom}(\suc(x),\suc(y),\suc(z)) \dppijl \up{\dom}(x,y,z)$
  };
\node (d1c) [below of=d1, node distance=2mm] {};

\draw (e1a) to[out=45,in=135,looseness=5] (e1a);
\draw (d1c) to[out=225,in=315,looseness=5] (d1c);
\end{scope}
\end{tikzpicture}

\emph{Step~\ref{alg:start}: remove nodes and edges not part of a
cycle.} Everything is on a cycle.

\emph{Step~\ref{alg:maxcyc},\ref{alg:subcrit}: choose the next SCC
and apply the subterm criterion.}
We choose the set $\P = \{\up{\dom}(\suc(x),\suc(y),\suc(z)) \dppijl
\up{\dom}(x,y,z)\}$, which is non-collapsing, and apply the subterm
criterion with $\nu = 2$.  Since $\suc(y) \supterm y$, this leaves
the graph with a single node:

\begin{tikzpicture}[->]
\begin{scope}[>=stealth]

\tikzstyle{veld} = [draw, fill=white,  drop shadow, 
  minimum height=0em, minimum width=0em, rounded corners]

\node [veld] (e1) {
    $\up{\eval}(\fun(F,x,y),z) \dppijl \app{F}{\dom(x,y,z)}$
  };
\node (e1a) [above of=e1, node distance=2mm] {};

\draw (e1a) to[out=45,in=135,looseness=5] (e1a);
\end{scope}
\end{tikzpicture}

\emph{Step~\ref{alg:start}: remove nodes and edges not part of a
cycle.} Everything is on a cycle.

\emph{Step~\ref{alg:maxcyc}: return \texttt{terminating} if $G$ is
empty, otherwise choose an SCC $\P$.}
Since the graph is not empty, we choose $\P = \{
\up{\eval}(\fun(F,x,y),z) \dppijl \app{F}{\dom(x,y,z)} \}$.

\emph{Step~\ref{alg:subcrit}: apply the subterm criterion if
possible.}  This is not possible, as $\P$ is collapsing.

\emph{Step~\ref{alg:weaknonweak}: determine $\Sigma,\ S,\ A$ and
$\psi$.}
Noting that the system is \llfe, but has abstractions in neither left-
nor right-hand sides, we let $\Sigma := \{ \nul,
\suc, \dom, \fun, \eval, \up{\dom}, \up{\eval} \}$ and $S :=
\emptyset$,
while $\psi$ is the tagging function.  Since $S = \emptyset$, we
calculate $A = \formrules(\P) =
\formrules(\up{\eval}(\fun(F,x,y),z) \dppijl \app{F}{\dom(x,y,z)}) =
\formrules(\fun(F,x,y)) \cup \formrules(z) = \formrules(\fun(F,x,y))$,
which consists of the rules $\eval(\fun(F,x,y),z) \arrz \app{F}{\dom(
x,y,z)},\ \dom(x,y,\nul) \arrz x$ and $\dom(\nul,\nul,z) \arrz \nul$
(since not $\fun \gsymbform^* \suc$).

\emph{Step~\ref{alg:problem}: determine a suitable partitioning and
reduction pair.}  As $\P$ consists of only one dependency pair,
we must choose $\P_1 = \P$.  We have the following ordering
constraints:
\[
\begin{array}{rclrcl}
\up{\eval}(\fun(F,x,y),z) & \gterm & \app{F}{\dom(x,y,z)}\ \ \ \  &
\dom(x,y,\nul) & \geqterm & x \\
\eval(\fun(F,x,y),z) & \geqterm & \app{F}{\dom(x,y,z)} &
\dom(\nul,\nul,z) & \geqterm & \nul \\
\end{array}
\]
The first constraint is for the dependency pair, the others for the formative
rules.  Using an argument function $\argfil$ which maps $\dom(x,y,z)$ 
to $\dom'(x,y)$ and otherwise maps $\afun(\vec{\avar})$ to itself
($\argfil$ is an \emph{argument filtering}), this leaves the
following constraints:
\[
\begin{array}{rclrcl}
\up{\eval}(\fun(F,x,y),z) & \gterm & \app{F}{\dom'(x,y)}\ \ \ \  &
\dom'(x,y) & \geqterm & x \\
\eval(\fun(F,x,y),z) & \geqterm & \app{F}{\dom'(x,y)} &
\dom'(\nul,\nul) & \geqterm & \nul \\
\end{array}
\]
This is satisfied with $(\geqtermcpo,\gtermcpo)$ from
Section~\ref{sec:argfun}, using a precedence $\fun >_F
\dom' >_F \suc,\nul$.

\emph{Steps~\ref{alg:removed1},\ref{alg:start},\ref{alg:maxcyc}:
remove all pairs in $\P_1$ and everything that's not on a cycle
anymore.}  We remove the last node; the graph is empty.  We conclude:
this AFS is terminating.
\vspace{-6pt}
\end{example}

\section{Extensions}\label{sec:improvements}

\summary{In this section we discuss two improvements to the method
in this paper: a combination with the \emph{static dependency pair}
approach, and a way to deal with AFSs with infinitely many rules, in
particular those generated from a polymorphic AFS.
}

\subsection{Merging the Static and Dynamic Approach}\label{sec:merge}

The dynamic style of dependency pairs presented in this paper is not
the only way to do dependency pairs; as discussed in
Section~\ref{sec:relatedwork} there are several strong results in
\emph{static} dependency pairs.  Most importantly, in a static
dependency pair approach we do not have to consider collapsing
dependency pairs, and hence the dependency graph is usually simpler.
Usable rules and the subterm criterion are more often applicable,
and argument filterings can be used without restrictions.

Since the dynamic approach is applicable to a larger class of AFSs,
but the static approach gives easier constraints,
it seems sensible for a termination tool to implement both.  
Static dependency pairs are defined for HRSs, but extending the
proof to AFSs takes next to no effort, and using
Lemma~\ref{lem:formativereduction} we can add formative rules to it
as well.  Below, we give a short overview without proofs; for the
full work on static dependency pairs,
see~\cite{kus:iso:sak:bla:09:1,suz:kus:bla:11}.

\begin{definition}[AFS with Base Output Types]
An AFS $(\F,\Rules)$ \emph{has base output types} if for all
$\afun : [\atype_1 \times \ldots \times \atype_n] \decpijl \btype \in
\F$, the output type $\btype$ is a base type.
\end{definition}
The AFS for $\eval$ from Example~\ref{ex:eval} has base output
types, but the AFS $\twice$ does not, because of the symbol $\twice :
[\nat \typepijl \nat] \decpijl \nat \typepijl \nat$.
It is always possible to transform an AFS into an AFS with base
output types by $\eta$-expanding the rules~\cite{kop:11:1}, without
losing non-termination; however, termination may be lost by this
transformation.

\begin{definition}[Plain Function Passing AFS] An AFS is \emph{plain
function passing} (PFP) if for all rules $\afun(l_1,\ldots,l_n) \arrz
r$: if $r$ contains a functional variable $F$, then $F$ is one of the
$l_i$.
\end{definition}
\noindent
The notion is slightly simplified from the original definition, for
the sake of easy explanation.

The set of \emph{static dependency pairs} of a PFP AFS with
base output types consists of the pairs
$\up{l} \dppijl \up{\bfun}(\vec{\aterm})$
with $l \arrz r$ a rewrite rule,
$r \suptermeq \bfun (\vec{\aterm})$, 
and $\bfun$ a defined symbol.
Note that the right-hand side of a static dependency pair 
may contain variables which do not occur in its left-hand side.
For example, the rewrite rule
$\I(\suc(n)) \arrz \twice(\abs{\avar}{\I(\avar)},n)$
has two static dependency pairs:
$\up{\I}(\suc(n)) \dppijl \up{\twice}(\abs{\avar}{\I(\avar)},n)$ and
$\up{\I}(\suc(n)) \dppijl \up{\I}(\avar)$.

A \emph{static dependency chain} is a sequence 
$\rijtje{(\rho_i,\aterm_i,\bterm_i) \mid i \in \N}$ where:
\begin{iteMize}{$\bullet$}
\item each $\rho_i$ is a static dependency pair $l \dppijl p$;
\item there is some substitution $\asub$ such that $\aterm_i =
  l\asub$ and $\bterm_i = r\asub$;
\item each $\bterm_i \arrr{\formrules(\rho_{i+1})} \aterm_{i+1}$
  (where $\formrules(\rho_{i+1}) = \Rules$ in a non-\llfe\ AFS);
\item the immediate subterms of all $\bterm_i$ are ``computable''
  (which implies termination).
\end{iteMize}

\begin{claim}\label{cla:static}
A plain function passing AFS with base output types is terminating if
it does not have a static dependency chain.
\end{claim}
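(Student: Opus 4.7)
The plan is to prove the contrapositive via the Tait--Girard reducibility / computability method, essentially adapting Blanqui's argument \cite{bla:06:1} to the AFS formalism. For each type $\atype$ I would define a set $\lamalgint{\atype}$ of \emph{computable} terms by induction on $\atype$: at a base type $\abasetype$, $\lamalgint{\abasetype}$ consists of all terminating terms of type $\abasetype$; at a functional type $\atype \typepijl \btype$, $\aterm \in \lamalgint{\atype \typepijl \btype}$ iff $\app{\aterm}{\bterm} \in \lamalgint{\btype}$ for every $\bterm \in \lamalgint{\atype}$. The standard reducibility facts would be established: every computable term is terminating; computability is preserved under reduction; variables, constants $\c_\atype$, and constructor-headed terms with computable arguments are computable; and abstractions over computable bodies are computable.

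The heart of the proof is the \emph{closure lemma}: for any rewrite rule $\afun(l_1,\ldots,l_n) \cdot l_{n+1} \cdots l_m \arrz r$ and any substitution $\asub$ mapping $\FV(\vec{l})$ to computable terms, $r\asub$ is computable provided that for every candidate subterm $\bfun(\vec{p})$ of $r$ with $\bfun \in \Defineds$, the term $\bfun(\vec{p})\asub'$ is computable for every computable extension $\asub'$ of $\asub$ to the extra free variables of $\bfun(\vec{p})$. The PFP condition is exactly what makes this work: because every functional variable appearing in $r$ must already appear among $\vec{l}$, any application $\app{F}{\vec{q}}$ arising in $r\asub$ has $F$ bound to a computable term by $\asub$, so no uncontrolled higher-order variable is introduced. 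The base-output-types assumption guarantees that defined-symbol subterms of $r$ occur at base type, which is where the induction on $\supterm$ used to prove computability degenerates to termination and is therefore tractable.

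Once the closure lemma is in place, assume $\Rules$ non-terminating. Since computable terms terminate, let $\aterm_0$ be a minimal non-computable term; by the closure properties for variables, abstractions, and constructor-headed terms, $\aterm_0$ must be headed by a defined symbol $\afun$ applied to computable arguments. Match $\aterm_0$ against a left-hand side $\afun(\vec{l}) \cdots$ via some $\asub$ with $\vec{l}\asub = \aterm_0$; by the closure lemma, the failure of $r\asub$ to be computable is witnessed by a candidate subterm $\bfun(\vec{p})$ of $r$ and an extension $\asub'$ making $\bfun(\vec{p})\asub'$ non-computable. The triple $(\up{\afun}(\vec{l}) \dppijl \up{\bfun}(\vec{p}),\, \up{\afun}(\vec{l})\asub',\, \up{\bfun}(\vec{p})\asub')$ is the first step of the desired static dependency chain; iterating this construction (using Lemma~\ref{lem:formativereduction} in the \llfe\ case to restrict to formative rules between $\bterm_i$ and $\aterm_{i+1}$) yields an infinite static dependency chain whose $\bterm_i$'s all have computable strict subterms, satisfying the "computable subterms" clause in the definition of static chain.

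The main obstacle is the closure lemma. Its proof requires a simultaneous induction that tracks $\beta$-reduction, algebraic reduction by $\Rules$, and the subterm induction on $r$ all at once, while respecting the subtle interaction between free variables bound to computable terms and bound variables that may appear on the right-hand side. The PFP restriction and base output types are used precisely here: without PFP, a term $\app{F}{\vec{q}}$ with $F$ free in $r$ but not in $\vec{l}$ could introduce non-computability that is invisible to any candidate subterm of $r$, and without base output types the candidate $\up{\bfun}(\vec{p})$ could have functional type where computability is strictly stronger than termination and the induction breaks down. The remaining steps (extracting the chain from a minimal non-computable term, and iterating) are essentially combinatorial and follow closely the corresponding arguments in Section~\ref{sec:basic}.
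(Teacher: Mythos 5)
Your proposal follows essentially the same route as the paper: the paper's own justification of Claim~\ref{cla:static} simply defers to the computability-based argument of \cite{kus:iso:sak:bla:09:1}, adapted to the AFS setting and supplemented with Lemma~\ref{lem:formativereduction} to handle the formative-rules clause in the chain definition, which is exactly the Tait--Girard reducibility argument you outline. Your sketch is, if anything, more detailed than what the paper itself provides.
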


Claim~\ref{cla:static} can be verified by mimicking the proof
in~\cite{kus:iso:sak:bla:09:1} for the different setting, and
additionally using Lemma~\ref{lem:formativereduction}; this proof
contains no novelties.

\smallskip \noindent
Observing that all static dependency pairs are non-collapsing,
it is worth noting that a static dependency chain is also a dynamic
minimal
dependency chain as defined in Definition~\ref{def:dependencychain};
just for a different set of dependency pairs.  In the results of
this paper, we did not use the property that the right-hand side of
dependency pairs contains no new variables; thus, we can also use the
algorithm from Section~\ref{sec:algorithm} to prove absence of static
dependency chains.  With static dependency pairs, we can always use
case~\ref{alg:weaknonweak}a, so we have unrestricted argument
filterings, the subterm criterion and usable rules, as we had in the
HRS setting by~\cite{suz:kus:bla:11}.

Now an automatic tool could, for example, try dynamic dependency
pairs, and if that fails but the AFS is PFP, $\eta$-expand the rules
and use the same module to attempt a static approach.
But in some cases we can do better.  
For instance, the dynamic dependency pairs for $\map$ from
Example~\ref{ex:goodmap} are:
\[
\begin{array}{llcl}
(A) & \up{\map}(F,\cons(h,t)) & \dppijl & \app{F}{h} \\
(B) & \up{\map}(F,\cons(h,t)) & \dppijl & \up{\map}(F,t) \\
\end{array}
\]
In the static approach, we only have pair (B).  Thus, if there is a
dependency chain on the static dependency pairs, there is
also one on the dynamic ones.  In cases like this, there is no point
using the dynamic approach instead of the static one.

\begin{definition}
We say an AFS $(\F,\Rules)$ is \emph{strongly plain function passing}
(SPFP) if it is plain function passing, has base output types, and
the right-hand sides of $\Rules$ do not have subterms of the form
$\abs{\avar}{C[\afun(\aterm_1,\ldots,\aterm_n)]}$ where $\afun$ is a
defined symbol and $\avar \in \FV(\afun(\vec{\aterm}))$.
\end{definition}

\begin{theorem}
A SPFP AFS admits a minimal dynamic
dependency chain if it admits a minimal static dependency chain.
A SPFP and left-linear AFS admits a minimal static dependency
chain if and only if it is non-terminating.
\end{theorem}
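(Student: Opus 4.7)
The plan is to prove the two halves separately: first, convert a minimal static chain into a minimal dynamic chain under SPFP; then combine this with Claim~\ref{cla:static} and Theorem~\ref{thm:dependencychainll} to obtain the equivalence in the left-linear case.

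\emph{First half (static chain implies dynamic chain).} Under SPFP, every subterm $\bfun(\vec{\aterm})$ of a right-hand side headed by a defined symbol $\bfun$ satisfies two key properties: base output types force it to carry no trailing arguments (so $n=m$ in the candidate-term clauses of Definition~\ref{def:candidates}), and the third SPFP clause guarantees that no abstraction above this occurrence binds a variable free in $\bfun(\vec{\aterm})$. Consequently the closing operation that would replace bound-but-now-free variables by $\c_\atype$ acts as the identity on $\bfun(\vec{\aterm})$, and each static pair $\up{l} \dppijl \up{\bfun(\vec{\aterm})}$ coincides with a dynamic dependency pair (setting aside for the moment the case where $\bfun(\vec{\aterm})$ is a strict subterm of $l$). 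Given a minimal static chain, I would reuse the triples $(\rho_i,\aterm_i,\bterm_i)$ verbatim and verify the clauses of Definition~\ref{def:dependencychain}: condition~\ref{chain:dp} is immediate, and condition~\ref{chain:inn} holds automatically because each $\bterm_i$ is headed by a marked symbol $\up{\bfun}$, which no rule can reduce, so $\bterm_i \arrr{\Rules} \aterm_{i+1}$ proceeds internally. Minimality transfers because computability of the immediate subterms of $\bterm_i$ implies they are strongly normalising, and hence all strict subterms of $\bterm_i$ are terminating under $\arr{\Rules}$.

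\emph{Second half (equivalence for left-linear SPFP).} The direction from non-termination to the existence of a minimal static chain is the contrapositive of Claim~\ref{cla:static}: SPFP entails plain function passing with base output types, and the notion of static chain as defined in the excerpt already incorporates the computability-based minimality condition, so contraposition yields directly a minimal static chain for any non-terminating SPFP AFS. The converse uses the first half: a minimal static chain yields a minimal dynamic chain, and Theorem~\ref{thm:dependencychainll} then concludes non-termination from any (minimal) dynamic chain in a left-linear AFS. This is the unique point at which left-linearity is used.

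\emph{Main obstacle.} The one subtlety I anticipate concerns those static pairs $\up{l} \dppijl \up{\bfun(\vec{\aterm})}$ in which $\bfun(\vec{\aterm})$ happens to be a strict subterm of $l$, since the dynamic definition explicitly excludes these. I would argue that no genuinely infinite minimal static chain ever exercises such a pair: using it at step $i$ would make $\bterm_i$ a strict subterm of $\aterm_i$, hence terminating by minimality; but then $\aterm_{i+1}$ is a reduct of a terminating term, $\bterm_{i+1}$ is obtained by a dependency-pair step from a terminating $\aterm_{i+1}$, and so on down the chain, eventually contradicting the existence of an infinite tail. Hence the excluded pairs may be assumed absent, and the triple-by-triple correspondence goes through cleanly. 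This bookkeeping is the only place where more than a direct inclusion argument is required.
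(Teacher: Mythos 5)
Your main line --- under SPFP every static pair $\up{l} \dppijl \up{\bfun}(\vec{\aterm})$ is a dynamic pair (base output types prevent trailing arguments, the third SPFP clause makes the $\c_\atype$-closure the identity), so a minimal static chain is read off triple-by-triple as a minimal dynamic chain, and the left-linear equivalence follows from Claim~\ref{cla:static} together with Theorem~\ref{thm:dependencychainll} --- is exactly the route the paper takes; its own proof is a two-sentence version of your first two paragraphs, and those parts of your argument are fine.

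The step that fails is your resolution of the ``main obstacle''. You claim termination propagates down the tail once an excluded pair $\up{l} \dppijl \up{p}$ with $p \subterm l$ is used, but the propagation breaks at the very next dependency-pair step: from ``$\aterm_{i+1}$ is terminating'' you cannot conclude that $\bterm_{i+1}$ is terminating. Indeed $\aterm_{i+1} = \up{l_{i+1}}\asub$ is terminating in \emph{every} minimal static chain (its root symbol is marked, hence never rewritten at the root, and its arguments are terminating), yet $\bterm_{i+1} = \up{p_{i+1}}\asub$ instantiates a candidate of the right-hand side $r_{i+1}$, and $r_{i+1}\asub$ need not be terminating --- that is precisely how any infinite chain keeps going. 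Concretely, with rules $\afun(\bfun(x)) \arrz \symb{g}(\bfun(x))$ and $\bfun(x) \arrz \bfun(\suc(x))$, the pair $\up{\afun}(\bfun(x)) \dppijl \up{\bfun}(x)$ is excluded from $\DP$ (its right-hand side is a strict subterm of the left), yet it heads an infinite minimal static chain that continues forever with $\up{\bfun}(x) \dppijl \up{\bfun}(\suc(x))$. So excluded pairs \emph{can} occur in infinite minimal static chains; what rescues the conclusion here is that the tail after the excluded pair is already a dynamic chain, and a correct treatment needs an argument of that kind (or an amendment of the static-pair definition to exclude subterms of the left-hand side, as is standard). To be fair, the paper's proof silently asserts the inclusion of static pairs into dynamic ones without addressing this case at all, so you have correctly located the one real issue even though your fix for it does not work.
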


\begin{proof}
Given a SPFP AFS, all its static dependency pairs are also
dynamic dependency pairs, so any minimal static dependency chain
\emph{is} a minimal dynamic dependency chain.

If the SPFP AFS is moreover left-linear, then termination
implies that there is no minimal dynamic dependency chain (by
Theorem~\ref{thm:maintheoremll}), let alone a static one.
\end{proof}

This theorem gives a strengthening of the dynamic approach, by
allowing us to remove collapsing dependency pairs in an AFS if it is
strongly plain function passing.  It also provides a completeness
result for the static approach, for the class of SPFP systems.

\smallskip \noindent
In summary, we can combine the static and dynamic approaches as
follows: \\
{\tt
\indent if $\Rules$ is strongly plain function passing:
  return DPframework(STATIC) \\
\indent if DPframework(DYNAMIC) = TERMINATING:
  return TERMINATING \\
\indent if $\F$ does not have base output types:
  $\eta$-expand $(\F,\Rules)$ \\
\indent return DPframework(STATIC)
}

\smallskip \noindent
Here, \texttt{DPframework} is the algorithm of
Section~\ref{sec:algorithm}, and the argument (\texttt{STATIC} or
\texttt{DYNAMIC}) determines how the dependency pairs are calculated.
Alternatively, the two frameworks may be run in parallel, or we
might consider optimisations to avoid double work.  At present,
however, we have not done this, because testing on the current
termination problem database shows no improvement in
strength obtained from using the static approach after the dynamic
approach has failed (although using the static approach for SPFP
AFSs \emph{does} significantly improve the
power).  Experimental results and statistics
are given in Section~\ref{sec:experiments}.

\subsection{Polymorphism}\label{subsec:polymorphism}

In this paper we consider monomorphic AFSs.
In other definitions of the AFS formalism, 
a kind of polymorphism \`a la ML is admitted.
In a polymorphic AFS, we can have for instance 
a function symbol 
$\mathtt{if} : [\mathtt{bool} \times \alpha \times \alpha] \decpijl \alpha$ 
with $\alpha$ a type variable, 
and rewrite rules
$\mathtt{if}(\mathtt{true},\avar,\bvar) \arrz \avar$ and
$\mathtt{if}(\mathtt{false},\avar,\bvar) \arrz \bvar$.

Here $\alpha$ can be instantiated by base types, but also by
functional types such as $\mathtt{nat} \typepijl \mathtt{nat}$.
A polymorphic AFS can be transformed into a monomorphic AFS
with infinitely many rewrite rules by considering all possible
instantiations for the type variables in the rewrite rules.
In the example above, we get for instance 
$\mathtt{if}_{[\mathtt{bool} \times \nat \times
\nat] \decpijl \nat}(\mathtt{true},\avar,\bvar) \arrz \avar$ and
$\mathtt{if}_{[\mathtt{bool} \times (\nat \typepijl \nat) \times
(\nat \typepijl \nat)] \decpijl \nat \typepijl \nat}(\mathtt{false},
\avar,\bvar) \arrz \bvar$.
Such a transformation is discussed in~\cite{kop:11:1}.

In this paper we considered monomorphic AFSs
with finitely many rules, so we can neither deal with 
polymorphic AFSs directly nor with their transformations
to monomorphic AFSs. 
However, a further inspection of the proofs shows that
finiteness of the set of rewrite rules is primarily used in dealing
with the dependency graph.
So suppose we did not pose this
restriction, and went on
as before.  This
leads to a dependency graph with infinitely many nodes.  Now consider
the following definition of a dependency graph approximation:

\begin{definition}[Infinite Dependency Graph Approximation]
An (infinite) dependency graph $A$ is approximated by the finite graph
$G$ if there is a mapping $f$ from the nodes in $A$ to the nodes in
$G$, such that: if $A$ contains an edge from node $a$ to node $b$,
then $G$ contains an edge from node $f(a)$ to node $f(b)$.
\end{definition}

The original notion of a dependency graph approximation is an
instance of this definition, where $f$ is the identity.
With this definition, all proofs go through essentially
unmodified, only when considering a set $\P$ of nodes,
we are interested not in the elements of $\P$ but in the
corresponding dependency pairs.  For instance in
Theorem~\ref{thm:maintheorem}, we must see that $l \gterm p$ for all
$l \dppijl p \in \DP$ with $f(l \dppijl p) \in \P_1$ and
$l \geqterm p$ for all $l \dppijl p \in \DP$ with $f(l \dppijl
p) \in \P_2$.

\begin{example}\label{ex:polymorphicgraph}
Consider the polymorphic AFS with $\mathtt{if}$ as defined before,
and $\mathtt{append} : [\mathtt{list}(\alpha) \times \mathtt{list}(
\alpha)] \decpijl \mathtt{list}(\alpha)$.
\[
\begin{array}{rcl}
\mathtt{if}(\mathtt{true},\avar,\bvar) & \arrz & \avar \\
\mathtt{if}(\mathtt{false},\avar,\bvar) & \arrz & \bvar \\
\mathtt{append}(\nil,\avar) & \arrz & \avar \\
\mathtt{append}(\cons(h,t),\avar) & \arrz & \cons(h,\mathtt{append}(
  t,\avar)) \\
\end{array}
\]
The corresponding monomorphic AFS has 
infinitely many rules, and infinitely many
dependency pairs.  
All the rules generated from the $\mathtt{append}$
rules (such as append for $\mathtt{list}(\nat)$ or for
$\mathtt{list}(\nat \typepijl \mathtt{bool})$) have only one
dependency pair
$\up{\mathtt{append}_\atype}(\cons(h,t),\avar) \dppijl
\up{\mathtt{append}_\atype}(t,\avar)$.
The $\mathtt{if}$ rules of basic types (such as when $\alpha$ is
instantiated with $\nat$) have no dependency pairs, but if we for
instance consider $\atype := [\mathtt{bool} \times \nat \typepijl
\nat \times \nat \typepijl \nat] \decpijl \nat \typepijl \nat$ then
the rule $\mathtt{if}_\atype(\mathtt{true},\avar,\bvar) \arrz \avar$
does give a dependency pair $\app{\mathtt{if}_\atype(\mathtt{true},
\avar,\bvar)}{\cvar} \dppijl \app{\avar}{\cvar}$.

The dependency graph is infinite, and part of it could be sketched
like this:
\begin{center}
\begin{tikzpicture}[->]
\begin{scope}[>=stealth]

\tikzstyle{veld} = [draw, fill=white,  drop shadow, 
  minimum height=0em, minimum width=0em, rounded corners]

\node [veld] (append1) {
    $\begin{array}{c}\up{\mathtt{append}_{\atype_1}}(\cons(h,t),\avar) \\
    \dppijl \up{\mathtt{append}_{\atype_1}}(t,\avar) \end{array}$
  };
\node (append1loop) [above of=append1,node distance=5mm] {};
\node (append1bot) [below of=append1,node distance=6mm] {};

\node [veld] (append2) [right of=append1,node distance=5cm] {
    $\begin{array}{c}\up{\mathtt{append}_{\atype_2}}(\cons(h,t),\avar) \\
    \dppijl \up{\mathtt{append}_{\atype_2}}(t,\avar) \end{array}$
  };
\node (append2loop) [above of=append2,node distance=5mm] {};
\node (append2bot) [below of=append2,node distance=6mm] {};

\node [veld] (append3) [right of=append2,node distance=5cm] {
    $\begin{array}{c}\up{\mathtt{append}_{\atype_3}}(\cons(h,t),\avar) \\
    \dppijl \up{\mathtt{append}_{\atype_3}}(t,\avar) \end{array}$
  };
\node (append3loop) [above of=append3,node distance=5mm] {};
\node (append3bot) [below of=append3,node distance=6mm] {};

\node [veld] (if1) [below of=append1,node distance=2cm] {
    $\begin{array}{c}\app{\mathtt{if}_{\atype_1}(\mathtt{true},
    \avar,\bvar)}{\cvar} \dppijl \app{\avar}{\cvar}\end{array}$
  };
\node (if1loop) [below of=if1,node distance=2mm] {};

\node [veld] (if2) [right of=if1,node distance=5.5cm] {
    $\begin{array}{c}\app{\mathtt{if}_{\atype_2}(\mathtt{true},
    \avar,\bvar)}{\cvar} \dppijl \app{\avar}{\cvar}\end{array}$
  };
\node (if2loop) [below of=if2,node distance=2mm] {};

\node (dots) [right of=if2,node distance=4cm] { \large{$\ldots$} };

\draw (append1loop) to[out=45,in=135,looseness=5] (append1loop);
\draw (append2loop) to[out=45,in=135,looseness=5] (append2loop);
\draw (append3loop) to[out=45,in=135,looseness=5] (append3loop);
\draw (if1loop) to[out=225,in=315,looseness=5] (if1loop);
\draw (if2loop) to[out=225,in=315,looseness=5] (if2loop);
\draw (if1) -- (append1);
\draw (if1) -- (append2bot);
\draw (if1) -- (append3bot);
\draw (if2) -- (append1bot);
\draw (if2) -- (append2);
\draw (if2) -- (append3bot);
\end{scope}
\end{tikzpicture}
\end{center}

For a dependency graph approximation, it makes sense to combine all
the ``similar'' dependency pairs in one node.  We choose the
following finite graph approximation:
\begin{center}
\begin{tikzpicture}[->]
\begin{scope}[>=stealth]

\tikzstyle{veld} = [draw, fill=white,  drop shadow, 
  minimum height=0em, minimum width=0em, rounded corners]

\node [veld] (append) {
    $\begin{array}{c}\up{\mathtt{append}}(\cons(h,t),\avar) \\
    \dppijl \up{\mathtt{append}_{\atype_1}}(t,\avar) \end{array}$
  };
\node (appendloop) [right of=append,node distance=2cm] {};
\node (central) [left of=append,node distance=6cm] {};

\node [veld] (if1) [above of=central,node distance=0.6cm] {
    $\begin{array}{c}\mathtt{if}(\mathtt{true},\avar,\bvar)
    \arrz \avar\end{array}$
  };
\node (if1loop) [left of=if1,node distance=1.8cm] {};

\node [veld] (if2) [below of=central,node distance=0.6cm] {
    $\begin{array}{c}\mathtt{if}(\mathtt{false},\avar,\bvar)
    \arrz \bvar\end{array}$
  };
\node (if2loop) [left of=if2,node distance=1.8cm] {};

\draw (appendloop) to[out=-45,in=45,looseness=5] (appendloop);
\draw (if1loop) to[out=135,in=225,looseness=5] (if1loop);
\draw (if2loop) to[out=135,in=225,looseness=5] (if2loop);
\draw (if1) -- (append);
\draw (if2) -- (append);
\draw (if1) -- (if2);
\draw (if2) -- (if1);
\end{scope}
\end{tikzpicture}
\end{center}
Here, all dependency pairs $\up{\mathtt{append}_\atype}(\cons(h,t),
\avar) \dppijl \up{\mathtt{append}_\atype}(t,\avar)$ are mapped to
the $\mathtt{append}$ node, and the $\app{\mathtt{if}_\atype(\mathtt{
true},\avar,\bvar)}{\cvar_1} \cdots \cvar_n \dppijl \app{\avar}{
\cvar_1} \cdots \cvar_n$ are mapped to $\mathtt{if}(\mathtt{true},
\avar,\bvar) \arrz \avar$.  Note the different arrows: the
$\mathtt{append}$ node uses a $\dppijl$ arrow, to indicate that it
catches dependency pairs of the \emph{first kind} (see
Definition~\ref{def:deppair}) which have the given form, whereas the
$\mathtt{if}$ nodes use a $\arrz$ arrow and catch
dependency pairs of the \emph{second kind}.
\end{example}

Example~\ref{ex:polymorphicgraph} gives a good suggestion for how to
split the dependency pairs into a finite number of groups.  For a
polymorphic rule, we generate the set of dependency pairs of the
first kind as before, $\DP_1(\Rules) = \{ \up{l} \dppijl \up{p} \mid
l \arrz r \in \Rules \mid p \in \candidatesof{r} \}$; for all
instances of the rule, the dependency pairs of the first kind will be
type instantiations of a pair in $\DP_1(\Rules)$.  For the nodes of
the graph approximation we choose $\DP_1(\Rules)$, and in addition
those rules $l \arrz r$ whose type is not a base type and where
$\head(r)$ is either a variable or a term $\afun(\vec{r})$ with
$\afun \in \Defineds$.  Now each dependency pair in the instantiated
system is either mapped to the corresponding pair in $\DP_1(\Rules)$,
or, for a pair of the second kind, to the rule that generated
it.

With this choice for the graph approximation we can avoid any
infinite reasoning, \emph{if} we can find a reduction pair that is
equipped to prove statements of the form ``$\overline{l} \gterm
\overline{p}$ for all type instantiations'' given $l$ and $p$.
Finding such reduction pairs is the object of separate study,
however.
As for formative rules, it is not hard to devise an algorithm which,
given a symbol $\aterm$, finds a set of polymorphic rules
$\mathit{PFR}$ such that the rewrite relation generated by
$\mathit{PFR}$ is included in
$\arr{\formrules(\aterm)}$ (the difficulty is in determining how good
the result of such an algorithm is, as it may well include more rules
than strictly necessary).
Usable rules require only that every term has only finitely many
reducts (item I in the proof of Theorem~\ref{thm:userules}); this is
the case for systems generated from a finite set of polymorphic
rules as well.

\section{Experimental Results}\label{sec:experiments}

\summary{In this section, we present the
performance of the higher-order termination tool \wanda\
in different settings:
using only dynamic dependency pairs,
using only static dependency pairs, 
using both dynamic and static dependency pairs,
and using only rule removal.}

The dependency pair framework described in this paper,
and in particular the algorithm of Section~\ref{sec:algorithm},
forms the core of the
higher-order termination tool \wanda~\cite{kop:10:1}.
If the input system is strongly plain function passing
as described in Section~\ref{sec:merge},
then collapsing dependency pairs are dropped.
If it is not strongly plain function passing, 
then a static approach is
attempted only when the dynamic approach fails.
The tool \wanda\ uses a simple form of
argument functions immediately (if there is any constraint
$\afun(\avar_1,\ldots,\avar_n) \geqterm r$ or $\afun(\avar_1,\ldots,
\avar_n) {\,{}_{_(}\!\!\geqterm_{_)}} r$ then the argument function
$\argfil(\afun(\vec{\avar})) = r$ is used), a proof-of-concept
implementation of polynomial interpretations in the natural numbers,
and 
our own
version of a recursive path ordering (based on~\cite{kop:raa:08:1})
with argument filterings.

The current termination problem database (8.0.1) contains 156
higher-order benchmarks; out of these, 138 are \llfe, and 133 are
plain function passing.  The following table sums up how the various
restrictions relate.

\begin{figure}[h]
\begin{tabular}{l||rrrr}
 & \llfe & not \llfe & not left-linear & not fully extended \\
\hline
PFP & 123 & 10 & 9 & 1 \\
Strong PFP & 88 & 9 & 8 & 1 \\
Not PFP & 15 & 8 & 4 & 6 \\
\end{tabular}
\end{figure}

\medskip \noindent
We executed \wanda\ 1.6\footnote{This is not the version which was
used in the last termination competition; there, polynomial
interpretations had not yet been implemented.} with and without
dependency pairs, and also once with the locality improvement
disabled,
on a dual Intel(R) Core(TM) i5-2520M CPU @ 2.50GHz computer with 4G
RAM and a 60 second timeout; the results are summarised in the table
below.  An evaluation page with more details is available at
\url{http://www.few.vu.nl/~femke/lmcs2012/}.

\begin{figure}[h]
\begin{tabular}{l||rrrrr}
 & full \wanda & dynamic DP & static DP & rule removal & non-\llfe \\
\hline
YES & 122 & 119 & 114 & 75 & 100 \\
NO & 9 & 9 & 9 & 9 & 9 \\
MAYBE & 23 & 26 & 31 & 69 & 44 \\
TIMEOUT & 2 & 2 & 2 & 3 & 3 \\
average runtime & 1.33 & 1.97 & 1.40 & 0.89 & 1.82 \\
\end{tabular}
\end{figure}

\medskip \noindent
In \emph{rule removal} a reduction pair $(\geqterm,\gterm)$ is used
where $\gterm$ is monotonic, and for all rules either $l \gterm r$
or $l \geqterm r$; strictly oriented rules can be removed.
The timeouts are mainly due to \wanda's strategy of trying
more advanced polynomial interpretations if simple forms fail.
It is noticeable that the static style is a fair bit faster than the
dynamic one; this is to be expected, since it gives both fewer and
easier constraints, and the subterm criterion in particular often
supplies a quick solution.
The results for \llfe\ AFSs of Section~\ref{sec:weakdp} give a
significant improvement over the basic dynamic dependency pair
approach, but even the basic result is a real improvement over rule
removal.

This table demonstrates that using a dependency pair approach, either
static or dynamic, significantly improves the strength of an automatic
termination tool for higher-order rewriting.  Moreover, the
combination of static and dynamic dependency pairs is stronger than
either technique on its own; the two styles are truly incomparable.

\section{Conclusion}\label{sec:conclusion}

We have presented a method to prove termination of AFSs
using dynamic dependency pairs.
A first improvement of the method is obtained, 
for the subclass of local AFSs,
by using formative rules, which are
a variation of the usable rules from the first-order case.
Further improvements are obtained by the use of argument functions,
which are an extension of the argument filterings from the
first-order case, and the subterm criterion and usable rules for
non-collapsing dependency pairs.
The ordering constraints generated by the dependency pair approach
can be solved not only using reduction pairs such as
higher-order recursive path orderings,
but also using weakly monotonic algebras.

The dependency pair approach as presented here is 
implemented in the higher-order termination tool
\wanda~\cite{kop:10:1} by the first author.
It can be used on its own,
or together with a static approach in an automatic termination tool.
Experiments with the termination problem database have shown
that the styles are incomparable, but have similar strength.

With this work we aim to contribute to the understanding
of termination of higher-order rewriting, and
more in particular, to the understanding of dependency pairs, and
to the developments of 
tools to automatically prove termination of higher-order rewriting.

For future work, it would be interesting to generalise the dependency
pair \emph{framework}~\cite{gie:thi:sch:05:2} to the higher-order
setting.  This poses challenges like the question whether
tags and formative rules should be flags, or can be phrased as
dependency pair processors.  In a different direction, we might study
(and implement) the new method of formative rules in the first-order
setting.

\section{Acknowledgements}

We gratefully acknowledge remarks and suggestions from 
Vincent van Oostrom and Jan Willem Klop.
We thank the anonymous referees of earlier versions,
and the anonymous referees of the present version 
for their constructive and helpful remarks.

\bibliographystyle{plain}
\bibliography{references}

\end{document}